  \theoremstyle{remark}
  \newtheorem{rem}{\protect\remarkname}
\theoremstyle{plain}
\newtheorem{thm}{\protect\theoremname}
  \theoremstyle{plain}
  \newtheorem{prop}{\protect\propositionname}
  \theoremstyle{plain}
  \newtheorem{lem}{\protect\lemmaname}
\providecommand{\tabularnewline}{\\}
\newcommand{\tildeXL}[1]{%
\savestack{\tmpbox}{\stretchto{%
\scaleto{%
\scalerel*[\widthof{\ensuremath{#1}}]{\kern-.6pt\sim\kern-.6pt}%
{\rule[-\textheight/2]{1ex}{\textheight}}
}{\textheight}%
}{0.5ex}}%
\stackon[1pt]{#1}{\tmpbox}%
}
\renewcommand{\[}{\begin{equation}}
\renewcommand{\]}{\end{equation}}
\DeclareMathOperator{\im}{im}
\DeclareMathOperator{\haf}{haf}
\numberwithin{equation}{section}
  \providecommand{\lemmaname}{Lemma}
  \providecommand{\propositionname}{Proposition}
  \providecommand{\remarkname}{Remark}
\providecommand{\theoremname}{Theorem}
\begin{document}

\title{Local and Covariant Flow Relations for OPE Coefficients in Lorentzian
Spacetimes}

\author{\texorpdfstring{Mark G. Klehfoth\thanks{mklehfoth@uchicago.edu} $\,$and Robert M. Wald\thanks{rmwa@uchicago.edu}
\\
 \em{Enrico Fermi Institute and Department of Physics} \\
 \em{University of Chicago} \\
 \em{Chicago, Illinois, USA}}{Mark G. Klehfoth and Robert M. Wald}}
\maketitle
\begin{abstract}
For Euclidean-signature quantum field theories with renormalizable
self-interactions, Holland and Hollands have shown the operator product
expansion (OPE) coefficients satisfy ``flow equations'': For a (renormalized)
self-coupling parameter $\lambda$, the partial derivative of any
OPE coefficient with respect to $\lambda$ is given by an integral
over Euclidean space of a sum of products of other OPE coefficients
evaluated at $\lambda$. These Euclidean flow equations were proven
to hold order-by-order in perturbation theory, but they are well defined
non-perturbatively and thus provide a possible route towards giving
a non-perturbative construction of the interacting field theory. The
purpose of this paper is to generalize the Holland and Hollands results
for flat Euclidean space to curved Lorentzian spacetimes in the context
of the solvable ``toy model'' of massive Klein-Gordon scalar field
theory on globally-hyperbolic curved spacetimes, with the squared
mass, $m^{2}$, viewed as the ``self-interaction parameter''. There
are a number of difficulties that must be overcome to carry out this
program. Even in Minkowski spacetime, a serious difficulty arises
from the fact that all integrals must be done over a compact region
of spacetime to ensure convergence. However, there does not exist
any Lorentz-invariant function of compact support, so any flow relations
that involve only integration over a compact region cannot be Lorentz
covariant. We show how covariant flow relations can be obtained by
the addition of ``counterterms'' that cancel the non-covariant dependence
on the cutoff function in a manner similar to that used in the Epstein-Glaser
renormalization scheme. The necessity of integration over a finite
region also effectively introduces an ``infrared cutoff scale''
$L$ into the flow relations, which gives rise to undesirable behavior
of the OPE coefficients under scaling of the metric and $m^{2}$.
(This behavior also occurs in the Euclidean case.) We show how to
modify the flow relations so that the dependence on $L$ is systematically
removed, thereby yielding flow relations compatible with almost homogeneous
scaling of the fields. A potentially even more serious difficulty
arises in curved spacetime simply due to the fact that the flow relations
involve integration over a spacetime region. Such an integration will
cause the OPE coefficients to depend non-locally on the spacetime
metric, in violation of the requirement that the quantum fields should
depend locally and covariantly on the spacetime metric. We show how
this difficulty can be overcome by replacing the metric with a local
polynomial approximation carried to suitable order about the OPE expansion
point. We thereby obtain local and covariant flow relations for the
OPE coefficients of Klein-Gordon theory in an arbitrary curved Lorentzian
spacetime. As a byproduct of our analysis, we prove that the field
redefinition freedom in the Wick fields (i.e. monomials of the scalar
field and its covariant derivatives) can be characterized by the freedom
to add a smooth, covariant, and symmetric function $F_{n}(x_{1},\dots,x_{n};z)$
to the identity OPE coefficients, $C_{\phi\cdots\phi}^{I}(x_{1},\dots,x_{n};z)$,
for the elementary $n$-point products. We thereby obtain an explicit
construction of any renormalization prescription for the nonlinear
Wick fields in terms of the OPE coefficients $C_{\phi\cdots\phi}^{I}$.
The ambiguities inherent in our procedure for modifying the flow relations
are shown to be in precise correspondence with the field redefinition
freedom of the Klein-Gordon OPE coefficients. In an appendix, we develop
an algorithm for constructing local and covariant flow relations in
Lorentzian spacetimes beyond our ``toy model'' based upon the associativity
properties of the OPE coefficients. We illustrate our method by applying
it to the flow relations of $\lambda\phi^{4}$-theory. 
\end{abstract}
\tableofcontents{}

\newpage{}

\section{Introduction and Overview}

A quantum field theory is said to possess an operator product expansion
(OPE) for all observables if, in any physically acceptable state $\Psi$,
the expectation value of any product of local quantum field observables
can be approximated near event $z$ as 
\begin{equation}
\braket{\Phi_{A_{1}}(x_{1})\cdots\Phi_{A_{n}}(x_{n})}_{\Psi}\sim\sum_{B}C_{A_{1}\cdots A_{n}}^{B}(x_{1},\dots,x_{n};z)\braket{\Phi_{B}(z)}_{\Psi},\label{eq:OPE}
\end{equation}
Here $A_{1}\dots,A_{n},B$ label the renormalized field observables
of the theory, and the sum over $B$ extends over all observables.
The coefficients $C_{A_{1}\cdots A_{n}}^{B}(x_{1},\dots,x_{n};z)$
of this expansion are ordinary $c$-valued distributions that are
independent of the state $\Psi$ (within the class of allowed states).
The ``$\sim$'' in eq.~(\ref{eq:OPE}) denotes that this relation
holds asymptotically in the coincidence limit $x_{1},\dots,x_{n}\to z$;
a precise statement of this asymptotic relationship will be given
in formula (\ref{eq:OPE precise asymp rel}) below. OPEs are expected
to exist for any renormalizable local quantum field theory under very
general assumptions \cite{Wilson_nonlagrangian_current_algebra,Wilson_Zimmermann,Zimmermann_normal_products_perturabative_OPE,Hollands_perturbative_OPE_CS,Bostelmann_OPE_via_phase_space,Bostelmann_phase_space_short_distance_QFT,Fredenhagen_conformal_Haag_Kastler_nets_pointlike_fields_existence_OPE,Fredenhagen_local_algebra_and_pointlike_fields}.

OPEs play a very important role both conceptually and practically
in quantum field theory in Euclidean and Minkowski spacetime. Furthermore,
Hollands and Wald \cite{HW_Axiomatic_QFTCS,HW_OPE_Dark_Energy} have
argued that OPEs play an essential role in the formulation of quantum
field theory in a curved Lorentzian spacetime. In a general, curved
Lorentzian spacetime there is no notion of Poincare invariance and
no preferred vacuum state, so properties of the quantum field normally
formulated in terms of vacuum expectation values in Minkowski spacetime
must now be formulated in terms of OPE coefficients. Hollands and
Wald have argued that the key relations satisfied by the quantum field
observables can be expressed via the OPE, so that, in essence, a quantum
field theory in curved spacetime may be viewed as being specified
by providing all of its OPE coefficients $C_{A_{1}\cdots A_{n}}^{B}$.
Thus, it is of considerable interest to determine the OPE coefficients
of an interacting quantum field theory. It would be especially of
interest to determine the OPE coefficients of an interacting theory
by methods that do not rely on perturbation theory, since this would
have the potential for providing a non-perturbative definition of
the interacting theory.

For the case of a Euclidean quantum field theory with power-counting
renormalizable self interactions, Hollands \cite{Hollands_Action_Principle}
has argued that the OPE coefficients must satisfy a ``flow'' relation
under changes of the coupling parameters. Such flow equations have
been proven to hold order-by-order in perturbation theory for several
interacting models, including $\lambda\phi^{4}$-theory \cite{HH_Recursive},
Yang-Mills theories \cite{Frob_Holland_Yang-Mills}, and CFTs with
strictly marginal interactions \cite{Hollands_Action_Principle}.
In particular, Holland and Hollands have proven that, by making using
of the freedom to redefine the quantum field observables, the OPE
coefficients of $\lambda\phi^{4}$-theory in $D=4$ dimensional (flat)
Euclidean space satisfy the following\footnote{In \cite{HH_Recursive}, Holland and Hollands set the expansion point
$z=x_{n}$. We prefer to define the coefficients more symmetrically
in $x_{1},\dots,x_{n}$ by using an independent expansion point $z$. } flow equations to any (finite) perturbative order in $\lambda$,
\begin{align}
\frac{\partial}{\partial\lambda}C_{A_{1}\cdots A_{n}}^{B}(x_{1},\dots,x_{n};z)= & -\frac{1}{4!}\int_{|y-z|\le L}d^{4}y\left[\vphantom{\int_{\epsilon}}C_{\phi^{4}A_{1}\cdots A_{n}}^{B}(y,x_{1},\dots,x_{n};z)+\right.\nonumber \\
 & -\sum_{i=1}^{n}\sum_{[C]\le[A_{i}]}C_{\phi^{4}A_{i}}^{C}(y,x_{i};x_{i})C_{A_{1}\cdots\widehat{A_{i}}C\cdots A_{n}}^{B}(x_{1},\dots,x_{n};z)+\nonumber \\
 & -\sum_{[C]<[B]}\left.\vphantom{\int_{\epsilon}}C_{A_{1}\cdots A_{n}}^{C}(x_{1},\dots,x_{n};z)C_{\phi^{4}C}^{B}(y,z;z)\right],\label{eq:HH fe}
\end{align}
Here $\lambda$ is the renormalized coupling parameter; $L$ is positive
constant with units of length; $\widehat{A_{i}}C$ indicates the replacement
of the label $A_{i}$ with the label $C$; and $[A]$ denotes the
``engineering dimension'' of the renormalized field $\Phi_{A}$
(see our ``Notation and conventions'' at the end of this section).
For the spatial integral over $y$, it is understood that the integration
is initially done over the region bounded by $\epsilon\le|y-x_{i}|$
and $\epsilon\le|y-z|\le L$, the subtractions appearing in the integrand
are performed, and the limit as $\epsilon\to0^{+}$ is then taken.
Holland and Hollands have shown that all ultraviolet divergences that
may arise in individual terms as $\epsilon\to0^{+}$ precisely cancel
between terms\footnote{In the discussion below, the cancellation of non-integrable divergences
at $y=x_{i}$ (for $i=1,\dots,n$) is equivalent to the statement
that the distribution appearing in the integrand of \eqref{eq:HH fe}
is uniquely ``extendible'' to the ``partial diagonals'' involving
$y$ and any single $x_{i}$-point. }, so the $\epsilon\to0^{+}$ limit is well-defined without any additional
regulators or renormalization.

Although the flow equations (\ref{eq:HH fe}) were rigorously derived
in a perturbative setting, these equations make sense mathematically
under very general model-independent assumptions\textemdash specifically,
if the OPE coefficients satisfy the ``associativity'' and ``scaling
degree'' axioms postulated in \cite{HW_Axiomatic_QFTCS}. Thus, it
seems reasonable to assume that eq.~(\ref{eq:HH fe}) holds in the
non-perturbative theory. If it were possible to integrate eq.~(\ref{eq:HH fe})
from $\lambda=0$ (where the field is free and the OPE coefficients
may be computed directly) up to some nonzero $\lambda$, we would
obtain a non-perturbative construction of the interacting OPE coefficients.
Of course, it is not known if there exist solutions to an infinite
system of ordinary differential equations like (\ref{eq:HH fe}).
Nevertheless, flow relations like eq.~(\ref{eq:HH fe}) have the
potential to provide a new approach to the formulation of interacting
quantum field theory, and may be of considerable ``practical'' use
as well.

The OPE flow relations (\ref{eq:HH fe}) and their generalization
to other interacting theories apply for the case of flat Euclidean
space. Recently, Fröb \cite{Frob_Riemannian} has generalized these
relations to quantum fields on curved Riemannian spaces, without,
however, imposing the condition that the OPE coefficients be locally
and covariantly defined. Since the physical world is Lorentzian, it
would be of interest to generalize the flow relations to Lorentzian
spacetimes. Furthermore, the requirement that the OPE coefficients
be locally and covariantly defined in curved spacetime is the natural
generalization of the requirement of Poincare invariance in Minkowski
spacetime \cite{HW_Axiomatic_QFTCS} and it thereby provides an important
requirement on the flow relations. Thus, it is of interest to determine
if the flow relations can be formulated for Lorentzian spacetimes
in a local and covariant manner.

There are two major obstacles to generalizing flow relations such
as eq.~(\ref{eq:HH fe}) to the Lorentzian case: (i) In the Euclidean
case, the infrared cutoff, $L$, appearing in the flow relations (\ref{eq:HH fe})
is fully compatible with rotational invariance, and the resulting
flow relations are automatically Euclidean invariant. However, in
Minkowski spacetime, no bounded region of spacetime can be invariant
under Lorentz boosts. Thus, in Minkowski spacetime, either the corresponding
integral must be taken over an unbounded region\textemdash resulting
in serious problems with convergence of the integral in Minkowski
spacetime as well as with the definition of the OPE coefficients throughout
the region in the generalization to curved spacetime\textemdash or
the corresponding integral will not be Lorentz invariant, leading
to flow relations that are not Poincare invariant. (ii) There is a
fundamental difficulty with obtaining local and covariant results
by performing an integral over a spacetime region. If the curved spacetime
flow relations take a form similar to eq.~(\ref{eq:HH fe}) where
the integral is performed over some neighborhood $U_{z}$ of $z\in M$,
this integral would depend on the spacetime metric in all of $U_{z}$,
not just in an arbitrarily small neighborhood of $z$. Thus, for a
flow relation of the form of eq.~(\ref{eq:HH fe}) with an integral
performed over a finite spacetime region $U_{z}$, the flow of OPE
coefficients will necessarily depend non-locally on the metric.

The purpose of this paper is to show how the above difficulties can
be overcome, thereby showing that local and covariant OPE flow relations
can be defined in curved Lorentzian spacetimes. We will also show
how to modify the flow relations so as to eliminate any dependence
on the infrared cutoff scale $L$. We will restrict consideration
in this paper to the ``toy model'' of massive, non-minimally-coupled
Klein-Gordon theory, with $m^{2}$ and the curvature coupling parameter,
$\xi$, viewed as interaction parameters. Of course, this model is
a free field for all values of the parameters. Nevertheless, we may
treat $m^{2}$ and $\xi$ as coupling constants in an interaction
Lagrangian, in parallel with the treatment of $\lambda$ above. The
resulting flow relations have a form that is very similar in its essential
features to that of a nonlinearly interacting theory, so this toy
model provides a good testing ground for confronting the issues needed
to generalize the flow relations to curved Lorentzian spacetimes.
For this toy model, in Euclidean space of any dimension $D\ge2$,
the direct analog of eq.~(\ref{eq:HH fe}) above is the following
flow relation in $m^{2}$ for the coefficients\footnote{As we shall see in Section \ref{subsec:gen Wick OPE coef}, all other
OPE coefficients are determined by $C_{\phi\cdots\phi}^{I}(x_{1},\dots,x_{n};z)$,
so it suffices to consider only the flow relations for these coefficients.} $C_{\phi\cdots\phi}^{I}(x_{1},\dots,x_{n};z)$ appearing in the OPE
of the $n$-point product of linear field observables, $\braket{\phi(x_{1})\cdots\phi(x_{n})}_{\Psi}$:
\begin{equation}
\frac{\partial}{\partial m^{2}}C_{\phi\cdots\phi}^{I}(x_{1},\dots,x_{n};z)=-\frac{1}{2}\int_{|y-z|^{2}\le L^{2}}d^{D}yC_{\phi^{2}\phi\cdots\phi}^{I}(y,x_{1},\dots,x_{n};z).\label{eq:naive flow eq for Wick id coef}
\end{equation}
Note that in this case the $y$-integral yields a well-defined distribution
in $(x_{1},\dots,x_{n})$ with no need for an ultraviolet cutoff $\epsilon$.
Our goal is to obtain an analogous flow relation in the Lorentzian
case.

The first issue we must address is the ``type'' of products of fields
that must be considered in order for the OPE coefficients to satisfy
flow relations. In the Euclidean case, there is a unique notion of
the $n$-point ($=$correlation$=$Green's$=$Schwinger) distributions
and their corresponding OPE coefficients. However, in the Lorentzian
case, one can consider Wightman products, time-ordered products, retarded
products, etc. Any of these products could be put on the left side
of eq.~(\ref{eq:OPE}) and used to define OPE coefficients. The resulting
OPE coefficients will possess distinct singular behavior (i.e., ``wavefront
sets''), and it is not obvious, a priori, which\textemdash if any\textemdash of
these Lorentzian objects are viable candidates for satisfying flow
relations. Our analysis of this issue in Section \ref{sec:flat Eucl fe}
reveals that the Green's function properties of the $n$-point distributions
play an essential role in the derivation of flow relations. Consequently,
as we discuss in Section \ref{sec:Minkowski-flow-relations}, the
usual Wightman $n$-point OPE coefficients as written in eq.~(\ref{eq:OPE})
are not suitable candidates for satisfying flow relations in the Lorentzian
case. On the other hand, time-ordered products do possess the requisite
Green's function properties for flow relations\footnote{Retarded and advanced products also satisfy the Green's function properties.}.
The Lorentzian flow relations we shall obtain will thus apply to the
OPE coefficients arising from the asymptotic expansion of the time
ordered products $\braket{T\{\Phi_{A_{1}}(x_{1})\cdots\Phi_{A_{n}}(x_{n})\}}_{\Psi}$
rather than the Wightman products $\braket{\Phi_{A_{1}}(x_{1})\cdots\Phi_{A_{n}}(x_{n})}_{\Psi}$.

However, working with time-ordered products has the potential to lead
to significant additional complications, since time-ordered products
possess substantial additional renormalization ambiguities beyond
those associated with the definition of Wick powers and their corresponding
Wightman functions. Time-ordered products of $n$ field observables
are well defined by naive time ordering only when no two points in
the $n$-point distribution coincide, i.e., away from all ``diagonals.''
We denote this well defined, ``unextended'' time-ordered product
by $T_{0}\{\Phi_{A_{1}}(x_{1})\cdots\Phi_{A_{n}}(x_{n})\}$. Any procedure
for extending $T_{0}$ to any of the diagonals (i.e. renormalization)
is generally non-unique and, therefore, must unavoidably introduce
new ambiguities proportional to $\delta$-distributions (i.e. ``contact
terms''). This will result in corresponding ambiguities on the diagonals
of the OPE coefficients defined using time-ordered products. Thus,
if we formulate the flow relations in terms of these OPE coefficients,
it might appear that we will have to deal with substantial additional
renormalization ambiguities on the diagonals.

Fortunately, however, we find that this is not the case. In the OPE,
eq.~(\ref{eq:OPE}), we may keep all of the $x_{i}$ distinct, so
that the unextended time ordered products and corresponding OPE coefficients
are well defined. However, flow relations such as eq.~(\ref{eq:HH fe})
involve an integration over a variable $y$, so we cannot avoid the
coincidence of $y$ with the various $x_{i}$. Thus, it might appear
that the flow relations require us to evaluate the the OPE coefficients
at points where they are not defined. However, it turns out that the
integrand of the OPE flow relations contains a very special combination
of OPE coefficients that has sufficiently mild divergences (i.e.,
``low scaling degrees'') on the ``partial diagonals'' involving
only $y$ and one other spacetime point. Consequently, the integrand
\emph{can }be uniquely extended to these\textemdash and typically
only these\textemdash partial diagonals, and the flow relations are
well defined for the unextended time-ordered products $T_{0}$. Thus,
no new renormalization ambiguities arise beyond those occurring for
the Wick monomials in the flow relations of the OPE coefficients of
unextended time ordered products.

We now explain how the two major obstacles described above to obtaining
Lorentzian flow relations are overcome. The first obstacle originates
from the fact that no bounded neighborhood of $z$ in Minkowski spacetime
can be invariant under Lorentz boosts. To ensure that the integrals
appearing in the flow relations are well defined and convergent, we
introduce into the integrand a smooth function\footnote{It is preferable to work with a smooth function $\chi$ than a step
function as in \eqref{eq:HH fe} and \eqref{eq:naive flow eq for Wick id coef}
since in the Lorentzian case the singular behavior of a step function
will overlap the singular behavior of the OPE coefficients in the
integrand.} $\chi(y-z;L)$ such that $\chi=1$ in a coordinate ball of radius
$L$ and $\chi=0$ outside a coordinate ball of radius $2L$. The
presence of $\chi$ ensures that the integral extends over only a
compact spacetime region, but it also necessarily breaks the Lorentz
covariance of the flow relations. Nevertheless, we prove in Section
\ref{sec:Minkowski-flow-relations} that Lorentz covariance can be
restored in Minkowski spacetime\textemdash to any desired ``scaling
degree''\textemdash by subtracting off finitely many terms in the
flow relations with a compensating failure of Lorentz invariance.
For the OPE coefficients $C_{T_{0}\{\phi\cdots\phi\}}^{I}$, this
results in a Minkowski spacetime flow relation of the form, 
\begin{align}
 & \frac{\partial}{\partial m^{2}}C_{T_{0}\{\phi\cdots\phi\}}^{I}(x_{1},\dots,x_{n};z)\label{eq:Lorentz inv id fe}\\
 & \qquad\sim-\frac{i}{2}\int_{\mathbb{R}^{D}}d^{D}y\chi(y-z;L)C_{T_{0}\{\phi^{2}\phi\cdots\phi\}}^{I}(y,x_{1},\dots,x_{n};z)-\sum_{C}a_{C}[\chi]C_{T_{0}\{\phi\cdots\phi\}}^{C}(x_{1},\dots,x_{n};z),\nonumber 
\end{align}
where $a_{C}$ are spacetime constant tensors which depend on $\chi$.
As described in Appendix \ref{sec:K-G Lorentz cts}, the existence
of such $a_{C}$ is guaranteed by the same kind of cohomological argument
\cite{pedagogical_remark_main_thm_ren_theory} that ensures the Lorentz-covariance
of the Epstein-Glaser renormalization scheme. In Appendix \ref{sec:K-G Lorentz cts},
we also obtain a recursive construction\footnote{The inductive formula for $a_{C}$ is given in eq.~(\ref{eq:a soln for r>2})
with $\boldsymbol{B}^{\kappa\rho}$ given by eq.~(\ref{eq:(B^kappa rho)_C}).} of the coefficients $a_{C}[\chi]$ required for the Lorentz-covariant
flow relations (\ref{eq:Lorentz inv id fe}) in Minkowski spacetime,
in parallel with the analysis given in \cite{Bresser_Pinter_Prange,Prange}
of the covariance-restoring Epstein-Glaser counterterms.

The flow relations (\ref{eq:Lorentz inv id fe}) are Lorentz covariant.
However, they contain an infrared cutoff scale $L$ and the presence
of $L$ in this formula will spoil the required almost homogeneous
scaling of $C_{T_{0}\{\phi\cdots\phi\}}^{I}(x_{1},\dots,x_{n};z)$
under the scalings $g_{ab}\to\lambda^{-2}g_{ab}$, $m^{2}\to\lambda^{2}m^{2}$
of the metric and the mass. This issue also arises for the Euclidean
flow relation eq.~(\ref{eq:naive flow eq for Wick id coef}). Thus,
we must further modify these flow relations so as to eliminate its
$L$ dependence up to any desired scaling degree. This can be accomplished
in the following manner. As shown in Subsection \ref{subsec:Eucl Had NO fe},
the partial derivative with respect to $L$ of the right side of the
Euclidean flow relation eq.~(\ref{eq:naive flow eq for Wick id coef})
is of the form, 
\begin{equation}
\frac{\partial}{\partial L}\left[\text{rhs of }\eqref{eq:naive flow eq for Wick id coef}\right]\sim\sum_{C}\beta_{C}(L)C_{\phi\cdots\phi}^{C}(x_{1},\dots,x_{n};z),\label{eq:partial L of naive Euclidean fe}
\end{equation}
where $\beta_{C}=\beta_{\gamma_{1}\cdots\gamma_{k}}$ denote tensors
that are computed from the OPE coefficients and depend on the infrared
length scale $L$. If the divergences in $\beta_{C}(L)$ were integrable
in a neighborhood containing $L=0$, then the problematic $L$-dependence
of the Euclidean flow relation (\ref{eq:naive flow eq for Wick id coef})
could be removed by simply subtracting the definite integral, 
\begin{equation}
\sum_{C}C_{\phi\cdots\phi}^{C}(x_{1},\dots,x_{n};z)\int_{0}^{L}dL'\beta_{C}(L'),\label{eq:naive L def integral}
\end{equation}
from the right-hand side of (\ref{eq:naive flow eq for Wick id coef}).
However, the divergences in $\beta_{C}(L)$ are not, in general, integrable.
Nevertheless, we show that, for any finite engineering dimension $[C]$,
all divergences in $\beta_{C}(L)$ as $L\to0^{+}$ can be expressed
as a finite linear combination of terms proportional to $L^{-k}\log^{N}L$
for positive integers $k,N$. Such non-integrable terms are in the
kernel of differential operators of the form $(1+k^{-1}L\partial_{L})^{N}$,
and these differential operators simply act like the identity operator
on any $L$-independent terms. Making use of these facts, we construct
a linear differential operator $\mathfrak{L}[L]$ which, when applied
to the right-hand side of (\ref{eq:naive flow eq for Wick id coef}),
effectively removes the $L$-dependent terms which lead to non-integrabilities
in $\beta_{C}(L)$, while perfectly preserving all of its $L$-independent
behavior. Once the operator $\mathfrak{L}[L]$ has been applied to
the right-hand side of (\ref{eq:naive flow eq for Wick id coef}),
any remaining $L$-dependence is guaranteed to be integrable and,
thus, can be eliminated via simple subtraction of a definite integral
as described above. In the Euclidean case, this yields the following
$L$-independent flow relations for OPE coefficients defined by Hadamard
normal ordering: 
\begin{align}
 & \frac{\partial}{\partial m^{2}}C_{\phi\cdots\phi}^{I}(x_{1},\dots,x_{n};z)\label{eq:L-indep Euclidean fe}\\
 & \qquad\sim-\frac{1}{2}\mathfrak{L}[L]\int d^{D}y\chi(y,z;L)C_{\phi^{2}\phi\cdots\phi}^{I}(y,x_{1},\dots,x_{n};z)-\sum_{C}b_{C}(L)C_{\phi\cdots\phi}^{C}(x_{1},\dots,x_{n};z),\nonumber 
\end{align}
with $\mathfrak{L}[L]$ given by eq.(\ref{eq:full L op}) and the
explicit dependence of $b_{C}$ on the OPE coefficients given in formula
(\ref{eq:b_C(L)}) of Theorem \ref{thm:Eucl fe for H-NO id coef}.
(For comparison with the Euclidean flow relations \eqref{eq:HH fe}
and \eqref{eq:naive flow eq for Wick id coef}, one should take $\chi$
to be a step function cutoff, $\chi(y,z)=\theta(L^{-2}|y-z|^{2})$.)
In the Minkowski case, the flow relations for the case where the Wick
powers are defined by Hadamard normal ordering\footnote{A similar formula holds for the case of a general definition of Wick
powers, with the only difference being the presence of additional
terms containing factors of the smooth functions $F_{k}$ that parameterize
the field-redefinition freedom of Wick fields.} become (see Theorem \ref{thm:Minkowski flow rel}) 
\begin{align}
 & \frac{\partial}{\partial m^{2}}C{}_{T_{0}\{\phi\cdots\phi\}}^{I}(x_{1},\dots,x_{n};z)\\
 & \qquad\sim-\frac{i}{2}\int d^{D}y\mathfrak{L}[L]\chi(y,z;L)C{}_{T_{0}\{\phi^{2}\phi\cdots\phi\}}^{I}(y,x_{1},\dots,x_{n};z)-\sum_{C}c_{C}(L)C{}_{T_{0}\{\phi\cdots\phi\}}^{C}(x_{1},\dots,x_{n};z),\nonumber 
\end{align}
where $c_{C}$ is given by formula (\ref{eq:c_C Minkowski}). The
ambiguities in the choice of $c_{C}$ correspond to the inherent renormalization
ambiguities in the OPE coefficients of Hadamard normal-ordered Wick
monomials.

The second major obstacle to obtaining Lorentzian flow relations arises
in curved spacetimes as a result of the nonlocal dependence on the
metric caused by integrating over a region of finite size. We overcome
this obstacle by replacing the true spacetime metric, $g_{\mu\nu}$,
with its Taylor polynomial, $g_{\mu\nu}^{(N)}$, in Riemannian normal
coordinates about $z$, carried to sufficiently high order, $N$,
to achieve equivalence in the flow relations up to the desired scaling
degree. This replacement is made prior to evaluating the spacetime
integral, so the resulting flow relations will be suitably ``local''
in the sense that they depend only on finitely-many derivatives of
the metric evaluated at the event $z$. However, we still need to
introduce a cutoff function, $\chi$, with an associated length scale
$L$ and, thus, these local flow relations will fail to be covariant
on account of the presence of $\chi$ and fail to scale almost homogeneously
due to the presence of $L$. Nevertheless, we can again introduce
compensating local counterterms to render the flow relation covariant
and we can construct an operator $\mathfrak{L}$ to eliminate the
dependence on $L$ to any desired asymptotic scaling degree. In any
Riemannian normal coordinate system with origin at $z$, the resulting
flow relations take the form, 
\begin{align}
\frac{\partial}{\partial m^{2}}C_{T_{0}\{\phi\cdots\phi\}}^{I}(x_{1},\dots,x_{n};\vec{0})\sim & -\frac{i}{2}\int_{\mathbb{R}^{D}}d^{D}y\sqrt{-g^{(N)}(y)}\mathfrak{L}[L]\chi(y,\vec{0};L)C_{T_{0}\{\phi^{2}\phi\cdots\phi\}}^{I}(y,x_{1},\dots,x_{n};\vec{0})+\nonumber \\
 & -\sum_{C}c_{C}C_{T_{0}\{\phi\cdots\phi\}}^{C}(x_{1},\dots,x_{n};\vec{0}),\label{eq:local covariant curved fe}
\end{align}
where the OPE coefficients on both sides and the counterterm coefficients
$c_{C}$ are functionals of the polynomial metric $g_{\mu\nu}^{(N)}$.
All dependence of $c_{C}$ on the polynomial metric at event $z$
can be expressed entirely in terms of totally-symmetric covariant
derivatives of the Riemannian curvature tensor. The explicit form
of $c_{C}$ is given in terms of the OPE coefficients in formula (\ref{eq:c_C CS}).
Overall, the key new aspects of the curved spacetime flow relations
(\ref{eq:local covariant curved fe}) are the replacement of the metric
by a polynomial approximation and the presence of additional counterterms
involving the curvature.

Finally, we note that our derivations of the flow relations for flat
Euclidean space given in Section \ref{sec:flat Eucl fe}, the flow
relations for Minkowski spacetime given in Section \ref{sec:Minkowski-flow-relations},
and the flow relations for general curved Lorentzian spacetimes given
in Section \ref{sec:Wick fe in CS} were based upon formulas for OPE
coefficients that we obtained explicitly in Section \ref{sec:K-G OPE-coefficients}.
However, for nonlinear models, such explicit non-perturbative formulas
for the OPE coefficients are not available. However, in Appendix \ref{sec:model-indep counterterms},
we show that for the integrals which appear in the flow relations,
one can derive covariance-restoring counterterms using only the associativity
property of OPE coefficients, without explicit knowledge of the coefficients.
When specialized to Klein-Gordon theory, this general algorithm reproduces
the results we derived in Sections \ref{sec:Minkowski-flow-relations}-\ref{sec:Wick fe in CS}.
When applied to $\lambda\phi^{4}$-theory in a curved Lorentzian spacetime
$(M,g_{ab})$, the algorithm developed in Appendix \ref{sec:model-indep counterterms}
yields 
\begin{align}
 & \frac{\partial}{\partial\lambda}C_{T_{0}\{A_{1}\cdots A_{n}\}}^{B}(x_{1},\dots,x_{n};\vec{0})\sim\nonumber \\
 & -\frac{1}{4!}\int d^{4}y\sqrt{-g^{(N)}(y)}\chi(y,\vec{0};L)\left[\vphantom{\int_{\epsilon}}C_{T_{0}\{\phi^{4}A_{1}\cdots A_{n}\}}^{B}(y,x_{1},\dots,x_{n};\vec{0})+\right.\nonumber \\
 & -\sum_{i=1}^{n}\sum_{[C]\le[A_{i}]}\left[C_{T_{0}\{\phi^{4}A_{i}\}}^{C}(y,x_{i};x_{i})-\right.\sum_{[D]}\left.c_{D}^{C}C_{T_{0}\{A_{i}\}}^{D}(x_{i};\vec{0})\right]C_{T_{0}\{A_{1}\cdots\widehat{A_{i}}C\cdots A_{n}\}}^{B}(x_{1},\dots,x_{n};\vec{0})+\nonumber \\
 & -\left[\vphantom{C_{T_{0}\{\phi^{4}\}}^{B}}\right.\sum_{[C]<[B]}C_{T_{0}\{\phi^{4}C\}}^{B}(y,\vec{0};\vec{0})-\sum_{[C]\ge[B]}\left.\vphantom{C_{T_{0}\{\phi^{4}\}}^{B}}c_{C}^{B}\right]\left.\vphantom{\int_{\epsilon}}C_{T_{0}\{A_{1}\cdots A_{n}\}}^{C}(x_{1},\dots,x_{n};\vec{0})\right],\label{eq:lambda phi 4 Mink fe}
\end{align}
where the $[D]$-sum in the second line and the $[C]\ge[B]$ sum in
the third line are carried out to sufficiently-large but finite engineering
dimensions\footnote{The coefficient $C_{T_{0}\{A_{i}\}}^{D}=C_{A_{i}}^{D}$ involving
a single field factor is given by the geometric factors that appear
in an ordinary Taylor expansion (see \eqref{eq:1 point OPE}).}. The form of the counterterm coefficients $c_{C}^{B}$ is given in
Appendix \ref{sec:model-indep counterterms} for flat Minkowski spacetime.
It would be natural to associate the inherent local and covariant
ambiguities in $c_{C}^{B}$ with the field-redefinition freedom of
$\lambda\phi^{4}$-theory, but we have not investigated this issue\footnote{This analysis would require an understanding of what field-redefinition
freedom is allowed for the non-perturbative interacting theory. Note
also that we have not attempted to eliminate the $L$-dependence of
the flow relations \eqref{eq:lambda phi 4 Mink fe}. The techniques
described in Subsection \ref{subsec:Eucl Had NO fe} can be used to
eliminate the $L$-dependence of \eqref{eq:lambda phi 4 Mink fe}
to any finite order in perturbation theory, but it is not obvious
how to remove the $L$-dependence non-perturbatively. }. \bigskip{}

The structure of our paper is as follows. In Section \ref{sec:K-G theory and Wick fields},
we review the theory of a free Klein-Gordon field on a curved Lorentzian
spacetime. The ambiguities in the definition of arbitrary Wick monomials\textcolor{blue}{{}
}$\Phi_{A}\equiv\nabla_{\alpha_{1}}\phi\cdots\nabla_{\alpha_{n}}\phi$
(where $\alpha_{i}$ denote spacetime multi-indices) is fully analyzed.
The precise form of the ``mixing matrix'' $\mathcal{Z}{}_{A}^{B}$
describing allowed field redefinitions is given in Theorem \ref{thm:Wick uniquenes Z},
and it is shown in Proposition \ref{prop:Wick uniqueness Fn} that
the field redefinition freedom is fully characterized by a sequence
of smooth, real-valued functions $F_{n}(x_{1},\dots,x_{n};z)$ that
are symmetric in $(x_{1},\dots,x_{n})$.

In Section \ref{sec:K-G OPE-coefficients}, we show that the Klein-Gordon
field admits an OPE of the form eq.(\ref{eq:OPE}) for Hadamard states
$\Psi$. In Theorem \ref{thm: Had NO OPE coef}, we obtain an explicit
formula for the OPE coefficients for the case where Wick monomials
are defined by Hadamard normal ordering. For a general prescription
for Wick monomials, we show that the OPE coefficients\textcolor{red}{{}
}$C_{A_{1}\cdots A_{n}}^{B}$ for products of general Wick monomials
are completely determined by the OPE coefficients $C_{\phi\cdots\phi}^{I}$
of the identity operator, $I$, for the $n$-point products of the
linear field observable, $\phi(x_{1})\cdots\phi(x_{n})$. Furthermore,
$C_{\phi\cdots\phi}^{I}(x_{1},\dots,x_{n};z)$ is uniquely determined
by the coefficients $C_{\phi\cdots\phi}^{I}$ with smaller $n$ up
to the addition of the function $F_{n}(x_{1},\dots,x_{n};z)$ appearing
in Proposition \ref{prop:Wick uniqueness Fn}. The existence and properties
of the OPE for a general definition of Wick monomials is summarized
in Theorem \ref{thm:existence Wick coef and associativity}. An inductive
construction of the Wick monomials in terms of $C_{\phi\cdots\phi}^{I}$
is given in Proposition \ref{prop:induct construction of Wick fields via id OPE coefs}.
As discussed in Subsection \ref{subsec:TOP}, all these statements
carry over to the OPE for unextended time-ordered products, since
the formulas for their OPE coefficients may be obtained in a simple
and direct manner from the formulas for $C_{A_{1}\cdots A_{n}}^{B}$.

In Section \ref{sec:flat Eucl fe}, we derive the flow relations for
the OPE coefficients of the Euclidean version of the Klein-Gordon
field. The modification of the flow relations needed to remove the
$L$-dependence is given in Subsection \ref{subsec:Eucl Had NO fe}.

In Section \ref{sec:Minkowski-flow-relations}, we analyze the flow
relations for the OPE coefficients of the Klein-Gordon field in Minkowski
spacetime. The counterterms in the flow relations needed to restore
Lorentz covariance are obtained, with the technical details given
in Appendix \ref{sec:K-G Lorentz cts}.

The generalization to curved spacetimes is given in Section \ref{sec:Wick fe in CS}.
To any specified scaling degree, we replace the spacetime metric by
a Taylor approximation in a Riemannian normal coordinate system defined
relative to the expansion point $z$. We then show that suitable counterterms
can be introduced to yield local and covariant flow relations that
are independent of $L$.

Finally, although our analysis in this paper is restricted to the
toy model of the free Klein-Gordon field, we show in Appendix \ref{sec:model-indep counterterms}
that our construction of the covariance-restoring counterterms requires
only the associativity property of the OPE coefficients and thus should
be applicable to nonlinearly interacting theories. The algorithm for
constructing counterterms given in Appendix \ref{sec:model-indep counterterms}
reproduces the results we derived in Sections \ref{sec:Minkowski-flow-relations}-\ref{sec:Wick fe in CS}
when applied to Klein-Gordon theory. When applied to $\lambda\phi^{4}$-theory
in Lorentzian spacetime $(M,g_{ab})$, we obtain the local and covariant
Lorentzian analogue (\ref{eq:lambda phi 4 Mink fe}) of the Holland
and Hollands Euclidean flow relations (\ref{eq:HH fe}).

\global\long\def\local{\text{W1}}
 \global\long\def\spectrum{\text{W2}}
 \global\long\def\commutator{\text{W3}}
 \global\long\def\Leibniz{\text{W4}}
 \global\long\def\Hermiticity{\text{W5}}
 \global\long\def\symmetry{\text{W6}}
 \global\long\def\scaling{\text{W7}}
 \global\long\def\conservation{\text{W8}}

\newcommandx\T[1][usedefault, addprefix=\global, 1=]{S^{#1}}
 \global\long\def\H#1#2{\Phi_{#1_{#2}}^{H}}
 \global\long\def\W#1{\Phi_{#1}}
 \global\long\def\na#1#2{\nabla_{\alpha_{(#1,#2)}}^{(x_{#1})}}
 \global\long\def\Z#1#2{\mathcal{Z}_{#1}^{#2}}
 \global\long\def\Zinv#1#2{(\mathcal{Z}^{-1})_{#1}^{#2}}

\paragraph*{Notation and conventions: \addcontentsline{toc}{subsection}{Notation and conventions}\label{par:Notation-and-conventions}}

We use letters from the beginning of the Latin alphabet to denote
abstract indices and our spacetime geometry conventions coincide with
those of \cite{Wald_GR_text}. Tensors are often abbreviated with
multi-indices chosen from the beginning of the Greek alphabet ($\alpha,\beta,\gamma,\dots$)\textemdash e.g.,
we denote a tensor $T_{\phantom{a_{1}\cdots a_{n}}b_{1}\cdots b_{m}}^{a_{1}\cdots a_{n}}$
of type $(n,m)$ simply as $T_{\hphantom{\alpha}\beta}^{\alpha}$.
In combinatorial formulas involving abstract multi-indices, we use
the obvious analogues of the standard multi-index conventions: e.g.,
for $T^{\alpha}\equiv T^{a_{1}\cdots a_{n}}$, we have $|\alpha|\equiv n$
and $\alpha!\equiv|\alpha|!$. When coordinate components of a tensor
are needed, we denote ordinary spacetime indices with letters from
the middle of the Greek alphabet $(\mu,\nu,\kappa,\rho,\dots)$ but
continue to denote multi-indices with ($\alpha,\beta,\gamma,\dots$).
Throughout, $\mathbb{N}$ denotes the natural numbers (positive integers,
excluding $0$) and $\mathbb{N}_{0}\equiv\{0\}\cup\mathbb{N}$. We
use ``smooth'' to mean infinitely differentiable, i.e. $C^{\infty}$,
and the ``Taylor coefficients of $f$ evaluated at $z$'' will refer
to the set, $\nabla_{\alpha_{1}}\cdots\nabla_{\alpha_{n}}f(x_{1},\dots,x_{n})|_{x_{1},\dots x_{n}=z}$,
of covariant derivatives of a multivariate smooth function $f$ evaluated
at $z$ \emph{without} the numerical factor $1/(\alpha_{1}!\cdots\alpha_{n}!)$.
The set of smooth functions of compact support is denoted by $C_{0}^{\infty}$
and the dual space of distributions is denoted by $\mathcal{D}':C_{0}^{\infty}\to\mathbb{R}$.

Some notation in the paper may not always be redefined with each use.
For the convenience of the reader, we include here a list of frequently-employed
non-standard symbols and their definitions or, in cases where the
definition is too lengthy, we reference the equation where the symbol
is defined. \setlength{\LTleft}{-1cm}\setlength{\LTright}{0cm}\begin{longtable}{c>{\raggedright}p{0.95\textwidth}} \multicolumn{2}{c}{\uline{field notation}}\tabularnewline  & \tabularnewline $\Phi_{A}$ & the differentiated scalar field monomial, $\nabla_{\alpha_{1}}\phi\nabla_{\alpha_{2}}\phi\cdots\nabla_{\alpha_{p}}\phi$ \tabularnewline $\Phi_{A}^{H}$ & monomial, $(\nabla_{\alpha_{1}}\phi\cdots\nabla_{\alpha_{p}}\phi)_{H}$, defined via ``Hadamard normal ordering'', see eq.~\eqref{eq:H normal ordered}\tabularnewline $\mathcal{Z}_{A}^{B}$ & field redefinition ``mixing matrix'' defined in eq.~\eqref{def:mixing matrix Z}\tabularnewline $[A]_{\phi}$ & the number of $\phi$-factors appearing in $\Phi_{A}$ (i.e., $p$, in this case)\tabularnewline $[A]_{\nabla}$ & the number of covariant derivatives appearing in $\Phi_{A}$ (i.e., $\text{\ensuremath{\sum}}_{i=1}^{p}|\alpha_{i}|$, in this case)\tabularnewline $[A]$ & ``engineering dimension'' of $\Phi_{A}$ given by (rational) number ${\displaystyle (D/2-1)\times}[A]_{\phi}+[A]_{\nabla}$ \tabularnewline $C_{A_{1}\cdots A_{n}}^{B}$ & OPE coefficients defined in relation \eqref{eq:OPE}\tabularnewline $(C_{H})_{A_{1}\cdots A_{n}}^{B}$ & OPE coefficients defined in relation \eqref{eq:Had OPE def} for Hadamard normal-ordered fields\tabularnewline $C_{T_{0}\{A_{1}\cdots A_{n}\}}^{B}$ & OPE coefficients of unextended time-ordered products defined in \eqref{eq:def T0 OPE coefs}\tabularnewline $(C_{H}){}_{T_{0}\{A_{1}\cdots A_{n}\}}^{B}$ & Hadamard normal-ordered version of $C_{T_{0}\{A_{1}\cdots A_{n}\}}^{B}$\tabularnewline  & \tabularnewline \multicolumn{2}{c}{\uline{differential operators, parametrices and Greens functions}}\tabularnewline  & \tabularnewline $K$ & Klein-Gordon operator, $K\equiv-g^{ab}\nabla_{a}\nabla_{b}+m^{2}+\xi R$\tabularnewline $H$ & Hadamard parametrix defined in eq.~\eqref{eq:H}\tabularnewline $H_{F}$ & Feynman parametrix, $H_{F}\equiv H-i\Delta^{\text{adv}}$, see also Footnote \ref{fn:i epsilon for H_F}\tabularnewline $\Delta$ & causal propagator, $\Delta\equiv\Delta^{\text{adv}}-\Delta^{\text{ret}}$\tabularnewline $\Delta^{\text{adv}},\text{\ensuremath{\Delta}}^{\text{ret}}$ & advanced and retarded, resp., Greens function of $K$ \tabularnewline $\mathfrak{L}$ & operator defined in terms of infrared length scale $L$ and $\partial/\partial L$ in eq.~\eqref{eq:full L op}\tabularnewline  & \tabularnewline \multicolumn{2}{c}{\uline{geometric notation}}\tabularnewline  & \tabularnewline $D$ & the spacetime dimension, i.e., $(\text{\# spatial dimensions})+(\text{\# temporal dimensions})$\tabularnewline $d\mu_{g}(x)$ & covariant volume element, $d^{D}x\sqrt{-g(x)}$, on spacetime $(M,g_{ab})$\tabularnewline $S^{\beta}(x,z)$ & bi-tensor defined with respect to the geodesic distance function in eq.~\eqref{eq:covariant Taylor coef}\tabularnewline $Z^{*}M$ & zero section of the cotangent bundle $T^{*}M$ \tabularnewline $V_{x}^{\pm}$ & future/past lightcone of the cotangent space $T_{x}^{*}M$\tabularnewline $\dot{V}_{x}^{\pm}$ & boundary of future/past lightcone of cotangent space $T_{x}^{*}M$\tabularnewline $(x_{1},k_{1})\sim(x_{2},k_{2})$ & equivalence relation defined below eq.~\eqref{eq:Had WF} for $(x_{1},k_{1}),(x_{2},k_{2})\in T^{*}M$\tabularnewline  & \tabularnewline \multicolumn{2}{c}{\uline{asymptotic equivalence relations}}\tabularnewline  & \tabularnewline $\sim_{\mathcal{T},\delta}$ & \multirow{2}{0.95\textwidth}{asymptotic equivalence to scaling degree $\delta$ for merger tree $\mathcal{T}$, defined in the paragraph surrounding eq.~\eqref{eq:OPE precise asymp rel}}\tabularnewline &  \tabularnewline $\sim_{\delta}$ & shorthand for ``$\sim_{\mathcal{T},\delta}$'' when $\mathcal{T}$ is the trivial merger tree, i.e., all spacetime points\\  merge at the same rate to $z$\tabularnewline $\approx$ & asymptotic equivalence for all $\delta$ and $\mathcal{T}$, defined in the paragraph surrounding eq.~\eqref{eq:OPE precise asymp rel}\tabularnewline \end{longtable}

\section{Klein-Gordon theory and local Wick fields\label{sec:K-G theory and Wick fields}}

The theory of a Klein-Gordon scalar field on a $D$-dimensional spacetime
$(M,g_{ab})$ with mass $m$ and curvature coupling $\xi$ is given
by the action, 
\begin{equation}
S_{\text{KG}}\equiv-\frac{1}{2}\int_{M}d^{D}x\sqrt{-g(x)}\left[g^{ab}(x)\nabla_{a}\phi(x)\nabla_{b}\phi(x)+\left(m^{2}+\xi R(x)\right)\phi^{2}(x)\right].\label{eq:K-G action}
\end{equation}
The equation of motion arising from this action is 
\begin{equation}
K\phi=0,
\end{equation}
where the Klein-Gordon operator $K$ is given by 
\begin{equation}
K\equiv-g^{ab}\nabla_{a}\nabla_{b}+m^{2}+\xi R.\label{eq:K-G op}
\end{equation}
To guarantee well-defined dynamics and to avoid causal pathologies,
we will restrict consideration throughout to globally-hyperbolic spacetimes,
$(M,g_{ab})$. Any globally-hyperbolic spacetime admits unique advanced,
$\Delta^{\text{adv}}$, and retarded, $\Delta^{\text{ret}}$, Green's
distributions of the Klein-Gordon operator, $K$ \cite{Friedlander}.

In this section, we consider the quantum field theory of the Klein-Gordon
field. Our main concern is the ambiguities in the definition of arbitrary
Wick monomials, i.e., quantum field observables of the form 
\begin{equation}
\Phi_{A}\equiv\nabla_{\alpha_{1}}\phi\cdots\nabla_{\alpha_{p}}\phi,\label{eq:multi index notation for monomials}
\end{equation}
Here $\alpha_{i}$ denotes an abstract multi-index, i.e., $\alpha_{i}=a_{i,1}\dots a_{i,|\alpha_{i}|}$
where each $a_{i,j}$ is a spacetime index. Thus, $\Phi_{A}$ corresponds
to a tensor constructed from $p$-factors of $\phi$, with $|\alpha_{i}|$-number
of derivatives on the $i$-th factor. The ambiguities in $\Phi_{A}$
will give rise to corresponding ambiguities in the $n$-point distributions,
\begin{equation}
\braket{\Phi_{A_{1}}(f_{1})\cdots\Phi_{A_{n}}(f_{n})}_{\Psi},\label{eq:general n point Wightman distributions}
\end{equation}
as well as the $n$-point distributions for the un-extended time-ordered
products. This will, in turn, give rise to corresponding ambiguities
in the OPE coefficients. The main result of this section will be to
obtain a simple characterization of the ambiguities in the definition
of Wick monomials which will be extremely useful for characterizing
the corresponding ambiguities in the OPE coefficients derived in the
next section.

In Subsection \ref{subsec:algebra, axioms, state space Wick poly},
we review the construction of the abstract algebra\footnote{The algebraic approach to quantum field theory was initiated in \cite{Haag_Kastler}.
A comprehensive review may be found in \cite[Chapter III]{Haag_book}.} containing Wick polynomials and the requirements (``axioms'') imposed
on the Wick monomials. The known uniqueness theorem for Wick monomials
implied by these axioms (see Theorem \ref{thm:Wick uniquenes Z})
is then reformulated in Subsection \ref{subsec:uniq Wick poly and novel reparm}
in terms in terms of a choice of smooth functions $F_{n}$ (see Proposition
\ref{prop:Wick uniqueness Fn}).

\subsection{Wick algebra and state space: axioms and existence of Wick polynomials
\label{subsec:algebra, axioms, state space Wick poly}}

In this subsection, we review the definition of the algebra of observables
$\mathcal{W}(M,g_{ab})$ for the Klein-Gordon field and the axioms
that determine the Wick monomials\textemdash up to the uniqueness
discussed in in the following subsection. Our discussion closely follows
\cite{HW_local_Wick_poly} which built on the earlier work of \cite{BFK_muSC_Wick_polynomials,BF_muA_interacting_QFTs,DF_pert_algebraicQFT_deform_quant,DF_algebraicQFT_loop_expansion}.

The construction of $\mathcal{W}(M,g_{ab})$ begins with the standard
CCR (canonical commutation relation) algebra $\mathcal{A}(M,g_{ab})$
generated by observables that are linear in $\phi$. To define $\mathcal{A}$,
we start with the free $*$-algebra $\mathcal{A}_{0}$ generated by
the identity $I$ and the fundamental (smeared) field $\phi(f)$ with
$f\in C_{0}^{\infty}(M)$. We then factor $\mathcal{A}_{0}$ by all
of the relations we wish to impose. To do so, we let $\mathcal{I}\subset\mathcal{A}_{0}$
be the two-sided ideal consisting of all elements in $\mathcal{A}_{0}$
that contain at least one factor that can be put into any of the following
forms: 
\begin{enumerate}
\item[i)] $\phi(c_{1}f_{1}+c_{2}f_{2})-c_{1}\phi(f_{1})-c_{2}\phi(f_{2}),$
with $c_{1},c_{2}\in\mathbb{C}$ 
\item[ii)] $\phi(f)^{*}-\phi(\overline{f})$ 
\item[iii)] $\phi(Kf),$ with the Klein-Gordon operator $K$ given by eq.~(\ref{eq:K-G op}). 
\item[iv)] $\phi(f_{1})\phi(f_{2})-\phi(f_{2})\phi(f_{1})-i\Delta(f_{1},f_{2})I,$
where $\Delta[M,g_{ab}]$ denotes the advanced minus retarded Green's
distribution for $K[g_{ab},m^{2},\xi]$ on $M$ 
\end{enumerate}
The algebra $\mathcal{A}$ is then defined to be the free algebra
factored by this ideal, 
\[
\mathcal{A}(M,g_{ab})\equiv\mathcal{A}_{0}\backslash\mathcal{I}(M,g_{ab}).
\]
Thus, the CCR algebra effectively incorporates (i) the distributional
nature of quantum fields, (ii) the Hermiticity of real-valued fields,
(iii) the Klein-Gordon field equation, and (iv) the canonical commutation
relations. It contains all elements that are finite linear combinations
of products of the (smeared) fundamental field. Quantum states of
the CCR algebra are then just linear maps $\braket{\cdot}_{\Psi}:\mathcal{A}(M,g_{ab})\to\mathbb{C}$
which are normalized, $\braket{I}_{\Psi}=1$, and positive, $\braket{A^{*}A}_{\Psi}\ge0$
for all $A\in\mathcal{A}$.

The first step towards enlarging $\mathcal{A}(M,g_{ab})$ to the full
algebra of observables $\mathcal{W}(M,g_{ab})$ is to define the normal-ordered
product relative to a state $\braket{\cdot}_{\Psi}$ by the formula
\begin{equation}
:\phi(f_{1})\cdots\phi(f_{n}):_{\Psi}\;\equiv\sum_{P}(-1)^{|P|}\prod_{(i,j)\in P}\braket{\phi(f_{i})\phi(f_{j})}_{\Psi}\prod_{k\in\{1,\dots,n\}\backslash P}\phi(f_{k}),\label{eq:def N-O in A}
\end{equation}
where the $P$ are sets containing disjoint, ordered pairs taken from
$\{1,\dots,n\}$ such that $i<j$, and $|P|$ denotes the number of
pairs in $P$. Note normal-ordered elements (\ref{eq:def N-O in A})
of $\mathcal{A}$ are symmetric under interchange of test functions,
i.e., $:\phi(f_{1})\cdots\phi(f_{n}):_{\Psi}=:\phi(f_{\pi(1)})\cdots\phi(f_{\pi(n)}):_{\Psi}$
for any permutation $\pi$. Products of normal-ordered elements also
satisfy the following important identity (``Wick's theorem''), 
\begin{equation}
\negthickspace\negthickspace\negthickspace:\phi(f_{1})\cdots\phi(f_{n}):_{\Psi}:\phi(f_{n+1})\cdots\phi(f_{n+m}):_{\Psi}\;=\sum_{p\le\min(n,m)}\prod_{(i,j)\in P_{p}}\braket{\phi(f_{i})\phi(f_{j})}_{\Psi}\;:\negthickspace\negthickspace\negthickspace\prod_{k\in\{1,\dots,n\}\backslash P_{p}}\negthickspace\negthickspace\negthickspace\phi(f_{k}):_{\Psi},\label{eq:Wicks thm CCR}
\end{equation}
where $P_{p}$ denote a set containing $p$ disjoint, ordered pairs
$(i,j)$ such that $i\in\{1,2,\dots,n\}$ and $j\in\{n+1,n+2,\dots,n+m\}$.
Noting that $:\phi(f):_{\Psi}=\phi(f)$, it follows from this identity
that normal-ordered elements, in fact, comprise a basis of the CCR
algebra in the sense that any element of $\mathcal{A}(M,g_{ab})$
can be expressed via (\ref{eq:Wicks thm CCR}) as a linear combination
of terms of the form (\ref{eq:def N-O in A}) (see \eqref{eq:Wick expansion of product of linear fields}
for an explicit formula).

It is useful to view $:\phi(f_{1})\cdots\phi(f_{n}):_{\Psi}$ as mapping
$t_{n}=f_{(1}\otimes f_{2}\otimes\cdots\otimes f_{n)}$ into $\mathcal{A}(M,g_{ab})$.
We write 
\begin{equation}
W_{n}(t_{n})=:\phi(f_{1})\cdots\phi(f_{n}):_{\Psi}=\int_{\times^{n}M}d\mu_{g}(x_{1})\cdots d\mu_{g}(x_{n}):\phi(x_{1})\cdots\phi(x_{n}):_{\Psi}t_{n}(x_{1},\dots,x_{n})\label{wtn}
\end{equation}
where $d\mu_{g}(x)\equiv d^{D}x\sqrt{-g(x)}$. Similarly, denote by
$u_{m}\equiv f_{(n+1}\otimes f_{n+2}\otimes\cdots\otimes f_{n+m)}$
another symmetrized tensor product of smooth test functions. In this
notation, we may write eq.~(\ref{eq:Wicks thm CCR}) as 
\begin{equation}
W_{n}(t_{n})W_{m}(u_{m})=\sum_{k\le\min(n,m)}W_{n+m-2k}(t_{n}\otimes_{k}u_{m}),\label{eq:Wick product rule}
\end{equation}
where we define, for $n,m\ge k$, 
\begin{align}
(t_{n} & \otimes_{k}u_{m})(x_{1},\dots x_{n+m-2k})\nonumber \\
 & \equiv\sum_{\pi\in\Pi_{k}}\frac{n!m!}{k!((n-k)!)^{2}((m-k)!)^{2}}\int_{\times^{2k}M}d\mu_{g}(y_{1})\cdots d\mu_{g}(y_{2k})\braket{\phi(y_{1})\phi(y_{2})}_{\Psi}\cdots\braket{\phi(y_{2k-1})\phi(y_{2k)})}_{\Psi}\times\nonumber \\
 & \times t_{n}(y_{1},y_{3},\dots,y_{2k-1},x_{\pi(1)},\dots,x_{\pi(n-k)})u_{m}(y_{2},y_{4},\dots,y_{2k},x_{\pi(n-k+1)},\dots,x_{\pi(n+m-2k)}),\label{prodrule}
\end{align}
where $\Pi_{k}$ denotes any permutation of $\{1,\dots,n+m-2k\}$
such that $\pi(1)<\pi(2)<\cdots<\pi(n-k)$ and $\pi(n-k+1)<\pi(n-k+2)<\cdots<\pi(n+m-2k)$.
Note \eqref{prodrule} is symmetric in $(x_{1},\dots,x_{n+m-2k})$.

We now require $\Psi$ to be a Hadamard state, i.e a state whose two-point
distribution $\Psi_{2}(f_{1},f_{2})\equiv\braket{\phi(f_{1})\phi(f_{2})}_{\Psi}$
has a wavefront set of the form: 
\begin{equation}
\text{WF}[\Psi_{2}]=\left\{ (x_{1},k_{1};x_{2},k_{2})\in\times^{2}(T^{*}M\backslash Z^{*}M)|(x_{1},k_{1})\sim(x_{2},-k_{2}),k_{1}\in\dot{V}_{x_{1}}^{+}\right\} .\label{eq:Had WF}
\end{equation}
Here $\dot{V}_{x}^{\pm}$ denotes the boundary of the future/past
lightcone of $x$ and $(x_{1},k_{1})\sim(x_{2},k_{2})$ if $x_{1}$
and $x_{2}$ can be joined by a null-geodesic with respect to which
the covectors $k_{1},k_{2}$ are cotangent and coparallel. In any
convex normal neighborhood, the two-point distribution of a Hadamard
state takes the form\footnote{For states of the CCR algebra $\mathcal{A}$, the equivalence of the
microlocal spectral version (\ref{eq:Had WF}) of the Hadamard condition
and the position-space version (\ref{eq:Hstate}) was established
by Radzikowski in \cite[Theorem 5.1]{Rad_microlocal_Hadamard_cond}.}: 
\begin{align}
\Psi_{2}(x_{1},x_{2}) & =\frac{U(x_{1},x_{2})}{\left[\sigma(x_{1},x_{2})+2i0^{+}(T(x_{1})-T(x_{2}))+(0^{+})^{2}\right]^{D/2-1}}+\label{eq:Hstate}\\
 & \qquad+V(x_{1},x_{2})\log\ell^{-2}\left[\sigma(x_{1},x_{2})+2i0^{+}(T(x_{1})-T(x_{2}))+(0^{+})^{2}\right]+W_{\Psi}(x_{1},x_{2}),\nonumber 
\end{align}
where $T$ is any local time function; $\sigma$ is the (signed) squared
geodesic distance\footnote{i.e., $\sigma$ is equal to twice the ``Synge bi-scalar/world function''. }
between points $x_{1}$ and $x_{2}$; $\ell$ is an arbitrary length
scale; and $U$, $V$ and $W_{\Psi}$ are smooth symmetric functions.
If $D$ is odd, then $V=0$. Moreover, $U$ and $V$ are independent
of the Hadamard state $\Psi$ and are locally and covariantly determined
by the Hadamard recursion relations\footnote{More precisely, all of the derivatives of $U$ and $V$ at coincidence
$x_{1}=x_{2}$ are uniquely as well as locally and covariantly determined
by the fact that $K\Psi_{2}=\text{smooth}$, with the Klein-Gordon
operator $K$, see eq.~(\ref{eq:K-G op}), acting on either variable.}. It is known that there exist Hadamard states on $\mathcal{A}(M,g_{ab})$
for any globally-hyperbolic spacetime $(M,g_{ab})$.

Thus far, we have merely rewritten the product rules of $\mathcal{A}(M,g_{ab})$
in terms of normal-ordered products. The enlargement of the algebra
$\mathcal{A}(M,g_{ab})$ to the desired algebra $\mathcal{W}(M,g_{ab})$
is accomplished by recognizing that for Hadamard states, eq.~(\ref{prodrule})
makes sense not merely when $t_{n}$ and $u_{m}$ are products of
test functions but also when they are distributions of the following
type: Denote by $\mathcal{V}_{n}(M,g_{ab})$ the set of all elements
of the (product) cotangent bundle $\times^{n}T^{*}M$ that are entirely
contained within either the future or past lightcones, 
\[
\mathcal{V}_{n}(M,g_{ab})\equiv\left\{ (x_{1},k_{1};x_{2},k_{2};\dots;x_{n}k_{n})\in\times^{n}T^{*}M|(k_{i}\in V_{x_{i}}^{+},\forall i\in n)\text{ or }(k_{i}\in V_{x_{i}}^{-},\forall i\in n)\right\} ,
\]
Let $\mathcal{E}'(M,g_{ab})$ denote the space of compactly-supported
symmetric distributions $\mathcal{D}'_{0}(\times^{n}M)$ whose wavefront
sets do not intersect $\mathcal{V}_{n}(g_{ab},M)$, 
\[
\mathcal{E}'(\times^{n}M,g_{ab})\equiv\left\{ t\in\mathcal{D}'_{0}(\times^{n}M)|n\in\mathbb{N}\text{ and }\text{WF}(t)\cap\mathcal{V}_{n}(g_{ab},M)=\emptyset\right\} .
\]
Then formula (\ref{prodrule}) is well defined whenever $t_{n}$ and
$u_{m}$ are distributions in $\mathcal{E}'$. This means that we
can extend the algebra $\mathcal{A}(M,g_{ab})$ to an algebra $\mathcal{W}(M,g_{ab})$
generated by quantities of the form $W(t_{n})$ for all $t_{n}\in\mathcal{E}'$,
with product rule given by eq.~(\ref{eq:Wick product rule}). An
example of such a distribution in $\mathcal{E}'$ is $t_{n}=f(x_{1})\delta(x_{1},\dots,x_{n})$.
By eq.~(\ref{wtn}), $W(t_{n})$ corresponds to $:\phi^{n}:_{\Psi}(f)$.
Thus, $\mathcal{W}(M,g_{ab})$ includes elements corresponding to
the normal-ordered powers of the field. More generally, it includes
all normal-ordered monomials, $:\nabla_{\alpha_{1}}\phi\cdots\nabla_{\alpha_{n}}\phi:_{\Psi}(f^{\alpha_{1}\cdots\alpha_{n}})$,
where the $\alpha_{i}$ denote multi-spacetime-indices and $f^{\alpha_{1}\cdots\alpha_{n}}$
denotes a test tensor field. For notational convenience, we will typically
suppress the multi-indices of $f^{\alpha_{1}\cdots\alpha_{n}}$ and
write $:\nabla_{\alpha_{1}}\phi\cdots\nabla_{\alpha_{n}}\phi:_{\Psi}(f)$,
with it always being understood that $f$ is a tensor field dual to
the tensor $:\nabla_{\alpha_{1}}\phi\cdots\nabla_{\alpha_{n}}\phi:_{\Psi}$.
Note that all Hadamard states on $\mathcal{A}(M,g_{ab})$ can be naturally
extended to states on $\mathcal{W}(M,g_{ab})$. Furthermore, it can
be shown that the only continuous states on $\mathcal{W}(M,g_{ab})$
are Hadamard states \cite{Hollands_Ruan}.

The above construction of $\mathcal{W}(M,g_{ab})$ made use of a choice
of Hadamard state $\Psi$. However, it is not difficult to show that,
as an abstract algebra, $\mathcal{W}(M,g_{ab})$ does not depend on
the choice of $\Psi$ \cite[see Lemma 2.1]{HW_local_Wick_poly}. Nevertheless,
normal-ordered quantities such as $:\phi^{n}:_{\Psi}$ do depend on
the choice of $\Psi$ for any $n>1$, i.e., $:\phi^{n}:_{\Psi'}\neq:\phi^{n}:_{\Psi}$
if $\Psi'\neq\Psi$. Which quantity should represent the true field
observable $\phi^{n}$ and other Wick monomials? In fact, when $n>1$,
$:\phi^{n}:_{\Psi}$ for any choice of Hadamard state $\Psi$ is not
a suitable candidate to represent $\phi^{n}$ since it does not satisfy
the requirement of being locally and covariantly defined. Following
\cite{HW_local_Wick_poly,HW_Conservation_Stress-energy}, we determine
the Wick monomials by imposing the requirements (``axioms'') on
their definition. Existence of a definition of Wick monomials satisfying
these axioms can then be proven. We will consider the allowed freedom
(i.e., non-uniqueness) in the definition of the Wick monomials in
the next section.

The following are our axioms\footnote{These axioms differ from the ones originally given in \cite{HW_local_Wick_poly}
in that the Leibniz rule W4 and the conservation of stress-energy
W8 have been added as in \cite{HW_Conservation_Stress-energy}. In
addition, the analytic dependence condition of \cite{HW_local_Wick_poly,HW_Conservation_Stress-energy}
has been replaced by the joint smoothness condition of \cite{KM_analytic_dep_unnecessary,KM_Wick_poly}.} for Wick monomials:

\subparagraph{$\protect\local$ Local and covariant.}

The Wick monomials are required to be ``local and covariant'' in
the following sense. Let $(M,g_{ab})$ and $(M',g_{ab}')$ denote
two globally-hyperbolic spacetimes. Suppose $\psi:M\to M'$ is an
isometric embedding (i.e., $g_{ab}=\psi^{*}g_{ab}'$, where $\psi^{*}$
denotes the pullback by $\psi$) that also is causality-preserving:
i.e., $\psi(x_{1}),\psi(x_{2})\in M'$ is connected by a causal curve
 only if $x_{1},x_{2}\in M$ is connected by a causal curve. Then,
as shown in \cite[Lemma 3.1]{HW_local_Wick_poly}, there is a canonical
injective unital $*$-homomorphism $\alpha_{\psi}:\mathcal{W}(M,g_{ab})\to\mathcal{W}(M',g_{ab}')$.
We demand that the definition of any Wick monomial $\Phi_{A}(f)=(\nabla_{\alpha_{1}}\phi\cdots\nabla_{\alpha_{n}}\phi)(f)$
be such that, under this homomorphism, we have $\alpha_{\psi}\left[\text{\ensuremath{\Phi_{A}}}(f)\right]\mapsto\Phi_{A}(\psi_{*}f)$,
where $f$ is a test tensor field on $M$ dual to $\Phi_{A}$ and
$\psi_{*}f$ is the push-forward of $f$ via $\psi$.

\subparagraph{$\protect\spectrum$ Smoothness and joint smoothness.}

For any Wick monomial $\Phi_{A}$ and for any Hadamard state $\braket{\cdot}_{\Psi}$,
we require that $\text{WF}[\braket{\Phi_{A}}_{\Psi}]=\emptyset$,
i.e., that $\braket{\Phi_{A}(x)}_{\Psi}$ is smooth. Furthermore,
we require that this quantity be jointly smooth in $x$, the spacetime
metric, and the parameters $m^{2}$ and $\xi$. To define this notion,
we must first allow $m^{2}$ and $\xi$ to have spacetime dependence.
We then consider one parameter variations $g_{ab}(s_{1})$, $m^{2}(s_{2})$,
and $\xi(s_{3})$ in a compact spacetime region $\mathcal{R}$, such
that $(M,g_{ab}(s_{1}))$ is globally hyperbolic for all $s_{1}$.
As shown in \cite[Lemma 4.1]{HW_local_Wick_poly}, we may naturally
identify the algebra $\mathcal{W}$ associated with $(g_{ab}(s_{1}),m^{2}(s_{2}),\xi(s_{3}))$
with the algebra associated with $(g_{ab}(0),m^{2}(0),\xi(0))$ by
identifying these algebras on a Cauchy surface lying outside the future
of $\mathcal{R}$. Consequently, we may identify a Hadamard state
$\braket{\cdot}_{\Psi}$ on the algebra for $(g_{ab}(0),m^{2}(0),\xi(0))$
with a Hadamard state on the algebra associated with $(g_{ab}(s_{1}),m^{2}(s_{2}),\xi(s_{3}))$.
For any Hadamard state $\braket{\cdot}_{\Psi}$, for any Wick monomial
$\Phi$, and for any family $(g_{ab}(s_{1}),m^{2}(s_{2}),\xi(s_{3}))$
as above, we require that $\braket{\Phi_{A}[g_{ab}(s_{1}),m^{2}(s_{2}),\xi(s_{3})](x)}_{\Psi}$
be jointly smooth in $(x,s_{1},s_{2},s_{3})$.

\subparagraph{$\protect\commutator$ Commutator. }

The commutator of any Wick monomial $\W A=\nabla_{\alpha_{1}}\phi\cdots\nabla_{\alpha_{n}}\phi$
with the fundamental field $\phi$ is given by, 
\[
\left[(\nabla_{\alpha_{1}}\phi\cdots\nabla_{\alpha_{n}}\phi)(f_{1}),\phi(f_{2})\right]=i\sum_{i=1}^{n}(\nabla_{\alpha_{1}}\phi\cdots\widehat{\nabla_{\alpha_{i}}\phi}\cdots\nabla_{\alpha_{n}}\phi)(f_{1})\Delta\left((-1)^{|\alpha_{i}|}\nabla_{\alpha_{i}^{T}}f_{1},f_{2}\right),
\]
where $\Delta=\Delta^{\text{adv}}-\Delta^{\text{ret}}$ is the advanced
minus retarded Green's function, $\widehat{\nabla_{\alpha_{i}}\phi}$
denotes the omission of the $\nabla_{\alpha_{i}}\phi$ factor and
for the multi-index $\alpha\equiv a_{1}a_{2}\cdots a_{|\alpha|}$,
we use the notation $\alpha^{T}\equiv a_{|\alpha|}a_{|\alpha|-1}\cdots a_{1}$.

\subparagraph{$\protect\Leibniz$ Leibniz rule.}

Any Wick monomial $\W A=\nabla_{\alpha_{1}}\phi\cdots\nabla_{\alpha_{n}}\phi$
must satisfy the Leibniz rule in the sense that 
\begin{equation}
(\nabla_{\alpha_{1}}\phi\cdots\nabla_{\alpha_{n}}\phi)\left(-\nabla_{a}f\right)=\left((\nabla_{a}\nabla_{\alpha_{1}})\phi\cdots\nabla_{\alpha_{n}}\phi\right)(f)+\dots+\left(\nabla_{\alpha_{1}}\phi\cdots(\nabla_{a}\nabla_{\alpha_{n}})\phi\right)(f).\label{eq:symmetrized Leibniz rule}
\end{equation}
Here, the left side of this equation is the distributional derivative
of $\Phi$ whereas the right side is what one would obtain by applying
the Leibniz rule to the classical expression $\Phi_{A}=\nabla_{\alpha_{1}}\phi\cdots\nabla_{\alpha_{n}}\phi$.

\subparagraph{$\protect\Hermiticity$ Hermiticity.}

All Wick monomials are required to be Hermitian in the sense that,
\[
(\nabla_{\alpha_{1}}\phi\cdots\nabla_{\alpha_{n}}\phi)(f)^{*}=(\nabla_{\alpha_{1}}\phi\cdots\nabla_{\alpha_{n}}\phi)(\overline{f}).
\]

\subparagraph{$\protect\symmetry$ Symmetry. }

Any Wick monomial is required to be symmetric under interchange of
the fields\textemdash i.e., 
\[
(\nabla_{\alpha_{\pi(1)}}\phi\cdots\nabla_{\alpha_{\pi(n)}}\phi)(f)=(\nabla_{\alpha_{1}}\phi\cdots\nabla_{\alpha_{n}}\phi)(f),
\]
for all permutations $\pi$ of $\{1,\dots,n\}$.

\subparagraph{$\protect\scaling$ Scaling. }

For $\lambda>0$, let $\sigma_{\lambda}:\mathcal{W}(M,\lambda^{-2}g_{ab},\lambda^{2}m^{2},\xi)\to\mathcal{W}(M,g_{ab},m^{2},\xi)$
be the canonical $*$-isomorphism defined in \cite[Lemma 4.2]{HW_local_Wick_poly}.
The ``scaling dimension'' $d_{A}$ of any local, covariant field
$\Phi_{A}$ is defined to be the smallest real number $\delta$ such
that 
\[
\lim_{\lambda\to0^{+}}\lambda^{(D-\delta)}\sigma_{\lambda}\left[\W A[\lambda^{-2}g_{ab},\lambda^{2}m^{2},\xi]\right](f)=0,
\]
for all $(g_{ab},m^{2},\xi)$. The factor of $\lambda^{D}$ accounts
for the fact that the volume element scales as $d\mu_{\lambda^{-2}g}=\lambda^{D}d\mu_{g}$.
Define: 
\begin{equation}
[A]\equiv\frac{(D-2)}{2}\times[A]_{\phi}+[A]_{\nabla},\label{eq:[A]}
\end{equation}
where $[A]_{\phi}$ and $[A]_{\nabla}$ denote, respectively, the
number of $\phi$-factors and the number of covariant derivatives
in $\W A$ (i.e., for $\W A\equiv(\nabla_{\alpha_{1}}\phi\cdots\nabla_{\alpha_{p}}\phi)$,
$[A]_{\phi}=p$ and $[A]_{\nabla}=\sum_{i=1}^{p}|\alpha_{i}|$). We
require the Wick monomial $\Phi_{A}$ to have scaling dimension, 
\begin{align}
d_{A}=[A] & +2\times(\text{\# of curvature factors})+(\text{\# of ``up'' indices})-(\text{\# of ``down'' indices})+\label{eq:d_A}\\
 & +2\times(\text{\# of \ensuremath{m^{2}}-factors\ensuremath{)}}\nonumber 
\end{align}
For example, $(\nabla_{a}\phi\nabla_{b}\nabla_{c}\phi)$ has scaling
dimension $D-2$, whereas $R^{ab}\phi$ has scaling dimension $D/2+3$.
We further require that $\Phi_{A}$ scale homogeneously up to logarithms:
i.e., there must exist finite $N$ such that, 
\[
\frac{\partial^{N}}{\partial(\log\lambda)^{N}}\left[\lambda^{(D-d_{A})}\sigma_{\lambda}\left[\W A[\lambda^{-2}g_{ab},\lambda^{2}m^{2},\xi\right](f)\right]=0.
\]

\subparagraph{$\protect\conservation$ Conservation of stress-energy.}

The stress-energy tensor, $T_{ab}(f)\in\mathcal{W}(M,g_{ab})$, is
given by 
\begin{align}
T_{ab}=\left(1-2\xi\right) & (\nabla_{a}\phi\nabla_{b}\phi)+\left(2\xi-\frac{1}{2}\right)g_{ab}(\nabla^{c}\phi\nabla_{c}\phi)+\label{eq:Tab}\\
+2\xi g_{ab} & (\phi\nabla^{c}\nabla_{c}\phi)-2\xi(\phi\nabla_{a}\nabla_{b}\phi)+\left(\xi G_{ab}-\frac{1}{2}m^{2}g_{ab}\right)\phi^{2},\nonumber 
\end{align}
where $G_{ab}\equiv R_{ab}-\frac{1}{2}g_{ab}R$ is the Einstein tensor.
We require that $T_{ab}$ is divergence free, 
\begin{equation}
0=T_{ab}(-\nabla^{a}f)=-(\nabla_{b}\phi K\phi)(f),\label{eq:conservation of Tab}
\end{equation}
where $K=K[g_{ab},m^{2},\xi]$ is the Klein-Gordon operator, eq.~(\ref{eq:K-G op}),
and the second equality in (\ref{eq:conservation of Tab}) follows
straightforwardly from differentiating (\ref{eq:Tab}) and using the
Leibniz and symmetry axioms. 
\begin{rem}
Note that even in flat spacetime where $G_{ab}=0$, the stress-energy
tensor (\ref{eq:Tab}) has nontrivial dependence on the curvature
coupling $\xi$. However, the conservation constraint (\ref{eq:conservation of Tab})
is independent of $\xi$ in any region with vanishing Ricci scalar
curvature, since $K[g_{ab},m^{2},\xi=0]=K[g_{ab},m^{2},\xi]$ at any
spacetime point $x$ where $R(x)=0$. 
\end{rem}
If we wished to define Wick monomials by normal ordering with respect
to a Hadamard state, we would have to choose a Hadamard state $\Psi(M,g_{ab})$
for each globally hyperbolic spacetime $(M,g_{ab})$. However, as
we have already mentioned above, it can be shown \cite{HW_local_Wick_poly}
that no choice of $\Psi(M,g_{ab})$ can give rise to a prescription
for Wick monomials that satisfies the local and covariant condition,
W1. Nevertheless, a construction of Wick monomials satisfying all
of our requirements W1-W8 can be given by normal ordering with respect
to a locally and covariantly constructed Hadamard parametrix, $H(x_{1},x_{2})$,
rather than a Hadamard state. We define $H(x_{1},x_{2})$ in a sufficiently
small neighborhood of the diagonal $x_{1}=x_{2}$ by, 
\begin{align}
H(x_{1},x_{2})=\  & \frac{U(x_{1},x_{2})}{\left[\sigma(x_{1},x_{2})+2i0^{+}(T(x_{1})-T(x_{2}))+(0^{+})^{2}\right]^{D/2-1}}+\label{eq:H}\\
 & +V(x_{1},x_{2})\log\ell^{-2}\left[\sigma(x_{1},x_{2})+2i0^{+}(T(x_{1})-T(x_{2}))+(0^{+})^{2}\right],\nonumber 
\end{align}
where the quantities appearing in this equation are defined as in
eq.~(\ref{eq:Hstate}). Thus, $H(x_{1},x_{2})$ differs from the
two-point function of any Hadamard state, $\Psi$, by a state-dependent,
smooth, symmetric function $W_{\Psi}(x_{1},x_{2})$. We refer to $H(x_{1},x_{2})$
as a ``parametrix'' because, although it does \emph{not} satisfy
the Klein-Gordon equation in either variable, its failure to satisfy
the Klein-Gordon equation is smooth. We define the normal-ordered
product of field operators with respect to $H$ by, 
\begin{equation}
:\phi(x_{1})\cdots\phi(x_{n}):_{H}\equiv\sum_{P}(-1)^{|P|}\prod_{(i,j)\in P}H(x_{i},x_{j})\prod_{k\in\{1,\dots,n\}\backslash P}\phi(x_{k}),\label{hadnodef}
\end{equation}
i.e., by the same formula as in eq.~(\ref{eq:def N-O in A}) but
with the two-point function, $\braket{\phi(x_{i})\phi(x_{j})}_{\Psi}$,
of a state, $\Psi$, replaced by the Hadamard parametrix $H(x_{i},x_{j})$.
Note that the Hadamard normal-ordered elements satisfy Wick's theorem
(\ref{eq:Wick product rule}) with, again, $\braket{\phi(x_{i})\phi(x_{j})}_{\Psi}$
replaced by $H(x_{i},x_{j})$ in eq.~(\ref{prodrule}). Using $H$,
we define the Wick monomial corresponding to $\nabla_{\alpha_{1}}\phi\cdots\nabla_{\alpha_{n}}\phi$
by, 
\begin{align}
\H A{}(f) & \equiv(\nabla_{\alpha_{1}}\phi\cdots\nabla_{\alpha_{n}}\phi)_{H}(f)\label{eq:H normal ordered}\\
 & \equiv\int_{\times^{n+1}M}d\mu_{g}(y)d\mu_{g}(x_{1})\cdots d\mu_{g}(x_{n}):\phi(x_{1})\cdots\phi(x_{n}):_{H}t_{n+1}[f](y,x_{1},\dots,x_{n}),\nonumber 
\end{align}
and
$t_{n+1}[f]$ given by, 
\begin{equation}
t_{n+1}[f](y,x_{1},\dots,x_{n})=f^{\alpha_{1}\cdots\alpha_{n}}(y)(-1)^{[A]_{\nabla}}\nabla_{\alpha_{1}^{T}}^{(x_{1})}\cdots\nabla_{\alpha_{n}^{T}}^{(x_{n})}\delta(y,x_{1},\dots,x_{n}),\label{eq:t distribution}
\end{equation}
recalling $[A]_{\nabla}\equiv\sum_{i=1}^{n}|\alpha_{i}|$. In contrast
to normal ordering defined with respect to a Hadamard state, the prescription
(\ref{eq:H normal ordered}) for $\Phi_{H}$ given by normal ordering
with respect to the locally and covariantly constructed Hadamard parametrix
eq.~(\ref{eq:H}) satisfies requirement W1. It also satisfies \cite{Hollands_perturbative_OPE_CS,KM_analytic_dep_unnecessary}
requirements W2-W7 for Wick monomials.

However, the failure of $H$ to be an exact solution of the Klein-Gordon
wave equation implies this prescription generally does not satisfy
requirement W8, 
\begin{equation}
(\nabla_{b}\phi K\phi)_{H}(f)=\int d\mu_{g}(y)f(y)\nabla_{b}^{(x_{1})}K_{x_{2}}H(x_{1},x_{2})|_{x_{1},x_{2}=y}\ne0.\label{eq:failure to satisfy W8}
\end{equation}
Odd dimensions are an exception: For $D$ odd, formula (\ref{eq:H})
contains only half-integer powers of $\sigma(x_{1},x_{2})$, so it
follows that for $U(x_{1},x_{2})$ smooth, $H(x_{1},x_{2})$ is a
parametrix of the Klein-Gordon equation only if, 
\begin{equation}
K_{x_{2}}H(x_{1},x_{2})|_{x_{1},x_{2}=y}=0.\label{eq:KH=0 for D odd}
\end{equation}
Furthermore, it can be shown \cite[Lemma 2.1]{Moretti_comments_stress-energy}
that, 
\begin{equation}
\nabla_{b}^{(x_{1})}K_{x_{2}}H(x_{1},x_{2})|_{x_{1},x_{2}=y}=\frac{D}{2(D+2)}\nabla_{b}^{(y)}\left[K_{x_{2}}H(x_{1},x_{2})\right]_{x_{1},x_{2}=y},\label{eq:nabla K H}
\end{equation}
so (\ref{eq:KH=0 for D odd}) implies the left-hand side of
(\ref{eq:failure to satisfy W8}) does, in fact, vanish and, thus,
W8 is satisfied in all odd dimensions.

In even dimensions, however, $K_{x_{2}}H(x_{1},x_{2})|_{x_{1},x_{2}=y}$
yields a curvature scalar which is non-vanishing in general spacetimes
and, thus, normal-ordering with respect to the parametrix (\ref{eq:H})
fails to produce Wick fields satisfying the conservation axiom W8.
Nevertheless, we prove in Appendix \ref{sec:existence of Q } that
for $D>2$, there exists a smooth symmetric function $Q(x_{1},x_{2})$
which is locally and covariantly defined for $x_{1}=x_{2}$ such that
\begin{equation}
\nabla_{b}^{x_{1}}K_{x_{2}}H(x_{1},x_{2})|_{x_{1},x_{2}=y}=-\nabla_{b}^{x_{1}}K_{x_{2}}Q(x_{1},x_{2})|_{x_{1},x_{2}=y},.\label{eq:nabla K H equals -nabla K Q}
\end{equation}
Furthermore, $Q$ is smooth in $(m^{2},\xi)$ and scales as, 
\begin{equation}
Q[\lambda^{-2}g_{ab},\lambda^{2}m^{2},\xi]=\lambda^{(D-2)}Q[g_{ab},m^{2},\xi],\label{eq:Q scaling}
\end{equation}
in a sufficiently small neighborhood of $x_{1},x_{2}=y$. Therefore,
normal-ordering instead with respect to the new Hadamard parametrix,
\[
H'\equiv H+Q,
\]
will give a construction of Wick fields satisfying the axioms W1-W8.

It will be understood below that, unless otherwise stated, we are
always normal-ordering with respect to a Hadamard parametrix $H$
which is smooth in $(m^{2},\xi)$, satisfies 
\begin{equation}
\nabla_{b}^{(x_{1})}K_{x_{2}}H(x_{1},x_{2})|_{x_{1},x_{2}=y}=0,\label{eq:H conservation condition-1}
\end{equation}
and scales homogeneously up to logarithms, 
\begin{equation}
\lambda^{-(D-2)}H[\lambda^{-2}g_{ab},\lambda^{2}m^{2},\xi]=H[g_{ab},m^{2},\xi]+V[g_{ab},m^{2},\xi]\log\lambda^{2}.\label{eq:scaling H'}
\end{equation}
(Recall $V=0$ for $D$ odd, so $H$ scales exactly homogeneously
in odd spacetime dimensions.) Thus, for any $D>2$, Hadamard normal
ordering yields a prescription for defining Wick monomials that satisfies
W1-W8. For $D=2$, no such $Q$ exists, and condition W8 cannot be
satisfied by any prescription that satisfies W1-W7 \cite[see Subsection 3.2]{HW_Conservation_Stress-energy}.
However, Hadamard normal ordering satisfies W1-W7.

We turn our attention now to the characterization of the non-uniqueness
of prescriptions satisfying W1-W8 (or W1-W7 for $D=2$).

\subsection{Uniqueness of Wick monomials \label{subsec:uniq Wick poly and novel reparm}}

In the previous subsection, we imposed conditions W1-W8 on the definition
of Wick monomials and gave a prescription based on ``Hadamard normal-ordering''
which satisfies these requirements (or requirements W1-W7 for $D=2$).
This prescription is not unique. In this subsection, we will show
that the difference between any two prescriptions $\Phi_{A}$ and
$\widetilde{\Phi}_{A}$ for Wick monomials satisfying W1-W8 (or W1-W7
for $D=2$) are described by a ``mixing matrix'' $\mathcal{Z}$
such that 
\begin{equation}
\widetilde{\Phi}_{A}(x)=\sum_{B}\mathcal{Z}{}_{A}^{B}(x)\Phi_{B}(x).\label{def:mixing matrix Z}
\end{equation}
Theorem \ref{thm:Wick uniquenes Z} below explicitly gives the general
form of $\mathcal{Z}$ which, thereby, characterizes the freedom to
modify any prescription, such as the Hadamard prescription of the
previous subsection.

It will be convenient to use the following notation for $\mathcal{Z}_{A}^{B}$.
An arbitrary Wick monomial is of the form $\Phi_{A}=\nabla_{\alpha_{1}}\phi\cdots\nabla_{\alpha_{p}}\phi$
and thus is characterized by the multi-indices $\alpha_{1},\dots,\alpha_{p}$.
For $\widetilde{\Phi}_{A}=\tildeXL{\nabla_{\alpha_{1}}\phi\cdots\nabla_{\alpha_{p}}\phi}$
and $\Phi_{B}=\nabla_{\beta_{1}}\phi\cdots\nabla_{\beta_{q}}\phi$,
we represent $\mathcal{Z}_{A}^{B}$ as 
\begin{equation}
\mathcal{Z}_{A}^{B}=\mathcal{Z}_{\alpha_{1}\cdots\alpha_{p}}^{\beta_{1}\cdots\beta_{q}}.\label{eq:Z multi-index notation}
\end{equation}
Each multi-index, $\alpha$, is itself a product of spacetime indices,
$\alpha=a_{1}\cdots a_{|\alpha|}$, so we may, in turn, write $\mathcal{Z}$
as a spacetime tensor field 
\begin{equation}
\mathcal{Z}_{\alpha_{1}\cdots\alpha_{p}}^{\beta_{1}\cdots\beta_{q}}=\mathcal{Z}_{\{a_{1,1}\cdots a_{1,|\alpha_{1}|}\}\cdots\{a_{n,1}\cdots a_{n,|\alpha_{p}|}\}}^{\{b_{1,1}\cdots b_{1,|\beta_{1}|}\}\cdots\{b_{k,1}\cdots b_{k,|\beta_{q}|}\}}\label{eq:Z spacetime index notation}
\end{equation}
In this notation, we enclose the spacetime indices corresponding to
any given multi-index with a curly bracket. If any multi-index is
``empty''\textemdash i.e., if any factor of $\phi$ in the corresponding
Wick monomial has no derivatives acting on it, then we insert a ``$\{0\}$''
as a place-holder. If $q$ is zero, it is understood $\Phi_B=I$ and we
simply write ``$I$'' in the superscripts of \eqref{eq:Z multi-index notation}
and \eqref{eq:Z spacetime index notation} as in examples \eqref{eq:quad uniqueness}-\eqref{eq:Z not ex 2}
below. Similarly, when $p=0$, it is understood $\Phi_A=I$ and we write
``$I$'' in the subscripts of \eqref{eq:Z multi-index notation}
and \eqref{eq:Z spacetime index notation}.

As an example to illustrate this notation, it will follow from the
theorem below that the difference between any two prescriptions for
Wick monomials that are quadratic in $\phi$ will be given by a multiple
of the identity element, $I$. In our notation, this would be expressed
as 
\begin{equation}
(\tildeXL{\nabla_{\alpha_{1}}\phi\nabla_{\alpha_{2}}\phi})(x)=\sum_{\beta_{1},\beta_{2}}\mathcal{Z}_{\alpha_{1}\alpha_{2}}^{\beta_{1}\beta_{2}}(x)(\nabla_{\beta_{1}}\phi\nabla_{\beta_{2}}\phi)(x)+\mathcal{Z}_{\alpha_{1}\alpha_{2}}^{I}(x)I,\label{eq:quad uniqueness}
\end{equation}
where $\mathcal{Z}_{\alpha_{1}\alpha_{2}}^{\beta_{1}\beta_{2}}=\delta_{(\alpha_{1}}^{\beta_{1}}\delta_{\alpha_{2})}^{\beta_{2}}$
and $\delta_{\alpha}^{\beta}$ is the Kronecker delta for the multi-indices
defined by $\delta_{\alpha}^{\beta}=1$ if the multi-indices $\alpha$
and $\beta$ coincide and zero otherwise. As particular examples of
(\ref{eq:quad uniqueness}), we have 
\begin{equation}
(\tildeXL{\nabla_{a}\phi\nabla_{b}\nabla_{c}\phi})(x)=(\nabla_{a}\phi\nabla_{b}\nabla_{c}\phi)(x)+\mathcal{Z}_{\{a\}\{bc\}}^{I}(x)I,\label{eq:Z not ex 1}
\end{equation}
whereas 
\begin{equation}
(\tildeXL{\phi\nabla_{a}\nabla_{b}\phi})(x)=(\phi\nabla_{a}\nabla_{b}\phi)(x)+\mathcal{Z}_{\{0\}\{ab\}}^{I}(x)I.\label{eq:Z not ex 2}
\end{equation}

With this notation established, we may state our main result in the
following theorem. 
\begin{thm}
\label{thm:Wick uniquenes Z} The Wick mixing matrix $\mathcal{Z}_{A}^{B}$
defined in (\ref{def:mixing matrix Z}) is nonzero only when $[B]_{\phi}\le[A]_{\phi}$,
i.e. $q\le p$, and is given in terms of $\mathcal{Z}_{A}^{I}$ by,
\begin{equation}
\mathcal{Z}_{\alpha_{1}\cdots\alpha_{p}}^{\beta_{1}\cdots\beta_{q}}=\binom{p}{q}\delta_{(\alpha_{1}}^{\beta_{1}}\cdots\delta_{\alpha_{q}}^{\beta_{q}}\mathcal{Z}_{\alpha_{q+1}\cdots\alpha_{p})}^{I},\label{eq:Wick uniqueness Z}
\end{equation}
where $\binom{p}{q}$ denotes the binomial coefficient. Furthermore,
we have $\mathcal{Z}_{I}^{I}=1$ and $\mathcal{Z}_{\alpha_{1}}^{I}=0$.
For $p\ge2$, each $\mathcal{Z}_{\alpha_{1}\cdots\alpha_{p}}^{I}$
is a real-valued, smooth tensor field of type $(0,\sum_{i=1}^{p}|\alpha_{i}|)$
that is symmetric under permutation $\pi$ of multi-indices, 
\[
\mathcal{Z}_{\alpha_{1}\cdots\alpha_{p}}^{I}=\mathcal{Z}_{\alpha_{\pi(1)}\cdots\alpha_{\pi(p)}}^{I},
\]
and is of the form, 
\begin{equation}
\mathcal{Z}_{A}^{I}=\mathcal{Z}_{A}^{I}[g_{ab},R_{abcd},\dots,\nabla_{(e_{1}\cdots}\nabla_{e_{n})}R_{abcd}(x),m^{2},\xi].\label{eq:Z dependence on theory param}
\end{equation}
where the right side is a jointly smooth function of its arguments
with polynomial dependence on $m^{2}$, $R_{abcd}$, and finitely
many (totally-symmetric) covariant derivatives of $R_{abcd}$. The
$\mathcal{Z}_{A}^{I}$ scale as, 
\begin{equation}
\mathcal{Z}{}_{A}^{I}[\lambda^{-2}g_{ab},\lambda^{2}m^{2},\xi]=\lambda^{d_{A}}\mathcal{Z}{}_{A}^{I}[g_{ab},m^{2},\xi],\label{eq:scaling Z_A^I}
\end{equation}
recalling the definition (\ref{eq:d_A}) of the scaling dimension
$d_{A}$. Furthermore, the tensor fields $\mathcal{Z}_{A}^{I}$ satisfy
the Leibniz condition, 
\begin{equation}
\nabla_{b}^{(x)}\mathcal{Z}_{\alpha_{1}\cdots\alpha_{p}}^{I}(x)=\mathcal{Z}{}_{\{b\alpha_{1}\}\alpha_{2}\cdots\alpha_{p}}^{I}(x)+\mathcal{Z}{}_{a_{1}\{b\alpha_{2}\}\cdots\alpha_{p}}^{I}(x)+\cdots+\mathcal{Z}_{\alpha_{1}\cdots\{b\alpha_{p}\}}^{I}(x),\label{eq:Leibniz rule Z}
\end{equation}
where $\{b\alpha\}\equiv ba_{1}a_{2}\cdots a_{|\alpha|}$ for $\alpha\equiv a_{1}\cdots a_{|\alpha|}$.
In addition, on account of W8, for $D>2$, the tensor fields $\mathcal{Z}_{\{b\}\{ac\}}^{I}$
and $\mathcal{Z}_{\{b\}\{0\}}^{I}$ must satisfy, 
\begin{equation}
g^{ac}\mathcal{Z}_{\{b\}\{ac\}}^{I}=(m^{2}+\xi R)\mathcal{Z}_{\{b\}\{0\}}^{I}.\label{eq:Z conservation constraint}
\end{equation}
Conversely, if $\{\Phi_{B}(x)|B=\beta_{1}\cdots\beta_{q}\}_{q\in\mathbb{N}_{0}}$
are any Wick monomials satisfying W1-W8 (or W1-W7 for $D=2$) and
$\mathcal{Z}{}_{A}^{B}$ satisfy all of the above conditions of this
theorem, then the new prescription $\{\tildeXL{\Phi}_{A}|A=\alpha_{1}\cdots\alpha_{p}\}_{p\in\mathbb{N}_{0}}$
defined by eq.~(\ref{def:mixing matrix Z}) will also satisfy W1-W8
(or W1-W7 for $D=2$). Consequently, the inverse mixing matrix $(\mathcal{Z}^{-1})_{A}^{B}$
satisfies the same properties as $\mathcal{Z}_{A}^{B}$. 
\end{thm}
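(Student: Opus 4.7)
The plan is to prove the forward implication by an induction on $[A]_\phi$ that uses the commutator axiom $\commutator$ to reduce everything to the $\mathcal{Z}_A^I$'s, and then to read off the structural properties of these tensor fields from the remaining axioms. The converse is a direct verification. Throughout I will use that Hadamard normal-ordered monomials form a basis of $\mathcal{W}(M,g_{ab})$, so that the expansion in (\ref{def:mixing matrix Z}) is uniquely determined and coefficients of linearly independent basis elements may be matched.

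For the inductive skeleton, I start with $[A]_\phi=0$: since the $I$-component of both prescriptions equals the identity, $\mathcal{Z}_I^I=1$. For $[A]_\phi=1$, the Leibniz rule $\Leibniz$ together with the identification $\widetilde{\phi}=\phi$ (the fundamental field is fixed in $\mathcal{A}(M,g_{ab})$) forces $\widetilde{\nabla_{\alpha_1}\phi}=\nabla_{\alpha_1}\phi$ and hence $\mathcal{Z}_{\alpha_1}^I=0$. For the inductive step at $[A]_\phi=p\ge 2$, I apply $\commutator$ to both sides of (\ref{def:mixing matrix Z}) smeared against $\phi(h)$:
\begin{equation*}
i\sum_{i=1}^{p}\widetilde{\Phi}_{A\setminus i}\,\Delta\bigl((-1)^{|\alpha_{i}|}\nabla_{\alpha_{i}^{T}}f,h\bigr)
=\sum_{B}\mathcal{Z}_{A}^{B}\bigl[\Phi_{B},\phi(h)\bigr],
\end{equation*}
where $\widetilde{\Phi}_{A\setminus i}$ has $p-1$ $\phi$-factors. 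The right side decomposes into basis monomials with $[B]_\phi-1$ factors. Matching coefficients in the basis, the terms with $[B]_\phi>p$ are absent on the left and so must vanish, giving $\mathcal{Z}_A^B=0$ for $[B]_\phi>p$. The terms with $[B]_\phi=p$ match the left side and, combined with the symmetry $\symmetry$ and the inductive formula already available for the $\widetilde{\Phi}_{A\setminus i}$, force (\ref{eq:Wick uniqueness Z}) together with the precise combinatorial factor $\binom{p}{q}$, which arises from the number of ways of selecting which $q$ multi-indices remain ``uncontracted'' with $\mathcal{Z}_{\cdots}^I$.

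Having reduced the mixing matrix to the $\mathcal{Z}_A^I$'s, I extract their properties from the remaining axioms. The joint smoothness $\spectrum$, applied to the expectation value of $\widetilde{\Phi}_A-\sum_B\mathcal{Z}_A^B\Phi_B=0$ in any Hadamard state, shows that $\mathcal{Z}_A^I$ is jointly smooth in $(x,g_{ab},m^2,\xi)$; symmetry in multi-indices follows from $\symmetry$; reality from Hermiticity $\Hermiticity$; and the tensor type is dictated by the indices on $\Phi_A$. The local and covariant requirement $\local$ applied to isometric embeddings forces $\mathcal{Z}_A^I$ to be built covariantly out of the local geometry; combined with joint smoothness, a standard Peetre-Thomas replacement argument expresses $\mathcal{Z}_A^I$ as a smooth function of $g_{ab}$, $m^2$, $\xi$, and finitely many totally symmetric covariant derivatives of the Riemann tensor at $x$. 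The scaling axiom $\scaling$ applied to both sides of (\ref{def:mixing matrix Z}) under $g_{ab}\mapsto\lambda^{-2}g_{ab},\,m^2\mapsto\lambda^2 m^2$ yields (\ref{eq:scaling Z_A^I}); and this finite homogeneous scaling dimension $d_A$ together with the already-established smoothness allows only finitely many monomials in $m^2$, $R_{abcd}$, and its symmetrized covariant derivatives, giving the polynomial dependence (\ref{eq:Z dependence on theory param}). The Leibniz condition (\ref{eq:Leibniz rule Z}) follows by applying $\Leibniz$ to $\widetilde{\Phi}_A-\sum_B\mathcal{Z}_A^B\Phi_B$ and collecting the identity component, while the conservation constraint (\ref{eq:Z conservation constraint}) comes from computing $\widetilde{T}_{ab}-T_{ab}$ using (\ref{eq:Tab}), extracting its $c$-number part, and imposing $\conservation$: the failure to be divergence-free turns into precisely the relation between $\mathcal{Z}_{\{b\}\{ac\}}^I$ and $\mathcal{Z}_{\{b\}\{0\}}^I$ stated.

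The converse is then a matter of checking each axiom for $\widetilde{\Phi}_A=\sum_B\mathcal{Z}_A^B\Phi_B$: $\local$, $\spectrum$, $\Hermiticity$, $\symmetry$ are immediate from the corresponding properties of $\mathcal{Z}_A^I$; $\commutator$ follows because only terms with $\mathcal{Z}_A^I$ (which commute with $\phi$) are added and the combinatorial identity (\ref{eq:Wick uniqueness Z}) is exactly what makes the commutator of the redefined monomial reproduce itself; $\Leibniz$ follows from (\ref{eq:Leibniz rule Z}); $\scaling$ from (\ref{eq:scaling Z_A^I}); and $\conservation$ from (\ref{eq:Z conservation constraint}). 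Since $\mathcal{Z}$ maps prescriptions satisfying the axioms bijectively to prescriptions satisfying the axioms, its inverse enjoys the same structural properties. The main obstacle in this plan is the passage from ``local and covariant, jointly smooth, and almost homogeneously scaling'' to ``finite polynomial in curvature, its symmetrized derivatives, and $m^2$'': this requires invoking a Peetre-type locality theorem to reduce $\mathcal{Z}_A^I$ to an expression in finitely many jet variables, Thomas replacement to rewrite it covariantly in terms of $R_{abcd}$ and its totally symmetrized covariant derivatives, and finally the homogeneous scaling (\ref{eq:scaling Z_A^I}) to truncate the otherwise infinite smooth function of these variables down to a finite-dimensional space of candidates.
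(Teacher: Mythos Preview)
Your proposal is correct and follows essentially the same approach as the paper's sketch: an induction on $[A]_\phi$ using the commutator axiom W3 to reduce the mixing matrix to the $c$-number coefficients $\mathcal{Z}_A^I$, followed by reading off their structural properties from the remaining axioms W1--W8, with the converse being a direct verification. The paper is terser but identical in spirit; in particular, for the polynomial dependence on $m^2$, $R_{abcd}$, and its derivatives, the paper invokes the results of Khavkine--Moretti (their references \cite{KM_analytic_dep_unnecessary,KM_Wick_poly}) in place of the Peetre/Thomas-replacement argument you outline, but these are just two packagings of the same mechanism you correctly identified as the main technical step.
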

\emph{Sketch of Proof}: The proof follows \cite[Proof of Theorem 5.1]{HW_local_Wick_poly},
with the main difference being that they did not consider Wick powers
involving derivatives and did not impose requirement $\Leibniz$.
The key first step is to note that if, inductively, the prescription
for Wick monomials involving $q$-factors of $\phi$ has been fixed
for all $q<p$, then the prescription for any Wick monomial with $p$-factors
of $\phi$ is uniquely determined by the commutator condition $\commutator$
up to the addition of a multiple of the identity $I$. In our notation,
this $c$-number multiple is denoted by $\mathcal{Z}_{\alpha_{1}\cdots\alpha_{p}}^{I}$.
In particular, eq.~(\ref{eq:quad uniqueness}) holds for $p=2$.
We then can prove eq.~(\ref{eq:Wick uniqueness Z}) for general $p$
by induction. By condition W6, $\mathcal{Z}_{\alpha_{1}\cdots\alpha_{p}}^{I}$
must be totally symmetric in its multi-indices. By condition $\local$,
$\mathcal{Z}_{\alpha_{1}\cdots\alpha_{p}}^{I}$ must be local and
covariant, and thus must be constructed from the metric and the Riemann
tensor and its derivatives as well as from $m^{2}$ and $\xi$. By
the arguments of \cite{KM_analytic_dep_unnecessary,KM_Wick_poly}
the joint smoothness requirement, $\spectrum$, and the scaling requirement,
$\scaling$, imply polynomial dependence\footnote{The corresponding result was obtained in \cite[Theorem 5.1]{HW_local_Wick_poly}
by imposing an additional analytic variation requirement, which we
do not impose here.} on $m^{2}$, $R_{abcd}$, and finitely many derivatives of $R_{abcd}$.
The remaining properties of $\mathcal{Z}_{\alpha_{1}\cdots\alpha_{p}}^{I}$
follow directly from the axioms. The verification of the converse
is straightforward. 
\begin{rem}
\label{rem:existence of tensor fields for (n,D)} The fact that $\mathcal{Z}_{\alpha_{1}\cdots\alpha_{p}}^{I}$
has polynomial dependence on $m^{2}$, $R_{abcd}$, and finitely many
of its derivatives and must have the scaling behavior stated in the
theorem puts significant constraints on $\mathcal{Z}_{\alpha_{1}\cdots\alpha_{p}}^{I}$.
In particular, (\ref{eq:scaling Z_A^I}) can hold non-trivially only
if $p(D-2)/2$ is even. Hence, $\mathcal{Z}_{\alpha_{1}\cdots\alpha_{p}}^{I}=0$
when $p$ is odd and $D\neq2+4k$ for integer $k$. Furthermore, if
$D$ is odd, then we also have $\mathcal{Z}_{\alpha_{1}\cdots\alpha_{p}}^{I}=0$
whenever $p=2+4k$. 
\end{rem}
\begin{rem}
\label{rem:Z recursion relation} For the purpose of proving Theorem
\ref{prop:gen Wick intermed coef} of Subsection \ref{subsec:gen Wick OPE coef},
it is useful to note the Wick mixing matrices $\mathcal{Z}_{A}^{B}$
satisfy the following recursion relation, for any $r\le q$, 
\begin{equation}
\mathcal{Z}{}_{\alpha_{1}\cdots\alpha_{p}}^{\beta_{1}\cdots\beta_{q}}=\binom{p}{r}\binom{q}{r}^{-1}\delta_{(\alpha_{1}}^{\beta_{1}}\cdots\delta_{\alpha_{r}}^{\beta_{r}}\mathcal{Z}_{\alpha_{(r+1)}\cdots\alpha_{p})}^{\beta_{(r+1)}\cdots\beta_{q}}.\label{eq:Z recursion rel-1}
\end{equation}
This identity is immediately established by plugging the expression
(\ref{eq:Wick uniqueness Z}) for $\mathcal{Z}_{A}^{B}$ into both
sides of (\ref{eq:Z recursion rel-1}) and noting, 
\[
\binom{p}{r}\binom{q}{r}^{-1}\binom{p-r}{q-r}=\binom{p}{q}.
\]
\end{rem}
We now prove the following result that will enable us to characterize
in a simple and direct manner the freedom in the prescription for
defining Wick monomials specified by Theorem \ref{thm:Wick uniquenes Z}.
This new characterization will be very useful for characterizing the
freedom of the OPE coefficients for products of Wick monomials. 
\begin{prop}
\label{prop:Wick uniqueness Fn} For each $n\geq2$, there exists
a smooth, real-valued function $F_{n}(x_{1},\dots,x_{n};z)$ on some
neighborhood of $\times^{n+1}M$ containing $(z,\dots,z)$ such that
$F_{n}$ is symmetric in $(x_{1},\dots,x_{n})$ and such that the
coefficients $\mathcal{Z}_{\alpha_{1}\cdots\alpha_{n}}^{I}$ of eq.
(\ref{eq:Wick uniqueness Z}) are given by, 
\begin{equation}
\mathcal{Z}_{\alpha_{1}\cdots\alpha_{n}}^{I}(z)=\nabla_{\alpha_{1}}^{(x_{1})}\cdots\nabla_{\alpha_{n}}^{(x_{n})}F_{n}(x_{1},\dots,x_{n};z)|_{x_{1},\dots,x_{n}=z}.\label{eq:Z=deriv Fn}
\end{equation}
Furthermore, $F_{n}$ satisfy, 
\begin{equation}
\left[\nabla_{\alpha_{1}}^{(x_{1})}\cdots\nabla_{\alpha_{n}}^{(x_{n})}\nabla_{\beta}^{(z)}F_{n}(x_{1},\dots,x_{n};z)\right]_{x_{1},\dots,x_{n}=z}=0.\label{eq:Fm's symmetric independence of x on diagonal}
\end{equation}
\begin{proof}[Sketch of proof.]
Let $x$ be in a normal neighborhood of $z\in M$ and let $\sigma(x,z)$
denote the (signed) squared geodesic distance between $z$ and $x$.
Let 
\begin{equation}
\sigma_{a}(x,z)\equiv\frac{1}{2}\nabla_{a}^{(z)}\sigma(x,z).
\end{equation}
Note that in flat spacetime, in global inertial coordinates, we have
\begin{equation}
\sigma^{\mu}(x,z)=-(x^{\mu}-z^{\mu}).
\end{equation}
Let $f:M\to\mathbb{R}$ be smooth at $z$. Then the covariant Taylor
expansion of $f$ at $z$ is given by \cite[see ``Addendum to chapter 4: derivation of covariant Taylor expansions'']{Barvinsky_Vilkovisky_covariant_Taylor}
\begin{equation}
f(x)\sim\sum_{k}\frac{(-1)^{k}}{k!}\nabla_{a_{1}}\cdots\nabla_{a_{n}}f(x)\big\vert_{x=z}\sigma^{a_{1}}(x,z)\cdots\sigma^{a_{n}}(x,z),\label{eq:f Taylor exp-1}
\end{equation}
where the meaning of this equation is that if the sum on the right
side is taken from $k=0$ to $k=N$, then its difference with the
right side in any coordinates vanishes to order $(x-z)^{N}$. Note
that $\sigma^{a_{1}}\cdots\sigma^{a_{n}}=\sigma^{(a_{1}}\cdots\sigma^{a_{n})}$,
so only the totally-symmetric part of $f$'s covariant derivatives
contribute non-trivially to (\ref{eq:f Taylor exp-1}). We may write
this equation more compactly as, 
\begin{equation}
f(x)\sim\sum_{\beta}\nabla_{\beta}f(x)\big\vert_{x=z}\T[\beta](x,z)\label{eq:covtayser}
\end{equation}
where the sum ranges over all multi-indices $\beta$ and we have written,
\begin{equation}
\T[\{b_{1}\cdots b_{|\beta|}\}](x;z)\equiv\frac{(-1)^{|\beta|}}{|\beta|!}\sigma^{b_{1}}(x,z)\cdots\sigma^{b_{|\beta|}}(x,z).\label{eq:covariant Taylor coef}
\end{equation}
Note that in flat spacetime in global inertial coordinates, we have
\begin{equation}
\T[\{\mu_{1}\cdots\mu_{k}\}](x;z)=\frac{1}{k!}(x^{\mu_{1}}-z^{\mu_{1}})\cdots(x^{\mu_{n}}-z^{\mu_{k}})\label{eq:T^mu1...muk flat spacetime}
\end{equation}
Applying the operator $\nabla_{\alpha}^{(x)}$ to both sides of (\ref{eq:f Taylor exp-1})
and evaluating at $x=z$ should yield the trivial identity, $\nabla_{\alpha}^{(z)}f(z)=\nabla_{\alpha}^{(z)}f(z)$.
This will be the case, in general, if and only if, 
\begin{equation}
\sum_{|\beta|\le|\alpha|}\nabla_{\alpha}^{(x)}\T[\beta](x,z)|_{x=z}\nabla_{\beta}^{(z)}=\nabla_{\alpha}^{(z)},\label{eq:sum nabla_alpha T^beta = nabla_alpha-1}
\end{equation}
when applied to any smooth scalar field\footnote{Of course, for any finite $|\alpha|$, this identity could (with substantial
computational labor) alternatively be directly derived from the values
of the differentiated geodesic distance function $\sigma(x,z)$ at
coincidence $x=z$. In global inertial coordinates in flat spacetime,
the identity (\ref{eq:sum nabla_alpha T^beta = nabla_alpha-1})
holds if and only if, 
\begin{equation}
\nabla_{\alpha}^{(y)}\T[\beta](x;z)|_{x=z}=\delta_{\alpha}^{\beta},\label{eq:Nabla alpha S beta on diag equals Kronecker delta}
\end{equation}
since covariant derivatives commute in this case. Note the identity
(\ref{eq:Nabla alpha S beta on diag equals Kronecker delta}) can
be directly verified using formula (\ref{eq:T^mu1...muk flat spacetime})
for $\T[\beta]$ in flat spacetime. However, in curved spacetime,
the left-hand side of (\ref{eq:sum nabla_alpha T^beta = nabla_alpha-1})
receives non-trivial contributions which depend on the curvature tensor
from $|\beta|<|\alpha|$ .}. It follows that if the multivariable series, 
\begin{equation}
\sum_{\beta_{1}\cdots\beta_{n}}\mathcal{Z}_{\beta_{1}\cdots\beta_{n}}^{I}(z)\T[\beta_{1}](x_{1};z)\cdots\T[\beta_{n}](x_{n};z),\label{eq:divergent Tayl series}
\end{equation}
were to converge to a smooth function of $(x_{1},\dots,x_{n};z)$,
then this function would satisfy eq.~(\ref{eq:Z=deriv Fn})
by construction. However, there is no reason why the series (\ref{eq:divergent Tayl series})
need converge. Nevertheless, it is always possible to modify the series
(\ref{eq:divergent Tayl series}) away from $x_{1},\dots,x_{n}=z$
so as to render it convergent and $C^{\infty}$, while preserving
the desired identity (\ref{eq:Z=deriv Fn}). To see this, fix
$z$, choose a tetrad at $z$, and let $U_{z}\subset M$ be a convex
normal neighborhood of $z$. In Riemannian normal coordinates $x^{\mu}$
centered at $z$ and based on this tetrad, the (non-convergent) series
(\ref{eq:divergent Tayl series}) takes the form, 
\begin{equation}
\sum_{\beta_{1}\cdots\beta_{n}}\mathcal{Z}_{\beta_{1}\cdots\beta_{n}}^{I}(\vec{0})x_{1}^{\beta_{1}}\cdots x_{n}^{\beta_{n}},\label{eq:divergent Tayl series RNC}
\end{equation}
with $x^{\beta}\equiv x^{\mu_{1}}\cdots x^{\mu_{|\beta|}}$. By Borel's
Lemma \cite[see Corollary 1.3.4]{Hormander_book}, every power series
is the Taylor series of some smooth function, so we may always construct
$F_{n}\in C^{\infty}(\times^{n}\mathbb{R}^{D})$ such that, 
\begin{equation}
\partial_{a_{1,1}}^{(x_{1})}\cdots\partial_{a_{1,k_{1}}}^{(x_{1})}\cdots\partial_{\alpha_{n,1}}^{(x_{n})}\cdots\partial_{a_{n,k_{n}}}^{(x_{n})}F_{n}(x_{1},\dots,x_{n})|_{x_{1},\dots,x_{n}=\vec{0}}=\mathcal{Z}_{\{(a_{1,1}\cdots a_{1,k_{1}})\}\cdots\{(a_{n,1}\cdots\alpha_{n,k_{n}})\}}^{I}(\vec{0}),\label{eq:Borel deriv f n at coinc}
\end{equation}
where we note the equality of mixed partials and the index symmetry
of the terms which contribute non-trivially to (\ref{eq:divergent Tayl series RNC}).
Without loss of generality, we may assume that the support of $F_{n}$
is contained in $\times^{n}U_{z}$ since, if necessary, we may multiply
it by smooth function $\chi(x_{1},\dots,x_{n};\vec{0})$ which is
equal to unity in a neighborhood of the origin and has support in
$\times^{n}U_{z}$. However, in any RNC system, the ordinary partial
derivatives of a scalar field evaluated at the origin coincide with
the totally-symmetrized covariant derivatives of the scalar field
evaluated at the origin\footnote{\label{fn:equivalence of partial deriv and cov deriv at origin of RNC}In
any RNC system, it can be deduced from the geodesic equation for geodesics
passing through the origin that 
\begin{equation}
\partial_{(\sigma_{1}}^{\vphantom{\kappa}}\cdots\partial_{\sigma_{n}}^{\vphantom{\kappa}}\Gamma_{\mu\nu)}^{\kappa}(x)|_{x=\vec{0}}=0,\label{eq:sym deriv Christoffel}
\end{equation}
with $\Gamma_{\mu\nu}^{\kappa}$ denoting the Christoffel symbols.
For scalar fields evaluated at the origin, the equivalence between
partial derivatives and totally-symmetrized covariant derivatives
can then be inductively established for all $n$ using (\ref{eq:sym deriv Christoffel}). }. It follows then from the identity (\ref{eq:sum nabla_alpha T^beta = nabla_alpha-1})
that, in fact, 
\[
\nabla_{\alpha_{1}}^{(x_{1})}\cdots\nabla_{\alpha_{n}}^{(x_{n})}F_{n}(x_{1},\dots,x_{n})|_{x_{1},\dots,x_{n}=\vec{0}}=\mathcal{Z}_{\alpha_{1}\cdots\alpha_{n}}^{I}(\vec{0}).
\]
Thus, we have obtained a function $F_{n}$ satisfying (\ref{eq:Z=deriv Fn})
in a neighborhood of one fixed event $z$. However, by choosing a
smooth set of tetrad vector fields and using them to define RNC systems
at each event, $F_{n}$ satisfying (\ref{eq:Z=deriv Fn}) can
be defined as a smooth function of $z$ for any event $z\in M,$ noting
that $\mathcal{Z}_{\alpha_{1}\cdots\alpha_{n}}^{I}(z)$ is smooth
in $z$ by Theorem \ref{thm:Wick uniquenes Z}.

Although this construction of $F_{n}$ will depend on $z$ (and the
arbitrarily-chosen tetrad vector fields) away from total coincidence,
the ``germ'' of $F_{n}$ at $x_{1},\dots,x_{n}=z$ is independent
of $z$ in the sense of (\ref{eq:Fm's symmetric independence of x on diagonal}).
To prove eq.~(\ref{eq:Fm's symmetric independence of x on diagonal})
we use the fact that 
\begin{align}
 & \nabla_{b}^{(z)}\left[\nabla_{\alpha_{1}}^{(x_{1})}\cdots\nabla_{\alpha_{n}}^{(x_{n})}F_{n}(x_{1},\dots,x_{n};z)|_{x_{1},\dots,x_{m}=z}\right]\label{eq:deriv Fm identity}\\
 & =\left[\left((\nabla_{\{b\alpha_{1}\}}^{(x_{1})}\cdots\nabla_{\alpha_{n}}^{(x_{n})})+\cdots+(\nabla_{\alpha_{1}}^{(x_{1})}\cdots\nabla_{\{b\alpha_{n}\}}^{(x_{n})})+(\nabla_{\alpha_{1}}^{(x_{1})}\cdots\nabla_{\alpha_{n}}^{(x_{n})}\nabla_{b}^{(z)})\right)F_{n}(x_{1},\dots,x_{n};z)\right]_{x_{1},\dots,x_{n}=z},\nonumber 
\end{align}
which follows from the ordinary Leibniz rule and the commutativity
of derivatives with respect to different variables. The Leibniz condition,
eq.~(\ref{eq:Leibniz rule Z}), on $\mathcal{Z}_{\alpha_{1}\cdots\alpha_{n}}^{I}$
then implies that the left side of eq.~(\ref{eq:deriv Fm identity})
is equal to the sum of the first $n$-terms on the right side, so
the last term must vanish identically. This establishes the result
(\ref{eq:Fm's symmetric independence of x on diagonal}) for $\beta=\{b\}$.
The general case, $|\beta|>1$ follows via induction. 
\end{proof}
\end{prop}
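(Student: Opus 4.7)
The plan is to realize $F_n$ as a smooth function whose mixed covariant derivatives at total coincidence reproduce the prescribed tensors $\mathcal{Z}_{\alpha_1\cdots\alpha_n}^I(z)$ furnished by Theorem \ref{thm:Wick uniquenes Z}. The strategy has three steps: (i) interpret the desired identity (\ref{eq:Z=deriv Fn}) as a prescription for the mixed Taylor jet of $F_n$ at $(z,\ldots,z)$; (ii) produce a smooth function realizing this jet via Borel's lemma in a suitable coordinate system; and (iii) verify the $z$-derivative identity (\ref{eq:Fm's symmetric independence of x on diagonal}) using the Leibniz condition (\ref{eq:Leibniz rule Z}) satisfied by $\mathcal{Z}_{\alpha_1\cdots\alpha_n}^I$.

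First I would fix a point $z$, choose a smooth local tetrad, and work in the associated Riemannian normal coordinate chart. The key RNC property I will exploit is that at the origin, ordinary partial derivatives of a scalar field agree with totally symmetrized covariant derivatives. Since $\mathcal{Z}_{\alpha_1\cdots\alpha_n}^I$ is totally symmetric in its multi-indices, the identity (\ref{eq:Z=deriv Fn}) in these coordinates becomes equivalent to prescribing the ordinary mixed partial derivatives of $F_n(x_1,\ldots,x_n)$ at the origin. Borel's lemma then produces a smooth function with the desired jet; multiplying by a smooth cutoff supported in the RNC patch confines $F_n$ to a neighborhood of the total diagonal without altering its germ there. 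Symmetry of $F_n$ in $(x_1,\ldots,x_n)$ is imposed by symmetrizing the Borel output over permutations of the $x_i$-arguments, which leaves the jet unchanged since the prescribed data is already symmetric under such permutations. Smooth joint dependence on $z$ comes from using the smooth-parameter version of Borel's lemma together with the smoothness of $\mathcal{Z}_{\alpha_1\cdots\alpha_n}^I(z)$ in $z$; globally, local constructions based on smoothly varying tetrads are combined via a partition of unity, which is harmless since only the germ at total coincidence enters the final identities.

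To establish (\ref{eq:Fm's symmetric independence of x on diagonal}), I would proceed by induction on $|\beta|$. For $|\beta|=1$, apply $\nabla_b^{(z)}$ to the defining identity (\ref{eq:Z=deriv Fn}) and expand by the ordinary Leibniz rule, using that derivatives with respect to different spacetime arguments commute. This produces $n$ contributions of the form $[\nabla_{\alpha_1}^{(x_1)}\cdots\nabla_{\{b\alpha_i\}}^{(x_i)}\cdots\nabla_{\alpha_n}^{(x_n)}F_n]_{x_1,\ldots,x_n=z}$, each of which equals a ``shifted'' $\mathcal{Z}$ tensor by construction, plus one residual term of the form $[\nabla_{\alpha_1}^{(x_1)}\cdots\nabla_{\alpha_n}^{(x_n)}\nabla_b^{(z)}F_n]_{x_1,\ldots,x_n=z}$. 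The Leibniz relation (\ref{eq:Leibniz rule Z}) on $\mathcal{Z}_{\alpha_1\cdots\alpha_n}^I$ accounts exactly for the first $n$ contributions, so the residual must vanish. Higher $|\beta|$ follows by iterating this argument, at each step differentiating the identity just proved and reapplying (\ref{eq:Leibniz rule Z}).

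The main obstacle I anticipate is reconciling joint smoothness in $z$ with the fact that the Borel construction is intrinsically pointwise in $z$ and depends on auxiliary choices (tetrad, RNC chart, cutoff). This is handled by the smooth-parameter Borel lemma and by observing that the identification between mixed-partial and totally-symmetrized covariant-derivative jets at the RNC origin is itself smooth in $z$, as are the bi-tensors $\sigma^a(x,z)$ underlying the covariant Taylor expansion (\ref{eq:covtayser}) on any normal neighborhood. Once these ingredients are in place, the remainder of the argument is essentially multi-index bookkeeping driven by the Leibniz identity (\ref{eq:Leibniz rule Z}).
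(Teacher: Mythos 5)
Your proposal follows essentially the same route as the paper's own sketch: Riemannian normal coordinates about $z$, Borel's lemma to realize the prescribed jet (cut off and symmetrized), the equality at the RNC origin of partial derivatives with totally-symmetrized covariant derivatives together with smooth tetrad fields for smooth $z$-dependence, and the Leibniz condition (\ref{eq:Leibniz rule Z}) with induction on $|\beta|$ to obtain (\ref{eq:Fm's symmetric independence of x on diagonal}). The only point needing slightly more care than you give it is the passage from prescribing the mixed \emph{partial}-derivative jet to the full identity (\ref{eq:Z=deriv Fn}) for multi-indices $\alpha_{i}$ that are not internally symmetric, where the curvature corrections in the covariant derivatives are matched precisely because $\mathcal{Z}_{\alpha_{1}\cdots\alpha_{n}}^{I}$ satisfies the Leibniz condition, via the identity (\ref{eq:sum nabla_alpha T^beta = nabla_alpha-1}) that the paper invokes at this step.
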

\begin{rem}
\label{rem:conditions when F_n=0} By Remark \ref{rem:existence of tensor fields for (n,D)}
below Theorem \ref{thm:Wick uniquenes Z}, $F_{n}$ and all its derivatives
on the total diagonal are greatly constrained by the Wick axioms and
will vanish identically unless $n(D-2)/2$ is even. In particular,
$\nabla_{\alpha_{1}}^{(x_{1})}\cdots\nabla_{\alpha_{n}}^{(x_{n})}F_{n}(x_{1},\dots,x_{n};z)|_{x_{1},\dots,x_{n}=z}$
vanish when $n$ is odd and $D\neq2+4k$ for integer $k$, and when
$D$ is odd and $n=2+4k$. 
\end{rem}
\begin{rem}
\label{rem:germ equivalence} Only the germ of $F_{n}(x_{1},\dots,x_{n};z)$
on the total diagonal is relevant to (\ref{eq:Z=deriv Fn})
and (\ref{eq:Fm's symmetric independence of x on diagonal}). Hence,
if $F_{n}$ and $F_{n}'$ have the same germ on the total diagonal,
they are equivalent as far as Proposition \ref{prop:Wick uniqueness Fn}
is concerned. Note that $F_{n}$ is not locally and covariantly defined
away from the total diagonal on account of the coordinate system and
cutoff function used in its construction. However, $F_{n}$ and its
derivatives on the total diagonal are local and covariant. 
\end{rem}
\begin{rem}
The property (\ref{eq:Fm's symmetric independence of x on diagonal})
implies the germ of $F_{n}(x_{1},\dots,x_{n};z)$ on the total diagonal
is independent of its right-most point, $z$. By the previous remark,
$F_{n}(x_{1},\dots,x_{n};z)$ is, therefore, equivalent to, e.g.,
$F_{n}(x_{1},\dots,x_{n};x_{1})$ or $F_{n}(x_{1},\dots,x_{n};x_{n})$.
Therefore, it is possible to write $F_{n}$ as functions of only $n$-spacetime
points rather than $(n+1)$-points. However, in anticipation of the
role they will play in the Wick OPE coefficients of Subsection \ref{subsec:gen Wick OPE coef},
it is more convenient to write $F_{n}$ symmetrically with respect
to $x_{1},\dots,x_{n}$ as we have done here by using the auxiliary
point, $z$. 
\end{rem}
\begin{rem}
\label{rem:all quadratic fields are normal ordered}A notable consequence
of Proposition \ref{prop:Wick uniqueness Fn} is that all prescriptions
for constructing the quadratic Wick fields may be obtained by normal-ordering
with respect to some Hadamard parametrix. Suppose $H$ is any Hadamard
parametrix such that the prescription for Wick monomials satisfies
axioms W1-W8 (or W1-W7 for $D=2$). Then by the above proposition,
any other prescription will satisfy 
\[
(\tildeXL{\nabla_{\alpha_{1}}\phi\nabla_{\alpha_{2}}\phi})(z)=(\nabla_{\alpha_{1}}\phi\nabla_{\alpha_{2}}\phi)_{H}(z)+\nabla_{\alpha_{1}}^{(x_{1})}\nabla_{\alpha_{2}}^{(x_{2})}F_{2}(x_{1},x_{2};z)|_{x_{1},x_{2}=z}I.
\]
This prescription for general quadratic Wick monomials can be reproduced
by Hadamard normal ordering with respect to the new Hadamard parametrix
\[
\widetilde{H}(x_{1},x_{2})=H(x_{1},x_{2})+\frac{1}{2}F_{2}(x_{1},x_{2};x_{1})+\frac{1}{2}F_{2}(x_{1},x_{2};x_{2})
\]
This result is special to the quadratic fields. Prescriptions for
the higher-order Wick monomials are generally \emph{not} equivalent
to Hadamard normal ordering. 
\end{rem}
Thus, we have shown that the ambiguities between any two definitions
of the Wick monomials is completely characterized by a sequence of
functions $\{F_{n}(x_{1},\dots,x_{n};z)\}_{n\ge2}$. As described
in the previous subsection, normal ordering (see eq.~(\ref{eq:H normal ordered}))
with respect to a Hadamard parametrix satisfying (\ref{eq:H conservation condition-1})
provides an explicit construction of the Wick monomials compatible
with axioms W1-W8 (or W1-W7 in $D=2$). Our results, therefore, imply
any Hadamard normal-ordered monomial $\Phi_{A}^{H}$ may be expressed
as 
\begin{equation}
\H A{}=\sum_{B}\mathcal{Z}_{A}^{B}\Phi_{B}=\sum_{q=0}^{p}\binom{p}{q}(\nabla_{(\alpha_{1}}\phi\cdots\nabla_{\alpha_{q}}\phi)[\nabla_{\alpha_{q+1}}\cdots\nabla_{\alpha_{p})}F_{p-q}],\label{eq:Phi^H = Z Phi}
\end{equation}
where $\Phi_{B}$ corresponds to a Wick monomial defined via \emph{any}
renormalization prescription satisfying the axioms, and we have introduced
the shorthand 
\begin{equation}
[\nabla_{\alpha_{1}}\cdots\nabla_{\alpha_{n}}F_{n}]_{z}\equiv\nabla_{\alpha_{1}}^{(x_{1})}\cdots\nabla_{\alpha_{n}}^{(x_{n})}F_{n}(x_{1},\dots,x_{n};z)|_{x_{1},\dots,x_{n}=z}.\label{eq:Taylor coef for F_n at z shorthand}
\end{equation}

The right-most equality in \eqref{eq:Phi^H = Z Phi} follows
directly from plugging (\ref{eq:Z=deriv Fn}) of Proposition
\ref{prop:Wick uniqueness Fn} into the expression (\ref{eq:Wick uniqueness Z})
for $\mathcal{Z}_{A}^{B}$ in Theorem \ref{thm:Wick uniquenes Z}.
Of course, \eqref{eq:Phi^H = Z Phi} can be inverted to express
any monomial $\Phi_{A}$ in a general Wick prescription in terms of
Hadamard normal-ordered fields 
\begin{equation}
\Phi_{A}=\sum_{B}(\mathcal{Z}^{-1})_{A}^{B}\Phi_{B}^{H}.\label{def:Phi=Z Phi^H}
\end{equation}
Note that Theorem \ref{thm:Wick uniquenes Z} and Proposition \ref{prop:Wick uniqueness Fn}
apply also to $\Zinv AB$ and $\Zinv AI$. We can obtain expressions
for $\Zinv AI$ in terms of the functions $F_{n}(x_{1},\dots,x_{n};z)$
by using $\sum_{C}\Zinv CI\Z AC=\delta_{A}^{I}$ together with the
expression for $\Z AC$ in terms of $F_{n}$ implied by eqs. (\ref{eq:Wick uniqueness Z})
and (\ref{eq:Z=deriv Fn}). For $A\neq I$, this yields 
\begin{equation}
(\mathcal{Z}^{-1})_{\alpha_{1}\cdots\alpha_{n}}^{I}=-\sum_{k=0}^{n-2}\binom{n}{k}(\mathcal{Z}^{-1}){}_{(\alpha_{1}\cdots\alpha_{k}}^{I}[\nabla_{\alpha_{k+1}}\cdots\nabla_{\alpha_{n})}F_{n-k}],\label{eq:Z inv (Fn)}
\end{equation}
where we recall the shorthand \eqref{eq:Taylor coef for F_n at z shorthand}
for the Taylor coefficients of $F_{n}$ at $z$. This relation allows
one to recursively solve for $(\mathcal{Z}^{-1})_{\alpha_{1}\cdots\alpha_{n}}^{I}$.
For example, we have 
\begin{align}
(\mathcal{Z}^{-1})_{\alpha_{1}\alpha_{2}}^{I} & =-[\nabla_{\alpha_{1}}\nabla_{\alpha_{2}}F_{2}]\label{eq:Z inv explicit examples}\\
(\mathcal{Z}^{-1})_{\alpha_{1}\alpha_{2}\alpha_{3}\alpha_{4}}^{I} & =-[\nabla_{\alpha_{1}}\nabla_{\alpha_{2}}\nabla_{\alpha_{3}}\nabla_{\alpha_{4}}F_{4}]+\binom{4}{2}[\nabla_{(\alpha_{1}}\nabla_{\alpha_{2}}F_{2}][\nabla_{\alpha_{3}}\nabla_{\alpha_{4})}F_{2}].
\end{align}

In this way, \eqref{def:Phi=Z Phi^H} provides a construction
of the Wick monomials in any prescription satisfying the axioms in
terms of $F_{n}$ and the Hadamard normal-ordered monomials defined
in \eqref{eq:H normal ordered}. In the next section, we will see
that the corresponding ambiguities in the OPE coefficients for products
of Wick fields can be expressed in a simple way in terms of $F_{n}$.

\section{Klein-Gordon OPE coefficients \label{sec:K-G OPE-coefficients}}

A renormalization prescription for the Wick monomials uniquely determines
the Wightman products of Wick fields as well as the unextended time-ordered
products. In Subsection \ref{subsec:H N-O coef}, we obtain the explicit
form of the OPE coefficients of the $n$-point Wightman distributions
invloving Wick monomials defined via a local Hadamard normal-ordering
procedure (see Theorem \ref{thm: Had NO OPE coef}). In Subsection
\ref{subsec:gen Wick OPE coef}, we then give the general form of
the (Wightman) OPE coefficients corresponding to any prescription
for the Wick monomials satisfying axioms W1-W8 in terms of the smooth
functions $F_{n}$ appearing in the Wick monomial uniqueness theorem
(see Theorem \ref{thm:existence Wick coef and associativity}). In
Subsection \ref{subsec:TOP}, we show that the OPE coefficients for
unextended time-ordered products are given by the same expressions
as for the Wightman products with the substitution $H\to H_{F}$,
where $H$ is a locally constructed Hadamard distribution and $H_{F}$
is a locally-constructed Feynman distribution (see Proposition \ref{thm:explicit TO coef}).

\subsection{Local Hadamard normal-ordered OPE coefficients \label{subsec:H N-O coef}}

In this subsection, we show that products of Wick monomials defined
by local Hadamard normal ordering admit an operator product expansion
(OPE), i.e., we will show that for any Wick monomials $\H A1$$,\dots,\H An$
defined via the local Hadamard normal ordering prescription (see eq.
(\ref{eq:H normal ordered})) and in any Hadamard state $\Psi$ we
have, 
\begin{equation}
\braket{\H A1(x_{1})\cdots\H An(x_{n})}_{\Psi}\approx\sum_{B}(C_{H}){}_{A_{1}\cdots A_{n}}^{B}(x_{1},\dots,x_{n};z)\braket{\H B{}(z)}_{\Psi},\label{eq:Had OPE def}
\end{equation}
where the $B$-sum runs over all Wick monomials. In Theorem \ref{thm: Had NO OPE coef}
below, we will also obtain explicit expressions for the local and
covariant OPE coefficients, $(C_{H}){}_{A_{1}\cdots A_{n}}^{B}$.
For products involving more than two Hadamard normal-ordered monomials
(i.e. $n>2$), the OPE coefficients of (\ref{eq:Had OPE def}) are
found to satisfy important relations called ``associativity'' conditions
which are especially useful for analyzing the OPE for Wick monomials
$\Phi_{A}$ that are \emph{not }defined via Hadamard normal ordering.
In the next subsection, we will then show that a general definition
of Wick monomials $\Phi_{A}$ also satisfy an OPE, and we will characterize
how the freedom in the choice of the definition of Wick monomials
affects its OPE coefficients $C_{A_{1}\cdots A_{n}}^{B}$.

The asymptotic equivalence relation ``$\approx$'' used in the definition
of the OPE (\ref{eq:Had OPE def}) was precisely formulated in a local
and covariant manner in \cite{HW_Axiomatic_QFTCS} by introducing
a family of asymptotic equivalence relations ``$\sim_{\mathcal{T},\delta}$''
which are parameterized by a positive real number $\delta$ and a
``merger tree'' $\mathcal{T}$. We introduce here the details relevant
for our analysis and refer the reader to \cite{HW_Axiomatic_QFTCS}
for further discussion. Merger trees classify the different ways in
which the limit $x_{1},\dots,x_{n}\to z$ may be taken. For instance,
when $n=3$, one possible merger tree would correspond to taking all
three points $(x_{1},x_{2},x_{3})$ together to $z$ at the ``same
rate'', while another possible merger tree would correspond to having
two of the points, e.g. $x_{1}$ and $x_{2}$, approach each other
``faster'' than all three points $(x_{1},x_{2},x_{3})$ approach
$z$. For a given merger tree, $\mathcal{T}$, the positive number
$\delta$ in ``$\sim_{\mathcal{T},\delta}$'' indicates how rapidly
the difference between both sides of the equivalence relation goes
to zero as the spacetime points approach $z$ at their various rates.
Altogether, the equivalence relation ``$\approx$'' in (\ref{eq:Had OPE def})
means that, for every $\mathcal{T}$ and $\delta>0$, there exists
a real number $\Delta$ such that, 
\begin{equation}
\braket{\H A1(x_{1})\cdots\H An(x_{n})}_{\Psi}\sim_{\mathcal{T},\delta}\sum_{[B]\le\Delta(\mathcal{T},\delta)}(C_{H}){}_{A_{1}\cdots A_{n}}^{B}(x_{1},\dots,x_{n};z)\braket{\Phi_{B}^{H}(z)}_{\Psi},\label{eq:OPE precise asymp rel}
\end{equation}
where we recall the definition (\ref{eq:[A]}) of $[B]$.

The rate at which a distribution either diverges or converges to zero
in the limit all its spacetime points merge to $z$ at the same rate
(i.e. for the trivial merger tree) is known as its ``scaling degree
at $z$''\footnote{The ``scaling degree'' was introduced by Steinmann in the context
of Minkowski spacetime \cite[Section 5]{Steinmann}. See \cite{BF_muA_interacting_QFTs}
for further discussion in the context of curved manifolds.}. By convention, positive scaling degrees indicate divergence and
negative scaling degrees imply convergence: For example, the geometric
factors $\T[\beta](x;z)$ have scaling degree $-|\beta|$ at $z$,
while the Hadamard parametrix $H$ has scaling degree $D-2$. As we
will see, the engineering dimension $[A]$ of a Wick field $\Phi_{A}$
is related to the scaling degree of the coefficient $C_{AA}^{I}$
as follows\footnote{If the scaling degree varies for different background geometries,
then $[A]$ is equal to the supremum of the right-hand side with respect
to $(M,g_{ab})$. If $\Phi_{A}$ is tensor-valued, then the maximum
scaling degree of the tensor components is used.}: 
\begin{equation}
[A]=\frac{1}{2}\text{sd}_{z}\left[C_{AA}^{I}(x_{1},x_{2};z)\right].\label{eq:[A] via scaling degree}
\end{equation}
Moreover, we will find the scaling degree of all Wick OPE coefficients
are bounded from above by: 
\begin{equation}
\text{sd}_{z}\left[C_{A_{1}\cdots A_{n}}^{B}\right]\le[A_{1}]+\cdots+[A_{n}]-[B].\label{eq:scaling degree OPE coef}
\end{equation}

The key result needed to show the existence of an OPE for Hadamard
normal-ordered Wick monomials is that, in any Hadamard state $\Psi$,
the distribution, 
\begin{equation}
h_{n,\Psi}(x_{1},\dots,x_{n})\equiv\braket{:\phi(x_{1})\cdots\phi(x_{n}):_{H}}_{\Psi}\label{smoothhno}
\end{equation}
is, in fact, a smooth function\footnote{It was proven in \cite[Lemma III.1]{Hollands_Ruan} that (\ref{smoothhno})
is smooth if and only if $\Psi$ is Hadamard and the truncated $n$-point
functions of $\Psi$ are smooth. However, Sanders later proved that
all Hadamard states have smooth truncated $n$-point functions \cite[Proposition 3.1.14]{Sanders_Thesis}
and, therefore, (\ref{smoothhno}) is smooth for \emph{all} Hadamard
$\Psi$ and \emph{only} Hadamard $\Psi$.} of $(x_{1},\dots,x_{n})$. It then follows immediately from the definition,
eq.~(\ref{eq:H normal ordered}), of the Hadamard normal-ordered
Wick power $\phi_{H}^{n}(z)$ that we have, 
\begin{equation}
\braket{\phi_{H}^{n}(z)}_{\Psi}=h_{n,\Psi}(z,\dots,z)\label{phinhno}
\end{equation}
i.e., the expectation value of the Wick power $\phi_{H}^{n}$ evaluated
at $z$ is the total coincidence value at $z$ of the smooth function
$h_{\Psi}(x_{1},\dots,x_{n})$. More generally, we have, 
\begin{equation}
\braket{(\nabla_{\alpha_{1}}\phi\cdots\nabla_{\alpha_{n}}\phi)_{H}(z)}_{\Psi}=\nabla_{\alpha_{1}}^{(x_{1})}\cdots\nabla_{\alpha_{n}}^{(x_{n})}h_{n,\Psi}(x_{1},\dots,x_{n})\big\vert_{x_{1},\dots,x_{n}=z}\label{derphinhno}
\end{equation}

The simplest example of an OPE is the one for the two point function
$\braket{\phi(x_{1})\phi(x_{2})}_{\Psi}$. From the definition eq.~(\ref{hadnodef})
of Hadamard normal ordering, we have for $x_{1}$ and $x_{2}$ in
a common convex normal neighborhood, 
\begin{equation}
\phi(x_{1})\phi(x_{2})=:\phi(x_{1})\phi(x_{2}):_{H}+H(x_{1},x_{2})I\label{had2ptfun}
\end{equation}
We now take the expectation value of this equation in an arbitrary
Hadamard state $\Psi$. Since $\braket{:\phi(x_{1})\phi(x_{2}):_{H}}_{\Psi}$
is smooth, we may take its covariant Taylor expansion (see eq.~(\ref{eq:covtayser})
above) for $x_{1}$ and $x_{2}$ in a normal neighborhood of some
arbitrarily chosen point $z$, thereby obtaining asymptotic relations\footnote{\label{rem:n=2 trivial merger tree}For $n=2$, we omit the
$\mathcal{T}$ symbol since there is only one possible merger tree
in this case.} that hold in the coincidence limit, 
\begin{equation}
\braket{:\phi(x_{1})\phi(x_{2}):_{H}}_{\Psi}\sim_{\delta}\sum_{|\beta_{1}|+|\beta_{2}|\le\delta}\T[\beta_{1}](x_{1};z)\T[\beta_{2}](x_{1};z)\nabla_{\beta_{1}}^{(x_{1})}\nabla_{\beta_{2}}^{(x_{2})}h_{2,\Psi}(x_{1},x_{2})\big\vert_{x_{1}=x_{2}=z},\label{eq:exp Psi(:phi(x1)phi(x2):)}
\end{equation}
using the fact that 
\[
\T[\beta_{1}](x_{1};z)\T[\beta_{2}](x_{2};z)\sim_{\delta}0,\qquad\text{for }|\beta_{1}|+|\beta_{2}|>\delta.
\]
Substituting expression (\ref{eq:exp Psi(:phi(x1)phi(x2):)}) into
eq.~(\ref{had2ptfun}) and using eq.~(\ref{derphinhno}), we find
that for any Hadamard state $\Psi$, we have, 
\begin{equation}
\braket{\phi(x_{1})\phi(x_{2})}_{\Psi}\sim_{\delta}\sum_{|\beta_{1}|+|\beta_{2}|\le\delta}\T[\beta_{1}](x_{1};z)\T[\beta_{2}](x_{2};z)\braket{(\nabla_{\beta_{1}}\phi\nabla_{\beta_{2}}\phi)_{H}(z)}_{\Psi}+H(x_{1},x_{2})\braket{I}_{\Psi},\label{eq:phi phi Had NO OPE}
\end{equation}
Noting this holds for all $\delta>0$, this equation takes the form
of an OPE, from which we can read off the OPE coefficients, 
\begin{equation}
(C_{H}){}_{\phi\phi}^{I}(x_{1},x_{2};z)=H(x_{1},x_{2}),\qquad(C_{H}){}_{\phi\phi}^{(\nabla_{\beta_{1}}\phi\nabla_{\beta_{2}}\phi)}(x_{1},x_{2};z)=\T[(\beta_{1}](x_{1};z)\T[\beta_{2})](x_{2};z),\label{eq:phi phi Had coef}
\end{equation}
where we have symmetrized over $\beta_{1}$ and $\beta_{2}$ on the
right side of the last expression since $(\nabla_{\beta_{1}}\phi\nabla_{\beta_{2}}\phi)_{H}$
is symmetric in $\beta_{1}$ and $\beta_{2}$, so only the symmetric
part of this OPE coefficient contributes. All other OPE coefficients
of the form $C_{\phi\phi}^{B}$ vanish. Given the scaling degree of
$\T[\beta]$ and $H$ stated above, we indeed find (as anticipated
in formula (\ref{eq:[A] via scaling degree})), 
\[
[\phi]=\frac{1}{2}\text{sd}_{z}\left[(C_{H})_{\phi\phi}^{I}(x_{1},x_{2};z)\right]=\frac{1}{2}(D-2),
\]
and (as anticipated in formula (\ref{eq:scaling degree OPE coef}))
the scaling degree of $(C_{H}){}_{\phi\phi}^{B}$ at $z$ is found
to be bounded from above by $[\phi]+[\phi]-[B]$ with the non-trivial
coefficients saturating the bound.

In order to illustrate how more general OPEs are obtained for Hadamard
normal-ordered monomials and to understand the patterns that emerge
in the structure of the general OPE coefficients, it is instructive
to consider another simple example, namely $n=2$ and $\H A1,\H A2=\phi_{H}^{2}$.
Wick's theorem (\ref{eq:Wick product rule}) implies that for $x_{1},x_{2}$
in a common convex normal neighborhood, we have 
\begin{align}
 & \braket{\phi_{H}^{2}(x_{1})\phi_{H}^{2}(x_{2})}_{\Psi}\label{eq:Wick's thm phi^2 phi^2}\\
 & \qquad=\braket{:\phi(x_{1})\phi(x_{1})\phi(x_{2})\phi(x_{2}):_{H}}_{\Psi}+4H(x_{1},x_{2})\braket{:\phi(x_{1})\phi(x_{2}):_{H}}_{\Psi}+2H(x_{1},x_{2})H(x_{1},x_{2}).\nonumber 
\end{align}
Again, all of the ``totally normal-ordered'' quantities appearing
on the right-hand side are smooth functions. Therefore, we may covariantly
Taylor expand these terms about $x_{1},x_{2}=z$, to obtain 
\begin{align}
 & \braket{\phi_{H}^{2}(x_{1})\phi_{H}^{2}(x_{2})}_{\Psi}\nonumber \\
 & \quad\sim_{\delta}\sum_{\beta_{1},\beta_{2},\beta_{3},\beta_{4}}\T[(\beta_{1}](x_{1};z)\T[\beta_{2}](x_{1};z)\T[\beta_{3}](x_{2};z)\T[\beta_{4})](x_{2};z)\braket{(\nabla_{\beta_{1}}\phi\nabla_{\beta_{2}}\phi\nabla_{\beta_{3}}\phi\nabla_{\beta_{4}}\phi)_{H}(z)}_{\Psi}+\label{eq:OPE Had phi^2phi^2}\\
 & \quad+4H(x_{1},x_{2})\sum_{\beta_{1},\beta_{2}}\T[(\beta_{1}](x_{1};z)\T[\beta_{2})](x_{2};z)\braket{(\nabla_{\beta_{1}}\phi\nabla_{\beta_{2}}\phi)_{H}(z)}_{\Psi}+2H(x_{1},x_{2})H(x_{1},x_{2})\braket{I}_{\Psi},\nonumber 
\end{align}
where the respective sums run over $\sum_{i}|\beta_{i}|\le\delta$.
Thus, the nonvanishing OPE coefficients are, 
\begin{equation}
(C_{H}){}_{\phi^{2}\phi^{2}}^{B}(x_{1},x_{2};z)=\begin{cases}
\T[(\beta_{1}](x_{1};z)\T[\beta_{2}](x_{1};z)\T[\beta_{3}](x_{2};z)\T[\beta_{4})](x_{2};z) & \H B{}=(\nabla_{\beta_{1}}\phi\nabla_{\beta_{2}}\phi\nabla_{\beta_{3}}\phi\nabla_{\beta_{4}}\phi)_{H}\\
4\T[(\beta_{1}](x_{1};z)\T[\beta_{2})](x_{2};z)H(x_{1},x_{2}) & \H B{}=(\nabla_{\beta_{1}}\phi\nabla_{\beta_{2}}\phi){}_{H}\\
2H(x_{1},x_{2})H(x_{1},x_{2}) & \H B{}=I
\end{cases}\label{eq:C phi^2 phi^2 cases}
\end{equation}
Thus, we see that all of the nonvanishing OPE coefficients are given
by products of the Hadamard parametrix $H(x_{1},x_{2})$ and the geometrical
factors $S^{\beta}(x;z)$ defined by eq.(\ref{eq:covariant Taylor coef}).

The existence of an OPE for an arbitrary product of $n$ Hadamard
normal-ordered Wick monomials 
\begin{equation}
\braket{(\nabla_{\alpha_{(1,1)}}\phi\cdots\nabla_{\alpha_{(1,k_{1})}}\phi)_{H}(x_{1})(\nabla_{\alpha_{(2,1)}}\phi\cdots\nabla_{\alpha_{(2,k_{2})}}\phi)_{H}(x_{2})\cdots(\nabla_{\alpha_{(n,1)}}\phi\cdots\nabla_{\alpha_{(n,k_{n})}}\phi)_{H}(x_{n})}_{\Psi}\label{genopeexp}
\end{equation}
can be established by paralleling the derivation used in the above
examples. As previously introduced in condition W7 above, we denote
the number of factors of $\phi$ that appear in a Wick monomial $\Phi_{A}$
by $[\Phi_{A}]_{\phi}$. Thus, for the factor $\H Ai=(\nabla_{\alpha_{(i,1)}}\phi\cdots\nabla_{\alpha_{(i,k_{i})}}\phi)_{H}$
in eq.(\ref{genopeexp}), we have $[\H Ai]_{\phi}=k_{i}$. We denote
by $K$ the total number of factors of $\phi$ appearing in the expression
(\ref{genopeexp}), 
\begin{equation}
K=\sum_{i=1}^{n}k_{i}\label{Kdef}
\end{equation}
We write the quantity (\ref{genopeexp}) in terms of products of $H$
and normal ordered products of $\phi$'s. We then obtain an OPE for
(\ref{genopeexp}) by Taylor expanding the normal-ordered products
of $\phi$'s. It follows that the general OPE coefficients are given
by products of $H(x_{1},x_{2})$, $S^{\beta}(x;z)$ and their derivatives.
It also can be seen that the only fields $\H B{}=(\nabla_{\beta_{1}}\phi\nabla_{\beta_{2}}\phi\cdots\nabla_{\beta_{m}}\phi)_{H}$
for which $(C_{H}){}_{A_{1}\cdots A_{n}}^{B}$ can be non-vanishing
are such that $[\H B{}]_{\phi}=m$ takes the values $m=K,K-2,K-4,\dots$
and $m\ge0$.

In order to explain the combinatorics of the formula for the general
OPE coefficients in terms of $H(x_{1},x_{2})$, $S^{\beta}(x;z)$
and their derivatives, it is useful to introduce a uniform notation
for all the multi-indices relevant for $(C_{H}){}_{A_{1}\cdots A_{n}}^{B}$
by pairing each $\beta_{j}$ multi-index with a ``$0$'' and write
$\H B{}=(\nabla_{\beta_{(0,1)}}\phi\nabla_{\beta_{(0,2)}}\phi\cdots\nabla_{\beta_{(0,m)}}\phi)_{H}$.
The multi-indices relevant for $(C_{H}){}_{A_{1}\cdots A_{n}}^{B}$
then comprise the set of ordered pairs, 
\begin{equation}
{\mathcal{S}}=\left\{ (0,1),\dots,(0,m),(1,1),(1,2),\dots,(1,k_{1}),(2,1),\dots,(2,k_{2}),\dots,(n,1),\dots,(n,k_{n})\right\} \label{Sdef}
\end{equation}
This set has $(m+K)$-elements, which is an even number whenever $(C_{H}){}_{A_{1}\cdots A_{n}}^{B}$
is nonvanishing. In order to describe the combinations of $\T[\beta_{j}](x_{u})$
and $H(x_{v},x_{w})$ and their derivatives that appear in the formula
for $(C_{H}){}_{A_{1}\cdots A_{n}}^{B}$, we follow \cite[see Section 4.1]{HH_Associativity}
by employing the notion of ``perfect matchings''\footnote{The terminology is borrowed from graph theory: The elements of $\mathcal{S}$
can be viewed as labeling the vertices of a graph. (See e.g. Figure
\ref{fig:graphs of perfect matchings} below). An arrow connecting
two vertices of this graph corresponds then to a pairing between two
elements of $\mathcal{S}$. A ``perfect matching'' is achieved when
every vertex is connected to exactly one arrow and there are no loops
(connecting a vertex to itself): i.e., every element of $\mathcal{S}$
is paired with precisely one other element of $\mathcal{S}$. } for elements of $\mathcal{S}$. By definition, the set, $\mathcal{M}(\mathcal{S})$,
of perfect matchings is the set of all partitions of $\mathcal{S}$
into subsets each of which contains precisely two elements. Each pair
of distinct elements of $\mathcal{S}$ is of the form $\{(v,i),(w,j)\}$.
It is convenient to require that these pairs be ordered so that $v\leq w$.
(When $v=w$, we may require $i<j$, but the matrix elements of the
matrix $\mathcal{N}$ defined below will vanish in that case, so the
ordering is irrelevant.) Since $\mathcal{S}$ has $(m+K)$-elements
it follows that $\mathcal{M}(\mathcal{S})$ has $(m+K-1)!!\equiv(m+K-1)(m+K-3)(m+K-5)\cdots1$
elements when $m+K$ is even. Thus, for example, if ${\mathcal{S}}=\{(0,1),(1,1),(1,2),(2,1)\}$
corresponding to $n=2$, $K=3$, and $m=1$, then $\mathcal{M}(\mathcal{S})$
consists of the three partitions: 
\begin{equation}
\mathcal{M}(\mathcal{S})=\bigl\{\{(0,1),(1,1);(1,2),(2,1)\},\{(0,1),(1,2);(1,1),(2,1)\},\{(0,1),(2,1);(1,1),(1,2)\}\bigr\},\label{eq:perfect matching ex}
\end{equation}
which are diagrammed in the following figure. 
\begin{figure}[H]
\global\long\def\edge#1#2{\draw[->,thick] (#1) to [out=270,in=270,looseness=1.75] (#2);}
 \begin{tikzpicture}[node distance={12mm},main/.style ]  
\node[main] (1) {$(0,1)$};
\node[main] (2) [right of =1] {$(1,1)$}; 
\node[main] (3) [right of =2] {$(1,2)$};
\node[main] (4) [right of =3] {$(2,1)$};
\edge{1}{2}
\edge{3}{4}

\node[main] (20) [right of =4] {};

\node[main] (5) [right of =20] {$(0,1)$};
\node[main] (6) [right of =5] {$(1,1)$}; 
\node[main] (7) [right of =6] {$(1,2)$};
\node[main] (8) [right of =7] {$(2,1)$};

\edge{5}{7}
\edge{6}{8}

\node[main] (21) [right of =8] {};

\node[main] (9) [right of =21] {$(0,1)$};
\node[main] (10) [right of =9] {$(1,1)$}; 
\node[main] (11) [right of =10] {$(1,2)$};
\node[main] (12) [right of =11] {$(2,1)$};

\edge{9}{12}
\edge{10}{11}

\end{tikzpicture} \caption{\label{fig:graphs of perfect matchings}Directed graphs representing
the three perfect matchings in (\ref{eq:perfect matching ex}). Arrow
direction points from a vertex $(v,i)\in\mathcal{S}$ toward a vertex
$(w,j)\in\mathcal{S}$ such that $v\le w$ and $i<j$.}
\end{figure}

It is useful to combine the relevant multi-index derivatives of $\T[\beta_{j}](x_{u})$
and $H(x_{v},x_{w})$ into a single $(K+m)\times(K+m)$ matrix\textcolor{red}{{}
}$\mathcal{N}$ as follows, 
\begin{equation}
\mathcal{N}_{(v,i)(w,j)}\equiv\begin{cases}
\na vi\na wjH(x_{v},x_{w}) & v\ne w;v,w\ne0\\
\na wj\T[\beta_{i}](x_{w};z) & v=0,w\ne0\\
0 & \text{otherwise}
\end{cases}\label{wmatdef}
\end{equation}
The {\em hafnian} of $\mathcal{N}$ is defined by \cite{Caianiello_textbook}
\begin{equation}
\haf\mathcal{N}\equiv\sum_{M\in\mathcal{M}({\mathcal{S}})}\prod_{\{(v,i),(w,j)\}\in M}\mathcal{N}_{(v,i)(w,j)},\label{haf}
\end{equation}
where the sum is taken over the $(m+K-1)!!$ perfect matchings, $M$,
of the set $\mathcal{S}$, eq.~(\ref{Sdef}), and the product is
taken over all ordered pairs $\{(v,i),(w,j)\}$ occurring in $M$.
With these definitions, the existence of an OPE for Hadamard normal-ordered
products and the general formula for the OPE coefficients may now
be stated as the following theorem: 
\begin{thm}
\label{thm: Had NO OPE coef} For Hadamard normal-ordered fields $\H Ai$,
there exists an OPE of the form eq.~(\ref{eq:Had OPE def}), with
local and covariant OPE coefficients $(C_{H})_{A_{1}\cdots A_{n}}^{B}(x_{1},\dots,x_{n};z)$.
The OPE coefficients $(C_{H}){}_{A_{1}\cdots A_{n}}^{B}$ can be nonvanishing
only when $m=[\H B{}]_{\phi}$ takes the values $m=K,K-2,K-4,\dots$
and $m\ge0$, where $K$ is given by eq.~(\ref{Kdef}). Furthermore,
the OPE coefficients are explicitly given by 
\begin{equation}
(C_{H}){}_{A_{1}\cdots A_{n}}^{B}(x_{1},\dots,x_{n};z)=\frac{1}{m!}\haf\mathcal{N},\label{eq:H NO coefficients}
\end{equation}
with $\haf\mathcal{N}$ given by eq.(\ref{haf}) and they satisfy
the scaling degree properties (\ref{eq:[A] via scaling degree})
and (\ref{eq:scaling degree OPE coef}), saturating the bound whenever
\eqref{eq:H NO coefficients} is nonzero. 
\end{thm}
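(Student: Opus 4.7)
The plan is to generalize the derivation already illustrated for the examples $\langle \phi(x_1)\phi(x_2)\rangle_\Psi$ and $\langle \phi_H^2(x_1)\phi_H^2(x_2)\rangle_\Psi$. First I would apply a multi-factor version of Wick's theorem (the iterated form of \eqref{eq:Wick product rule}) to the product $\Phi^H_{A_1}(x_1)\cdots\Phi^H_{A_n}(x_n)$. Because each $\Phi_{A_i}^H$ is built from normal-ordered monomials via the integral-derivative prescription \eqref{eq:H normal ordered}, this expansion writes the product as a sum over all ways of pairing up field factors coming from distinct groups $i\neq j$: every such pairing contributes a factor of $H(x_v,x_w)$ acted on by the appropriate covariant derivatives $\nabla^{(x_v)}_{\alpha_{(v,i)}}\nabla^{(x_w)}_{\alpha_{(w,j)}}$, while the remaining unpaired fields are collected into a single totally $H$-normal-ordered product.

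Next I would take the expectation value in the Hadamard state $\Psi$. By the key smoothness fact \eqref{smoothhno}, every totally Hadamard-normal-ordered expectation value is a smooth function of its arguments, so I may apply the multivariate covariant Taylor expansion \eqref{eq:covtayser} about the expansion point $z$ to each such smooth function (in each variable separately). Each resulting Taylor coefficient, once the appropriate derivatives $\nabla_{\alpha_{(v,i)}}^{(x_v)}$ inherited from the original $\Phi_{A_i}^H$ are commuted through and evaluated at coincidence, produces exactly $\langle(\nabla_{\beta_{(0,1)}}\phi\cdots\nabla_{\beta_{(0,m)}}\phi)_H(z)\rangle_\Psi$ by \eqref{derphinhno}, multiplied by a product of geometric factors $S^{\beta_{(0,j)}}$ (or their derivatives) as in \eqref{eq:covariant Taylor coef}. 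Reading off the coefficient of $\langle\Phi_B^H(z)\rangle_\Psi$ for $\Phi_B^H=(\nabla_{\beta_{(0,1)}}\phi\cdots\nabla_{\beta_{(0,m)}}\phi)_H$ yields an OPE of the claimed form. The parity constraint on $m$ is automatic because Wick's theorem pairs fields two at a time, so the number of unpaired fields has the same parity as $K$; hence $m\in\{K,K-2,\dots\}\cap\mathbb{N}_0$.

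The principal combinatorial step, which I expect to be the main technical obstacle, is to recognize the resulting sum of products as $(1/m!)\,\haf\mathcal{N}$ with $\mathcal{N}$ as in \eqref{wmatdef}. The key observation is that a single term in the expansion is specified by (a) a choice of Wick contractions among the $K$ field slots of the original groups (contributing the $v,w\neq 0$ entries of $\mathcal{N}$) and (b) a choice of how to match each of the $m$ multi-indices $\beta_{(0,j)}$ labeling $\Phi_B^H$ with one of the remaining unpaired field slots (contributing the $v=0$ entries). Taken together, these two choices are precisely a perfect matching of the index set $\mathcal{S}$ of \eqref{Sdef}, and summing the associated products of $\mathcal{N}$-entries gives $\haf\mathcal{N}$. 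The factor $1/m!$ arises because $\Phi_B^H$ is symmetric in its $m$ derivative multi-indices: the sum over matchings overcounts by the $m!$ permutations of the $(0,j)$ labels, while only the totally symmetric combination of $\beta_{(0,j)}$'s contributes to $\langle\Phi_B^H(z)\rangle_\Psi$.

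Finally, I would verify the scaling-degree claims \eqref{eq:[A] via scaling degree} and \eqref{eq:scaling degree OPE coef}. Since $\nabla^{(x_v)}_{\alpha_{(v,i)}}\nabla^{(x_w)}_{\alpha_{(w,j)}}H(x_v,x_w)$ has scaling degree $(D-2)+|\alpha_{(v,i)}|+|\alpha_{(w,j)}|=2[\phi]+|\alpha_{(v,i)}|+|\alpha_{(w,j)}|$ at coincidence, and $\nabla^{(x_w)}_{\alpha_{(w,j)}}S^{\beta_{(0,i)}}(x_w;z)$ has scaling degree $|\alpha_{(w,j)}|-|\beta_{(0,i)}|$, each term in $\haf\mathcal{N}$ has scaling degree precisely $[A_1]+\cdots+[A_n]-[B]$, giving the bound \eqref{eq:scaling degree OPE coef} with equality whenever a term in the hafnian is nonzero; for $n=2$ and $\Phi_{A_1}=\Phi_{A_2}=\Phi_A$, $\Phi_B=I$, a direct count reproduces \eqref{eq:[A] via scaling degree}. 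The local and covariant character of the coefficients is inherited from that of $H$ and $\sigma$, completing the proof.
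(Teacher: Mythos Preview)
Your proposal is correct and follows essentially the same approach that the paper sketches in the paragraphs preceding the theorem: apply the iterated Wick theorem to the product of Hadamard normal-ordered monomials, Taylor-expand the smooth totally normal-ordered expectation values about $z$, and read off the coefficients as sums of products of derivatives of $H$ and $S^\beta$, organized into the hafnian of $\mathcal{N}$. The paper itself does not give a self-contained proof of this theorem but instead refers to \cite{Hollands_perturbative_OPE_CS} for the existence statement and to \cite{HH_Associativity} for the hafnian formula in the flat Euclidean case; your write-up supplies precisely the argument the paper outlines informally, including the correct identification of the $1/m!$ factor with the overcounting of permutations of the $\beta_{(0,j)}$ labels and the scaling-degree verification from the known scaling of $H$ and $S^\beta$.
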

A formal proof of the existence of an OPE for scalar field theories
with renormalizable interactions on any globally-hyperbolic spacetime
was given (to any finite order in perturbation theory) in \cite[Theorem 1]{Hollands_perturbative_OPE_CS}.
Since at zeroth-order the quantum fields in \cite{Hollands_perturbative_OPE_CS}
were defined via Hadamard normal ordering, this result encompasses
the case considered here. For the case of flat spacetime, we have
$\nabla_{\alpha}^{(x)}S^{\beta}(x;z)=\frac{1}{|\beta|!}\partial_{\alpha}^{(x)}(x-z)^{\beta}$
and our formula (\ref{eq:H NO coefficients}) for the Hadamard normal-ordered
OPE coefficients corresponds\footnote{There is a discrepancy of a factor of $1/m!$ between our formula
(\ref{eq:H NO coefficients}) and the formula given in \cite{HH_Associativity}.} to the formula given in \cite[Section 4.1]{HH_Associativity} for
the vacuum normal-ordered (flat) Euclidean OPE coefficients after
replacing $H$ with the Euclidean Green's function $G_{E}$ (see eq.(\ref{eq:explicit G_E})
below). The scaling degree properties stated in the theorem follow
immediately from eq.~(\ref{eq:H NO coefficients}) and the scaling
behavior of the Hadamard parametrix and the geometric factors $\T[\beta](x;z)$.
\begin{rem}
In the Euclidean case considered in \cite[Section 4.1]{HH_Associativity},
$G_{E}(x_{1},x_{2})$ is symmetric in $(x_{1},x_{2})$ so the ordering
of $(x_{1},x_{2})$ is irrelevant. However, in the Lorentzian case,
the anti-symmetric part of $H$ is proportional to the causal propagator,
$i\Delta\equiv i\Delta^{\text{adv}}-i\Delta^{\text{ret}}$, modulo
$C^{\infty}(M\times M)$, so the ordering of the events matters. 
\end{rem}
\begin{rem}
For $B=I$, we have $m=0$, so we have $(C_{H}){}_{A_{1}\cdots A_{n}}^{I}=0$
if $K$ is odd. If $K$ is even, then since $v=0$ does not arise
on the right side of eq.~(\ref{eq:H NO coefficients}) when $m=0$,
we may replace $\mathcal{N}_{(v,i)(w,j)}$ by $\na vi\na wjH(x_{v},x_{w})$,
so $(C_{H}){}_{A_{1}\cdots A_{n}}^{I}$ is given by 
\begin{equation}
(C_{H}){}_{A_{1}\cdots A_{n}}^{I}(x_{1},\dots,x_{n};z)=\sum_{M\in\mathcal{M}({\mathcal{S}})}\prod_{\{(v,i),(w,j)\}\in M}\na vi\na wjH(x_{v},x_{w}),\label{eq:H id coefs}
\end{equation}
i.e., $(C_{H}){}_{A_{1}\cdots A_{n}}^{I}$ is a sum of products of
derivatives of $H$'s. 
\end{rem}
\begin{rem}
At the other extreme, when $m=K$, then if any product on the right
side of eq.~(\ref{eq:H NO coefficients}) contained a factor with
both $v\neq0$ and $w\neq0$, then it would also have to contain a
factor with $v=w=0$ and thus would vanish. Thus, for $m=K$, the
only elements of $\mathcal{M}(\mathcal{S})$ which may contribute
nontrivially to (\ref{eq:H NO coefficients}) are those such that
$v=0$ and $w\ne0$, and the OPE coefficients $(C_{H}){}_{A_{1}\cdots A_{n}}^{B}$
are given by a sum of terms composed of products of derivatives of
$\T[\beta]$'s. Explicitly, this formula is, 
\begin{equation}
(C_{H}){}_{A_{1}\cdots A_{n}}^{B}(x_{1},\dots,x_{n};z)=\text{sym}_{\beta}\prod_{i=1}^{n}\prod_{j=1}^{k_{i}}\na ij\T[\beta_{p(i,j)}](x_{i};z),\label{eq:T-factor coef}
\end{equation}
where $p(i,j)\equiv j+\sum_{q=1}^{i-1}k_{q}$ and the symmetrization
over the $\beta$-multi-indices (already seen in examples (\ref{eq:phi phi Had coef})
and (\ref{eq:C phi^2 phi^2 cases})) is here denoted using ``$\text{sym}_{\beta}$''
as follows, 
\begin{equation}
\text{sym}_{\beta}\T[\beta_{1}]\cdots\T[\beta_{m}]\equiv\T[(\beta_{1}]\cdots\T[\beta_{m})]\equiv\frac{1}{m!}\sum_{\sigma}\T[\beta_{\sigma(1)}]\cdots\T[\beta_{\sigma(m)}],\label{def:sym_beta}
\end{equation}
where $\sigma$ sums over the permutations of $\{1,\dots,m\}$. 
\end{rem}
For $0<m<K$, the OPE coefficient $(C_{H})_{A_{1}\cdots A_{n}}^{B}$
will be a sum of terms involving products of derivatives of both $H$'s
and $\T[\beta]$'s. In fact, the formula for $(C_{H})_{A_{1}\cdots A_{n}}^{B}$
in this case satisfies very useful recursion relations in terms of
a sum of products of OPE coefficients of smaller $K$. An example
of this structure can be seen from eqs.(\ref{eq:phi phi Had coef})
and (\ref{eq:C phi^2 phi^2 cases}) where, by inspection, we see that
\begin{equation}
(C_{H}){}_{\phi^{2}\phi^{2}}^{(\nabla_{\beta_{1}}\phi\nabla_{\beta_{2}}\phi)}(x_{1},x_{2};z)=4(C_{H}){}_{\phi\phi}^{(\nabla_{\beta_{1}}\phi\nabla_{\beta_{2}}\phi)}(x_{1},x_{2};z)(C_{H}){}_{\phi\phi}^{I}(x_{1},x_{2}).\label{eq:intermed H coef ex}
\end{equation}
To state the general result, let $\mathcal{S}_{A}$ be the set of
the $K$ multi-index labels of the $\H A1$, $\dots$,$\H An$ fields,
i.e., $\mathcal{S}_{A}=\{(1,1),\dots,(n,k_{n})\}$. ($\mathcal{S}_{A}$
differs from $\mathcal{S}$ by not including the labels $\{(0,1),\dots,(0,m)\}$
associated with multi-indices of the operator $\Phi_{H}^{B}$.) Let
$p$ be an integer with $0<p\leq m$. Partition $\mathcal{S}_{A}$
into two subsets $P_{1}$, $P_{2}$, such that $P_{1}$ contains $p$
elements and $P_{2}$ contains $(K-p)$ elements, i.e., $P_{1}$ and
$P_{2}$ are complements of each other with respect to the set $\mathcal{S}_{A}$.
(There are $\binom{K}{p}$ possible ordered partitions of this sort.)
For any such partition, we define, 
\begin{align}
\Phi_{A_{i}'} & \equiv\prod_{(i,j)\in P_{1}}\nabla_{\alpha_{(i,j)}}\phi\label{eq:Phi_P1}\\
\Phi_{A_{i}''} & \equiv\prod_{(i,j)\in P_{2}}\nabla_{\alpha_{(i,j)}}\phi\label{eq:Phi_P2}
\end{align}
For any $i$ such that there exists no $(i,j)\in P_{1}$, then we
set $\W{A_{i}'}=I$ and, similarly, for any $i$ such that there are
no $(i,j)\in P_{2}$, we have $\W{A_{i}''}=I$. Our result on the
Hadamard OPE coeffcients $(C_{H})_{A_{1}\cdots A_{n}}^{B}$ with $0<m<K$
is the following: 
\begin{prop}
\label{Prop:intermed H-coef recursion rel} For $0<m<K$, the Hadamard
normal-ordered OPE coefficients (\ref{eq:H NO coefficients}) of Theorem
\ref{thm: Had NO OPE coef} satisfy, 
\begin{align}
 & (C_{H}){}_{A_{1}\cdots A_{n}}^{B}(x_{1},\dots,x_{n};z)\label{eq:H intermed coef}\\
 & =\binom{m}{p}^{-1}\sum_{\{P_{1},P_{2}\}\in\mathcal{P}_{p}(\mathcal{S}_{A})}(C_{H}){}_{A_{1}'\cdots A_{n}'}^{(\nabla_{\beta_{1}}\phi\cdots\nabla_{\beta_{p}}\phi)}(x_{1},\dots,x_{n};z)(C_{H}){}_{A_{1}''\cdots A_{n}''}^{(\nabla_{\beta_{(p+1)}}\phi\cdots\nabla_{\beta_{m}}\phi)}(x_{1},\dots,x_{n};z),\nonumber 
\end{align}
Here $p$ is any integer with $0<p\leq m$ and the sum is taken over
the $\binom{K}{p}$-ordered partitions $\mathcal{P}_{p}(\mathcal{S}_{A})$
into subsets, $P_{1}$ and $P_{2}$, containing $p$ and $K-p$ elements,
respectively. The fields $\H{A'}i$ and $\H{A''}i$ were defined with
respect to the partition by eqs.~(\ref{eq:Phi_P1}) and (\ref{eq:Phi_P2}),
respectively. 
\end{prop}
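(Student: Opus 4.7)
\emph{Sketch of proof.} The plan is to prove (\ref{eq:H intermed coef}) by a direct combinatorial rearrangement of the hafnian formula (\ref{eq:H NO coefficients}) from Theorem \ref{thm: Had NO OPE coef}. I would begin by writing each of the three OPE coefficients appearing in (\ref{eq:H intermed coef}) in its hafnian representation. The left-hand side is $\frac{1}{m!}\haf\mathcal{N}$, with $\mathcal{N}$ indexed by $\mathcal{S} = \{(0,1),\dots,(0,m)\} \cup \mathcal{S}_A$ as in (\ref{wmatdef}). For any ordered partition $\{P_1, P_2\} \in \mathcal{P}_p(\mathcal{S}_A)$, the first factor of the summand on the right is in the extreme case $m' = K' = p$, so its hafnian equals $\frac{1}{p!}\haf\mathcal{N}_1$, where $\mathcal{N}_1$ is the restriction of $\mathcal{N}$ to $\{(0,1),\dots,(0,p)\} \cup P_1$; here only bijections between these two sets of size $p$ contribute, reproducing formula (\ref{eq:T-factor coef}). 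Similarly, the second factor equals $\frac{1}{(m-p)!}\haf\mathcal{N}_2$, with $\mathcal{N}_2$ the restriction of $\mathcal{N}$ to $\{(0,p+1),\dots,(0,m)\} \cup P_2$, and its hafnian ranges over all perfect matchings of this sub-index-set.

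The crux is to establish a bijective correspondence between perfect matchings $M \in \mathcal{M}(\mathcal{S})$ on the left and triples $\bigl(\{P_1, P_2\}, M_1, M_2\bigr)$ on the right, with $M_1$ and $M_2$ matchings of the two subsets described above. Given $M$, I would define $P_1(M)$ to be the set of elements of $\mathcal{S}_A$ paired in $M$ with some $(0, k)$ satisfying $k \le p$. Since $\mathcal{N}_{(0,i)(0,j)} = 0$ forces every $(0,k)$ to pair with an element of $\mathcal{S}_A$, the set $P_1(M)$ has exactly $p$ elements; the pairings in $M$ involving $\{(0,1),\dots,(0,p)\}$ then constitute the bijection $M_1$ on $\{(0,1),\dots,(0,p)\} \cup P_1(M)$, and the remaining pairings (the 0-pairings with $k>p$ together with all non-0-pairings) constitute the matching $M_2$. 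Conversely, concatenating $M_1$ and $M_2$ reconstructs $M$, and under this bijection the product of matrix entries of $\mathcal{N}$ along $M$ factorizes as the product of entries of $\mathcal{N}_1$ along $M_1$ times the entries of $\mathcal{N}_2$ along $M_2$, because the restrictions of $\mathcal{N}$ to the two sub-blocks coincide term-by-term with $\mathcal{N}_1$ and $\mathcal{N}_2$.

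Summing the right-hand side over ordered partitions therefore yields $\binom{m}{p}^{-1} \cdot \frac{1}{p!(m-p)!} \haf\mathcal{N} = \frac{1}{m!} \haf\mathcal{N}$, which matches the left-hand side. The only point requiring care is tracking the $\beta$-labels: the output indices $\beta_1,\dots,\beta_p$ on the first factor and $\beta_{p+1},\dots,\beta_m$ on the second must match up with the corresponding entries of $\mathcal{N}$ under the bijection. This is automatic because the zero-vertices $(0,k)$ carry these labels in both the full matrix $\mathcal{N}$ and in its restrictions $\mathcal{N}_1,\mathcal{N}_2$. Beyond this bookkeeping, the argument reduces to a routine combinatorial identity for hafnians, with no analytic obstacles to overcome.
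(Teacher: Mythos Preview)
Your proposal is correct and takes essentially the same approach as the paper. The paper's proof likewise reduces (\ref{eq:H intermed coef}) to the hafnian identity $\haf\mathcal{N}=\sum_{\{P_1,P_2\}}\haf\mathcal{N}_1\cdot\haf\mathcal{N}_2$ and proves it by the same mechanism you describe: for each nontrivially contributing matching $M\in\mathcal{M}(\mathcal{S})$, let $P_1$ be the set of $\mathcal{S}_A$-elements paired with $(0,1),\dots,(0,p)$, then first sum over matchings respecting that partition (yielding the product of sub-hafnians) and then over partitions.
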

\begin{proof}
From the explicit expression for the Hadamard normal-ordered OPE coefficients
(\ref{eq:H NO coefficients}) given in Theorem \ref{thm: Had NO OPE coef},
it can be seen directly that (\ref{eq:H intermed coef}) is equivalent,
for any $0<p\le m$ and $0<m<K$, to the relation 
\begin{align}
 & \sum_{M\in\mathcal{M}({\mathcal{S}})}\prod_{\{(v,i),(w,j)\}\in M}\mathcal{N}_{(v,i)(w,j)}\label{eq:H intermed coef equiv}\\
 & =\sum_{\{P_{1};P_{2}\}\in\mathcal{P}_{p}(\mathcal{S}_{A})}\left\{ \left[\sum_{M_{1}\in\mathcal{M}_{1}[P_{1}]}\prod_{\{(v,i),(w,j)\}\in M_{1}}\mathcal{N}_{(v,i)(w,j)}\right]\left[\sum_{M_{2}\in\mathcal{M}_{2}[P_{2}]}\prod_{\{(v,i),(w,j)\}\in M_{2}}\mathcal{N}_{(v,i)(w,j)}\right]\right\} ,\nonumber 
\end{align}
with $\mathcal{M}_{1}\equiv\mathcal{M}(P_{1}\cup\{(0,1),\dots,(0,p)\})$
and $\mathcal{M}_{2}\equiv\mathcal{M}(P_{2}\cup\{(0,p+1),\dots,(0,m)\})$.
To prove this relation, we note that the left side of eq.~(\ref{eq:H intermed coef equiv})
instructs us to take the product of the matrix elements $\mathcal{N}_{(v,i)(w,j)}$
over a perfect matching of $\mathcal{S}$ and then sum over all perfect
matchings. By eq.~(\ref{wmatdef}), in order for any perfect matching
to contribute nontrivially, any element of the form $(0,j)$ must
be matched with an element of ${\mathcal{S}}_{A}$. Fix any integer
$p$ with $0<p\leq m$. For a given perfect matching that contributes
nontrivially to eq.~(\ref{eq:H intermed coef equiv}), the elements
of ${\mathcal{S}}_{A}$ that are paired with $(0,1),\dots,(0,p)$
define a subset, $P_{1}$, of ${\mathcal{S}}_{A}$ with $p$ elements.
Let $P_{2}={\mathcal{S}}_{A}\setminus P_{1}$ so that $\{P_{1},P_{2}\}$
is a partition of ${\mathcal{S}}_{A}$ into subsets of $p$ and $K-p$
elements, respectively. When we sum over all perfect matchings, we
may first sum over all perfect matchings that respect these partitions.
That sum yields the term in large curly braces on the right side of
eq.~(\ref{eq:H intermed coef equiv}). It then remains to sum over
all partitions $\{P_{1},P_{2}\}\in\mathcal{P}_{p}(\mathcal{S}_{A})$,
which yields eq.~(\ref{eq:H intermed coef equiv}). 
\end{proof}
\begin{rem}
\label{rem:m=p intermed coef}An important case is $m=p$ for
which relation (\ref{eq:H intermed coef}) of Proposition \ref{Prop:intermed H-coef recursion rel}
reduces to, 
\begin{align}
 & (C_{H}){}_{A_{1}\cdots A_{n}}^{B}(x_{1},\dots,x_{n};z)\label{eq:H intermed coef p=m}\\
 & \qquad=\sum_{\{P_{1},P_{2}\}\in\mathcal{P}_{m}(\mathcal{S}_{A})}(C_{H}){}_{A_{1}'\cdots A_{n}'}^{B}(x_{1},\dots,x_{n};z)(C_{H}){}_{A_{1}''\cdots A_{n}''}^{I}(x_{1},\dots,x_{n}).\nonumber 
\end{align}
This implies that every Hadamard normal-ordered OPE coefficient can
be expressed as a sum of products of OPE coefficients with $m=0$
of the form (\ref{eq:H id coefs}) and OPE coefficients with $m=K'$
of the form (\ref{eq:T-factor coef}). While eq.~(\ref{eq:H intermed coef})
was derived here using the particular form (\ref{eq:H NO coefficients})
of the Hadamard normal-ordered coefficients, we will show, in the
next subsection, these identities for the Hadamard normal-ordered
OPE coefficients and the field redefinition relations for Wick fields
obtained in Subsection \ref{subsec:uniq Wick poly and novel reparm}
can be used to prove relation (\ref{eq:H intermed coef}) holds also
for the OPE coefficients corresponding to completely general constructions
of the Wick fields; that is, we will show that (\ref{eq:H intermed coef})
continues to be a valid formula even when the $H$-subscripts are
removed. 
\end{rem}
Above, we have given explicit formulas for all of the OPE coefficients
occurring for products of Wick monomials of the Klein-Gordon field
defined by Hadamard normal ordering. There is an important associativity
property satisfied by these OPE coefficients, which will be seen in
the next subsection to hold for general prescriptions for Wick monomials
and, indeed, is expected to hold for general interacting theories
\cite{Hollands_perturbative_OPE_CS}. As already mentioned at the
beginning of this subsection, for an OPE involving $n>2$ spacetime
points $x_{i}$, we have different possible ``merger trees,'' i.e.,
different possible rates at which the different $x_{i}$'s may approach
$z$. For example, for an OPE involving three spacetime points $(x_{1},x_{2},x_{3})$,
we could let $x_{1}$ and $x_{2}$, approach each other faster than
the remaining point, $x_{3}$. In this case, one might expect that
the OPE and its coefficients could be alternatively computed by first
expanding the expectation value around $x_{1}$ and $x_{2}$ about
an auxiliary point $z'$ and, subsequently, expanding $z'$ and $x_{3}$
about $z$. For this to be self-consistent, the OPE coefficients obtained
via this iterated expansion should be asymptotically equivalent to
the original OPE coefficients. This implies that OPE coefficients
involving $n>2$ spacetime points must factorize into a sum of products
of OPE coefficients involving fewer spacetime points. This property
is referred to as ``associativity.''

The associativity conditions corresponding to the most general possible
merger trees may be found in \cite[Section 3]{HW_Axiomatic_QFTCS}.
For our purposes, it will be useful to have an explicit formula for
the following merger trees: Consider the set of $K=k_{1}+k_{2}+\cdots+k_{n}$
spacelike-separated spacetime points, 
\begin{equation}
\left\{ x_{(1,1)},\dots,x_{(1,k_{1})},x_{(2,1)},\dots,x_{(2,k_{2})},\dots,x_{(n,1)},\dots,x_{(n,k_{n})}\right\} .\label{eq:set of points for merger tree}
\end{equation}
Let $\mathcal{T}$ denote any merger tree where, for all $i\in\{1,\dots,n\}$,
the $k_{i}$-spacetime points $x_{(i,1)},\dots,x_{(i,k_{i})}$ approach
each other faster than the remaining points in (\ref{eq:set of points for merger tree}).
Supposing a Wick field is located at each one of these spacetime points,
the associativity condition for this class of merger trees is, 
\begin{align}
 & (C_{H}){}_{A_{(1,1)}\cdots A_{(n,k_{n})}}^{B}(\vec{x}_{1},\dots,\vec{x}_{n};z)\label{eq:pertinent assoc cond for C_H}\\
 & \sim_{\mathcal{T},\delta}\sum_{C_{1},\dots,C_{n}}(C_{H})_{A_{(1,1)}\cdots A_{(1,k_{1})}}^{C_{1}}(\vec{x}_{1};z_{1})\cdots(C_{H}){}_{A_{(n,1)}\cdots A_{(n,k_{n})}}^{C_{n}}(\vec{x}_{n};z_{n})(C_{H}){}_{C_{1}\cdots C_{n}}^{B}(z_{1},\dots,z_{n};z),\nonumber 
\end{align}
where we have introduced the shorthand $\vec{x}_{i}\equiv x_{(i,1)},\dots,x_{(i,k_{i})}$.
Here the $C_{1},\dots,C_{n}$-sums are carried out to sufficiently
high, but finite, $[C_{i}]$ for all $i$. The associativity condition
and other properties of the OPE coefficients were established in \cite[Section 4]{Hollands_perturbative_OPE_CS}.
We state this result in the following theorem: 
\begin{thm}
\label{thm:existence H NO coef and associativity} The OPE coefficients
$(C_{H})_{A_{1}\cdots A_{n}}^{B}$ satisfy (\ref{eq:pertinent assoc cond for C_H})
and the more general associativity conditions of \cite{HW_Axiomatic_QFTCS,Hollands_perturbative_OPE_CS}. 
\end{thm}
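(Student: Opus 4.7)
The plan is to establish the associativity relation (\ref{eq:pertinent assoc cond for C_H}) by direct computation using the explicit hafnian formula (\ref{eq:H NO coefficients}) of Theorem \ref{thm: Had NO OPE coef}, combined with iterated covariant Taylor expansion of the smooth function $h_{K,\Psi}$ appearing in (\ref{smoothhno}). The starting point is to apply Wick's theorem in the Hadamard-normal-ordered form to rewrite the product $\Phi_{A_{(1,1)}}^{H}(x_{(1,1)})\cdots\Phi_{A_{(n,k_n)}}^{H}(x_{(n,k_n)})$ as a sum over perfect matchings of the index set $\mathcal{S}$, in which each term is a product of parametrices $H(x_{(v,i)},x_{(w,j)})$ multiplied by a totally normal-ordered residual whose expectation value is smooth.

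For the merger tree $\mathcal{T}$ in which each cluster $\vec{x}_i$ merges to an auxiliary point $z_i$ faster than the $z_i$'s merge to $z$, I would then perform a two-stage expansion. In the first stage, each inter-cluster factor $H(x_{(v,i)},x_{(w,j)})$ with $v\neq w$ is covariantly Taylor-expanded in its two arguments around $z_v$ and $z_w$, while each intra-cluster factor is left alone; simultaneously, the smooth residual is covariantly Taylor-expanded in each cluster variable around the corresponding $z_i$. In the second stage, the resulting $H(z_v,z_w)$ factors, together with the smooth residual now evaluated as a function of the cluster centers, are covariantly Taylor-expanded about $z$.

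After the first stage, each term factorizes into (i) a product, over $i\in\{1,\dots,n\}$, of intra-cluster contractions and $S^\beta$-factors which, by the hafnian formula (\ref{eq:H NO coefficients}), is precisely $(C_H)_{A_{(i,1)}\cdots A_{(i,k_i)}}^{C_i}(\vec{x}_i;z_i)$ for some intermediate monomial $\Phi_{C_i}$ whose multi-indices label the legs of cluster $i$ that were left uncontracted within the cluster; and (ii) a coarsened expression in $z_1,\dots,z_n$ that, by (\ref{eq:H NO coefficients}) again, equals $(C_H)_{C_1\cdots C_n}^{B}(z_1,\dots,z_n;z)$ after the second stage. Summing over the ways of splitting the legs of each cluster between the two stages and over all $C_i$ reproduces the right-hand side of (\ref{eq:pertinent assoc cond for C_H}); this combinatorial bookkeeping is precisely the factorization identity of Proposition \ref{Prop:intermed H-coef recursion rel}, applied once per cluster.

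The principal obstacle will be controlling the remainder terms of both Taylor expansions compatibly with the scaling degree $\sim_{\mathcal{T},\delta}$ associated to the merger tree. Using the scaling-degree bound (\ref{eq:scaling degree OPE coef}) and the scaling behavior of $H$ and $S^\beta$, one verifies that truncating the $C_i$-sums at sufficiently large but finite engineering dimensions controls the first-stage remainder on the fast scales $|\vec{x}_i-z_i|$, while truncating the second-stage expansion at sufficiently high order controls the remainder on the slower scale $|z_i-z|$. Because $\mathcal{T}$ forces these scales to zero in a hierarchy, the combined remainder can be made to have scaling degree exceeding any prescribed $\delta$. The more general nested merger trees of \cite{HW_Axiomatic_QFTCS,Hollands_perturbative_OPE_CS} are then handled by iterating the above two-stage argument once for each level of the tree.
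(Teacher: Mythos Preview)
The paper does not give its own proof of this theorem; it simply states that ``the associativity condition and other properties of the OPE coefficients were established in \cite[Section 4]{Hollands_perturbative_OPE_CS}'' and records the result. Your proposal is therefore not a reconstruction of the paper's argument but a self-contained direct proof using the explicit hafnian formula of Theorem~\ref{thm: Had NO OPE coef}. The strategy you outline---decomposing each perfect matching of $\mathcal{S}$ into intra-cluster and inter-cluster pairings, identifying the intra-cluster pieces with $(C_H)_{A_{(i,1)}\cdots A_{(i,k_i)}}^{C_i}(\vec{x}_i;z_i)$ via the hafnian formula, and the inter-cluster pieces with $(C_H)_{C_1\cdots C_n}^B(z_1,\dots,z_n;z)$---is correct and is essentially the free-field specialization of the argument in the cited reference.

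One point deserves sharper formulation. You write that each inter-cluster factor $H(x_{(v,i)},x_{(w,j)})$ is ``covariantly Taylor-expanded'' around $(z_v,z_w)$, but $H$ is not a smooth function: it is singular on the light cone, and under the merger tree $\mathcal{T}$ the cluster centers $z_v,z_w$ themselves approach $z$ and hence each other. What makes the expansion valid is not smoothness but the fact that $H$ scales almost homogeneously (see eq.~\eqref{eq:scaling H'}), so the purported Taylor remainder at any finite order carries a definite scaling degree that is strictly higher than the retained terms \emph{relative to the hierarchy of scales imposed by $\mathcal{T}$}. This is the substantive analytic input, and it is what distinguishes the associativity argument from a mere combinatorial identity; your remainder discussion gestures at this but should make the role of almost-homogeneous scaling of $H$ explicit.
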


\subsection{OPE coefficients for a general definition of Wick monomials \label{subsec:gen Wick OPE coef}}

We are now in a position to obtain the expression for the coefficients
that arise in the OPE expansion of products of Wick monomials defined
using an arbitrary prescription for Wick monomials that satisfies
the axioms of Subsection \ref{subsec:algebra, axioms, state space Wick poly}.
Let $\H A{}$ denote the Hadamard normal-ordered prescription for
Wick monomials and let $\W A$ be an arbitrary prescription. The key
equations \eqref{eq:Phi^H = Z Phi} and \eqref{def:Phi=Z Phi^H}
relating $\Phi_{A}^{H}$ and $\Phi_{A}$ via $\mathcal{Z}$ and its
inverse, respectively, were obtained in Subsection \ref{subsec:uniq Wick poly and novel reparm}.

To obtain an OPE for $\braket{\W{A_{1}}(x_{1})\cdots\W{A_{n}}(x_{n})}_{\Psi}$
for our arbitrary prescription for Wick monomials, we now use eq.
(\ref{def:Phi=Z Phi^H}) to write 
\[
\braket{\W{A_{1}}(x_{1})\cdots\W{A_{n}}(x_{n})}_{\Psi}=\sum_{C_{1}}\cdots\sum_{C_{n}}\Zinv{A_{1}}{C_{1}}(x_{1})\cdots\Zinv{A_{n}}{C_{n}}(x_{n})\braket{\H C1(x_{1})\cdots\H Cn(x_{n})}_{\Psi}.
\]
It should be noted that the sums on the right-hand side include only
a finite number of terms because $(\mathcal{Z}^{-1})_{A}^{C}=0$ unless
$[C]\le[A]$. Next, we use the OPE, eq.~(\ref{eq:Had OPE def}),
for the Hadamard normal-ordered Wick monomials, with OPE coefficients
given by eq.~(\ref{eq:H NO coefficients}) to obtain 
\begin{align}
 & \braket{\W{A_{1}}(x_{1})\cdots\W{A_{n}}(x_{n})}_{\Psi}\\
 & \qquad\approx\sum_{C_{1}}\cdots\sum_{C_{n}}\Zinv{A_{1}}{C_{1}}(x_{1})\cdots\Zinv{A_{n}}{C_{n}}(x_{n})\left[\sum_{C_{0}}(C_{H}){}_{C_{1}\cdots C_{n}}^{C_{0}}(x_{1},\dots,x_{n};z)\braket{\H C0(z)}_{\Psi}\right]\nonumber 
\end{align}
Finally, we use eq.~\eqref{eq:Phi^H = Z Phi} to write $\braket{\H B{}(z)}_{\Psi}$
in terms of one-point Wick monomials in the prescription that we are
using, 
\begin{align}
 & \braket{\W{A_{1}}(x_{1})\cdots\W{A_{n}}(x_{n})}_{\Psi}\\
 & \approx\sum_{C_{1}}\cdots\sum_{C_{n}}\Zinv{A_{1}}{C_{1}}(x_{1})\cdots\Zinv{A_{n}}{C_{n}}(x_{n})\left[\sum_{C_{0}}(C_{H}){}_{C_{1}\cdots C_{n}}^{C_{0}}(x_{1},\dots,x_{n};z)\left[\sum_{B}\Z{C_{0}}B(z)\braket{\W B(z)}_{\Psi}\right]\right].\nonumber 
\end{align}
This provides an OPE expansion for $\braket{\W{A_{1}}(x_{1})\cdots\W{A_{n}}(x_{n})}_{\Psi}$,
from which we can read off the OPE coefficients 
\begin{align}
 & C_{A_{1}\cdots A_{n}}^{B}(x_{1},\dots,x_{n};z)\label{eq:gen Wick OPE coefficients}\\
 & \approx\sum_{C_{0}}\Z{C_{0}}B(z)\left[\sum_{C_{1}}\cdots\sum_{C_{n}}\Zinv{A_{1}}{C_{1}}(x_{1})\cdots\Zinv{A_{n}}{C_{n}}(x_{n})(C_{H}){}_{C_{1}\cdots C_{n}}^{C_{0}}(x_{1},\dots,x_{n};z)\right].\nonumber 
\end{align}

Expressions for the Hadamard normal-ordered coefficients $(C_{H}){}_{C_{1}\cdots C_{n}}^{C_{0}}$
were given in terms of $\T[\beta]$ and $H$ by eq.~(\ref{eq:H NO coefficients})
of Theorem \ref{thm: Had NO OPE coef}. The mixing matrix $\Z AB$
was given in terms of $F_{n}$ via eq.~(\ref{eq:Wick uniqueness Z})
of Theorem \ref{thm:Wick uniquenes Z} and (\ref{eq:Z=deriv Fn})
of Proposition \ref{prop:Wick uniqueness Fn}. As described in Subsection
\ref{subsec:uniq Wick poly and novel reparm}, $(\mathcal{Z}^{-1})_{A}^{B}$
can also be expressed in terms of $F_{n}$ using eq.~(\ref{eq:Wick uniqueness Z})
and eq.~\eqref{eq:Z inv (Fn)}. Thus, as desired, eq.~(\ref{eq:gen Wick OPE coefficients})
yields a formula for the OPE coefficients $C_{A_{1}\cdots A_{n}}^{B}$
in terms of a Hadamard parametrix $H$, the geometric factors $\T[\beta]$,
and the smooth functions $F_{n}$ (which characterize the difference
between $\W A$ and $\H A{}$). 
\begin{thm}
\label{thm:existence Wick coef and associativity}For any prescription
for the Wick monomials $\{\W A|A\equiv\alpha_{1}\cdots\alpha_{n}\}_{n\in\mathbb{N}_{0}}$
compatible with axioms W1-$\conservation$, there exists an OPE in
the sense of (\ref{eq:OPE precise asymp rel}) with local and covariant
defined OPE coefficients $C_{A_{1}\cdots A_{n}}^{B}(x_{1},\dots,x_{n};z)$
given by (\ref{eq:gen Wick OPE coefficients}). These OPE coefficients
satisfy (\ref{eq:pertinent assoc cond for C_H}) (with the $H$-subscripts
removed) as well as the general associativity conditions of \cite{HW_Axiomatic_QFTCS,Hollands_perturbative_OPE_CS}.
The coefficients are also compatible with the scaling degree properties
(\ref{eq:[A] via scaling degree}) and (\ref{eq:scaling degree OPE coef}). 
\end{thm}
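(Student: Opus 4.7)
\emph{Proof proposal.} The existence of the OPE and the explicit formula \eqref{eq:gen Wick OPE coefficients} already follow from the three-step derivation in the paragraph immediately preceding the theorem statement, so the proof really consists of checking that (i) the derivation is valid (the rearrangements make sense asymptotically and the $C_0,C_1,\dots,C_n$ sums are effectively finite at any given asymptotic order), (ii) the resulting coefficients are local and covariant, (iii) they satisfy the associativity relations, and (iv) they have the claimed scaling behavior. My plan is to address each of these in turn.

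For (i), I would start by observing that $\Zinv{A}{C}(x)$ vanishes unless $[C]\le[A]$, as follows from Theorem \ref{thm:Wick uniquenes Z} applied to $\mathcal{Z}^{-1}$ (whose properties are identical to those of $\mathcal{Z}$ by the converse half of that theorem). Similarly $\Z{C_0}{B}(z)$ vanishes unless $[B]\le[C_0]$. Thus in passing from $\braket{\W{A_1}(x_1)\cdots\W{A_n}(x_n)}_\Psi$ to an OPE truncated at engineering dimension $\Delta$, each of the $C_i$-sums is restricted to $[C_i]\le[A_i]$ and $[C_0]\le\Delta$, leaving only finitely many terms. On this finite sum, the Hadamard OPE \eqref{eq:Had OPE def} applies directly with the merger-tree-dependent asymptotic relation \eqref{eq:OPE precise asymp rel}, and multiplication by the smooth coefficients $\Zinv{A_i}{C_i}(x_i)$ and $\Z{C_0}{B}(z)$ preserves the asymptotic relation $\sim_{\mathcal{T},\delta}$. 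This both establishes the asymptotic expansion and yields \eqref{eq:gen Wick OPE coefficients}.

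For (ii), locality and covariance of $C_{A_1\cdots A_n}^B$ follow immediately from \eqref{eq:gen Wick OPE coefficients}: the Hadamard coefficients $(C_H)_{C_1\cdots C_n}^{C_0}$ are local and covariant by Theorem \ref{thm: Had NO OPE coef}, and the mixing coefficients $\Z{C_0}{B}(z)$, $\Zinv{A_i}{C_i}(x_i)$ are local and covariant tensor fields by Theorem \ref{thm:Wick uniquenes Z}. For (iv), combine the scaling dimension $\lambda^{d_A}$ of $\mathcal{Z}$ and $\mathcal{Z}^{-1}$ from \eqref{eq:scaling Z_A^I} with the Hadamard scaling-degree bound \eqref{eq:scaling degree OPE coef}: a short bookkeeping argument using $d_A - [A] = 2\cdot(\text{curvature}+m^2) + (\text{index shifts})$ and the fact that the mixing matrices preserve the relevant dimensional count gives the bound $\mathrm{sd}_z[C_{A_1\cdots A_n}^B]\le [A_1]+\cdots+[A_n]-[B]$ for the general prescription. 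The relation \eqref{eq:[A] via scaling degree} in the general prescription follows from the $n=2$ case of \eqref{eq:gen Wick OPE coefficients}, where the identity-component OPE coefficient differs from the Hadamard one only by terms involving the smooth function $F_2$ (which do not raise the scaling degree).

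The main obstacle will be (iii), the associativity relation \eqref{eq:pertinent assoc cond for C_H} (with $H$-subscripts removed), together with the more general associativity conditions. My strategy is to substitute \eqref{eq:gen Wick OPE coefficients} on both sides and show that the relation reduces to the already-known Hadamard associativity of Theorem \ref{thm:existence H NO coef and associativity}. The crucial point is that the mixing matrices $\Z{}{}$ and $\Zinv{}{}$ depend only on a single spacetime point, so each factor $\Zinv{A_{(i,j)}}{C_{(i,j)}}(x_{(i,j)})$ is associated unambiguously with a single leaf of the merger tree, while $\Z{}{}(z)$ and $\Z{}{}(z_i)$ attach to the appropriate internal nodes. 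After inserting a complete set (in the finite-dimensional sense, truncated by the scaling-degree cut-off) of intermediate labels at each auxiliary expansion point $z_i$ in the Hadamard associativity relation and applying $\sum_D\Z{D}{E}(z_i)\Zinv{E}{F}(z_i) = \delta_D^F$, the Hadamard associativity on the right of \eqref{eq:pertinent assoc cond for C_H} rearranges exactly into \eqref{eq:gen Wick OPE coefficients} applied to the product of two OPE coefficients in the general prescription. The same argument, applied recursively to each subtree, handles the general associativity conditions of \cite{HW_Axiomatic_QFTCS,Hollands_perturbative_OPE_CS}; the only subtlety is keeping careful track of the scaling degrees to ensure that all truncations of the $C_i$-sums are compatible with the prescribed asymptotic order $\delta$, which is handled by choosing the truncations large enough (depending on $\delta$ and $\mathcal{T}$) before taking the asymptotic limit.
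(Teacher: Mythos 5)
Your proposal is correct and follows essentially the same route as the paper's own proof: existence, locality/covariance, and scaling follow from the field-redefinition formula \eqref{eq:gen Wick OPE coefficients} together with the Hadamard results, and associativity is obtained by inserting the resolution of identity $\sum_{E}\Z{D}E(z')\Zinv EF(z')=\delta_{D}^{F}$ (note the sum is over the middle label $E$, not $D$ as you wrote) at the auxiliary expansion points and invoking the Hadamard associativity of Theorem \ref{thm:existence H NO coef and associativity}. The only differences are cosmetic: you spell out the finiteness of the truncated sums and the scaling bookkeeping slightly more explicitly than the paper's sketch does.
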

\begin{proof}[Sketch of proof.]
See Appendix \ref{sec:proofs for gen Wick coef subsec}. 
\end{proof}
Equation~(\ref{eq:gen Wick OPE coefficients}) provides a complete
characterization of the OPE coefficients for an arbitrary prescription
for Wick monomials and, thus, achieves the primary goal of this subsection.
However, there are important properties of the general Wick coefficients
which are not immediately apparent from (\ref{eq:gen Wick OPE coefficients})
but will be extremely useful for our analysis of the flow relations
in future sections as well as for illuminating the general qualitative
structure of the Wick coefficients. In particular, as we will show,
the special form of the Wick mixing matrices (\ref{def:mixing matrix Z})
and the factorization properties (\ref{eq:H intermed coef}) of the
Hadamard normal-ordered products together imply knowledge of just
the $C_{\phi\cdots\phi}^{I}$-coefficients is sufficient for one to
determine all other Wick OPE coefficients. This property of the Wick
coefficients will greatly reduce the number of independent flow relations
we must consider in future sections. Moreover, the relative simplicity
of the $C_{\phi\cdots\phi}^{I}$-coefficients permits us to obtain
an explicit formula for these elementary coefficients in terms of
$H$ and $F_{n}$, thereby generalizing the Hadamard normal-ordered
formula (\ref{eq:H id coefs}) to arbitrary prescriptions.

We now outline the steps that allow us to obtain an arbitrary OPE
coefficient $C_{A_{1}\cdots A_{n}}^{B}$ in terms of $\T[\beta]$
and OPE coefficients of the form $C_{\phi\cdots\phi}^{I}$. We will
then give an explicit formula (see eq.~\eqref{eq:gen linear C ident coef-1})
for $C_{\phi\cdots\phi}^{I}$ in terms of the Hadamard parametrix
$H$ and the functions $F_{n}$. Finally, we obtain in Proposition
\ref{prop:induct construction of Wick fields via id OPE coefs} an
explicit (inductive) construction for the Wick monomials in terms
of the OPE coefficients $C_{\phi\cdots\phi}^{I}$.

We first note that eq.~(\ref{eq:gen Wick OPE coefficients}) implies
that $C_{A_{1}\cdots A_{n}}^{B}=0$ whenever $m>K$ for $m\equiv[B]_{\phi}$,
since this property holds for $(C_{H})_{A_{1}\cdots A_{n}}^{B}$ and
the mixing matrices $\Z AB$ and $\Zinv AB$ never increase the number
of powers of $\phi$ appearing in any Wick monomial. For the case
$m=K$, the only terms in $\Z AB$ and $\Zinv AB$ that can contribute
nontrivially to eq.~(\ref{eq:gen Wick OPE coefficients}) are $\delta_{A}^{B}$.
Thus, for $m=K$ we obtain, 
\begin{equation}
C_{A_{1}\cdots A_{n}}^{B}(x_{1},\dots,x_{n};z)=(C_{H}){}_{A_{1}\cdots A_{n}}^{B}(x_{1},\dots,x_{n};z)=\text{\textnormal{sym}}_{\beta}\prod_{i=1}^{n}\prod_{j=1}^{k_{i}}\na ij\T[\beta_{p(i,j)}](x_{i};z),\label{eq:gen Taylor coef}
\end{equation}
with $p(i,j)\equiv j+\sum_{q=1}^{i-1}k_{q}$ and $\text{sym}_{\beta}$
defined as in (\ref{def:sym_beta}). Thus, for $m=K$ the OPE coefficients
for a general prescription are the same as for the Hadamard normal
ordered prescription, and depend only on the geometrical factors $\T[\beta]$.

Next, we show that the OPE coefficients $C_{A_{1}\cdots A_{n}}^{B}$
such that $0<m<K$ are determined by OPE coefficients with $B=I$
together with OPE coefficients of the form eq.~(\ref{eq:gen Taylor coef}).
More precisely, 
\begin{align}
 & C_{A_{1}\cdots A_{n}}^{B}(x_{1},\dots,x_{n};z)=\sum_{\{P_{1},P_{2}\}\in\mathcal{P}_{m}(\mathcal{S}_{A})}C{}_{A_{1}'\cdots A_{n}'}^{B}(x_{1},\dots,x_{n};z)C{}_{A_{1}''\cdots A_{n}''}^{I}(x_{1},\dots,x_{n}),\label{eq:intermed coef id gen-1}
\end{align}
with the notation as in Proposition \ref{Prop:intermed H-coef recursion rel}.
Since we have $[A'_{1}]_{\phi}+\cdots+[A'_{n}]_{\phi}=m\equiv[B]_{\phi}$,
the coefficients $C_{A_{1}'\cdots A_{n}'}^{B}(x_{1},\dots,x_{n};z)$
are of the form (\ref{eq:gen Taylor coef}). Thus, (\ref{eq:intermed coef id gen-1})
expresses a general OPE coefficient with $0<m<K$ in terms of OPE
coefficients with $B=I$. Formula (\ref{eq:intermed coef id gen-1})
is a special case of the following proposition when $p=m$. 
\begin{prop}
\label{prop:gen Wick intermed coef} For $0<p\le m<K$, the OPE coefficients
given by (\ref{eq:gen Wick OPE coefficients}) satisfy the same formula
(\ref{eq:H intermed coef}) as derived in Proposition \ref{Prop:intermed H-coef recursion rel}
for the Hadamard normal-ordered OPE coefficients. i.e., formula (\ref{eq:H intermed coef})
remains a valid formula when the $H$-subscripts are removed. 
\end{prop}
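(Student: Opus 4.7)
The plan is to derive the claimed identity by applying the Hadamard factorization of Proposition~\ref{Prop:intermed H-coef recursion rel} inside the general representation formula \eqref{eq:gen Wick OPE coefficients}, and then to recognize the resulting expression as the right-hand side of \eqref{eq:H intermed coef} (with $H$-subscripts removed) using the multiplicative recursion relation for the Wick mixing matrix recorded in Remark~\ref{rem:Z recursion relation}. Concretely, I first substitute \eqref{eq:gen Wick OPE coefficients} into $C_{A_1\cdots A_n}^B$ to get a sum over intermediate fields $C_0,C_1,\dots,C_n$ of the form $\mathcal{Z}_{C_0}^B(z)\prod_i (\mathcal{Z}^{-1})_{A_i}^{C_i}(x_i)\,(C_H){}_{C_1\cdots C_n}^{C_0}$. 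Because $\mathcal{Z}_{C_0}^B$ is nonzero only when $[C_0]_\phi \ge [B]_\phi = m$, the hypothesis $p\le m$ ensures $p \le [C_0]_\phi$, so that Proposition~\ref{Prop:intermed H-coef recursion rel} applies to the Hadamard factor $(C_H){}_{C_1\cdots C_n}^{C_0}$ with this same $p$.

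Second, I apply the Hadamard factorization to $(C_H){}_{C_1\cdots C_n}^{C_0}$ at level $p$, producing a sum over ordered partitions $\{P_1',P_2'\}\in\mathcal{P}_p(\mathcal{S}_C)$ of products of two Hadamard OPE coefficients whose upper indices have $p$ and $[C_0]_\phi - p$ derivative labels, respectively. Now I use \eqref{eq:Z recursion rel-1} (with $r=p$) to split
\[
\mathcal{Z}_{C_0}^{B} = \binom{[C_0]_\phi}{p}\binom{m}{p}^{-1}\,\delta_{(\gamma_1}^{\beta_1}\cdots\delta_{\gamma_p}^{\beta_p}\,\mathcal{Z}_{\gamma_{p+1}\cdots\gamma_{[C_0]_\phi})}^{\beta_{p+1}\cdots\beta_m},
\]
and analogously I split each $(\mathcal{Z}^{-1})_{A_i}^{C_i}$ using the partition of the $C$-labels induced by $\{P_1',P_2'\}$; this is legitimate because $(\mathcal{Z}^{-1})_{A_i}^{C_i}$ obeys the same structural identity as $\mathcal{Z}$ by the last sentence of Theorem~\ref{thm:Wick uniquenes Z}. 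After this splitting, every $\delta$ forces a matching between an $A$-label and a $C$-label, so the sum over $C_i$ reorganizes into (i) an outer sum over ordered partitions $\{P_1,P_2\}\in\mathcal{P}_p(\mathcal{S}_A)$, with $P_1$ containing those $A$-labels that end up paired into the upper factor, and (ii) inner sums over the two pieces that are independent of each other.

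Third, I identify each inner sum with one of the two general OPE coefficients on the right-hand side of the proposed identity: the piece with upper index $(\nabla_{\beta_1}\phi\cdots\nabla_{\beta_p}\phi)$ is, by construction, exactly \eqref{eq:gen Wick OPE coefficients} applied to $C_{A_1'\cdots A_n'}^{(\nabla_{\beta_1}\phi\cdots\nabla_{\beta_p}\phi)}$, and likewise for the piece with upper index $(\nabla_{\beta_{p+1}}\phi\cdots\nabla_{\beta_m}\phi)$. Finally, I verify the combinatorial prefactors: the factor $\binom{[C_0]_\phi}{p}$ produced by \eqref{eq:Z recursion rel-1} cancels against the $\binom{[C_0]_\phi}{p}^{-1}$ appearing in the Hadamard factorization of $(C_H){}_{C_1\cdots C_n}^{C_0}$, leaving precisely the $\binom{m}{p}^{-1}$ demanded by \eqref{eq:H intermed coef}, and the ``stacking'' of partitions of $\mathcal{S}_A$ at the outer level with the partitioning of $C$-labels at the inner level matches the $\binom{K}{p}$ count of $\mathcal{P}_p(\mathcal{S}_A)$ without double counting.

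The main obstacle will be the bookkeeping of partitions and binomial coefficients: I must ensure that the $\mathcal{Z}$-recursion is applied consistently at the outer index $B$ and at each $A_i$ with a \emph{single} choice of $p$ determined by the partition of $\mathcal{S}_C$, and that after relabeling the outer sum over partitions of $\mathcal{S}_C$ becomes a sum over partitions of $\mathcal{S}_A$ with the correct symmetrization in the upper $\beta$-indices (which is what makes the right-hand side of \eqref{eq:H intermed coef} well defined despite the notational choice). All other ingredients (symmetry of the $\mathcal{Z}$'s in their free multi-indices, the fact that $\mathcal{S}_C$ inherits its decomposition from $\mathcal{S}_A$ via the Kronecker $\delta$'s produced by the recursion) are consequences of the structural properties in Theorem~\ref{thm:Wick uniquenes Z} and Proposition~\ref{prop:Wick uniqueness Fn}, so once the combinatorics are in place the identity follows.
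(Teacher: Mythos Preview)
Your proposal is correct and follows essentially the same route as the paper's own proof: substitute \eqref{eq:gen Wick OPE coefficients}, apply the Hadamard factorization \eqref{eq:H intermed coef} to the inner $(C_H)$-coefficient at level $p$, use the $\mathcal{Z}$-recursion \eqref{eq:Z recursion rel-1} at the upper index $B$ to cancel $\binom{[C_0]_\phi}{p}$ against $\binom{[C_0]_\phi}{p}^{-1}$ and produce $\binom{m}{p}^{-1}$, then apply the same recursion to each $(\mathcal{Z}^{-1})_{A_i}^{C_i}$ to convert the partition sum over $\mathcal{S}_C$ into one over $\mathcal{S}_A$. The bookkeeping concerns you flag are exactly the ones the paper addresses, and the resolution is as you anticipate.
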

\begin{proof}[Sketch of proof.]
See Appendix \ref{sec:proofs for gen Wick coef subsec}. 
\end{proof}
The following proposition shows that any Wick OPE coefficient $C_{A_{1}\cdots A_{n}}^{B}(x_{1},\dots,x_{n};z)$
is ultimately fixed by OPE coefficients of the form $C_{\phi\cdots\phi}^{C_{i}}(x_{1},\dots,x_{k_{i}})$
for $[C_{i}]_{\phi}\le[A_{i}]_{\phi}=k_{i}$. When combined with the
previous proposition, this implies all Wick OPE coefficients may be
obtained from a finite number of OPE coefficients of the form $C_{\phi\cdots\phi}^{I}$. 
\begin{prop}
\label{prop:C^A1...An_B via point splitting}The Wick OPE coefficients
(\ref{eq:gen Wick OPE coefficients}) satisfy, 
\begin{align}
C_{A_{1}\cdots A_{n}}^{B}(x_{1},\dots,x_{n};z) & =\lim_{\vec{y}_{1}\to x_{1}}\cdots\lim_{\vec{y}_{n}\to x_{n}}\nabla_{\alpha_{(1,1)}}^{y_{(1,1)}}\cdots\nabla_{\alpha_{(n,k_{n})}}^{y_{(n,k_{n})}}\left[\vphantom{\sum}C_{\phi\cdots\phi}^{B}(\vec{y}_{1},\dots,\vec{y}_{n};z)+\right.\label{eq:C^A1...An_I point-splitting formula-1}\\
 & -\sum_{\substack{[C_{1}]<[A_{1}]\\{}
[C_{1}]_{\phi}<[A_{1}]_{\phi}
}
}\cdots\sum_{\substack{[C_{n}]<[A_{n}]\\{}
[C_{n}]_{\phi}<[A_{n}]_{\phi}
}
}\left.C_{\phi\cdots\phi}^{C_{1}}(\vec{y}_{1};x_{1})\cdots C_{\phi\cdots\phi}^{C_{n}}(\vec{y}_{n};x_{n})C_{C_{1}\cdots C_{n}}^{B}(x_{1},\dots,x_{n};z)\vphantom{\sum}\right],\nonumber 
\end{align}
where we define the shorthand $\vec{y}_{i}\equiv y_{(i,1)},\dots,y_{(i,k_{i})}$
and denote $k_{i}\equiv[A_{i}]_{\phi}$. 
\end{prop}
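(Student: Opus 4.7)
The plan is to derive the formula from the associativity condition of the OPE coefficients (Theorem \ref{thm:existence Wick coef and associativity}), applied to the merger tree $\mathcal{T}$ in which each cluster $\vec{y}_{i}\equiv y_{(i,1)},\dots,y_{(i,k_{i})}$ collapses to $x_{i}$ strictly faster than $x_{1},\dots,x_{n}$ merge to $z$. That condition reads
\[
C_{\phi\cdots\phi}^{B}(\vec{y}_{1},\dots,\vec{y}_{n};z)\sim_{\mathcal{T},\delta}\sum_{C_{1},\dots,C_{n}}\prod_{i=1}^{n}C_{\phi\cdots\phi}^{C_{i}}(\vec{y}_{i};x_{i})\,C_{C_{1}\cdots C_{n}}^{B}(x_{1},\dots,x_{n};z),
\]
with the $C_{i}$-sums truncated at $[C_{i}]\le\Delta(\mathcal{T},\delta)$. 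I would apply the differential operator $\nabla_{\alpha_{(1,1)}}^{y_{(1,1)}}\cdots\nabla_{\alpha_{(n,k_{n})}}^{y_{(n,k_{n})}}$ to this identity, take the coincidence limits $\vec{y}_{i}\to x_{i}$, and then verify both that the counterterms subtracted in \eqref{eq:C^A1...An_I point-splitting formula-1} cancel exactly the divergent contributions, and that the surviving finite pieces assemble to $C_{A_{1}\cdots A_{n}}^{B}(x_{1},\dots,x_{n};z)$.

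The core is a scaling-degree analysis. By the bound \eqref{eq:scaling degree OPE coef}, $\mathrm{sd}_{x_{i}}[C_{\phi\cdots\phi}^{C_{i}}(\vec{y}_{i};x_{i})]\le k_{i}[\phi]-[C_{i}]$; each $y_{(i,j)}$-covariant derivative raises the scaling degree by one, so after the $[A_{i}]_{\nabla}$ derivatives the bound becomes $[A_{i}]-[C_{i}]$. This bound is saturated when $[C_{i}]_{\phi}<k_{i}$ (as follows from the factorization of Proposition \ref{prop:gen Wick intermed coef} combined with the $(D-2)$-scaling of the Hadamard parametrix), so the coincidence limit diverges precisely when $[C_{i}]<[A_{i}]$ and $[C_{i}]_{\phi}<[A_{i}]_{\phi}$. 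When $[C_{i}]_{\phi}=k_{i}$, the coefficient is reduced by \eqref{eq:gen Taylor coef} to a symmetrized polynomial of degree $[C_{i}]_{\nabla}$ in the $\T[\beta]$-factors, which is annihilated by excess derivatives and gives a finite limit only when $[C_{i}]=[A_{i}]$. The divergent tuples in the associativity expansion are therefore indexed exactly by the subtraction range in \eqref{eq:C^A1...An_I point-splitting formula-1}, and the subtraction cancels the divergences term by term.

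Among the surviving finite contributions, only $(C_{1},\dots,C_{n})=(A_{1},\dots,A_{n})$ produces a nonzero value. For the Taylor sector $[C_{i}]_{\phi}=k_{i}$, this follows from the orthogonality identity \eqref{eq:sum nabla_alpha T^beta = nabla_alpha-1} -- whose flat-spacetime form \eqref{eq:Nabla alpha S beta on diag equals Kronecker delta} is the clean Kronecker delta $\nabla_{\alpha}^{(y)}\T[\beta](y;z)|_{y=z}=\delta_{\alpha}^{\beta}$ -- combined with the multi-index symmetrization in \eqref{eq:gen Taylor coef} and axiom W6, which forces $C_{i}=A_{i}$. For the residual $H$-sector $[C_{i}]_{\phi}<k_{i}$ with $[C_{i}]=[A_{i}]$, I would reduce the analysis to the Taylor sector by applying the factorization of Proposition \ref{prop:gen Wick intermed coef} inductively on $\sum_{i}[A_{i}]$, so that the $H$-sector at level $\sum_{i}[A_{i}]$ is expressed in terms of a Taylor-sector factor and an identity-sector factor at strictly lower levels already handled by the inductive hypothesis. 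For the surviving $C_{i}=A_{i}$ tuple, the coincidence limit of $\nabla_{\alpha_{(i,1)}}^{y_{(i,1)}}\cdots\nabla_{\alpha_{(i,k_{i})}}^{y_{(i,k_{i})}}C_{\phi\cdots\phi}^{A_{i}}(\vec{y}_{i};x_{i})$ is the identity tensor on the symmetric index structure of $\Phi_{A_{i}}$, so contracting with $C_{A_{1}\cdots A_{n}}^{B}$ reproduces $C_{A_{1}\cdots A_{n}}^{B}$ itself; sending $\delta\to\infty$ removes any remaining truncation ambiguity.

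The hard part will be the curved-spacetime refinement of the orthogonality step: whereas \eqref{eq:Nabla alpha S beta on diag equals Kronecker delta} gives a clean Kronecker delta in flat space, the general identity \eqref{eq:sum nabla_alpha T^beta = nabla_alpha-1} only produces a Kronecker delta after summing over multi-indices of equal or smaller order, with curvature-induced off-diagonal contributions for $|\beta|<|\alpha|$. One must verify that these curvature pieces either cancel against the multi-index symmetrization built into \eqref{eq:gen Taylor coef} or are reabsorbed into the subtraction counterterms through the inductive use of Proposition \ref{prop:gen Wick intermed coef}, which is the main technical content of the proof and is most naturally organized as the induction on $\sum_{i}[A_{i}]$ indicated above.
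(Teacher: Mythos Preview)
Your approach shares the paper's starting point—the associativity relation for the merger tree in which each cluster $\vec{y}_i$ collapses to $x_i$ first—but diverges in organization and runs into a difficulty that the paper's proof avoids entirely.

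The paper does not apply the derivative operator first and then try to sort terms into divergent and finite by scaling degree. Instead it splits the associativity sum purely by the value of $[C_i]_\phi$: all terms with $[C_i]_\phi=k_i$ stay on the left, all terms with $[C_i]_\phi<k_i$ are moved to the right (this is exactly the subtraction in the formula, once the associativity sum is truncated at $[C_i]\le[A_i]$). It then treats the no-derivative case $A_i=\phi^{k_i}$ first: on the left only the single term $C_i=\phi^{k_i}$ survives the coincidence limit, since $S^\beta(y;x)\to\delta_{|\beta|,0}$, yielding $C_{\phi^{k_1}\cdots\phi^{k_n}}^{B}$ directly. The general case follows by applying $\nabla_{\alpha_{(i,1)}}^{y_{(i,1)}}\cdots\nabla_{\alpha_{(i,k_i)}}^{y_{(i,k_i)}}$ to both sides of the rearranged relation and invoking identity \eqref{eq:sum nabla_alpha T^beta = nabla_alpha-1} on the left-hand side.

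The key point you are missing is that in the differentiated case one does \emph{not} isolate $C_i=A_i$ as the sole surviving term. Identity \eqref{eq:sum nabla_alpha T^beta = nabla_alpha-1} says that the \emph{entire} left-hand sum over $[C_i]_\phi=k_i$—that is, the sum over all multi-indices $\beta$, including those of order strictly below $|\alpha|$—collapses to $C_{A_1\cdots A_n}^{B}$. The curvature-induced off-diagonal values $\nabla_\alpha S^\beta|_{y=x}$ with $|\beta|<|\alpha|$ are not corrections to be cancelled or reabsorbed; they are essential summands in that identity. Consequently the ``hard part'' you identify, and the proposed induction on $\sum_i[A_i]$ via Proposition~\ref{prop:gen Wick intermed coef} to handle curvature terms and the residual $H$-sector at $[C_i]=[A_i]$, is unnecessary. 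The paper's organization sidesteps all of this: no scaling-degree bookkeeping of divergent versus finite terms is needed, and no curvature correction has to be tracked separately.
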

\begin{proof}
See Appendix \ref{sec:proofs for gen Wick coef subsec}. 
\end{proof}
\begin{rem}
Recall the definition (\ref{eq:[A]}) of $[C]\equiv(D-2)/2\times[C]_{\phi}+[C]_{\nabla}$.
Hence, for a fixed $[A]$, there are only finitely-many $[C]_{\phi}$
and $[C]_{\nabla}$ such that $[C]<[A]$ and, thus, the $C$-sums
in (\ref{eq:C^A1...An_I point-splitting formula-1}) are all finite
sums. Note also $[C]<[A]$ iff 
\begin{equation}
[C]_{\nabla}<\frac{D-2}{2}\times\left([A]_{\phi}-[C]_{\phi}\right)+[A]_{\nabla},\label{eq:[C]_nabla bound}
\end{equation}
where the right-hand side is non-negative for $[C]_{\phi}<[A]_{\phi}$
and reduces to $[A]_{\nabla}$ when $D=2$. 
\end{rem}
The preceding proposition enables us to inductively compute all OPE
coefficients using only the OPE coefficients $C_{\phi\cdots\phi}^{B}$
as input. To observe this, first note that, for these elementary OPE
coefficients, 
\[
C_{(\nabla_{\alpha_{1}}\phi)\cdots(\nabla_{\alpha_{n}}\phi)}^{B}(x_{1},\dots,x_{n};z)=\nabla_{\alpha_{1}}^{(x_{1})}\cdots\nabla_{\alpha_{n}}^{(x_{n})}C_{\phi\cdots\phi}^{B}(x_{1},\dots,x_{n};z),
\]
and, thus, knowledge of $C_{\phi\cdots\phi}^{B}$ implies knowledge
of $C_{(\nabla_{\alpha_{1}}\phi)\cdots(\nabla_{\alpha_{n}}\phi)}^{B}$
for all $\alpha_{i}$. Hence, by assumption, we begin with knowledge
of all OPE coefficients $C_{A_{1}\cdots A_{n}}^{B}$ such that $[A_{i}]_{\phi}=1$
and $[A_{i}]_{\nabla}<\infty$ for $i\in\{1,\dots,n\}$. Noting the
bounds on the $C_{i}$-sums, we may, therefore, immediately use formula
(\ref{eq:C^A1...An_I point-splitting formula-1}) to calculate any
$C_{A_{1}\cdots A_{n}}^{B}$ such that, for all $i$, $[A_{i}]_{\phi}\le2$
and $[A_{i}]_{\nabla}<\infty$. Of course, this, in turn, provides
enough data to compute any coefficient such that $[A_{i}]_{\phi}\le3$
and $[A_{i}]_{\nabla}<\infty$ and, in this way, we may obtain any
OPE coefficient $C_{A_{1}\cdots A_{n}}^{B}$ from formula (\ref{eq:C^A1...An_I point-splitting formula-1})
starting from knowledge of just $C_{\phi\cdots\phi}^{B}$. 
\begin{rem}
For any finite $[A_{i}]_{\phi}$ and $[A_{i}]_{\nabla}$, we emphasize
that the coefficient $C_{A_{1}\cdots A_{n}}^{B}$ can be computed
from (\ref{eq:C^A1...An_I point-splitting formula-1}) with only a
finite number of iterations. In particular, it is \emph{not }required
that we compute all $[A_{i}]_{\nabla}<\infty$ for a given $[A_{i}]_{\phi}$
before incrementing to $[A_{i}']_{\phi}=[A_{i}]_{\phi}+1$. By inequality
(\ref{eq:[C]_nabla bound}), computing $C_{A_{1}\cdots A_{n}}^{B}$
for any $[A_{i}]_{\phi}$ and $[A_{i}]_{\nabla}$ only requires knowledge
of coefficients $C_{C_{1}\cdots C_{n}}^{B}$ such that $[C_{i}]_{\phi}<[A_{i}]_{\phi}$
and $[C_{i}]_{\nabla}<(D-2)/2\times[A_{i}]_{\phi}+[A_{i}]_{\nabla}$.

Taken together, the above results allow us to express an arbitrary
Wick OPE coefficient $C_{A_{1}\cdots A_{n}}^{B}$ in terms\footnote{The only method we have provided for computing coefficients of the
form $C_{A_{1}\cdots A_{n}}^{I}$ from $C_{\phi\cdots\phi}^{I}$ is
via formula (\ref{eq:C^A1...An_I point-splitting formula-1}) of Proposition
\ref{prop:C^A1...An_B via point splitting}. However, coefficients
$C_{A_{1}\cdots A_{n}}^{B}$for $B\ne I$ may be computed from $C_{\phi\cdots\phi}^{I}$
either via formula (\ref{eq:C^A1...An_I point-splitting formula-1})
or, alternatively, by plugging the values of $C_{A_{1}\cdots A_{n}}^{I}$
back into formula (\ref{eq:H intermed coef}) of Proposition \ref{prop:gen Wick intermed coef}. } of the OPE coefficients $C_{\phi\cdots\phi}^{I}$ and pure geometrical
factors $\T[\beta]$. Finally, we give an explicit formula for $C_{\phi\cdots\phi}^{I}$.
To see how this formula is obtained, consider first the simplest case
of $C_{\phi\phi}^{I}$. We have, 
\begin{align}
C_{\phi\phi}^{I}(x_{1},x_{2};z) & \approx\mathcal{Z}_{I}^{I}(C_{H})_{\phi\phi}^{I}(x_{1},x_{2})+\sum_{\gamma_{1},\gamma_{2}}\Z{\gamma_{1}\gamma_{2}}I(z)(C_{H})_{\phi\phi}^{(\nabla_{\gamma_{1}}\phi\nabla_{\gamma_{2}}\phi)}(x_{1},x_{2};z)\nonumber \\
 & \approx H(x_{1},x_{2})+\sum_{\gamma_{1},\gamma_{2}}[\nabla_{\gamma_{1}}\nabla_{\gamma_{2}}F_{2}]_{z}\,\T[(\gamma_{1}](x_{1};z)\T[\gamma_{2})](x_{2};z)\nonumber \\
 & \approx H(x_{1},x_{2})+F_{2}(x_{1},x_{2};z),\label{eq:C phi phi I gen}
\end{align}
where in the last line, we used the fact that the series, 
\[
\sum_{\gamma_{1},\gamma_{2}}[\nabla_{\gamma_{1}}\nabla_{\gamma_{2}}F_{2}]_{z}\,\T[(\gamma_{1}](x_{1};z)\T[\gamma_{2})](x_{2};z),
\]
is simply the covariant Taylor expansion of the smooth function $F_{2}(x_{1},x_{2};z)$.
Proceeding in a similar manner and recalling formulas (\ref{eq:gen Wick OPE coefficients})
and (\ref{eq:gen Taylor coef}), we obtain the general formula, 
\begin{align}
C_{\phi\cdots\phi}^{I} & (x_{1},\dots,x_{n};z)\label{eq:gen linear C ident coef-1}\\
\approx F_{n} & (x_{1},\dots,x_{n};z)+\sum_{k=1}^{\lfloor n/2\rfloor}\sum_{\pi_{k}}H(x_{\pi(1)},x_{\pi(2)})\cdots H(x_{\pi(2k-1)},x_{\pi(2k)})F_{(n-2k)}(x_{\pi(2k+1)},\dots,x_{\pi(n)};z),\nonumber 
\end{align}
where the $\pi_{k}$ sums over the permutations of $\{1,\dots,n\}$
such that $\pi(1)<\pi(2),\pi(3)<\pi(4),\cdots,\pi(2k-1)<\pi(2k)$;
$\pi(1)<\pi(3)<\cdots<\pi(2k-1)$; and $\pi(2k+1)<\pi(2k+2)<\cdots<\pi(n)$.
Formula (\ref{eq:gen linear C ident coef-1}) generalizes the normal-ordered
formula (\ref{eq:H id coefs}) obtained in the previous subsection
to arbitrary prescriptions for the Wick monomials. 
\end{rem}
Formula (\ref{eq:gen linear C ident coef-1}) now implies the full
renormalization freedom for the Wick fields may be expressed entirely
in terms of the identity coefficients $\{C_{\phi\cdots\phi}^{I}(x_{1},\dots,x_{n};z)\}_{n\ge2}$
and, thus, the set of these coefficients uniquely determines a prescription
for the Wick fields $\{\W A|A\equiv\alpha_{1}\cdots\alpha_{n}\}_{n\in\mathbb{N}_{0}}$.
To see this, note eq.~(\ref{eq:gen linear C ident coef-1}) implies
$C_{\phi\phi}^{I}$ is itself a Hadamard parametrix $\widetilde{H}\equiv H+F_{2}$
(in accordance with Remark \ref{rem:all quadratic fields are normal ordered}).
If we choose to normal order instead with respect to the parametrix
$\widetilde{H}$, i.e. use $\Phi_{B}^{\widetilde{H}}$ in formula
(\ref{def:Phi=Z Phi^H}) rather than $\H B{}$, then the preceding
manipulations would again yield formula (\ref{eq:gen linear C ident coef-1})
for $C_{\phi\cdots\phi}^{I}$ but now with all $H$'s replaced by
$\widetilde{H}=C_{\phi\phi}^{I}$. Since this formula depends only
on $F_{k\le n}$ and OPE coefficients of the form $C_{\phi\cdots\phi}^{I}$,
it may be iteratively inverted to express $F_{n}$ purely in terms
of $C_{\phi\cdots\phi}^{I}(x_{1},\dots,x_{k};z)$ for $k\le n$. The
claim is then an immediate consequence of Proposition \ref{prop:Wick uniqueness Fn}
and the Wick uniqueness theorem (Theorem \ref{thm:Wick uniquenes Z}).

Using identities \eqref{eq:Wick uniqueness Z} and (\ref{eq:Z inv (Fn)}),
our expression for $F_{n}$ in terms of the OPE coefficients allows
us to similarly express $(\mathcal{Z}^{-1}){}_{B}^{A}$ purely in
terms of $C_{\phi\cdots\phi}^{I}(x_{1},\dots,x_{k};z)$ for $k\le n$.
A Wick monomial $\Phi_{A}$ in any prescription satisfying axioms
W1-W8 can, thus, be expressed via $\W A=\sum_{[B]\le[A]}(\mathcal{Z}^{-1})_{A}^{B}\Phi_{B}^{\widetilde{H}}$
in terms of just $\{C_{\phi\cdots\phi}^{I}(x_{1},\dots,x_{n};z)\}_{n\le[A]}$
and products of the linear field observable $\phi$, noting that the
normal-ordered Wick fields $\Phi_{B}^{\widetilde{H}}$ are themselves
defined in (\ref{eq:H normal ordered}) with respect to only products
of the linear field observable $\phi$ and the OPE coefficient $\widetilde{H}=C_{\phi\phi}^{I}$.
An explicit inductive formula for $\Phi_{A}$ expressed purely in
terms of $\phi$, $C_{\phi\cdots\phi}^{I}$ and the geometric factors
$\T[\beta]$ is obtained in the following proposition. 
\begin{prop}
\label{prop:induct construction of Wick fields via id OPE coefs}For
the OPE coefficients $C_{\phi\cdots\phi}^{I}$ given by the formula
(\ref{eq:gen linear C ident coef-1}), the monomial $\Phi_{A}$ in
any prescription satisfying axioms W1-W8 satisfies: 
\begin{align}
 & (\nabla_{\alpha_{1}}\phi\cdots\nabla_{\alpha_{n}}\phi)(f)=\int_{z,x_{1},\dots,x_{n}}\negthickspace f^{\alpha_{1}\cdots\alpha_{n}}(z)\delta(z,x_{1},\dots,x_{n})\nabla_{\alpha_{1}}^{(x_{1})}\cdots\nabla_{\alpha_{n}}^{(x_{n})}\left[\vphantom{\sum_{B}}\phi(x_{1})\cdots\phi(x_{n})+\right.\label{eq:explicit Wick monomial for gen Rx}\\
 & -\sum_{\substack{m<n\\{}
[B]<[A]
}
}\sum_{\pi\in\Pi_{m}}\left.C_{\phi\cdots\phi}^{I}(x_{\pi(m+1)},\dots,x_{\pi(n)};z)\T[\beta_{1}](x_{\pi(1)};z)\cdots\T[\beta_{m}](x_{\pi(m)};z)(\nabla_{\beta_{1}}\phi\cdots\nabla_{\beta_{m}}\phi)(z)\vphantom{\sum_{B}}\right],\nonumber 
\end{align}
where $\Pi_{m}$ denotes the set of all permutations of $\{1,\dots,n\}$
such that $\pi(1)<\pi(2)<\cdots<\pi(m)$ and $\pi(m+1)<\pi(m+2)<\cdots<\pi(n)$,
and we abbreviate 
\[
\int_{z,x_{1},\dots,x_{n}}\equiv\int_{\times^{(n+1)}M}d\mu_{g}(z)d\mu_{g}(x_{1})\cdots d\mu_{g}(x_{n}).
\]
\end{prop}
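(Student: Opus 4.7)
The plan is to start from the operator product expansion of $\phi(x_1)\cdots\phi(x_n)$ about $z$, substitute the factorization of Proposition \ref{prop:gen Wick intermed coef} into every OPE coefficient $C^B_{\phi\cdots\phi}$, recognize the subtractions of the claimed formula as precisely the $[B]<[A]$ terms of the resulting expansion, and finally apply $\nabla_{\alpha_1}^{(x_1)}\cdots\nabla_{\alpha_n}^{(x_n)}$ at coincidence using the Taylor identity \eqref{eq:sum nabla_alpha T^beta = nabla_alpha-1} together with the Leibniz rule W4 to identify the result with $(\nabla_{\alpha_1}\phi\cdots\nabla_{\alpha_n}\phi)(z)$.

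First, Theorem \ref{thm:existence Wick coef and associativity} gives $\phi(x_1)\cdots\phi(x_n)\approx\sum_{B}C^{B}_{\phi\cdots\phi}(x_1,\dots,x_n;z)\Phi_B(z)$. For each summand with $\Phi_B=(\nabla_{\beta_1}\phi\cdots\nabla_{\beta_m}\phi)$ of $[B]_\phi=m$, I substitute the factorization of Proposition \ref{prop:gen Wick intermed coef} at $p=m$ (for $0<m<n$), together with formula \eqref{eq:gen Taylor coef} at $m=n$ and the direct formula $C^I_{\phi\cdots\phi}$ at $m=0$, to write
\begin{equation}
C^{B}_{\phi\cdots\phi}(x_1,\dots,x_n;z)=\sum_{\pi\in\Pi_m}\left[\prod_{i=1}^{m}S^{\beta_i}(x_{\pi(i)};z)\right]C^{I}_{\phi\cdots\phi}(x_{\pi(m+1)},\dots,x_{\pi(n)};z),
\end{equation}
with the convention $C^{I}_{\phi\cdots\phi}(\,;z)\equiv 1$ when no spacetime arguments remain; the symmetrization in $(\beta_1,\dots,\beta_m)$ is absorbed by the W6-symmetry of $\Phi_B$. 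Grouping the resulting terms by whether $[B]<[A]$ or $[B]\geq[A]$, the $[B]<[A]$, $m<n$ contributions are precisely the subtractions inside the bracket on the right-hand side of the proposition.

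Second, apply $\nabla_{\alpha_1}^{(x_1)}\cdots\nabla_{\alpha_n}^{(x_n)}$ to what remains and take the coincidence limit $x_i\to z$. Contributions with $[B]>[A]$ vanish because their OPE coefficients have scaling degree bounded by $n(D-2)/2-[B]<-[A]_\nabla$ via \eqref{eq:scaling degree OPE coef}, so differentiating $[A]_\nabla$ times still leaves strictly negative scaling degree at $z$. The $[B]=[A]$, $[B]_\phi<n$ borderline terms likewise vanish: each contains a factor $S^{\beta_i}(x_{\pi(i)};z)$ with $|\beta_i|>0$ that vanishes at $x_{\pi(i)}=z$, and a scaling-degree inspection shows that any Leibniz allocation of the $[A]_\nabla$ derivatives between the $S^{\beta_i}$-factors and $C^{I}_{\phi\cdots\phi}$ either leaves some $S^{\beta_i}$ with too few derivatives to produce a nonzero coincidence value, or else leaves $C^{I}_{\phi\cdots\phi}$ with insufficient derivatives to offset its singular scaling degree. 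The only surviving term is $[B]=[A]$, $[B]_\phi=n$, namely $B=(\nabla_{\beta_1}\phi\cdots\nabla_{\beta_n}\phi)$ with $\sum_i|\beta_i|=[A]_\nabla$; after differentiation at coincidence it contributes
\begin{equation}
\sum_{\beta_1,\dots,\beta_n}\prod_{i=1}^{n}\left[\nabla_{\alpha_i}^{(x_i)}S^{\beta_i}(x_i;z)\big|_{x_i=z}\right](\nabla_{\beta_1}\phi\cdots\nabla_{\beta_n}\phi)(z).
\end{equation}
By the Taylor identity \eqref{eq:sum nabla_alpha T^beta = nabla_alpha-1} applied variable-by-variable, the operator $\sum_{\beta_i}[\nabla_{\alpha_i}^{(x_i)}S^{\beta_i}(x_i;z)|_{x_i=z}]\nabla_{\beta_i}$ acts as $\nabla_{\alpha_i}$ on any smooth scalar, and since by the Leibniz rule W4 the Wick monomial $(\nabla_{\beta_1}\phi\cdots\nabla_{\beta_n}\phi)(z)$ transforms under covariant differentiation in precisely the same manner as the corresponding classical monomial of $\phi$, the displayed sum collapses to $(\nabla_{\alpha_1}\phi\cdots\nabla_{\alpha_n}\phi)(z)$. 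Smearing against $f^{\alpha_1\cdots\alpha_n}$ and integrating against $\delta(z,x_1,\dots,x_n)$ yields the stated formula.

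The main obstacle is the analysis of the $[B]=[A]$, $[B]_\phi<n$ borderline contributions in the second step, since these are products of vanishing factors $S^{\beta_i}$ with singular distributions $C^{I}_{\phi\cdots\phi}$ whose scaling degrees precisely saturate the extendibility bound to the total diagonal, producing formally indeterminate $0\cdot\infty$ expressions at coincidence. Establishing their vanishing requires a careful term-by-term scaling-degree bookkeeping across every Leibniz allocation of the $[A]_\nabla$ derivatives, exploiting the fact that $S^{\beta}(x;z)$ scales at $z$ as $|x-z|^{|\beta|}$, so that in each allocation some factor remains underdifferentiated and the whole allocation carries net negative scaling degree at $z$.
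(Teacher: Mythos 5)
Your route is genuinely different from the paper's, but it has a gap at precisely the point you flag as the ``main obstacle,'' and the tool you propose cannot close it. The unsubtracted borderline terms with $[B]=[A]$ and $[B]_{\phi}=m<n$ have, by the saturated bound \eqref{eq:scaling degree OPE coef}, scaling degree exactly $-[A]_{\nabla}$ before differentiation, hence exactly $0$ after the $[A]_{\nabla}$ derivatives are applied; scaling-degree bookkeeping therefore yields boundedness at best, never vanishing. In fact term-by-term vanishing is false: for $\Phi_{A}=\phi^{4}$ in $D=4$ and $B=(\nabla_{\beta_{1}}\phi\nabla_{\beta_{2}}\phi)$ with $|\beta_{1}|+|\beta_{2}|=2$, the contribution $S^{\beta_{1}}(x_{1};z)S^{\beta_{2}}(x_{2};z)\,C_{\phi\phi}^{I}(x_{3},x_{4};z)$ behaves like $(x_{1}-z)(x_{2}-z)/\sigma(x_{3},x_{4})$, a direction-dependent quantity of order one with no coincidence limit at all, and its pairing with $\delta(z,x_{1},\dots,x_{n})$ is not defined term by term (the same is true of the bare product $\phi(x_{1})\cdots\phi(x_{n})$ on its own, so splitting the bracket into individual OPE terms before smearing already needs justification). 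Your appeal to ``Leibniz allocations'' of the derivatives between the $S$-factors and $C_{\phi\cdots\phi}^{I}$ is also off target: each $\nabla_{\alpha_{i}}^{(x_{i})}$ acts in a variable occurring in exactly one factor, so there is no allocation freedom to exploit. A second, structural issue is that you treat the OPE as an exact operator identity with an infinite $B$-sum, whereas Theorem \ref{thm:existence Wick coef and associativity} gives an asymptotic statement about Hadamard-state expectation values with truncation-dependent remainders; you neither control the remainder after differentiation nor address passing from expectation values in all Hadamard states to the exact identity in $\mathcal{W}$ that the proposition asserts.

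The paper avoids all of this by never invoking an asymptotic expansion: it starts from the exact Wick-theorem identity \eqref{eq:Wicks thm CCR} (in the form \eqref{eq:Wick expansion of product of linear fields}), rewrites the Hadamard normal-ordered monomials in the general prescription via \eqref{eq:Phi^H = Z Phi}, recombines the $H$-products with the $F_{k}$'s into the coefficients $C_{\phi\cdots\phi}^{I}$ using \eqref{eq:gen linear C ident coef-1}, and then isolates the $m=n$ term, which the Taylor identity \eqref{eq:sum nabla_alpha T^beta = nabla_alpha-1} converts into $\Phi_{A}$ under the smearing. The only limiting input there is that the covariant Taylor remainder of the smooth normal-ordered factors can be made to vanish to arbitrarily high order, strictly beating the fixed singularity of the accompanying $H$-factors---exactly the margin you do not have at the fixed borderline truncation $[B]<[A]$. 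To repair your argument you would need to replace the term-by-term coincidence-limit analysis by an exact, remainder-controlled decomposition (in effect, the Wick/normal-ordered rearrangement the paper uses), rather than scaling-degree estimates on individually borderline pieces.
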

\begin{proof}
See Appendix \ref{sec:proofs for gen Wick coef subsec}. 
\end{proof}

\subsection{OPE coefficients of (unextended) time-ordered products \label{subsec:TOP}}

As we shall see in the next section, the flow relations for OPE coefficients
that we shall obtain in Lorentzian spacetimes will involve expansions
of time-ordered products\textemdash rather than ordinary products\textemdash of
Wick monomials. Away from the diagonals\footnote{The ``diagonals'' are the subset of the product manifold, $\left\{ (x_{1},\dots x_{n})\in\times^{n}M|\;x_{i}=x_{j}\text{ for any \ensuremath{i,j\in\{1,\dots,n\}}}\right\} $.
Thus, ``away from the diagonals'' means when all points are distinct.} the ``unextended time-ordered product'' of Wick monomials is defined
by 
\begin{equation}
T_{0}\{\Phi_{A_{1}}(x_{1})\cdots\Phi_{A_{n}}(x_{n})\}=\Phi_{A_{P(1)}}(x_{P(1)})\cdots\Phi_{A_{P(n)}}(x_{P(n)}),\label{naiveto}
\end{equation}
where $P$ is a permutation of $\{1,\dots,n\}$ such that $x_{P(i)}\notin J^{-}(x_{P(i+1)})$,
where $J^{-}$ denotes the causal past. In other words, $T_{0}\{\Phi_{A_{1}}(x_{1})\cdots\Phi_{A_{n}}(x_{n})\}$
re-orders the product $\Phi_{A_{1}}(x_{1})\cdots\Phi_{A_{n}}(x_{n})$
by the ``time'' at which the Wick monomial is being evaluated. The
right side of eq.~(\ref{naiveto}) yields a well-defined (algebra-valued)
distribution on the product manifold $\times^{n}M$ minus all of the
diagonals.

Renormalization theory is primarily concerned with the ``extension
of $T_{0}\{\Phi_{A_{1}}(x_{1})\cdots\Phi_{A_{n}}(x_{n})\}$ to the
diagonals'': i.e., obtaining (algebra-valued) distributions $T\{\Phi_{A_{1}}(x_{1})\cdots\Phi_{A_{n}}(x_{n})\}$
that are well-defined on all of $\times^{n}M$, including the diagonals
and defined such that, 
\[
T\{\Phi_{A_{1}}(x_{1})\cdots\Phi_{A_{n}}(x_{n})\}=T_{0}\{\Phi_{A_{1}}(x_{1})\cdots\Phi_{A_{n}}(x_{n})\},
\]
away from all diagonals. In curved spacetime, it has been proven \cite{HW_Existence_TOP}
that there exist ``extensions'' of (\ref{naiveto}) that are compatible
with a list of axioms that generalize those stated here ($\local$-$\conservation$)
for Wick powers. However, generally, there are additional ``contact
term'' ambiguities in these extensions, corresponding to the freedom
to add finitely-many ``$\delta$-function-type'' terms on the diagonals.
Although these ambiguities can be fully characterized \cite{HW_Conservation_Stress-energy},
they greatly complicate the analysis of time-ordered products. For
the integral on the right-hand side of Lorentzian flow relations such
as eq.(\ref{eq:Lorentz inv id fe}) to be well-defined, it is necessary
that the unextended time-ordered products be extended to, at least,
all partial diagonals involving the integration variable, $y$. Fortunately,
as we shall see in Section \ref{sec:Minkowski-flow-relations}, the
unextended time-ordered-products will satisfy flow relations where
the extension to the requisite partial diagonals is unambiguous and,
thus, independent of contact terms. Therefore, we will only ever need
to consider the OPE of unextended time-ordered-products and the field
redefinition freedom of its coefficients, and we may thereby bypass
all of the usual complications of renormalization theory.

It is clear that the unextended time-ordered products satisfy OPE
relations of the form, 
\begin{equation}
\braket{T_{0}\{\Phi_{A_{1}}(x_{1})\cdots\Phi_{A_{n}}(x_{n})\}}_{\Psi}\approx\sum_{B}C_{T_{0}\{A_{1}\cdots A_{n}\}}^{B}(x_{1},\dots,x_{n};z)\braket{\Phi_{B}(z)}_{\Psi},\label{eq:OPE defT0}
\end{equation}
where 
\begin{align}
C_{T_{0}\{A_{1}\cdots A_{n}\}}^{B}(x_{1},\dots,x_{n};z) & =C_{A_{P(1)}\cdots A_{P(n)}}^{B}(x_{P(1)},\dots,x_{P(n)};z),\label{eq:def T0 OPE coefs}
\end{align}
with the permutations $P$ as defined in (\ref{naiveto}). It is understood
that eq.~(\ref{eq:OPE defT0}) holds only on $\times^{n}M$ minus
all of the diagonals. As described in the following proposition, the
explicit form of the time-ordered OPE coefficients (\ref{eq:def T0 OPE coefs})
is readily obtained from our previously-stated expressions for the
Wick OPE coefficients in terms of a Hadamard parametrix $H$, the
geometrical quantities $\T[\beta]$, and the smooth functions $F_{n}$
that characterize the difference between the Hadamard normal-ordering
prescription for Wick monomials and the prescription being used.
\begin{prop}
\emph{\label{thm:explicit TO coef}For any fixed prescription for
the Wick monomials, the time-ordered OPE coefficients }(\ref{eq:def T0 OPE coefs})\emph{
are simply obtained from the formula for $C_{A_{1}\cdots A_{n}}^{B}(x_{1},\dots,x_{n};z)$
by individually time-ordering all Hadamard parametrices: i.e., replacing
every occurrence of $H$ with its corresponding Feynman parametrix}\footnote{\label{fn:i epsilon for H_F}The precise asymptotic behavior of the
distribution kernel for $H_{F}$ is obtained by replacing the ``$i\epsilon$-prescription\textquotedblright{}
in the expression (\ref{eq:H}) for $H$ with the usual Feynman prescription:
i.e., making the following substitution, 
\[
[2i0^{+}\left(T(x_{1})-T(x_{2})\right)+(0^{+})^{2}]\to i0^{+}
\]
}\emph{, 
\begin{equation}
H_{F}(x_{1},x_{x})=H(x_{1},x_{2})-i\Delta^{\text{adv}}(x_{1},x_{2}),\label{eq:H_F def}
\end{equation}
with $\Delta^{\text{adv}}$ denoting the advanced Green's function.
} 
\end{prop}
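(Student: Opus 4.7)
\emph{Proof sketch.} Starting from the explicit formula \eqref{eq:gen Wick OPE coefficients}, which expresses $C_{A_{1}\cdots A_{n}}^{B}$ in terms of the Hadamard normal-ordered coefficients $(C_{H})$ together with the mixing matrices $\mathcal{Z}$ and $\mathcal{Z}^{-1}$, I would track what happens to each ingredient under the relabeling $x_{v}\mapsto x_{P(v)}$, where $P$ is the time-ordering permutation of \eqref{naiveto}. The only ingredients genuinely sensitive to the \emph{pair} structure of spacetime arguments are the off-diagonal two-point entries $\na vi\na wjH(x_{v},x_{w})$ of the matrix $\mathcal{N}$ defined in \eqref{wmatdef}: by Theorem \ref{thm:Wick uniquenes Z} and Proposition \ref{prop:Wick uniqueness Fn}, the mixing-matrix factors $\Z{C_{0}}B(z)$ and $\Zinv{A_{v}}{C_{v}}(x_{v})$, together with the single-point geometric factors $\T[\beta_{i}](x_{w};z)$ that occupy the $(0,i)(w,j)$ entries of $\mathcal{N}$, depend on only a single spacetime argument and on local geometric data; they contain no $H$ and move rigidly with their arguments under the relabeling.

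The crux is the identity that, for any two distinct points $y_{1},y_{2}$ with $y_{1}\notin J^{-}(y_{2})$,
\[
H(y_{1},y_{2})=H_{F}(y_{1},y_{2}).
\]
This follows directly from the definition \eqref{eq:H_F def}, $H_{F}\equiv H-i\Delta^{\text{adv}}$, together with the support property of the advanced Green's distribution, which forces $\Delta^{\text{adv}}(y_{1},y_{2})=0$ whenever $y_{1}$ is not in the causal past of $y_{2}$. By the defining property of $P$, namely $x_{P(v)}\notin J^{-}(x_{P(w)})$ for $v<w$ (which can always be arranged by ordering the distinct points along a Cauchy time function), the identity applies to every two-point factor occurring in the hafnian expression \eqref{eq:H NO coefficients} for $(C_{H})_{A_{P(1)}\cdots A_{P(n)}}^{C_{0}}(x_{P(1)},\dots,x_{P(n)};z)$.

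To identify the permuted formula with the original formula for $C_{A_{1}\cdots A_{n}}^{B}$ after substituting $H\mapsto H_{F}$, I would invoke the symmetry of $H_{F}$ in its two arguments, which follows from its symmetric $i0^{+}$ prescription (see footnote \ref{fn:i epsilon for H_F}). This symmetry trivializes the ordering convention $v\leq w$ built into the hafnian, so that the natural bijection between perfect matchings of the permuted index set and of the original one (obtained by applying $P$ to the first component of each label $(v,i)$) sends each pair contribution to the corresponding one of the original hafnian with $H_{F}$ in place of $H$. The local $\mathcal{Z}$, $\mathcal{Z}^{-1}$, and $\T[\beta]$ factors match up term by term in \eqref{eq:gen Wick OPE coefficients} under the same rigid relabeling, which yields the claim.

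The single nontrivial step is the two-point identity $H=H_{F}$ off the support of $\Delta^{\text{adv}}$, i.e.\ the compatibility of the Wightman and Feynman $i\epsilon$ prescriptions there; everything else is a combinatorial relabeling of the hafnian sum and the local factors. In particular, since the relation \eqref{eq:OPE defT0} is only asserted away from the diagonals, no extension of the unextended time-ordered product to coincident points is required, so no renormalization-theoretic input beyond what is already encoded in the Wick-monomial prescription enters the proof.
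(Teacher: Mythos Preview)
Your argument is correct and arrives at the same conclusion as the paper, but the route differs slightly from the paper's own proof. The paper first invokes the factorization identity of Proposition~\ref{Prop:intermed H-coef recursion rel} (the $p=m$ case, eq.~\eqref{eq:H intermed coef p=m}) to split $(C_H)_{T_0\{C_1\cdots C_n\}}^{C_0}$ into a product of a purely geometric $S^\beta$-factor (manifestly symmetric, hence unaffected by the time-ordering permutation) and an identity coefficient built only from $H$'s; it then argues abstractly that $T_0\{\prod H\}=\prod T_0\{H\}$ because tensor products of $c$-number distributions commute, and finally identifies $T_0\{H\}=H_F$. You bypass the factorization step and work directly at the level of the hafnian matrix~$\mathcal{N}$: the $(0,i)(w,j)$ entries are single-point $S^\beta$-data, while for the $(v,i)(w,j)$ entries with $v<w$ you use the support of $\Delta^{\text{adv}}$ to obtain $H(x_{P(v)},x_{P(w)})=H_F(x_{P(v)},x_{P(w)})$ in the time-ordered labeling, and then the symmetry of $H_F$ to undo the ordering convention and match the original hafnian with $H\mapsto H_F$. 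Your argument is somewhat more elementary in that it avoids the structural detour through Proposition~\ref{Prop:intermed H-coef recursion rel} and makes the mechanism---the vanishing of $\Delta^{\text{adv}}$ off its support---completely explicit; the paper's version, by contrast, packages the same content into the single statement that time-ordering distributes over products of commuting distributions, which is cleaner once one accepts it but hides the support argument you spell out.
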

\begin{proof}
By applying the definition of time-ordering (\ref{eq:def T0 OPE coefs})
to the formula (\ref{eq:gen Wick OPE coefficients}) for the general
Wick OPE coefficients, it is straightforwardly shown that we have
\begin{align}
 & C_{T_{0}\{A_{1}\cdots A_{n}\}}^{B}(x_{1},\dots,x_{n};z)\\
 & \approx\sum_{C_{0}}\mathcal{Z}_{C_{0}}^{B}(z)\left[\sum_{C_{1}}\cdots\sum_{C_{n}}(\mathcal{Z}^{-1}){}_{A_{1}}^{C_{1}}(x_{1})\cdots(\mathcal{Z}^{-1}){}_{A_{n}}^{C_{n}}(x_{n})(C_{H}){}_{T_{0}\{C_{1}\cdots C_{n}\}}^{C_{0}}(x_{1},\dots,x_{n};z)\right],\nonumber 
\end{align}
where we use the notation $(C_{H}){}_{T_{0}\{A_{1}\cdots A_{n}\}}^{B}$
for the OPE coefficients of the unextended time-ordered products \eqref{eq:def T0 OPE coefs}
when the Wick fields are defined via a Hadamard normal-ordering prescription,
$\Phi_{A}=\Phi_{A}^{H}$. It then follows from the factorization property
(\ref{eq:H intermed coef p=m}) of the Hadamard normal-ordered
OPE coefficients that we have 
\begin{align}
 & (C_{H}){}_{T_{0}\{C_{1}\cdots C_{n}\}}^{C_{0}}(x_{1},\dots,x_{n};z)\\
 & =\sum_{\{P_{1},P_{2}\}\in\mathcal{P}_{m}(\mathcal{S})}(C_{H})_{T_{0}\{A_{1}'\cdots A_{n}'\}}^{C_{0}}(x_{1},\dots,x_{n};z)(C_{H})_{T_{0}\{A_{1}''\cdots A_{n}''\}}^{I}(x_{1},\dots,x_{n}).\nonumber 
\end{align}
The first factor in each product is unaffected by time-ordering because
they depend only on symmetric combinations of $\T[\beta]$-factors,
i.e., we have 
\[
(C_{H})_{T_{0}\{A_{1}'\cdots A_{n}'\}}^{C_{0}}(x_{1},\dots,x_{n};z)=(C_{H})_{A_{1}'\cdots A_{n}'}^{C_{0}}(x_{1},\dots,x_{n};z),
\]
where the explicit form of the right-hand side is given by (\ref{eq:T-factor coef}).
Finally, recalling (\ref{eq:H id coefs}), we have 
\begin{align}
(C_{H})_{T_{0}\{A_{1}''\cdots A_{n}''\}}^{I} & =T_{0}\left\{ \sum_{P\in\mathcal{M}({\mathcal{S}})}\prod_{\{(v,i),(w,j)\}\in P}\nabla_{\alpha_{v,i}}^{(x_{v})}\nabla_{\alpha_{w,j}}^{(x_{w})}H(x_{v},x_{w})\right\} \nonumber \\
 & =\sum_{P\in\mathcal{M}({\mathcal{S}})}\prod_{\{(v,i),(w,j)\}\in P}\nabla_{\alpha_{v,i}}^{(x_{v})}\nabla_{\alpha_{w,j}}^{(x_{w})}T_{0}\left\{ H(x_{v},x_{w})\right\} ,\label{T0[C_H^I_A1...An]}
\end{align}
where the second line follows from the fact that tensor products of
ordinary $c$-number distributions commute. Altogether, we conclude
the time-ordering map acts non-trivially only on the Hadamard parametrices
and in the way specified by the proposition. We note, when $x_{1}\ne x_{2}$,
the time-ordered Hadamard parametrix $T_{0}\{H(x_{1},x_{2})\}$ is
equivalent to the Feynman parametrix (\ref{eq:H_F def}). 
\end{proof}
\begin{rem}
\label{rem:ext of T0 H}Although $T_{0}\{H(x_{1},x_{2})\}$ is a priori
only defined away from $x_{1}=x_{2}$, its extension to its diagonal
$x_{1}=x_{2}$ uniquely yields the Feynman parametrix (\ref{eq:H_F def}),
because the scaling degree of $T_{0}\{H\}$ is $D-2$ which is less
than that of the Dirac delta distribution (and all of its distributional
derivatives) and, thus, there do not exist any possible ``contact
terms'' with the correct scaling degree. 
\end{rem}
As examples, from eqs.~(\ref{eq:phi phi Had coef}) and (\ref{eq:C phi^2 phi^2 cases}),
we see that for the Hadamard normal-ordering prescription, we have,
\begin{equation}
(C_{H}){}_{T_{0}\{\phi\phi\}}^{I}(x_{1},x_{2};z)=H_{F}(x_{1},x_{2}),\quad\quad(C_{H}){}_{T_{0}\{\phi^{2}\phi^{2}\}}^{I}(x_{1},x_{2};z)=H_{F}(x_{1},x_{2})H_{F}(x_{1},x_{2}).
\end{equation}
The wavefront set calculus implies $H^{2}$ is a well-defined distribution
on (a convex normal neighborhood of) the product manifold $M\times M$
and, thus, the ordinary OPE coefficient $(C_{H})_{\phi^{2}\phi^{2}}^{I}$
is similarly well-defined. However, the pointwise product of the Feynman
parametrix, $H_{F}^{2}$, is only well-defined as a distribution on
the product manifold minus the diagonal so the time-ordered coefficient
$(C_{H}){}_{T_{0}\{\phi^{2}\phi^{2}\}}^{I}$ is thus only well-defined
as a distribution for $x_{1}\ne x_{2}$.

The advanced Green's function scales almost homogeneously and, thus,
$H_{F}$ defined via (\ref{eq:H_F def}) will scale almost homogeneously
if and only if $H$ is compatible with axiom $\scaling$. Note (\ref{eq:H_F def})
is symmetric in its spacetime variables and solves the inhomogeneous
Klein-Gordon equation with ``$\delta$-source'' up to a smooth remainder,
\begin{equation}
KH_{F}(x_{1},x_{2})=-i\delta^{D}(x_{1},x_{2})\mod C^{\infty}(M\times M)\label{eq:KH_F}
\end{equation}
Any bi-distribution satisfying (\ref{eq:KH_F}) is referred to as
a parametrix of a fundamental solution for the differential operator
$K$. If $H(x_{1},x_{2})$ is any Hadamard parametrix (of the homogeneous
Klein-Gordon equation) in $D>2$ satisfying the conservation constraint
(\ref{eq:H conservation condition-1}), then the Feynman parametrix
defined via (\ref{eq:H_F def}) will then necessarily satisfy, 
\begin{equation}
\left[\nabla_{b}^{(x_{1})}K_{x_{2}}H_{F}(x_{1},x_{2})+i\nabla_{b}^{(x_{1})}\delta(x_{1},x_{2})\right]_{x_{1},x_{2}=z}=0.\label{eq:conservation constr for H_F}
\end{equation}
Conversely, for any Feynman parametrix satisfying (\ref{eq:conservation constr for H_F}),
the corresponding Hadamard parametrix, $H=H_{F}+i\Delta^{\text{adv}}$,
will satisfy the conservation constraint (\ref{eq:H conservation condition-1}).

\section{Flow relations for OPE coefficients in flat Euclidean space\label{sec:flat Eucl fe}}

\global\long\def\note#1{{\color{blue}[#1]}}
 In this section, we obtain flow equations in $m^{2}$ for the Wick
OPE coefficients in flat Euclidean space. We initially focus our attention
on the flow relation for $C_{\phi\phi}^{I}$, since our analysis of
the preceding section implies the flow relations for all other Wick
OPE coefficients can be readily obtained after the flow relation for
$C_{\phi\phi}^{I}$ is known. We begin in Subsection \ref{subsec:Eucl vac NO fe }
by deriving flow relations for the case where the Euclidean Green's
function $G_{E}(x_{1},x_{2})$ is used to define a Hadamard normal-ordering
prescription, so $(C_{G}){}_{\phi\phi}^{I}(x_{1},x_{2};z)=G_{E}(x_{1},x_{2})$.
These flow equations (see eq.~(\ref{eq:flow eq Eucl id coef}) below)
are the direct analogues of the Holland and Hollands flow equations
(\ref{eq:HH fe}) for Klein-Gordon theory in the limit as the infrared
cutoff $L$ is removed, i.e., $L\to+\infty$. Note that Hadamard normal-ordering
with respect to $G_{E}(x_{1},x_{2})$ corresponds to ordinary normal
ordering with respect to the Euclidean vacuum state. However, $G_{E}(x_{1},x_{2};m^{2})$
does not have smooth dependence in $m^{2}$ at $m^{2}=0$, so it is
not acceptable to use it in a Hadamard normal-ordering prescription
that is valid in an open interval in $m^{2}$ containing $m^{2}=0$.
Nevertheless, it can be used outside any open interval in $m^{2}$
containing $m^{2}=0$, and it is convenient to begin our consideration
of flow relations with it because the flow relation analysis is much
simpler when a Green's function (rather than a parametrix) is used
in the Hadamard normal-ordering prescription.

We turn then in Subsection \ref{subsec:Eucl Had NO fe} to the derivation
of Euclidean flow relations for the case of a Euclidean-invariant
parametrix that has smooth dependence on $m^{2}$ for all $m^{2}$,
including $m^{2}=0$. To avoid infrared divergences, this requires
introducing a cutoff function in the integral over all space appearing
in the flow relation. The cutoff function can be chosen to be Euclidean
invariant, so it will not spoil the Euclidean invariance of the flow
relations. However, it will unavoidably spoil the scaling behavior
of the flow relations. Nevertheless, we develop an algorithm for modifying
the flow relations which restores proper scaling behavior to any desired
scaling degree. We show any ambiguities in our algorithm are in a
1-1 correspondence with the ambiguities of Euclidean OPE coefficients
for Hadamard normal-ordered Wick fields (see Theorem \ref{thm:Eucl fe for H-NO id coef}).

\subsection{Vacuum normal ordering without an infrared cutoff (\texorpdfstring{$m^{2}>0$}{m\unichar{178}>0}) \label{subsec:Eucl vac NO fe }}

The Riemannian version of quantum field theory in curved spacetime
has been formulated by \cite{Riemannian_Wick_algebra} in close parallel
with the axiomatic formulation for the Lorentzian case given in Subsection
\ref{subsec:algebra, axioms, state space Wick poly}. An analogue
of the ``Hadamard normal-ordering'' prescription for defining Wick
monomials can then be given by choosing a local and covariant Green's
parametrix for the (now elliptic) Klein-Gordon operator. OPE coefficients
for the Euclidean Wick OPE coefficients can then be obtained in parallel
with the Lorentzian case away from the diagonals\footnote{Defining products of Euclidean Wick fields on diagonals generally
requires renormalization analogous to extending the Lorentzian unextended
time-ordered products to their diagonals and, thus, is subject to
additional contact-term renormalization ambiguities. }.

In this subsection, we will be concerned only with the case of flat,
Euclidean space $(\mathbb{R}^{D},\delta_{ab})$. In this case, there
is a unique Green's function, $G_{E}(x_{1},x_{2};m^{2})$, for the
operator, 
\begin{equation}
K=-\delta^{ab}\partial_{a}\partial_{b}+m^{2},\label{eq:Euclidean K}
\end{equation}
such that $G_{E}$ vanishes as $|x_{1}-x_{2}|\to\infty$. It would
be extremely convenient to use this Green's function in a Hadamard
normal-ordering prescription for Wick monomials. Indeed, since this
Green's function is the vacuum 2-point function of the Euclidean quantum
field theory, 
\begin{equation}
\langle\phi(x_{1})\phi(x_{2})\rangle_{\text{vac}}=G_{E}(x_{1},x_{2})
\end{equation}
it follows that Hadamard normal ordering with respect to $G_{E}(x_{1},x_{2};m^{2})$
corresponds to ordinary normal ordering with respect to the Euclidean
vacuum state. However, as previously mentioned, $G_{E}(x_{1},x_{2};m^{2})$
does not have smooth dependence on $m^{2}$ at $m^{2}=0$. Thus, it
is not acceptable to use it in a Hadamard normal-ordering prescription
that is valid in an open interval in $m^{2}$ containing $m^{2}=0$,
since the corresponding Wick monomials defined in this way will not
have the required smooth dependence on $m^{2}$. In order to obtain
an acceptable prescription that includes the case $m^{2}=0$, we therefore
must use a Green's function (or parametrix) for $K$ that has smooth
dependence on $m^{2}$. Nevertheless, there are significant simplifications
in the derivation of the flow relations for $G_{E}(x_{1},x_{2};m^{2})$.
Therefore, we will proceed by first obtaining flow relations for normal
ordering with respect to $G_{E}(x_{1},x_{2};m^{2})$ for $m^{2}>0$,
and then derive flow relations for normal ordering with respect to
a parametrix that is smooth in $m^{2}$.

Although we shall not need to make use of its explicit form, we note
that for $m^{2}>0$, $G_{E}(x_{1},x_{2};m^{2})$ is given explicitly
by, 
\begin{equation}
G_{E}(\Delta x;m^{2})=\int_{\mathbb{R}^{D}}\frac{d^{D}p}{(2\pi)^{D}}\frac{e^{ip\cdot\Delta x}}{p^{2}+m^{2}}=\frac{m^{(D-2)/2}}{(2\pi)^{D/2}(|\Delta x|^{2}){}^{(D-2)/4}}K_{(D-2)/2}(m\sqrt{|\Delta x|^{2}}),\label{eq:explicit G_E}
\end{equation}
where $\Delta x=x_{1}-x_{2}$ and $K_{(D-2)/2}$ is a modified Bessel
function of the second kind. It should be noted that $G_{E}(x_{1},x_{2};m^{2})$
is symmetric in $x_{1}$ and $x_{2}$. The wavefront set of $G_{E}$
is the same as the wavefront set of a (two-variable) $\delta$-function,
\begin{equation}
\text{WF}[G_{E}]=\text{WF}[\delta]\equiv\left\{ (x_{1},k;x_{2},-k)\in\times^{2}(T^{*}\mathbb{R}^{D}\backslash Z^{*}\mathbb{R}^{D})|x_{1}=x_{2},k\in T^{*}\mathbb{R}^{D}\backslash Z^{*}\mathbb{R}^{D}\right\} .\label{eq:WF delta}
\end{equation}
In particular, $G_{E}$ is smooth in $x_{1}$ and $x_{2}$ for $\Delta x\ne0$.
Furthermore, it follows from the form of its wavefront set that, when
smeared in one of its variables with any test function $f$, $G_{E}(x_{1},f;m^{2})$
is smooth\footnote{This is established by a straightforward application of \cite[Theorem 8.2.12]{Hormander_book}.
In fact, as explained in the next section, this property holds for
any translation invariant bi-distribution.} in $(x_{1},m^{2})$. In other words, as a Schwartz kernel, $G_{E}(x_{1},x_{2})$
defines a continuous linear map from $C_{0}^{\infty}(\mathbb{R}^{D})$
into $C^{\infty}(\mathbb{R}^{D})$.

For Wick monomials defined by normal ordering with respect to $G_{E}$,
the Euclidean OPE coefficients are given, away from their diagonals\footnote{Whereas the wavefront set calculus implies the pointwise products
of any Lorentzian Hadamard parametrix $H$ appearing in formula (\ref{eq:H NO coefficients})
are guaranteed to be well-defined as distributions, the corresponding
pointwise products of $G_{E}$ are generally ill-defined as distributions
on diagonals. }, by formula (\ref{eq:H NO coefficients}) with the replacement $H\to G_{E}$.
In particular, for the OPE coefficient $(C_{G}){}_{\phi\cdots\phi}^{I}$
with $n$ factors of $\phi$, we have $(C_{G}){}_{\phi\cdots\phi}^{I}=0$
when $n$ is odd, whereas when $n$ is even, we have, 
\begin{align}
(C_{G}){}_{\phi\cdots\phi}^{I}(x_{1},\dots,x_{n}) & =\sum_{\pi}G_{E}(x_{\pi(1)},x_{\pi(2)})\cdots G_{E}(x_{\pi(n-1)},x_{\pi(n)}),\label{eq:Schwinger functions}
\end{align}
where the $\pi$ sums over all permutations such that $\pi(1)<\pi(3)<\cdots<\pi(n-1)$
and $\pi(1)<\pi(2),\pi(3)<\pi(4),\cdots,\pi(n-1)<\pi(n)$. As discussed
in Subsection \ref{subsec:gen Wick OPE coef}, the Wick OPE coefficients
for any prescription are determined by the values of the $C_{\phi\cdots\phi}^{I}$-coefficients.

We first motivate the form of the flow equations for $(C_{G}){}_{\phi\cdots\phi}^{I}$
following Hollands \cite{Hollands_Action_Principle}. Since the OPE
coefficients $(C_{G})_{\phi\cdots\phi}^{I}$ defined by normal ordering
are just $n$-point ``Schwinger functions,'' they are formally given
by the functional integral, 
\begin{equation}
(C_{G})_{\phi\cdots\phi}^{I}(x_{1},\dots,x_{n})=\int_{\mathcal{S}'(\mathbb{R}^{D})}d\mu[\varphi]\,\varphi(x_{1})\cdots\varphi(x_{n}),\label{eq:id coef func int rep}
\end{equation}
with measure, 
\begin{equation}
d\mu[\varphi]\ =\mathcal{D}[\varphi]\frac{1}{Z_{0}}\exp\left(-S_{\text{KG}}[\delta_{ab},\mathbb{R}^{D}]\right),\label{eq:formal measure}
\end{equation}
Formal differentiation of eq.~(\ref{eq:id coef func int rep}) with
respect to $m^{2}$ yields, 
\begin{equation}
\frac{\partial}{\partial m^{2}}(C_{G}){}_{\phi\cdots\phi}^{I}(x_{1},\dots,x_{n})=-\frac{1}{2}\int_{\mathbb{R}^{D}}d^{D}y\int_{\mathcal{S}'(\mathbb{R}^{D})}d\mu\,\varphi^{2}(y)\varphi(x_{1})\cdots\varphi(x_{n}),\label{eq:formal m^2 deriv of id coef}
\end{equation}
This suggests that we should have the flow relation 
\begin{align}
\frac{\partial}{\partial m^{2}}(C_{G}){}_{\phi\cdots\phi}^{I}(x_{1},\dots,x_{n}) & =-\frac{1}{2}\int_{\mathbb{R}^{D}}d^{D}y(C_{G})_{\phi^{2}\phi\cdots\phi}^{I}(y,x_{1},\dots,x_{n}),\label{eq:flow eq Eucl id coef}
\end{align}
That this flow equation, eq.~\eqref{eq:flow eq Eucl id coef}, does
indeed hold will be seen to be a consequence of the following lemma: 
\begin{lem}
\label{lem:G_E flow eq} The Euclidean Green's function $G_{E}$ satisfies
the flow relation 
\begin{equation}
\frac{\partial}{\partial m^{2}}G_{E}(x_{1},x_{2};m^{2})=-\int_{\mathbb{R}^{D}}d^{D}yG_{E}(y,x_{1};m^{2})G_{E}(y,x_{2};m^{2}).\label{eq:G_E flow eq}
\end{equation}
\end{lem}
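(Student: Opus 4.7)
The cleanest approach is to differentiate the defining equation for $G_E$ and then invert the Klein-Gordon operator by convolution. Recall $G_E(x_1, x_2; m^2)$ is the unique distribution on $\mathbb{R}^D \times \mathbb{R}^D$ satisfying
\[
K_{x_1} G_E(x_1, x_2; m^2) = \delta^D(x_1 - x_2), \qquad G_E(x_1, x_2; m^2) \to 0 \text{ as } |x_1 - x_2| \to \infty,
\]
where $K = -\delta^{ab} \partial_a \partial_b + m^2$. Since $G_E$ depends smoothly on $m^2$ for $m^2 > 0$ (evident from the explicit Bessel-function formula \eqref{eq:explicit G_E}), we may differentiate this defining equation with respect to $m^2$, obtaining
\[
K_{x_1}\!\left[\frac{\partial G_E}{\partial m^2}(x_1, x_2; m^2)\right] = -G_E(x_1, x_2; m^2).
\]
The plan is then to show that the right-hand side of \eqref{eq:G_E flow eq} is precisely the unique rapidly-decaying solution to this inhomogeneous equation.

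To do so, first I would verify that the convolution $I(x_1, x_2) \equiv -\int_{\mathbb{R}^D} d^D y \, G_E(y, x_1) G_E(y, x_2)$ is well-defined as an absolutely convergent integral and decays at infinity. For $m^2 > 0$, the explicit formula \eqref{eq:explicit G_E} shows $G_E(\Delta x) = O(|\Delta x|^{-(D-2)/2} e^{-m|\Delta x|})$ as $|\Delta x| \to \infty$, with only an integrable singularity at the origin; these two properties suffice to guarantee absolute convergence of the $y$-integral uniformly on compact sets in $(x_1, x_2)$, and exponential decay of $I$ as $|x_1 - x_2| \to \infty$. Next, I would apply $K_{x_1}$ under the integral sign (justified by dominated convergence after translating to Fourier variables, or equivalently by noting that $G_E(y, x_1)$ is smooth away from $y = x_1$ and that the $y=x_1$ contribution extracted via $K_{x_1} G_E(y, x_1) = \delta(y - x_1)$ is integrable against the remaining smooth factor) to obtain $K_{x_1} I(x_1, x_2) = -G_E(x_1, x_2)$. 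Combined with the decay property and the uniqueness of decaying solutions to the elliptic equation $K u = f$, this forces $\partial G_E/\partial m^2 = I$, which is the claim.

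Alternatively — and perhaps more transparently — one may simply pass to Fourier space. From the momentum-space representation in \eqref{eq:explicit G_E}, differentiation under the integral gives
\[
\frac{\partial G_E}{\partial m^2}(\Delta x; m^2) = -\int_{\mathbb{R}^D} \frac{d^D p}{(2\pi)^D} \frac{e^{i p \cdot \Delta x}}{(p^2 + m^2)^2},
\]
while the convolution theorem applied to the translation-invariant kernel $G_E(y, x) = G_E(y - x)$ yields
\[
\int_{\mathbb{R}^D} d^D y \, G_E(y - x_1) G_E(y - x_2) = \int_{\mathbb{R}^D} \frac{d^D p}{(2\pi)^D} \frac{e^{i p \cdot (x_1 - x_2)}}{(p^2 + m^2)^2}.
\]
Comparing the two expressions yields \eqref{eq:G_E flow eq} immediately. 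The main (and only) technical obstacle is justifying the interchange of differentiation/convolution and integration; for $m^2 > 0$ this is straightforward because the integrand $(p^2 + m^2)^{-2}$ is $L^1(\mathbb{R}^D)$ for $D \leq 3$ and the product $G_E \cdot G_E$ lies in $L^1$ in position space for all $D \geq 2$ on account of the exponential decay of $G_E$.
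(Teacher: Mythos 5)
Your proposal is correct in substance, and it is worth comparing the routes. The paper's proof uses the same key input as your first route -- differentiating the Green's equation (equivalently $[K,\partial_{m^{2}}]=-I$) to get $K_{y}\,\partial_{m^{2}}G_{E}(y,x)=-G_{E}(y,x)$ -- but then closes the argument directly in smeared form: it writes $\partial_{m^{2}}G_{E}(f_{1},f_{2})=\int d^{D}y\,\delta(y,f_{1})\,\partial_{m^{2}}G_{E}(y,f_{2})$, substitutes $\delta=K_{y}G_{E}(y,f_{1})$, and integrates by parts twice (using the $m^{2}>0$ fall-off) to land on $-\int G_{E}(y,f_{1})G_{E}(y,f_{2})$, with no appeal to uniqueness. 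Your first route instead verifies that the candidate convolution solves $K_{x_{1}}I=-G_{E}$ with decay and invokes uniqueness of decaying solutions of the elliptic equation -- a legitimate variant, essentially the same mechanism packaged differently. Your Fourier-space route is genuinely different and the most elementary for this lemma, but it trades on exact translation invariance and the exact Green's function; it does not survive the passage to the parametrix $H_{E}$, to Minkowski $H_{F}$, or to curved spacetime, which is precisely why the paper sets up the structural Green's-function/integration-by-parts argument that it then mimics in those later sections. One caveat: your claims that the $y$-integral converges absolutely ``uniformly on compact sets in $(x_{1},x_{2})$'' and that $G_{E}\cdot G_{E}\in L^{1}$ ``for all $D\ge2$'' fail on the diagonal $x_{1}=x_{2}$ when $D\ge4$, since $(G_{E}*G_{E})(\Delta x)\sim|\Delta x|^{-(D-4)}$ (logarithmic for $D=4$); the identity should be read distributionally (smeared against test functions, as the paper does), and with that reading both of your routes go through.
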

\begin{proof}
We note first that, by a trivial calculation, the commutator of the
differential operators $K=-\partial^{a}\partial_{a}+m^{2}$ and $\partial_{m^{2}}$
is given by, 
\begin{equation}
[K,\partial_{m^{2}}]=-I
\end{equation}
Thus, in particular, we have, 
\begin{align}
K_{y}\frac{\partial}{\partial m^{2}}G_{E}(y,x;m^{2}) & =-G_{E}(y,x;m^{2})+\frac{\partial}{\partial m^{2}}K_{y}G_{E}(y,x;m^{2})\nonumber \\
 & =-G_{E}(y,x;m^{2})+\frac{\partial}{\partial m^{2}}\delta(y,x)\nonumber \\
 & =-G_{E}(y,x;m^{2})
\end{align}
where we used the Green's function property, $K_{y}G_{E}(y,x;m^{2})=\delta(y,x)$,
to get the second line and we used the fact that the $\delta$-function
has no $m^{2}$ dependence to get the last line. As already noted,
the wavefront set of $G_{E}(y,x;m^{2})$ (and, hence, of $\partial_{m^{2}}G_{E}(y,x;m^{2})$)
is such that if we smear in $x$, we obtain a smooth function of $y$
and $m^{2}$. Therefore, for any test functions $f_{1}$ and $f_{2}$,
we have 
\begin{align}
\frac{\partial}{\partial m^{2}}G_{E}(f_{1},f_{2};m^{2}) & =\int_{\mathbb{R}^{D}}d^{D}y\delta^{D}(y,f_{1})\frac{\partial}{\partial m^{2}}G_{E}(y,f_{2};m^{2})\nonumber \\
 & =\int_{\mathbb{R}^{D}}d^{D}y\left[K_{y}G_{E}(y,f_{1};m^{2})\right]\frac{\partial}{\partial m^{2}}G_{E}(y,f_{2};m^{2})\nonumber \\
 & =\int_{\mathbb{R}^{D}}d^{D}yG_{E}(y,f_{1};m^{2})K_{y}\frac{\partial}{\partial m^{2}}G_{E}(y,f_{2};m^{2})\nonumber \\
 & =-\int_{\mathbb{R}^{D}}d^{D}yG_{E}(y,f_{1};m^{2})G_{E}(y,f_{2};m^{2}).\label{eq: G_E fe}
\end{align}
Here, in the third line, we integrated by parts twice, invoking the
fall-off behavior\footnote{We have restricted to the case $m^{2}>0$ here, but it is worth noting
that for $m^{2}=0$, the fall-off of $G_{E}$ is too slow to justify
the integration by parts.} of $G_{E}$ as $y\to\infty$. Equation (\ref{eq: G_E fe}) is just
the smeared form of eq.~(\ref{eq:G_E flow eq}). 
\end{proof}
As an immediate consequence of this lemma, we have 
\begin{thm}
\label{thm:vac NO flow rel gen n} The flow relation \eqref{eq:flow eq Eucl id coef}
holds for OPE coefficients $(C_{G}){}_{\phi\cdots\phi}^{I}(x_{1},\dots,x_{n})$
corresponding to Euclidean vacuum normal-ordered Wick fields. 
\end{thm}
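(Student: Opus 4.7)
The strategy is to differentiate the explicit formula \eqref{eq:Schwinger functions} for $(C_G)_{\phi\cdots\phi}^{I}$ term-by-term in $m^2$, apply Lemma \ref{lem:G_E flow eq} to each factor, and then recognize the resulting sum as $-\tfrac{1}{2}$ times the spacetime integral of $(C_G)_{\phi^{2}\phi\cdots\phi}^{I}(y,x_1,\dots,x_n)$. Smoothness of $G_E(x_1,x_2;m^2)$ in $m^2$ for $m^2>0$ justifies moving $\partial/\partial m^2$ inside the (finite) sum of pointwise products, and the fall-off of $G_E$ at infinity justifies the $y$-integrations.

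First I would dispose of the odd-$n$ case, where both sides of \eqref{eq:flow eq Eucl id coef} vanish by parity of the number of $\phi$-factors. For $n$ even, I would apply $\partial/\partial m^2$ to each pairing in \eqref{eq:Schwinger functions} and use the Leibniz rule, producing a sum over pairings $\pi$ together with a choice of distinguished pair $j\in\{1,\dots,n/2\}$. Applying Lemma \ref{lem:G_E flow eq} to the distinguished factor converts it into $-\int d^{D}y\,G_E(y,x_{\pi(2j-1)})G_E(y,x_{\pi(2j)})$, while the remaining $n/2-1$ factors are left unchanged. Pulling the $y$-integral to the front gives
\begin{equation}
\frac{\partial}{\partial m^{2}}(C_{G})_{\phi\cdots\phi}^{I}(x_{1},\dots,x_{n})=-\int_{\mathbb{R}^{D}}d^{D}y\sum_{\pi,j}G_{E}(y,x_{\pi(2j-1)})G_{E}(y,x_{\pi(2j)})\prod_{k\ne j}G_{E}(x_{\pi(2k-1)},x_{\pi(2k)}).
\end{equation}

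Next I would identify the integrand as $\tfrac{1}{2}(C_{G})_{\phi^{2}\phi\cdots\phi}^{I}(y,x_{1},\dots,x_{n})$ by setting up an explicit bijection with the perfect matchings appearing in the hafnian formula (Theorem \ref{thm: Had NO OPE coef} with $H\to G_{E}$, for the set $\mathcal{S}=\{(1,1),(1,2),(2,1),\dots,(n+1,1)\}$ corresponding to the two $\phi$-factors at $y$ and the $n$ factors at $x_1,\dots,x_n$). Matchings that pair $(1,1)$ with $(1,2)$ contribute zero by the definition \eqref{wmatdef} of the matrix $\mathcal{N}$; the remaining matchings are precisely those that pair $(1,1)$ and $(1,2)$ with two distinct labels $(a,1),(b,1)$ and match the remaining $n-2$ labels pairwise. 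Given a pairing $\pi$ of $\{1,\dots,n\}$ and a distinguished pair $j$, there are exactly two such matchings (swap the roles of $(1,1)$ and $(1,2)$), which accounts for the factor of $\tfrac{1}{2}$ in the flow relation.

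The main technical point to be careful about is that each term in the sum is absolutely integrable in $y$ on $\mathbb{R}^D$, which follows from the exponential decay of $G_E(y,x;m^2)$ as $|y|\to\infty$ for $m^{2}>0$ (this is the same hypothesis used in Lemma \ref{lem:G_E flow eq}); this also justifies interchanging the sum with the integral. I would then verify the simplest case $n=2$ as a sanity check: the three perfect matchings of $\{(1,1),(1,2),(2,1),(3,1)\}$ produce $(C_G)_{\phi^2\phi\phi}^{I}(y,x_1,x_2)=2G_E(y,x_1)G_E(y,x_2)$, whose integral times $-\tfrac{1}{2}$ equals $\partial_{m^2}G_E(x_1,x_2)=\partial_{m^2}(C_G)^I_{\phi\phi}(x_1,x_2)$, matching \eqref{eq:flow eq Eucl id coef}. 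The combinatorial bookkeeping is the only nontrivial step; no estimates beyond the fall-off already used in the Lemma are required.
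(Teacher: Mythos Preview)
Your proposal is correct and follows essentially the same approach as the paper: differentiate the explicit hafnian formula \eqref{eq:Schwinger functions} term-by-term, apply Lemma~\ref{lem:G_E flow eq} to each differentiated Green's function, and identify the result with $-\tfrac{1}{2}\int d^{D}y\,(C_{G})_{\phi^{2}\phi\cdots\phi}^{I}$. The paper's proof simply records the explicit formula \eqref{eq:G NO C^phi^2 phi ... phi _I} for $(C_{G})_{\phi^{2}\phi\cdots\phi}^{I}$ and says the identification follows ``by inspection''; your version spells out the underlying $2$-to-$1$ bijection between perfect matchings of $\{(1,1),(1,2),(2,1),\dots,(n+1,1)\}$ and pairs $(\pi,j)$, which is exactly what that inspection amounts to.
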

\begin{proof}
\label{rem:proof of vacuum Euclidean C^phi...phi_I flow eq}To obtain
the flow equation (\ref{eq:flow eq Eucl id coef}), we apply $\partial_{m^{2}}$
to eq.~(\ref{eq:Schwinger functions}), and use eq.~(\ref{eq:G_E flow eq})
together with the fact that $(C_{G})_{\phi^{2}\phi\cdots\phi}^{I}=0$
when $n$ is odd and, when $n$ is even, 
\begin{align}
(C_{G})_{\phi^{2}\phi\cdots\phi}^{I}(y,x_{1},\dots,x_{n}) & \equiv\sum_{\pi}2G_{E}(y,x_{\pi(1)})G_{E}(y,x_{\pi(2)})G_{E}(x_{\pi(3)},x_{\pi(4)})\cdots G_{E}(x_{\pi(n-1)},x_{\pi(n)}),\label{eq:G NO C^phi^2 phi ... phi _I}
\end{align}
where the $\pi$-sum runs over the same permutations as in (\ref{eq:Schwinger functions}).
Equation (\ref{eq:flow eq Eucl id coef}) then follows by inspection. 
\end{proof}

\subsection{Hadamard normal ordering with an infrared cutoff \label{subsec:Eucl Had NO fe}}

We turn now to the modifications to the Euclidean flow relations that
arise when we consider the OPE coefficients corresponding to a Hadamard
normal-ordering prescription using a Euclidean invariant Hadamard
parametrix, $H_{E}(x_{1},x_{2};m^{2})$, that varies smoothly with
$m^{2}$ for all $m^{2}$, including $m^{2}=0$. That is, $H_{E}$
is required to satisfy, 
\begin{equation}
(-\partial^{2}+m^{2})H_{E}(x_{1},x_{2};m^{2})=\delta^{(D)}(x_{1},x_{2})+h_{E}(x_{1},x_{2};m^{2}),\label{eq:H_E as a Green's function parametrix}
\end{equation}
where $h_{E}(x_{1},x_{2};m^{2})$ is smooth in all of its variables
and symmetric in $(x_{1},x_{2})$. Clearly, the choice of $H_{E}$
is not unique, but any two such parametrices must differ from each
other by addition of a smooth, Euclidean-invariant function, $w(x_{1},x_{2};m^{2})$.
If we now try to repeat the calculation of eq.~(\ref{eq: G_E fe})
to obtain a flow relation for $H_{E}$, we will pick up extra terms
involving $h$. In addition, $H_{E}$ will not, in general, vanish
as $|x_{1}-x_{2}|\to\infty$, so we will not be able to carry out
the integration by parts of the third line of eq.~(\ref{eq: G_E fe})
in the preceding subsection. We can deal with the latter problem in
the following manner by introducing a cutoff function $\chi(x_{1},x_{2})$.
We take $\chi$ to be Euclidean invariant by choosing it to be of
the form, 
\begin{equation}
\chi(x_{1},x_{2};L)=\zeta\left(L^{-2}\sigma(x_{1},x_{2})\right),\label{eq:Euclidean chi}
\end{equation}
where $\sigma(x_{1},x_{2})$ is the squared geodesic distance between
$x_{1}$ and $x_{2}$, $L$ is an arbitrary length scale, and $\zeta(s)$
is a smooth function that is equal to one for $|s|\leq1$ and vanishes
for $|s|\geq2$. Let $z$ be an arbitrary point in $\mathbb{R}^{D}$
and let $\mathcal{B}_{z}$ denote the ball of radius $L$ centered
about $z$. Then for $x_{1},x_{2}\in\mathcal{B}_{z}$, we have the
identity, 
\begin{equation}
\frac{\partial}{\partial m^{2}}H_{E}(x_{1},x_{2};m^{2})=\int_{\mathbb{R}^{D}}d^{D}y\chi(y,z;L)\delta^{D}(y,x_{1})\frac{\partial}{\partial m^{2}}H_{E}(y,x_{2};m^{2}).
\end{equation}
Starting with this equation, we can now carry out all the steps of
eq.~(\ref{eq: G_E fe}) including the integration by parts, although
we now pick up additional terms where derivatives act on $\chi$.
The final result is, 
\begin{align}
 & \frac{\partial}{\partial m^{2}}H_{E}(x_{1},x_{2};m^{2})=-\int_{\mathbb{R}^{D}}d^{D}y\,\chi(y;z;L)H_{E}(y,x_{1};m^{2})H_{E}(y,x_{2};m^{2})+\label{eq:flow relation w btms euclid}\\
 & +\int_{\mathbb{R}^{D}}d^{D}y\,\partial_{\mu}^{(y)}\chi(y;z;L)\left[\partial_{(y)}^{\mu}H_{E}(y,x_{1};m^{2})\frac{\partial}{\partial m^{2}}H_{E}(y,x_{2};m^{2})-H_{E}(y,x_{1};m^{2})\partial_{(y)}^{\mu}\frac{\partial}{\partial m^{2}}H_{E}(y,x_{2};m^{2})\right]+\nonumber \\
 & +\int_{\mathbb{R}^{D}}d^{D}y\,\chi(y;z;L)\left[H_{E}(y,x_{1};m^{2})\frac{\partial}{\partial m^{2}}h_{E}(y,x_{2};m^{2})-h_{E}(y,x_{1};m^{2})\frac{\partial}{\partial m^{2}}H_{E}(y,x_{2};m^{2})\right].\nonumber 
\end{align}
The first term on the right side corresponds to the final line of
eq.~(\ref{eq: G_E fe}). The second line contains the terms where
derivatives from the integration by parts act on $\chi$, and the
third line contains the terms arising from the fact that $H_{E}$
is a parametrix rather than a Green's function.

Equation (\ref{eq:flow relation w btms euclid}) is unsatisfactory
as a flow equation since the second and third lines on the right side
contain the unknown quantities $\partial_{m^{2}}H_{E}$ and $\partial_{m^{2}}h_{E}$.
Nevertheless, the second and third lines must be smooth in $(x_{1},x_{2};m^{2})$
for $x_{1},x_{2}\in{\mathcal{B}}_{z}$. To see this, we note that
$H_{E}(y,x)$ can be singular only when $y=x$. However, $\partial_{\mu}^{(y)}\chi(y,z)$
is nonvanishing only for $y\notin{\mathcal{B}}_{z}$, so the second
line is smooth for $x_{1},x_{2}\in{\mathcal{B}}_{z}$. Since $h$
is smooth in all of its variables and $\chi$ is of compact support
in $y$, the third line must be smooth for all $x_{1},x_{2}$. Recall
from its definition (\ref{eq:H_E as a Green's function parametrix})
that $H_{E}$ is only uniquely determined up to the addition of a
smooth function, so we have some freedom to redefine $H_{E}$. Thus,
a possible way of dealing with the problematic terms in the second
and third lines of eq.~(\ref{eq:flow relation w btms euclid}) would
to simply drop these terms from the flow relations, leaving only the
first line\footnote{Since only the asymptotic behavior of the parametrix is relevant for
the OPE coefficients, we have replaced the equality symbol in (\ref{eq:flow relation w btms euclid})
with the weaker relation ``$\sim_{\delta}$'' which implies both
sides are asymptotically equivalent to an arbitrary scaling degree
$\delta$. }, 
\begin{equation}
\frac{\partial}{\partial m^{2}}H_{E}(x_{1},x_{2};m^{2})\sim_{\delta}-\int_{\mathbb{R}^{D}}d^{D}y\,\chi(y;z;L)H_{E}(y,x_{1};m^{2})H_{E}(y,x_{2};m^{2}).\label{eq:1st line of cutoff Euclidean flow eq}
\end{equation}
In other words, one might attempt to use the freedom in the choice
of $H_{E}$ to work with flow relation (\ref{eq:1st line of cutoff Euclidean flow eq})
rather than (\ref{eq:flow relation w btms euclid}). Indeed, this
is a simple analog of the procedure used in \cite[Section V]{HH_Associativity}
to deal with the infrared difficulties in their Euclidean flow relations
for $\lambda\phi^{4}$-theory. Note that since $(C_{H}){}_{\phi\phi}^{I}=H_{E}$
and $(C_{H}){}_{\phi^{2}\phi\phi}^{I}(y,x_{1},x_{2};z)=2H_{E}(y,x_{1})H_{E}(y,x_{2})$,
we see that (\ref{eq:1st line of cutoff Euclidean flow eq}) is equivalent
to relation (\ref{eq:naive flow eq for Wick id coef}) discussed in
the introduction for the special case of $n=2$, with $\chi$ is chosen
to be a sharp step function instead of a smooth function.

However, for Klein-Gordon theory, the flow relation (\ref{eq:naive flow eq for Wick id coef})
would give rise to OPE coefficients that are incompatible with the
scaling axiom W7. Namely, in order to satisfy this axiom, $H_{E}$
must have scaling behavior given by eq.~(\ref{eq:scaling H'}) under
the simultaneous rescaling $(\delta_{ab},m^{2})\to(\lambda^{-2}\delta_{ab},\lambda^{2}m^{2})$.
Here we are working in a fixed global inertial coordinate system defined
with respect to metric $\delta_{ab}$. Hence, with the coordinate
basis held fixed, rescaling the metric $\delta_{ab}\to\lambda^{-2}\delta_{ab}$
is equivalent to rescaling the metric coordinate components as $\delta_{\mu\nu}\to\lambda^{-2}\delta_{\mu\nu}$
and the volume element as $d^{D}y\to\lambda^{-D}d^{D}y$. However,
under the rescaling 
\begin{equation}
(\delta_{\mu\nu},d^{D}y,m^{2})\to(\lambda^{-2}\delta_{\mu\nu},\lambda^{-D}d^{D}y,\lambda^{2}m^{2}),\label{eq:rescaling of Eucl metric volume element and squ mass}
\end{equation}
we find the quantity 
\begin{equation}
\Omega(x_{1},x_{2};z;m^{2};L)\equiv-\int_{\mathbb{R}^{D}}d^{D}y\,\chi(y;z;L)H_{E}(y,x_{1};m^{2})H_{E}(y,x_{2};m^{2}),\label{eq:Omega E}
\end{equation}
appearing on the right side of (\ref{eq:naive flow eq for Wick id coef})
does not scale almost homogeneously for any fixed power of $\lambda$
on account of the fact that\textemdash due to the presence of the
length scale $L$\textemdash $\chi$ scales as 
\begin{equation}
\chi[\lambda^{-2}\delta_{\mu\nu}](x_{1},x_{2};L)=\zeta\left((\lambda L)^{-2}\sigma[\delta_{\mu\nu}](x_{1},x_{2})\right).\label{eq:rescaling of chi w/rt metric}
\end{equation}
rather than homogeneously. It follows that eq.(\ref{eq:naive flow eq for Wick id coef})
is incompatible with the scaling behavior (\ref{eq:scaling H'}) of
$H_{E}$, and any prescription for defining Wick monomials based on
its solutions would fail to satisfy axiom W7.

Although $\Omega$, as defined in (\ref{eq:Omega E}), does not scale
almost homogeneously under \eqref{eq:rescaling of Eucl metric volume element and squ mass},
it \emph{does} transform almost homogeneously with an overall factor
of $\lambda^{D-4}$ under the simultaneous rescaling 
\begin{equation}
(\delta_{\mu\nu},d^{D}y,m^{2},L)\to(\lambda^{-2}\delta_{\mu\nu},\lambda^{-D}d^{D}y,\lambda^{2}m^{2},\lambda^{-1}L),\label{eq:rescaling of Eucl metric volume element squ mass and L}
\end{equation}
since 
\begin{equation}
\chi[\lambda^{-2}\delta_{\mu\nu}](x_{1},x_{2};\lambda^{-1}L)=\chi[\delta_{\mu\nu}](x_{1},x_{2};L).\label{eq:chi scaling under metric and L}
\end{equation}
It follows that we will obtain a satisfactory flow relation if we
can replace the flow relation (\ref{eq:1st line of cutoff Euclidean flow eq}),
i.e., 
\begin{equation}
\frac{\partial}{\partial m^{2}}H_{E}(x_{1},x_{2};m^{2})\sim_{\delta}\Omega(x_{1},x_{2};z;m^{2};L)\label{flowOmega}
\end{equation}
with the modified flow relation 
\begin{equation}
\frac{\partial}{\partial m^{2}}H_{E}(x_{1},x_{2};m^{2})\sim_{\delta}\widetilde{\Omega}_{\delta}(x_{1},x_{2};z;m^{2};L).\label{eq:Wick-compatible H_E f-eq}
\end{equation}
where $\widetilde{\Omega}_{\delta}(x_{1},x_{2};z;m^{2};L)$ satisfies
the following two properties: 
\begin{enumerate}
\item $\widetilde{\Omega}_{\delta}$ is an Euclidean-invariant distribution,
symmetric in $(x_{1},x_{2})$, and depending smoothly on $m^{2}$
such that for any $(x_{1},x_{2})\in\mathcal{B}_{z}$, the distribution
$\widetilde{\Omega}_{\delta}$ differs from $\Omega$ by at most a
smooth function in $(x_{1},x_{2})$ which scales almost homogeneously
under \eqref{eq:rescaling of Eucl metric volume element squ mass and L}
with an overall factor of $\lambda^{(D-4)}$. 
\item To scaling degree $\delta$, 
\begin{equation}
\frac{\partial}{\partial L}\widetilde{\Omega}_{\delta}(x_{1},x_{2};z;m^{2};L)\sim_{\delta}0.\label{eq:prop 2 Omega tilde}
\end{equation}
\end{enumerate}
Given the previously-described scaling behavior of $\Omega$, the
first property implies $\widetilde{\Omega}_{\delta}$ is required
to scale almost homogeneously under \eqref{eq:rescaling of Eucl metric volume element squ mass and L}
with an overall factor of $\lambda^{(D-4)}$. However, since the second
property requires $\widetilde{\Omega}_{\delta}$ to be independent
of $L$ up to asymptotic degree $\delta$, it follows immediately
that $\widetilde{\Omega}_{\delta}$ must, in fact, scale almost homogeneously
under just \eqref{eq:rescaling of Eucl metric volume element and squ mass}
up to asymptotic degree $\delta$. Note we make no demand that $\widetilde{\Omega}_{\delta}$
be independent of $L$ at asymptotic orders higher than the chosen
$\delta$ since this is not relevant to the OPE coefficients. Together,
the two properties above therefore formalize the notion that $\widetilde{\Omega}_{\delta}$
must contain the same $L$-independent distributional behavior in
$(x_{1},x_{2})$ as $\Omega$ and simultaneously scale almost homogeneously
with respect to \eqref{eq:rescaling of Eucl metric volume element and squ mass}
up to any arbitrary, but fixed, asymptotic degree. In odd spacetime
dimensions, any $\widetilde{\Omega}_{\delta}$ satisfying the two
properties described below eq.~\eqref{eq:Wick-compatible H_E f-eq}
is necessarily unique up to scaling degree $\delta$. In even spacetime
dimensions, any two $\widetilde{\Omega}_{\delta}$ satisfying the
described properties may differ, to asymptotic degree $\delta$, by
only a smooth function of the form, $m^{(D-4)}f(m^{2}\sigma(x_{1},x_{2}))$.
Our task is now to find $\widetilde{\Omega}_{\delta}(x_{1},x_{2};z;m^{2};L)$
satisfying the above two properties.

Since $L$ enters $\Omega$ only through the cutoff function $\chi$,
it follows that, 
\begin{equation}
\frac{\partial}{\partial L}\Omega(x_{1},x_{2};z;m^{2};L)=-\int_{\mathbb{R}^{D}}d^{D}y\,\frac{\partial}{\partial L}\chi(y;z;L)H_{E}(y,x_{1};m^{2})H_{E}(y,x_{2};m^{2}).\label{eq:L-variation of Omega}
\end{equation}
From the definition of the cutoff function (\ref{eq:Euclidean chi}),
we observe $\partial_{L}\chi(y;z;L)=0$ for any $y\in\mathcal{B}_{z}$.
However, since $H_{E}(y,x)$ is singular only when $y=x$, it follows
immediately that (\ref{eq:L-variation of Omega}) is, in fact, a smooth
function of $(x_{1},x_{2})$ in the neighborhood $\mathcal{B}_{z}$
containing $z$. If the $L$-dependence of the smooth function of
$(x_{1},x_{2})$ appearing on the right side of eq.(\ref{eq:L-variation of Omega})
were integrable in $L$ on the interval $[0,L]$, we could obtain
the desired $\widetilde{\Omega}_{\delta}$ by simply subtracting $\int_{0}^{L}$
of the right side of eq.(\ref{eq:L-variation of Omega}) from $\Omega$.
However, the right side of eq.(\ref{eq:L-variation of Omega}) is
not integrable in $L$ on the interval $[0,L]$. Nevertheless, the
singular behavior in $L$ of the right side of eq.(\ref{eq:L-variation of Omega})
can be characterized as follows.

The quantity $\Omega$ scales almost homogeneously with an overall
factor of $\lambda^{D-4}$ under the simultaneous rescaling \eqref{eq:rescaling of Eucl metric volume element squ mass and L}.
It follows that the quantities 
\begin{equation}
\left[\partial_{\gamma_{1}}^{(x_{1})}\partial_{\gamma_{2}}^{(x_{2})}\frac{\partial}{\partial L}\Omega(x_{1},x_{2};z;L)\right]_{x_{1},x_{2}=z},\label{eq:Taylor coef d/dL Omega}
\end{equation}
appearing in the Taylor expansion of $\partial\Omega/\partial L$
scale almost homogeneously with an overall factor of $\lambda^{(D-3+\Gamma)}$,
with $\Gamma\equiv|\gamma_{1}|+|\gamma_{2}|$, under $(m^{2},L)\to(\lambda^{2}m^{2},\lambda^{-1}L)$
with the metric components and volume element held fixed. The smoothness
of $\Omega$ in $m^{2}$ then implies that any divergent dependence
of (\ref{eq:Taylor coef d/dL Omega}) on $L$ (as $L\to0^{+}$) is
expressible as a finite linear combination of terms of the form, 
\begin{equation}
L^{-\Delta}\log^{N}L,\qquad\text{for integer \ensuremath{\Delta\le(D-3+\Gamma)} and \ensuremath{N\in\mathbb{N}_{0}}.}\label{eq:non-int L terms}
\end{equation}
For $\Delta\ge1$, this gives rise to non-integrable divergences in
$L$ in any neighborhood of $L=0$. Our procedure for eliminating
these non-integrable terms is to apply a differential operator $\mathfrak{L}[L]$
to $\Omega$ that annihilates these terms but leaves the $L$-independent
parts of $\Omega$ untouched. The remaining $L$ dependence can then
be eliminated by following the strategy indicated in the previous
paragraph.

The desired operator $\mathfrak{L}[L]$ is constructed as follows.
Define first the family of differential operators, 
\begin{equation}
\mathcal{L}_{\Delta}^{N}[L]\equiv\left(1+\Delta^{-1}L\frac{\partial}{\partial L}\right)^{N}=\Delta^{-1}L^{-\Delta}\frac{\partial^{N}}{\partial(\log L)^{N}}L^{\Delta},\qquad\Delta\ne0,N\in\mathbb{N}.\label{eq:L_Delta Delta ne 0}
\end{equation}
These operators are designed so as to act trivially on $L$-independent terms and annihilate terms of the form \eqref{eq:non-int L terms} with the same $\Delta$-value and lower $N$-values.  When acting on a term of the form \eqref{eq:non-int L terms} with a different $\Delta$-value, the operator $\mathcal{L}_{\Delta}^{N}$ leaves the leading $L$-behavior unchanged in the sense of eq.~\eqref{eq:script L acting on L terms} below. We define also the operators,
\begin{equation}
\mathcal{L}_{0}^{N}[L]\equiv\prod_{k=1}^{N}\left(1-k^{-1}L\log L\frac{\partial}{\partial L}\right)=\prod_{k=1}^{N}\left(1-k^{-1}\log L\frac{\partial}{\partial(\log L)}\right),\label{eq:L_0}
\end{equation}
whose definition is unambiguous because the commutator between every
$k$-factor vanishes.  The $\mathcal{L}_0^N$ operator is designed to annihilate terms of the form \eqref{eq:non-int L terms} with $\Delta =0$ and lower $N$-values.  When $\mathcal{L}_0^N$ acts on a term of the form \eqref{eq:non-int L terms} with $\Delta\ne0$, it produces terms of the same form and $\Delta$-value (but generally increases the $N$-value). We define $\mathfrak{L}[L]$ by\footnote{The product over $\mathcal{L}_{\Delta}$-operators with different
$\Delta$-values is needed to account for the dependence of the Wick
OPE coefficients on the dimensionful parameter $m^{2}$. In a theory
without dimensionful parameters, we could eliminate the $L$-dependence
of a flow relation by simply using the operator $\mathfrak{L}=\mathcal{L}_{\Delta}$
with $\Delta$ corresponding to the conformal scaling dimension of
the flow relation. (Note that $\Delta$ would generally depend on
the renormalized coupling parameter in an interacting theory.)} 
\begin{equation}
\mathfrak{L}\equiv\mathcal{L}_{0}^{N}\prod_{\Delta=1}^{D-4+\delta}\mathcal{L}_{\Delta}^{N},\label{eq:full L op}
\end{equation}
for any $N>2$. Note that $[\mathcal{L}_{\Delta}^{N},\mathcal{L}_{\Delta'}^{N'}]=0$
for any $\Delta,\Delta'\ne0$, so the order of composition between
these operators does not matter. However, $[\mathcal{L}_{0}^{N},\mathcal{L}_{\Delta\ne0}^{N'}]\ne0$,
so the order of composition for $\mathcal{L}_{0}^{N}$ relative to
the other operators $\mathcal{L}_{\Delta\ne0}^{N}$ \emph{does} matter.
Note that (\ref{eq:full L op}) scales almost homogeneously with an
overall factor of $\lambda^{0}$ under $L\to\lambda^{-1}L$ since
it is composed of operators (\ref{eq:L_Delta Delta ne 0}) and (\ref{eq:L_0})
with this property. By expanding out the product of operators in (\ref{eq:full L op}),
note also that $\mathfrak{L}[L]$ can be rewritten in the general
form, 
\begin{equation}
\mathfrak{L}[L]=1+\sum_{\Delta=1}^{N(D-3+\delta)}\sum_{n=0}^{N}c_{n,\Delta}L^{\Delta}\log^{n}L\frac{\partial^{\Delta}}{\partial L^{\Delta}},\label{eq:full L op expanded out}
\end{equation}
where $c_{n,\Delta}$ are $L$-independent numerical coefficients.
Hence, 
\begin{align}
\mathfrak{L}[L]\Omega(x_{1},x_{2};z;L)-\Omega(x_{1},x_{2};z;L) & =\sum_{\Delta=1}^{N(D-3+\delta)}\sum_{n=0}^{N}c_{n,\Delta}L^{\Delta}\log^{n}L\frac{\partial^{\Delta}}{\partial L^{\Delta}}\Omega(x_{1},x_{2};z;L).\label{eq:frakL Omega minus Omega-1}
\end{align}
Recalling $\partial_{L}\Omega$ is smooth in $(x_{1},x_{2},m^{2})$
in a neighborhood containing $z\in\mathbb{R}^{D}$, it follows the
right-hand side of (\ref{eq:frakL Omega minus Omega-1}) is also smooth
in $(x_{1},x_{2};m^{2})$ because every term involves at least one
$L$-derivative of $\Omega$. Every term on the right-hand side of
(\ref{eq:frakL Omega minus Omega-1}) clearly scales almost homogeneously
with an overall factor of $\lambda^{(D-4)}$.

We now define $\widetilde{\Omega}_{\delta}$ by 
\begin{equation}
\widetilde{\Omega}_{\delta}(x_{1},x_{2};z;L)\equiv\mathfrak{L}[L]\Omega(x_{1},x_{2};z;L)-\sum_{|\gamma_{1}|+|\gamma_{2}|\le\delta}\frac{1}{\gamma_{1}!\gamma_{2}!}b_{\gamma_{1}\gamma_{2}}^{H}(L)(x_{1}-z)^{\gamma_{1}}(x_{2}-z)^{\gamma_{2}},\label{eq:Omega E tilde}
\end{equation}
where 
\begin{equation}
b_{\gamma_{1}\gamma_{2}}^{H}(L)\equiv b_{\gamma_{1}\gamma_{2}}^{0}+\int_{0}^{L}dL'\left[\partial_{\gamma_{1}}^{(x_{1})}\partial_{\gamma_{2}}^{(x_{2})}\frac{\partial}{\partial L'}\left(\mathfrak{L}[L']\Omega(x_{1},x_{2};z;L')\right)\right]_{x_{1},x_{2}=z},\label{eq:b_gamma1gamma2}
\end{equation}
where $b_{\gamma_{1}\gamma_{2}}^{0}$ corresponds to the inherent
ambiguities in our prescription discussed underneath eq.~\eqref{eq:b_C(L)}
below. That $b_{\gamma_{1}\gamma_{2}}^{H}(L)$ is well defined is
a consequence of the following proposition: 
\begin{prop}
\label{prop: integrability at L=0}For any $N>2$, the $L$-dependent
(Euclidean-covariant) tensors, 
\begin{equation}
\left[\partial_{\gamma_{1}}^{(x_{1})}\partial_{\gamma_{2}}^{(x_{2})}\frac{\partial}{\partial L}\left(\mathfrak{L}_{\delta}[L]\Omega(x_{1},x_{2};z;L)\right)\right]_{x_{1},x_{2}=z},\label{eq:Taylor coef of d/dL frakL Omega}
\end{equation}
are integrable in $L$ on a finite interval containing $L=0$ for
any $|\gamma_{1}|+|\gamma_{2}|\le\delta$ 
\end{prop}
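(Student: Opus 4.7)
The strategy is to combine the almost-homogeneous scaling of $\Omega$ with its smoothness in $m^{2}$ to extract the precise small-$L$ asymptotic structure of $T_\Gamma(L,m^2) \equiv [\partial_{\gamma_1}^{(x_1)}\partial_{\gamma_2}^{(x_2)}\Omega(x_1,x_2;z;L;m^2)]_{x_1=x_2=z}$, and then to check that the operator $\mathfrak{L}[L]$ of eq.~\eqref{eq:full L op} is tailored to annihilate precisely those contributions whose $L$-derivative fails to be locally integrable at $L=0$. I will proceed in three steps: (i) expand $T_\Gamma$ in $m^2$ at $m^2 = 0$ to obtain an asymptotic $L \to 0^{+}$ expansion; (ii) verify term-by-term that $\mathfrak{L}[L]$ annihilates the non-integrable contributions; and (iii) conclude that the $L$-derivative of what survives is integrable on any finite interval containing $L = 0$.

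\paragraph*{Execution.}
For Step (i), the scaling $\Omega \to \lambda^{D-4}\Omega$ under \eqref{eq:rescaling of Eucl metric volume element squ mass and L} implies that $T_\Gamma(L, m^{2})$ is almost-homogeneous of degree $D-4+\Gamma$ in $(L, m^{2})$ under $(m^2, L) \to (\lambda^2 m^2, \lambda^{-1}L)$ with the metric fixed, where $\Gamma \equiv |\gamma_{1}| + |\gamma_{2}|$. A finite-order Taylor expansion in $m^2$ at $m^2 = 0$ then yields
\[
T_\Gamma(L, m^{2}) = \sum_{k=0}^{K}(m^{2})^{k}\,L^{-(D-4+\Gamma-2k)}P_{\Gamma,k}(\log L) + R_{\Gamma,K}(L, m^{2}),
\]
where each $P_{\Gamma,k}$ is a polynomial in $\log L$ whose degree is bounded by the order of ``almost'' in the almost-homogeneity (at most $2$ for Klein--Gordon, since each Hadamard parametrix $H_E$ contributes at most one factor of $\log\sigma$ and $\Omega$ is quadratic in $H_E$), and $R_{\Gamma,K}$ is smooth in $(L, m^{2})$ up to and including $L = 0$ once $K$ is chosen large enough. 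For Step (ii), direct computation gives $\mathcal{L}_{\Delta}^{N}[L^{-\Delta'}\log^{n}L] = (1-\Delta'/\Delta)^{N}L^{-\Delta'}\log^{n}L + (\text{strictly lower }\log\text{-powers})$ for $\Delta' \ne \Delta$, and vanishes for $\Delta' = \Delta$ whenever $n < N$; while $\mathcal{L}_{0}^{N}[\log^{n}L] = \prod_{k=1}^{N}(1 - n/k)\log^{n}L$, which vanishes for $1 \le n \le N$ and equals $1$ at $n = 0$. Since the $\mathcal{L}_{\Delta\ne 0}^{N}$ commute pairwise, $\prod_{\Delta=1}^{D-4+\delta}\mathcal{L}_{\Delta}^{N}$ annihilates every $L^{-\Delta'}\log^{n}L$ contribution to $T_\Gamma$ with $\Delta' \in \{1,\dots,D-4+\delta\}$ and $n < N$, which covers all divergent exponents $\Delta_{k} = D-4+\Gamma-2k$ since $\Gamma \le \delta$ and $k \ge 0$. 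The subsequent application of $\mathcal{L}_{0}^{N}$ eliminates the remaining pure-$\log$ pieces from the $\Delta_{k}=0$ sector. Any $N > 2$ suffices because the log-degrees of the $P_{\Gamma,k}$ are bounded by $2$ in the Klein--Gordon setting.

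\paragraph*{Conclusion and main obstacle.}
For Step (iii), what survives after applying $\mathfrak{L}[L]$ to $T_\Gamma$ is a sum of $L$-independent constants, positive-power terms $L^{m}\cdot(\text{polynomial in }\log L)$ with $m \ge 1$ (from the $\Delta_k < 0$ sector, possibly with log-degrees enhanced by the $\mathcal{L}$ operators), and the smooth remainder $\mathfrak{L}[L]R_{\Gamma,K}$. Differentiating in $L$ kills the constants and sends each $L^{m}\log^{n'}L$ term into a finite linear combination of $L^{m-1}\log^{j}L$ with $j \le n'$, each of which is integrable on $[0, L_{0}]$ because $\int_{0}^{L_{0}}L^{m-1}|\log L|^{j}\,dL < \infty$ for $m \ge 1$; the derivative of the smooth remainder is smooth. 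This establishes the claimed integrability. The main technical hurdle is securing the uniform log-degree bound on the $P_{\Gamma,k}$ across all $\Gamma \le \delta$ and $k \le K$, which is what allows a single choice of $N > 2$ to suffice uniformly for all Taylor coefficients appearing in \eqref{eq:Omega E tilde}; this bound relies on the explicit linear appearance of $\log\sigma$ in the Hadamard parametrix and a careful tracking of how the $\log$ structure propagates through the spacetime integration defining $\Omega$ against the cutoff $\chi(y,z;L)$ and through iterated differentiation in $x_{1}$ and $x_{2}$.
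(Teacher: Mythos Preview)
There is a genuine gap at the very start of your argument: the quantity you call $T_\Gamma(L,m^2) \equiv [\partial_{\gamma_1}^{(x_1)}\partial_{\gamma_2}^{(x_2)}\Omega(x_1,x_2;z;L;m^2)]_{x_1=x_2=z}$ is not well defined, because $\Omega$ is \emph{not} smooth in $(x_1,x_2)$. Indeed, $\Omega(x_1,x_2;z;L) = -\int d^{D}y\,\chi(y,z;L)\,H_E(y,x_1)H_E(y,x_2)$ has the same coincidence singularity as $\partial_{m^2}H_E$ (this is the content of eq.~\eqref{eq:flow relation w btms euclid}). Concretely, at $x_1=x_2=z$ the integrand behaves like $|y-z|^{-2(D-2)}$ near $y=z$, which is non-integrable already for $D\ge4$; and even in lower dimensions, sufficiently many $x$-derivatives produce non-integrable singularities. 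Your subsequent Taylor-in-$m^2$ expansion of $T_\Gamma$ and the clean pole-plus-log structure you write down for it therefore rest on an object that does not exist.

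The paper avoids this by commuting $\partial_L$ through $\mathfrak{L}[L]$ \emph{before} taking the coincidence limit. Using $\partial_L\mathcal{L}_\Delta^N \propto \mathcal{L}_{\Delta+1}^N\partial_L$ and the analogous relation for $\mathcal{L}_0^N$, one obtains $\partial_L\mathfrak{L}\propto\log^N(L)\bigl(\prod_{\Delta=1}^{D-3+\delta}\mathcal{L}_\Delta^N\bigr)\partial_L$, so that the expression \eqref{eq:Taylor coef of d/dL frakL Omega} becomes a product of $\mathcal{L}$-operators acting on the Taylor coefficients of $\partial_L\Omega$ at coincidence. The point is that $\partial_L\Omega$ \emph{is} smooth in $(x_1,x_2)$ for $x_1,x_2\in\mathcal{B}_z$, since the $L$-dependence enters only through $\chi$ and $\partial_L\chi$ vanishes in a neighborhood of $y=z$ (see eq.~\eqref{eq:L-variation of Omega}); hence those Taylor coefficients exist and carry the scaling and log structure you correctly identify (the scaling degree is now $D-3+\Gamma$ rather than $D-4+\Gamma$, which is exactly compensated by the shift $\Delta\to\Delta+1$ in the operator product). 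Your analysis of how the $\mathcal{L}_\Delta^N$ annihilate the non-integrable $L^{-\Delta}\log^n L$ terms for $n<N$, and your justification that $N>2$ suffices because $\Omega$ is quadratic in $H_E$, are both correct and match the paper's reasoning; they simply need to be applied to the Taylor coefficients of $\partial_L\Omega$ rather than to the ill-defined $T_\Gamma$.
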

\begin{proof}
It is useful to first commute $\partial_{L}$ past the operator $\mathfrak{L}_{\delta}[L]$
in (\ref{eq:Taylor coef of d/dL frakL Omega}). To do this, we note,
\[
\partial_{L}\mathcal{L}_{\Delta}^{N}=\begin{cases}
\Delta^{-N}(\Delta+1)^{N}\mathcal{L}_{(\Delta+1)}^{N}\partial_{L} & \Delta>0\\
(-1)^{N}(N!)^{-1}\log^{N}(L)\mathcal{L}_{1}^{N}\partial_{L} & \Delta=0
\end{cases},
\]
and, therefore, 
\begin{equation}
\partial_{L}\mathfrak{L}\propto\log^{N}(L)\left(\prod_{\Delta=1}^{D-3+\delta}\mathcal{L}_{\Delta}^{N}\right)\partial_{L}.\label{eq:d_L frakL}
\end{equation}
Plugging this back into (\ref{eq:Taylor coef of d/dL frakL Omega})
and noting the smoothness of $\partial_{L}\Omega$ in $(x_{1},x_{2})$,
we obtain, 
\begin{equation}
\text{formula }\eqref{eq:Taylor coef of d/dL frakL Omega}\propto\log^{N}(L)\left(\prod_{\Delta=1}^{D-3+\delta}\mathcal{L}_{\Delta}^{N}[L]\right)\left[\partial_{\gamma_{1}}^{(x_{1})}\partial_{\gamma_{2}}^{(x_{2})}\frac{\partial}{\partial L}\Omega(x_{1},x_{2};z;L)\right]_{x_{1},x_{2}=z}.\label{eq:L op acting on Taylor coef of d/dL Omega}
\end{equation}
Noting that 
\[
\mathcal{L}_{\Delta'}^{N'}(L^{-\Delta}\log^{N}L)=\begin{cases}
0 & \Delta'=\Delta,N'>N\\
\mathcal{O}(L^{-\Delta}\log^{N}L) & \Delta'\ne\Delta,N'>0
\end{cases}, \label{eq:script L acting on L terms}
\]
we see that, for any $N'>N$ and all $\Gamma\le\delta$, all non-integrable
terms of the form (\ref{eq:non-int L terms}) are annihilated by the
string of operators $\mathcal{L}_{1}^{N'}\mathcal{L}_{2}^{N'}\cdots\mathcal{L}_{(D-3+\delta)}^{N'}$
appearing in (\ref{eq:Taylor coef d/dL Omega}), and that any integrable
terms of the form (\ref{eq:non-int L terms}) remain integrable after
application of $\mathcal{L}_{1}^{N'}\mathcal{L}_{2}^{N'}\cdots\mathcal{L}_{(D-3+\delta)}^{N'}$.
Noting that $H_{E}$ contains at most one power of the logarithm and
$\Omega$ depends quadratically on $H_{E}$, we conclude the right-hand
side of (\ref{eq:L op acting on Taylor coef of d/dL Omega}) and,
thus, the Taylor coefficients (\ref{eq:Taylor coef of d/dL frakL Omega})
must be integrable on an interval containing $L=0$ for any $N>2$
as we desired to show. 
\end{proof}
Note the translational symmetry of $\Omega$ implies $b_{\gamma_{1}\gamma_{2}}^{H}$
are independent of $z$ and rotational symmetry of $\Omega$ implies
$b_{\gamma_{1}\gamma_{2}}^{H}$ are composed of products of the Euclidean
metric\footnote{Note that $\Omega$ is invariant under the full orthogonal group including
improper rotations. Although the Levi-Civita symbols $\epsilon_{\mu_{1}\cdots\mu_{n}}$
are invariant under proper rotations, the Euclidean metric is the
only tensor invariant under all $R\in O(D)$. } and, thus, vanish unless $|\gamma_{1}|,|\gamma_{2}|$ are even. Recalling
the definition (\ref{eq:Omega E}) of $\Omega$ and the fact that
$(C_{H}){}_{\phi\phi}^{(\partial_{\gamma_{1}}\phi\partial_{\gamma_{2}}\phi)}(x_{1},x_{2};z)=(x_{1}-z)^{\gamma_{1}}(x_{2}-z)^{\gamma_{2}}/(\gamma_{1}!\gamma_{2}!)$,
we note that the $\widetilde{\Omega}_{\delta}$ defined in (\ref{eq:Omega E tilde})
is identical to the right-hand side of the $L$-independent flow relation
(\ref{eq:L-indep Euclidean fe}) claimed in the introduction for the
special case that $n=2$. i.e., $b_{C}^{H}$ in formula (\ref{eq:L-indep Euclidean fe})
is given explicitly by (\ref{eq:b_gamma1gamma2}) for $[C]_{\phi}=2$
and vanishes otherwise.

The required $L$-independence (\ref{eq:prop 2 Omega tilde}) of $\widetilde{\Omega}_{\delta}$
is verified by differentiating (\ref{eq:Omega E tilde}), 
\begin{align}
 & \frac{\partial}{\partial L}\widetilde{\Omega}_{\delta}(x_{1},x_{2};z;L)\nonumber \\
 & =\frac{\partial}{\partial L}\left[\mathfrak{L}[L]\Omega(x_{1},x_{2};z;L)\right]-\sum_{|\gamma_{1}|+|\gamma_{2}|\le\delta}\frac{1}{\gamma_{1}!\gamma_{2}!}\frac{\partial}{\partial L}b_{\gamma_{1}\gamma_{2}}^{H}(L)(x_{1}-z)^{\gamma_{1}}(x_{2}-z)^{\gamma_{2}}\\
 & \sim_{\delta}\sum_{|\gamma_{1}|+|\gamma_{2}|\le\delta}\frac{1}{\gamma_{1}!\gamma_{2}!}(x_{1}-z)^{\gamma_{1}}(x_{2}-z)^{\gamma_{2}}\left[\left[\partial_{\gamma_{1}}^{(x_{1})}\partial_{\gamma_{2}}^{(x_{2})}\frac{\partial}{\partial L}\left(\mathfrak{L}[L]\Omega(x_{1},x_{2};z;L)\right)\right]_{x_{1},x_{2}=z}-\frac{\partial}{\partial L}b_{\gamma_{1}\gamma_{2}}^{H}(L)\right]\nonumber \\
 & \sim_{\delta}0,\nonumber 
\end{align}
where in going to the third line we have used the smoothness of $\partial_{L}(\mathfrak{L}\Omega)$
in $(x_{1},x_{2})$ to Taylor expand the first term in the second
line around $x_{1},x_{2}=z$. The final line then follows from the
definition (\ref{eq:b_gamma1gamma2}) of $b_{\gamma_{1}\gamma_{2}}^{H}$
and the fundamental theorem of calculus. Thus, our construction (\ref{eq:Omega E tilde})
of $\widetilde{\Omega}_{\delta}$ complies with the required properties. 
\begin{rem}
For any cutoff function $\chi$ of the form (\ref{eq:Euclidean chi}),
it follows from (\ref{eq:full L op expanded out}) that $\mathfrak{L}\chi$
is also a cutoff function of the form (\ref{eq:Euclidean chi}). That
is, $\mathfrak{L}$ acts as a map from the set of all cutoff functions
to the subset of cutoff functions such that, to scaling degree $\delta$,
the asymptotic expansion of $\partial_{L}(\Omega[\mathfrak{L}\chi])$
diverges, at worst, logarithmically as $L\to0^{+}$. 
\end{rem}
It is worth noting that, using the formulas for the Hadamard-normal
ordered coefficients 
\begin{align}
(C_{H})_{\phi^{2}\phi\phi}^{I}(y,x_{1},x_{2};z) & =2H_{E}(y,x_{1})H_{E}(y,x_{2})\label{eq:C_phi^2 phi phi ^I}\\
(C_{H}){}_{\phi^{2}(\partial_{\gamma_{1}}\phi\cdots\partial_{\gamma_{k}}\phi)}^{I}(y,z;z) & =\begin{cases}
2\partial_{\gamma_{1}}^{(z)}H_{E}(y,z)\partial_{\gamma_{2}}^{(z)}H_{E}(y,z) & k=2\\
0 & \text{otherwise}
\end{cases}\\
(C_{H})_{\phi\phi}^{(\partial_{\gamma_{1}}\phi\partial_{\gamma_{2}}\phi)}(x_{1},x_{2};z) & =\frac{1}{\gamma_{1}!\gamma_{2}!}(x_{1}-z)^{(\gamma_{1}}(x_{2}-z)^{\gamma_{2})}\label{eq:C_phi phi ^C}
\end{align}
the flow relation (\ref{eq:Wick-compatible H_E f-eq}) can be written
equivalently as, 
\begin{align}
 & \frac{\partial}{\partial m^{2}}(C_{H}){}_{\phi\phi}^{I}(x_{1},x_{2};z)\label{eq:L-ind Euclidean fe n=2}\\
 & \sim_{\delta}-\frac{1}{2}\int d^{D}y\mathfrak{L}[L]\chi(y,z;L)(C_{H}){}_{\phi^{2}\phi\phi}^{I}(y,x_{1},x_{2};z)-\sum_{[C]\le\delta+2}b_{C}^{H}(L)(C_{H}){}_{\phi\phi}^{C}(x_{1},x_{2};z),\nonumber 
\end{align}
where, for $L>0$, 
\begin{align}
b_{C}^{H}(L) & =b_{C}^{0}-\frac{1}{2}\int_{0}^{L}dL'\int d^{D}y\frac{\partial}{\partial L'}\left(\mathfrak{L}[L']\chi(y,z;L')\right)(C_{H}){}_{\phi^{2}C}^{I}(y,z;z).\label{eq:b_C(L)}
\end{align}
The only ambiguities in our construction arise from a limited choice
for the value of the $L$-independent $b_{C}^{0}$-tensors when $[C]_{\phi}=2$.
We have $b_{C}^{0}=0$ unless $[C]_{\phi}=2$. The $b_{C}^{0}$ are
required to depend smoothly on $(\delta_{\mu\nu},m^2)$ and scale exactly homogeneously under $(\delta_{\mu\nu},m^{2})\to(\lambda^{-2}\delta_{\mu\nu},\lambda^{2}m^{2})$
with an overall factor of $\lambda^{(D-4)}$. This implies that $b_{C}^{0}$
must vanish identically when $D$ is odd. In even spacetime dimensions,
these ambiguities correspond to the freedom to choose the Taylor coefficients
of a smooth, Euclidean-invariant function which depends smoothly on
$(\delta_{\mu\nu},m^{2})$ and scales exactly homogeneously under
$(\delta_{\mu\nu},m^{2})\to(\lambda^{-2}\delta_{\mu\nu},\lambda^{2}m^{2})$
with an overall factor of $\lambda^{(D-4)}$. For any fixed cutoff
function $\chi$ and any choice of Euclidean-invariant Hadamard parametrix
$H_{E}$ which scales almost homogeneously, one can choose $b_{\gamma_{1}\gamma_{2}}^{0}$
such that OPE coefficients obtained via Hadamard normal ordering satisfy
(\ref{eq:L-ind Euclidean fe n=2}). Conversely, for any fixed
$\chi$ and admissible choice of $b_{\gamma_{1}\gamma_{2}}^{0}$,
one can find an $H_{E}$ such that the Hadamard-normal-ordered OPE
coefficients satisfy (\ref{eq:L-ind Euclidean fe n=2}). Hence,
the ambiguity in our construction of the $L$-independent flow relation
(\ref{eq:L-ind Euclidean fe n=2}) is in a 1-1 correspondence
with the inherent freedom to choose a Hadamard parametrix for defining
normal-ordered Wick fields compatible with axioms W1-W8. 
\begin{rem}
\label{rem:conservation in Eucl space}In flat space and all dimensions
$D\ge2$, we note the conservation axiom $\conservation$ places no
constraints on the ambiguities in $H_{E}$ and, thus, does not require
any further modifications to the flow relation \eqref{eq:L-ind Euclidean fe n=2}.
In particular, although $H_{E}$ is not an exact Greens function of
the Euclidean Klein-Gordon operator \eqref{eq:Euclidean K}, it does
automatically satisfy the Euclidean version of the conservation constraint
\eqref{eq:conservation constr for H_F}: 
\begin{equation}
\nabla_{\mu}^{(x_{1})}h_{E}(x_{1},x_{2})|_{x_{1},x_{2}=z}=\left[\nabla_{\mu}^{(x_{1})}K_{x_{2}}H_{E}(x_{1},x_{2})-\nabla_{\mu}^{(x_{1})}\delta(x_{1},x_{2})\right]_{x_{1},x_{2}=z}=0,\label{eq:Euclidean conservation constraint}
\end{equation}
where we recall the smooth function $h_E$ was defined via \eqref{eq:H_E as a Green's function parametrix}.
Because $H_{E}$ is required to be invariant under the inhomogeneous
orthogonal group, $\nabla_{\mu}^{(x_{1})}h_{E}(x_{1},x_{2})|_{x_{1},x_{2}=z}$
must be invariant under rotations about the point $z$. However, since
there does not exist a rotationally-invariant $D$-vector, we conclude
$\nabla_{\mu}^{(x_{1})}h_{E}(x_{1},x_{2})|_{x_{1},x_{2}=z}$ identically
vanishes in flat Euclidean space for any dimension, including $D=2$. 
\end{rem}
By the same reasoning used in the proof of Theorem \ref{thm:vac NO flow rel gen n},
the flow relation (\ref{eq:L-ind Euclidean fe n=2}) for $(C_{H}){}_{\phi\phi}^{I}$
straightforwardly implies flow relations for $(C_{H})_{\phi\cdots\phi}^{I}$
as expressed in the following theorem: 
\begin{thm}
\label{thm:Eucl fe for H-NO id coef}For any Hadamard parametrix satisfying
\eqref{eq:L-ind Euclidean fe n=2}, the corresponding Hadamard
normal-ordered coefficients $(C_{H})_{\phi\cdots\phi}^{I}$ satisfy
the flow relation: 
\begin{align}
 & \frac{\partial}{\partial m^{2}}(C_{H}){}_{\phi\cdots\phi}^{I}(x_{1},\dots,x_{n};z)\label{eq:L-ind Euclidean H-NO fe gen n}\\
 & \approx-\frac{1}{2}\int d^{D}y\mathfrak{L}[L]\chi(y,z;L)(C_{H}){}_{\phi^{2}\phi\cdots\phi}^{I}(y,x_{1},\dots,x_{n};z)-\sum_{C}b_{C}^{H}(L)(C_{H}){}_{\phi\cdots\phi}^{C}(x_{1},\dots,x_{n};z),\nonumber 
\end{align}
where $b_{C}^{H}(L)$ is again given by \eqref{eq:b_C(L)} with the
same constraints on $b_{C}^{0}$ as stated below \eqref{eq:b_C(L)}. 
\end{thm}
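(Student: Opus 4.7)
\emph{Sketch of proof.} The plan is to reduce \eqref{eq:L-ind Euclidean H-NO fe gen n} to the $n=2$ flow relation \eqref{eq:L-ind Euclidean fe n=2} by exploiting the explicit hafnian representation \eqref{eq:H id coefs} for $(C_H)^I_{\phi\cdots\phi}$ as a sum over perfect matchings of products of $H_E$'s. For odd $n$, both sides of \eqref{eq:L-ind Euclidean H-NO fe gen n} vanish identically: the left-hand side because a perfect matching requires an even number of points, and the right-hand side because the parity constraint of Theorem \ref{thm: Had NO OPE coef} forces the integrand $(C_H)^I_{\phi^2\phi\cdots\phi}$ to vanish (it has $n+2$ factors of $\phi$, an odd number), while each nontrivial counterterm $(C_H)^C_{\phi\cdots\phi}$ with $[C]_\phi = 2$ similarly vanishes (requiring $n-2$ to be even). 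It therefore suffices to treat even $n$.

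For even $n$, I would apply the ordinary Leibniz rule to \eqref{eq:H id coefs} to write $\partial_{m^2}(C_H)^I_{\phi\cdots\phi}$ as a double sum over perfect matchings $M \in \mathcal{M}(\mathcal{S}_A)$ and distinguished pairs $\{a,b\} \in M$, each contributing $\partial_{m^2} H_E(x_a,x_b)$ times the product of $H_E$'s over the remaining pairs of $M$. Since $H_E(x_a,x_b) = (C_H)^I_{\phi\phi}(x_a,x_b;z)$, I would then substitute the $n=2$ flow relation \eqref{eq:L-ind Euclidean fe n=2} for each $\partial_{m^2} H_E(x_a,x_b)$, splitting the result into an integral contribution and a counterterm contribution. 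The integral contribution can be identified with $-\frac{1}{2}\int d^D y\,\mathfrak{L}[L]\chi(y,z;L)(C_H)^I_{\phi^2\phi\cdots\phi}(y,x_1,\dots,x_n;z)$ by directly reading off the hafnian expansion of the enlarged OPE coefficient: in every nonvanishing perfect matching for $\phi^2(y)\phi(x_1)\cdots\phi(x_n)$, the two $\phi$-factors at $y$ must be matched with two distinct $x_a,x_b$, producing exactly the combinatorial sum obtained above (with the factor $\frac{1}{2}$ accounting for the identity $(C_H)^I_{\phi^2\phi\phi}(y,x_a,x_b;z) = 2H_E(y,x_a)H_E(y,x_b)$).

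For the counterterm contribution, I would appeal to the factorization property of Proposition \ref{Prop:intermed H-coef recursion rel} with $p = m = 2$, which gives
\[
(C_H)^C_{\phi\cdots\phi}(x_1,\dots,x_n;z) = \sum_{1 \le a < b \le n}(C_H)^C_{\phi\phi}(x_a,x_b;z)\,(C_H)^I_{\phi\cdots\phi}(x_{c_1},\dots,x_{c_{n-2}};z),
\]
for every $C$ with $[C]_\phi = 2$, where $\{c_1,\dots,c_{n-2}\} = \{1,\dots,n\}\setminus\{a,b\}$. This identity reassembles the sum $\sum_M \sum_{\{a,b\}\in M}(C_H)^C_{\phi\phi}(x_a,x_b;z)\prod_{\text{rest}} H_E$ into $(C_H)^C_{\phi\cdots\phi}(x_1,\dots,x_n;z)$, thereby yielding exactly the counterterm structure on the right-hand side of \eqref{eq:L-ind Euclidean H-NO fe gen n}. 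The main technical subtlety is promoting the fixed-$\delta$ statement ``$\sim_\delta$'' of the $n=2$ case to the ``$\approx$'' of the general-$n$ statement: since only finitely many matchings contribute for each $n$, it suffices to choose the integer $N$ in the operator $\mathfrak{L}[L]$ sufficiently large when invoking the $n=2$ relation to achieve any desired target scaling degree.
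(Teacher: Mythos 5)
Your argument is correct and is essentially the paper's own route: the paper likewise obtains the general-$n$ relation by applying $\partial_{m^{2}}$ (Leibniz rule) to the perfect-matching (hafnian) formula for $(C_{H})_{\phi\cdots\phi}^{I}$, inserting the $n=2$ relation \eqref{eq:L-ind Euclidean fe n=2} for each $\partial_{m^{2}}H_{E}(x_{a},x_{b})$, and reading off the result by inspection, exactly as in the vacuum-normal-ordered case of Theorem \ref{thm:vac NO flow rel gen n}; your explicit reassembly of the counterterms via Proposition \ref{Prop:intermed H-coef recursion rel} (with $p=m=2$) just spells out the step the paper leaves implicit. One minor correction to your closing remark: to reach a higher target scaling degree it is the $\delta$-dependent range of the $\Delta$-product in \eqref{eq:full L op} (together with the range of the $[C]$-sum) that must be enlarged, not the exponent $N$, which only needs to satisfy $N>2$.
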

Finally, the results of Subsection \ref{subsec:gen Wick OPE coef}
can be used to obtain the flow relations for $C_{\phi\cdots\phi}^{I}$
for an arbitrary prescription for Wick monomials satisfying W1-W8.
We obtain 
\begin{align}
 & \frac{\partial}{\partial m^{2}}C{}_{\phi\cdots\phi}^{I}(x_{1},\dots,x_{n};z)\label{eq:Eucl L-indep fe with F_k terms}\\
 & \approx-\frac{1}{2}\int d^{D}y\mathfrak{L}[L]\chi(y,z;L)C{}_{\phi^{2}\phi\cdots\phi}^{I}(y,x_{1},\dots,x_{n};z)-\sum_{C}b_{C}(L)C{}_{\phi\cdots\phi}^{C}(x_{1},\dots,x_{n};z)+\text{``\ensuremath{F_{k}}-terms'',}\nonumber 
\end{align}
where 
\[
b_{C}(L)\equiv\delta_{n,2}b_{C}^{0}-\frac{1}{2}\int_{0}^{L}dL'\int d^{D}y\frac{\partial}{\partial L'}\left(\mathfrak{L}[L']\chi(y,z;L')\right)C{}_{\phi^{2}C}^{I}(y,z;z),
\]
and $\text{``\ensuremath{F_{k}}-terms''}$ denotes terms that contain
at least one factor of $F_{k}$ (for $k\le n$). By the discussion
in Subsection \ref{subsec:gen Wick OPE coef} below eq.~\eqref{eq:gen linear C ident coef-1},
$F_{j}$ can, in turn, be written purely in terms of OPE coefficients
of the form $C_{\phi\cdots\phi}^{I}(x_{1},\dots,x_{i};z)$ such that
$i\le j$. In this way, all terms on the right side of (\ref{eq:Eucl L-indep fe with F_k terms})
are expressible entirely in terms of OPE coefficients and the cutoff
function $\chi$, and \eqref{eq:Eucl L-indep fe with F_k terms} yields
the flow relation for the OPE coefficients corresponding to an arbitrary
prescription for the Wick fields compatible with the axioms W1-W8.
Note, in contrast to $b_{C}^{H}(L)$ given in \eqref{eq:b_C(L)},
here $b_{C}(L)$ can be nonzero when $[C]_{\phi}\ne2$ since, for
prescriptions not given via normal ordering, $C_{\phi^{2}C}^{I}$
is generally nonzero when $[C]_{\phi}\ne2$.

\section{Flow relations for OPE coefficients in Minkowski spacetime\label{sec:Minkowski-flow-relations}}

We turn, now, to the derivation of flow relations for OPE coefficients
in Minkowski spacetime $(\mathbb{R}^{D},\eta_{ab})$. As can be seen
from the derivation of the Euclidean flow relations in the preceding
section, it is essential that the two-point OPE coefficient for which
we are obtaining a flow relation be a Green's parametrix for the wave
equation. Consequently, we do not believe it is possible to obtain
a flow relation for the Lorentzian $C_{\phi\phi}^{I}$, since it does
not have this property. However, as we shall show, a flow relation
for $C_{T_{0}\{\phi\phi\}}^{I}$ can be obtained, where $T_{0}$ denotes
the unextended time-ordered-product.

In the Minkowski case, if we choose $C_{T_{0}\{\phi\phi\}}^{I}$ to
be the exact Feynman propagator for $m^{2}>0$, the spacetime integral
that would appear in the flow relation will not converge, so we would
need to introduce a cutoff function even in this case. Therefore,
in contrast to the Euclidean case, there is no advantage in initially
working with the exact Feynman propagator as compared with a Poincare-invariant
Feynman parametrix that is smooth at $m^{2}=0$. As we shall see,
a new difficulty arises from a cutoff in the Minkowski case in that
there does not exist a nontrivial function of compact support that
is Lorentz invariant. Consequently, in the Minkowski case, the introduction
of a cutoff spoils the Poincare invariance of the flow relations.
Nevertheless, we shall show that counterterms can be introduced into
the flow relations so as to restore Poincare invariance. The presence
of the cutoff function in the flow relations also spoils their scaling
behavior. However, this can be fixed using the same procedure as developed
for the Euclidean flow relations. Thus, we will, in the end, obtain
entirely satisfactory flow relations for the OPE coefficients of unextended
time-ordered-products in Minkowski spacetime (see Theorem \ref{thm:Minkowski flow rel}).
These flow relations will be unique up to modifications of the counterterms
that correspond to the ambiguities in the definitions of the Wick
monomials themselves.

The requirement W1 that the Wick monomials be locally and covariantly
defined implies that, in Minkowski spacetime, the Wick monomials must
be Poincare covariant \cite{HW_Axiomatic_QFTCS}. Thus, in a Hadamard
normal-ordering prescription, we must use a Poincare-invariant Hadamard
parametrix. Since, in this subsection, we will want to include the
case $m^{2}=0$, we will not use the usual choice $\langle\phi(x_{1})\phi(x_{2})\rangle_{\text{vac}}$\textemdash which
fails to be smooth in $m^{2}$ at $m^{2}=0$\textemdash but rather
will take $H(x_{1},x_{2};m^{2})$ to be given by eq.~(\ref{eq:H}),
with $\ell$ fixed (i.e., independent of $m^{2}$).

The starting point for our derivation of Euclidean flow relations
in the preceding section was the preliminary flow-like equation (\ref{eq:flow relation w btms euclid})
for the Euclidean Hadamard parametrix $H_{E}(x_{1},x_{2};m^{2})$.
The key ingredients that went into the derivation of this equation
were (i) that $H_{E}$ is a fundamental solution (\ref{eq:H_E as a Green's function parametrix})
of the Klein-Gordon operator up to smooth remainder and (ii) for any
test function $f$, $H_{E}(y,f)$ is smooth in $y$. In Minkowski
spacetime, the OPE coefficient $(C_{H}){}_{\phi\phi}^{I}=H(x_{1},x_{2})$
will \emph{not} be a Green's parametrix, i.e., it will satisfy $K_{x_{1}}(C_{H}){}_{\phi\phi}^{I}(x_{1},x_{2})={\rm smooth}$
rather than $K_{x_{1}}(C_{H}){}_{\phi\phi}^{I}(x_{1},x_{2})=(\delta(x_{1},x_{2})+{\rm smooth})$.
Consequently, the analog of condition (i) will not be satisfied and
we cannot expect to obtain flow relations for the ordinary OPE coefficients.
However, condition (i) does hold for the Feynman parametrix $H_{F}(x_{1},x_{2};m^{2})$
given by eq.~(\ref{eq:H_F def}). Such a parametrix satisfies, 
\begin{equation}
(-\eta^{ab}\partial_{a}\partial_{b}+m^{2})H_{F}(x_{1},x_{2};m^{2})=-i\delta^{(D)}(x_{1},x_{2})+h(x_{1},x_{2};m^{2})\label{feynpara}
\end{equation}
where $h$ is a smooth function of its arguments. As with the Euclidean
parametrix, any two Feynman parametrices $H_{F}$ and $H_{F}'$ satisfying
(\ref{feynpara}) can differ by a Poincare invariant smooth function
of $(x_{1},x_{2})$. Since $(C_{H}){}_{T_{0}\{\phi\phi\}}^{I}=H_{F}(x_{1},x_{2})$,
it might be expected that flow relations will hold for the OPE coefficients
of time-ordered products\footnote{Indeed, this also could be anticipated from the fact that a Wick rotation
from Euclidean space to Minkowski spacetime will take the Euclidean
Green's function $G_{E}$ to the Feynman propagator $G_{F}$.}. As we shall see below, flow relations do indeed hold for the OPE
coefficients of time-ordered products.

Condition (ii) also holds for $H_{F}(x_{1},x_{2};m^{2})$. Indeed,
for any translation invariant distribution ${\mathcal{D}}(x_{1},x_{2})$
on $\mathbb{R}^{D}\times\mathbb{R}^{D}$ and any test function $f$
on $\mathbb{R}^{D}$, we have that ${\mathcal{D}}(x_{1},f)$ is smooth
in $x_{1}$. Namely, if we define new variables $X_{1}=x_{1}+x_{2}$
and $X_{2}=x_{1}-x_{2}$, then, by translation invariance, $\mathcal{D}$
cannot depend on $X_{1}$, so the elements of its wavefront set must
be of the form $(X_{1},0;X_{2},K_{2})$ with $K_{2}\neq0$. Therefore,
in terms of the original variables $(x_{1},x_{2})$, the elements
of $\text{WF}[{\mathcal{D}}]$ must be of the form $(x_{1},k_{1};x_{2},-k_{1})$
with $k_{1}\neq0$. The wavefront set calculus rules then immediately
imply that ${\mathcal{D}}(x_{1},f)$ is smooth for any test function
$f$. 
\begin{rem}
\label{rem: ext of T0[C^I_phi phi]} Since the unextended
time-ordered products are only defined away from all diagonals, applying
the Klein-Gordon operator to $(C_{H})_{T_{0}\{\phi\phi\}}^{I}=T_{0}\{H(x_{1},x_{2})\}$
will yield a distribution that is a priori only defined when $x_{1}\ne x_{2}$
and, thus, the OPE coefficient $(C_{H})_{T_{0}\{\phi\phi\}}^{I}$
is itself not actually a Green's function satisfying \eqref{feynpara}.
Nevertheless, as discussed in Remark \ref{rem:ext of T0 H} below
\prettyref{thm:explicit TO coef}, the extension of $T_{0}\{H(x_{1},x_{2})\}$
to $x_{1}=x_{2}$ is uniquely given by the Feynman parametrix $H_{F}=H-i\Delta^{\text{adv}}$.
Hence, whenever we need to use the identity \eqref{feynpara} in what
follows below, we may, without introducing any new ambiguities, first
extend $(C_{H})_{T_{0}\{\phi\phi\}}^{I}$ to its diagonal $x_{1}=x_{2}$
and then subsequently apply the Green's function identity \eqref{feynpara}
for the Feynman parametrix. As we will see, this is sufficient to
derive all the flow relations for the time-ordered Wick OPE coefficients
of the form $C_{T_{0}\{\phi\cdots\phi\}}^{I}$. As discussed in the
introduction and Subsection \ref{subsec:TOP}, unique extensions of
the OPE coefficients appearing inside the integral on the right-hand
side of the flow relations are only possible, in general, to the ``partial
diagonals'', where the integration variable $y$ coincides with only
a single $x_{i}$-spacetime variable, so we will continue to write
all OPE coefficients appearing in the flow relations with the unextended
time-ordering symbol $T_{0}$ rather than $T$, with the understanding
that (unique) extensions to the appropriate partial diagonals with
$y$ are necessary for evaluating the $y$-integral. See Remark \ref{rem:unique ext of Mink flow rel}
below \prettyref{thm:Minkowski flow rel} for further discussion regarding
the extension of the OPE coefficients appearing in the Minkowski flow
relations. 
\end{rem}
Since conditions (i) and (ii) hold for $H_{F}$, we can directly parallel
the derivation of the key preliminary Euclidean flow-like equation
(\ref{eq:prelim H_F flow eq}) for $H_{E}$ to obtain a flow-like
relation for $H_{F}(x_{1},x_{2};m^{2})$ by introducing a cutoff function
$\chi(y,z;L)$ defined such that $\chi=1$ for $y$ in some compact
neighborhood, ${\mathcal{B}}_{1}$, of $z$ and $\chi=0$ outside
of some larger compact neighborhood, ${\mathcal{B}}_{2}$, of $z$.
We again denote by $L$ the arbitrary length scale which is required
to define a spacetime cutoff. Then, for $x_{1},x_{2}\in{\mathcal{B}}_{1}$,
we similarly obtain, 
\begin{align}
 & \frac{\partial}{\partial m^{2}}H_{F}(x_{1},x_{2};m^{2})=-i\int_{{\mathcal{B}}_{2}}d^{D}y\,\chi(y;z;L)H_{F}(y,x_{1};m^{2})H_{F}(y,x_{2};m^{2})+\label{eq:prelim H_F flow eq}\\
 & +i\int_{{\mathcal{B}}_{2}\backslash{\mathcal{B}}_{1}}d^{D}y\,\partial_{\mu}^{(y)}\chi(y;z;L)\left[\partial_{(y)}^{\mu}H_{F}(y,x_{1};m^{2})\frac{\partial}{\partial m^{2}}H_{F}(y,x_{2};m^{2})-H_{F}(y,x_{1};m^{2})\partial_{(y)}^{\mu}\frac{\partial}{\partial m^{2}}H_{F}(y,x_{2};m^{2})\right]+\nonumber \\
 & +i\int_{{\mathcal{B}}_{2}}d^{D}y\,\chi(y;z;L)\left[H_{F}(y,x_{1};m^{2})\frac{\partial}{\partial m^{2}}h(y,x_{2};m^{2})-h(y,x_{1};m^{2})\frac{\partial}{\partial m^{2}}H_{F}(y,x_{2};m^{2})\right],\nonumber 
\end{align}
where $h$ is defined via eq.~(\ref{feynpara}). Note that the factor
of $\partial_{\mu}^{(y)}\chi(y;z;L)$ appearing in the third line
has support only on $\mathcal{B}_{2}\backslash\mathcal{B}_{1}$ because
we require $\chi(y;z;L)=1$ for $y\in\mathcal{B}_{1}$. eq.~(\ref{eq:prelim H_F flow eq})
is identical to (\ref{eq:flow relation w btms euclid}) modulo the
substitutions $H_{E}\to iH_{F}$ and $h_{E}\to ih$.

As in the Euclidean formula (\ref{eq:flow relation w btms euclid}),
the third line is automatically smooth on account of the smoothness
of $h$ and the compact-support of $\chi$. Similarly, in the second
line, the differentiated cutoff function $\partial_{\mu}^{(y)}\chi(y;z;L)$
is only nonzero when $y\in{\mathcal{B}}_{2}\backslash{\mathcal{B}}_{1}$
and thus vanishes when $y=x_{1},x_{2}$ if $x_{1},x_{2}\in\mathcal{B}_{1}$.
However, whereas the Euclidean parametrix $H_{E}(y,x)$ is singular
only when $y=x$, the singular support of the Feynman parametrix $H_{F}(y,x)$
includes all $(y,x)$ such that $y$ and $x$ can be connected by
a null geodesic. Thus, the integrand in the second line of (\ref{eq:prelim H_F flow eq})
will be singular even for $y\in\mathcal{B}_{2}\backslash\mathcal{B}_{1}$
whenever $y$ is lightlike separated from either or both $(x_{1},x_{2})$.
Therefore, it is not at all obvious that the integral will yield a
smooth function. However, since the partial $m^{2}$-derivative does
not alter the wavefront set of $H_{F}$, the terms in the second line
of (\ref{eq:prelim H_F flow eq}) will be smooth if and only if the
quantity, 
\begin{equation}
\Theta[\chi,H_{F}](x_{1},x_{2};z;m^{2})\equiv\int_{{\mathcal{B}}_{2}\backslash{\mathcal{B}}_{1}}d^{D}y\,\partial_{\mu}^{(y)}\chi(y,z;L)H_{F}(y,x_{1};m^{2})\partial_{(y)}^{\mu}H_{F}(y,x_{2};m^{2})\label{eq:smooth boundary integral}
\end{equation}
is smooth. The following proposition establishes smoothness of this
quantity: 
\begin{prop}
\label{lem:smoothness of bt} For $x_{1},x_{2}\in{\mathcal{B}}_{1}(z)$,
the quantity $\Theta$ defined by (\ref{eq:smooth boundary integral})
is a $C^{\infty}$ function of $(x_{1},x_{2})$. 
\end{prop}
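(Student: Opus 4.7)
The plan is to establish smoothness of $\Theta$ on $\mathcal{B}_1\times\mathcal{B}_1$ by combining the wavefront-set calculus for the integrand with a causal-geometric obstruction. Set $F(y,x_1,x_2):=\partial_\mu^{(y)}\chi(y,z;L)\,H_F(y,x_1)\,\partial_{(y)}^\mu H_F(y,x_2)$. Since $\text{supp}(\partial_\mu\chi)$ is compact and contained in the shell $\overline{\mathcal{B}_2}\setminus\mathcal{B}_1$, the pushforward $\Theta=\int F(y,x_1,x_2)\,dy$ is a well-defined distribution whose wavefront set is bounded by the set of $(x_1,x_2;k_{x_1},k_{x_2})$ with $(k_{x_1},k_{x_2})\neq 0$ arising as projections of elements $(y,x_1,x_2;0,k_{x_1},k_{x_2})\in\text{WF}[F]$ for some $y\in\text{supp}(\partial_\mu\chi)$. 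My goal is to show that no such $y$ exists when $x_1,x_2\in\mathcal{B}_1$, whence $\text{WF}[\Theta]=\emptyset$ and $\Theta$ is $C^\infty$.

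The wavefront-set product rule applied to $H_F(y,x_1)\,\partial^\mu H_F(y,x_2)$ (each factor extended trivially to the other variable) forces any element of $\text{WF}[F]$ with vanishing $y$-covector to decompose as $(y;k_y^{(1)},k_{x_1},0)+(y;-k_y^{(1)},0,k_{x_2})$, each summand lying in $\text{WF}[H_F]$ or being zero. Because every element of $\text{WF}[H_F]$ has both of its covectors nonzero, the nondegenerate case $(k_{x_1},k_{x_2})\neq 0$ forces $k_y^{(1)}\neq 0$ together with $(y,k_y^{(1)};x_1,k_{x_1})\in\text{WF}[H_F]$ and $(y,-k_y^{(1)};x_2,k_{x_2})\in\text{WF}[H_F]$. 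Consequently $y$ is null-related to both $x_1$ and $x_2$ via a common null line at $y$ with cotangent directions $\pm k_y^{(1)}$, so $x_1,y,x_2$ are collinear on a single null line in Minkowski spacetime.

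The critical step is to invoke the Feynman prescription to force $y$ into the open segment between $x_1$ and $x_2$. Using the time-ordered representation $H_F(y,x)=\theta(y^0-x^0)W(y,x)+\theta(x^0-y^0)W(x,y)$ modulo smooth remainders together with the Hadamard wavefront form \eqref{eq:Had WF}, the first factor of $\text{WF}[F]$ forces $k_y^{(1)}$ to be future-directed iff $y^0>x_1^0$, while the second forces $-k_y^{(1)}$ to be future-directed iff $y^0>x_2^0$, equivalently $k_y^{(1)}$ future-directed iff $y^0<x_2^0$. These two conditions together place $y^0$ strictly between $x_1^0$ and $x_2^0$. Combined with the null collinearity above, $y$ lies on the open line segment joining $x_1$ and $x_2$, which is contained in the convex coordinate ball $\mathcal{B}_1$. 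This contradicts $y\in\text{supp}(\partial_\mu\chi)\subset\overline{\mathcal{B}_2}\setminus\mathcal{B}_1$.

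The principal technical obstacle I expect is establishing the Feynman sign condition on $\text{WF}[H_F]$ rigorously, in particular verifying that the Hadamard parametrix $H$ and $-i\Delta^{\text{adv}}$ combine, modulo smooth remainders, to yield exactly the time-ordered wavefront structure on which the intermediate-time conclusion depends. Once this microlocal input is secured, the causal-geometric contradiction based on collinearity, intermediacy in $y^0$, and convexity of the coordinate cutoff region is a short and elementary observation specific to the flat Minkowski setting.
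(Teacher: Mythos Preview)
Your proposal is correct and follows essentially the same strategy as the paper: bound $\text{WF}[\Theta]$ via the wavefront-set calculus for products and partial integration, then use the Feynman sign condition on $\text{WF}[H_F]$ together with the convexity of $\mathcal{B}_1$ to derive a geometric contradiction. The paper phrases the contradiction as ``$y\in\overline{\mathcal{B}_2}\setminus\mathcal{B}_1$ lies to the future or past of \emph{both} $x_1,x_2$, forcing the covector at $y$ into $\dot{V}^+_y\cap\dot{V}^-_y=\emptyset$'', whereas you phrase it as ``the Feynman condition forces $y$ strictly between $x_1$ and $x_2$ on the null line, hence $y\in\mathcal{B}_1$''; these are logically equivalent. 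Your ``principal technical obstacle''---the precise form of $\text{WF}[H_F]$---is a standard result the paper simply quotes (eq.~\eqref{eq:Feyn WF}), so it is not a genuine gap.
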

\begin{proof}
A generalized function is smooth if and only if its wavefront set
is the empty set. We show the wavefront set of the generalized function
(\ref{eq:smooth boundary integral}) is contained in the empty set
when $x_{1},x_{2}\in\mathcal{B}_{1}(z)$ and, thus, $\Theta(x_{1},x_{2};z)$
must be smooth. Note first the wavefront set of a Feynman parametrix
is, 
\begin{align}
\text{WF}[H_{F}] & =\text{WF}[\delta]\cup\left\{ (x_{1},k_{1};x_{2},k_{2})\in\times^{2}(T^{*}\mathbb{R}^{D}\backslash Z^{*}\mathbb{R}^{D})|\ x_{1}\ne x_{2},(x_{1},k_{1})\sim(x_{2},-k_{2}),\right.\nonumber \\
 & \qquad\left.k_{1}\in\dot{V}_{x_{1}}^{+}\text{ if }x_{1}\in J^{+}(x_{2}),k_{1}\in\dot{V}_{x_{1}}^{-}\text{ if }x_{1}\in J^{-}(x_{2})\right\} ,\label{eq:Feyn WF}
\end{align}
where we recall the notation: $\dot{V}_{x}^{\pm}$ denotes, respectively
the boundary of the future/past lightcone at $x$; $(x,k)\sim(y,p)$
iff points $x$ and $y$ may be joined by a null geodesic $\gamma$
such that $k$ and $p$ are cotangent and coparallel to $\gamma$;
and $Z^{*}\mathbb{R}^{D}$ denotes the zero section of cotangent bundle
$T^{*}\mathbb{R}^{D}$. Recall the wavefront set of the $\delta$-distribution
was given in (\ref{eq:WF delta}).

We write $\mathcal{B}\equiv\mathcal{B}_{2}\backslash\mathcal{B}_{1}$.
Theorem 8.2.14 of \cite{Hormander_book} immediately implies the wavefront
set of the bi-distribution (\ref{eq:smooth boundary integral}) is
bounded by the union of three sets, 
\begin{align}
\text{WF}[\Theta] & \subseteq\left(\text{WF}'[H_{F}]\circ\text{WF}[H_{F}]\right)\cup\left(\text{WF}_{\mathcal{B}}[H_{F}]\times(\mathcal{B})\times\{0\}\right)\cup\left((\mathcal{B})\times\{0\}\times\text{WF}_{\mathcal{B}}[H_{F}]\right).\label{eq:WF(Omega)}
\end{align}
Here the notation is defined as follows: For any $u\in\mathcal{D}'(\mathbb{R}^{D}\times\mathbb{R}^{D})$,
\begin{align}
\text{WF}'[u] & \equiv\left\{ (x,k;y,p)|(x,k;y,-p)\in\text{WF}[u]\right\} \\
\text{WF}_{\mathcal{B}}[u] & \equiv\left\{ (x,k)|(x,k;y,0)\in\text{WF}[u]\text{ for some \ensuremath{y\in}\ensuremath{\mathcal{B}}}\right\} .
\end{align}
For any $u,v\in\mathcal{D}'(\mathbb{R}^{D}\times\mathbb{R}^{D})$,
the composition of wavefront sets $\text{WF}'[u]$ and $\text{WF}[v]$
goes as, 
\begin{align}
\text{WF}'[u]\circ\text{WF}[v] & \equiv\left\{ (x_{1},k_{1};x_{2},k_{2})|(y,p;x_{1},k_{1})\in\text{WF}'[u]\text{ and }(y,p;x_{2},k_{2})\in\text{WF}[v],\right.\nonumber \\
 & \hphantom{\equiv\left\{ (x_{1},k_{1};x_{2},k_{2})|\;\right.}\left.\text{for some }(y,p)\in\left(\mathcal{B}\times\mathbb{R}^{D}\backslash\{0\}\right)\right\} .\label{eq:WF' circ WF}
\end{align}
The form of the Feynman wavefront set (\ref{eq:Feyn WF}) immediately
implies that\footnote{In fact, eq.~(\ref{eq:WF_B(HF)}) would hold if $H_{F}$ was replaced
with any bi-distribution whose wavefront set contains only covectors
such that $k_{1}=-k_{2}$. Hence, by the discussion above, it holds
also for all translationally-invariant bi-distributions.}, 
\begin{equation}
\text{WF}_{\mathcal{B}}[H_{F}]\subset\emptyset,\label{eq:WF_B(HF)}
\end{equation}
so nontrivial contributions to the right-hand side of (\ref{eq:WF(Omega)})
could only potentially come from the set $\text{WF}'[H_{F}]\circ\text{WF}[H_{F}]$.
We show now this set is empty. Note, for $y\in\mathcal{B}$, we have
$(y,p;x_{1},k_{1})\in\text{WF}'[H_{F}]$ and $(y,p;x_{2},k_{2})\in\text{WF}[H_{F}]$
only if all three spacetime points $(y,x_{1},x_{2})$ reside on the
same null geodesic. Furthermore, when $x_{1},x_{2}\in\mathcal{B}_{1}$,
then any $y\in\mathcal{B}$ must be either to the future or to the
past of both $x_{1}$ and $x_{2}$. Consider first the case where
$y$ is to the future of $x_{1}$: By (\ref{eq:Feyn WF}), $(y,p;x_{1},k_{1})\in\text{WF}'[H_{F}]$
only if $p\in V_{y}^{-}$. However, when $y$ is to the future of
$x_{2}$, then $(y,p;x_{2},k_{2})\in\text{WF}[H_{F}]$ only if $p\in V_{y}^{+}$.
Since $V_{y}^{-}\cap V_{y}^{+}\subset\emptyset$, it follows that,
when $y$ is to the future of both points, there are no nontrivial
elements in (\ref{eq:WF' circ WF}). In the case where $y$ lies instead
to the past of both points, one arrives at the same conclusions only
with the roles of $V_{y}^{+}$ and $V_{y}^{-}$ swapped. Therefore,
when $x_{1},x_{2}\in\mathcal{B}_{1}$, 
\begin{equation}
\text{WF}'[H_{F}]\circ\text{WF}[H_{F}]\subset\emptyset,\label{eq:WF'(HF) comp WF(HF)}
\end{equation}
and, thus, (\ref{eq:WF(Omega)}) implies, 
\[
\text{WF}[\Theta]\subseteq\emptyset,
\]
which is what we sought to show. 
\end{proof}
\begin{rem}
The proof of Proposition \ref{lem:smoothness of bt} would \emph{not}
go through if the Feynman parametrix, $H_{F}$, was replaced by parametrices
for the advanced, $G_{A}$, or retarded, $G_{R}$, Green's functions.
In particular, one finds, $\text{WF}'[G_{A/R}]\circ\text{WF}[G_{A/R}]=\text{WF}[G_{A/R}]$,
respectively, so (\ref{eq:WF'(HF) comp WF(HF)}) would no longer hold.
Note also, despite its apparent similarity to $\Theta$, Proposition
\ref{lem:smoothness of bt} does \emph{not} apply to the integral
on the first line of (\ref{eq:prelim H_F flow eq}) which is \emph{not}
a smooth function in $(x_{1},x_{2})$. In particular, for the result
of Proposition \ref{lem:smoothness of bt}, it was critical that $y\notin\mathcal{B}_{1}$;
otherwise, it would be possible for $y$ to simultaneously lie to
the past of one point and to the future of the other, while being
an element of both $(y,p;x_{1},k_{1})\in\text{WF}'[H_{F}]$ and $(y,p;x_{2},k_{2})\in\text{WF}[H_{F}]$,
in which case, $\text{WF}'[H_{F}]\circ\text{WF}[H_{F}]=\text{WF}[H_{F}]\ne\emptyset$
and (\ref{eq:WF'(HF) comp WF(HF)}) no longer holds. 
\end{rem}
Since the second and third lines of (\ref{eq:prelim H_F flow eq})
are smooth, we may attempt to drop these terms and replace that flow
relation with 
\begin{equation}
\frac{\partial}{\partial m^{2}}H_{F}(x_{1},x_{2};m^{2})=\Omega_{M}(x_{1},x_{2};z;m^{2};L)\equiv-i\int_{{\mathcal{B}}_{2}}d^{D}y\,\chi(y;z;L)H_{F}(y,x_{1};m^{2})H_{F}(y,x_{2};m^{2}),\label{eq:Omega_M}
\end{equation}
As in the Euclidean case, this replacement will lead to difficulties
with scaling behavior under $(\eta_{ab},m^{2})\to(\lambda^{-2}\eta_{ab},\lambda^{2}m^{2})$.
(As previously mentioned, in a fixed global inertial coordinate system,
this is equivalent to rescaling $(\eta_{\mu\nu},d^{D}y,m^{2})\to(\lambda^{-2}\eta_{\mu\nu},\lambda^{-D}d^{D}y,\lambda^{2}m^{2})$.)
If this were the only difficulty with (\ref{eq:Omega_M}), it could
be dealt with in the same manner as in the Euclidean case. However,
a potentially much more serious difficulty arises from the fact that
(\ref{eq:Omega_M}) fails to be Poincare-invariant since there do
not exist Lorentz-invariant functions of compact support\footnote{Note the function used in Euclidean space, $\zeta(L^{-2}\sigma(y,z))$,
is Lorentz invariant but not compactly-supported in Minkowski spacetime,
since $\sigma(y,z)$ is zero on the boundary of the entire lightcone
of point $z$. }, 
\[
\chi(\Lambda y,\Lambda z)\ne\chi(y,z).
\]
Hence, for a Lorentzian metric, naively dropping the second and third
lines of (\ref{eq:prelim H_F flow eq}) would necessarily violate
the locality and covariance axiom W1, since this axiom implies Poincare
invariance in the case of flat spacetime.

It follows from the smoothness of the last two lines of (\ref{eq:prelim H_F flow eq})
for all $x_{1},x_{2}\in\mathcal{B}_{1}(z)$ that the failure of (\ref{eq:Omega_M})
to be Poincare-invariant on its own must then be given by a smooth
function of $(x_{1},x_{2})$. More precisely, for any $x_{1},x_{2}\in\mathcal{B}_{1}(z)$
and any Poincare transformation $P$ such that $Px_{1},Px_{2}\in\mathcal{B}_{1}(Pz)$,
the quantity 
\begin{equation}
\Omega_{M}(Px_{1},Px_{2};Pz)-\Omega_{M}(x_{1},x_{2};z)\label{eq:Omega_M Poincare failure is smooth}
\end{equation}
is smooth in $(x_{1},x_{2})$. Therefore, in parallel with our restoration
of desired scaling behavior in the Euclidean case, we will restore
Poincare invariance to the flow relation (\ref{eq:Omega_M}) if we
can replace $\Omega_{M}$ on the right-hand side of that equation
with a distribution $\widetilde{\Omega}_{M,\delta}$ which satisfies
the following two properties: 
\begin{enumerate}
\item For $(x_{1},x_{2})\in\mathcal{B}_{1}(z)$, 
\begin{equation}
\widetilde{\Omega}_{M,\delta}(x_{1},x_{2};z;m^{2};L)\equiv\Omega_{M}(x_{1},x_{2};z;m^{2};L)-\sum_{|\gamma_{1}|+|\gamma_{2}|\le\delta}\frac{1}{\gamma_{1}!\gamma_{2}!}a_{\gamma_{1}\gamma_{2}}(\chi)(x_{1}-z)^{\gamma_{1}}(x_{2}-z)^{\gamma_{2}},\label{eq:Omega_M tilde prop 1}
\end{equation}
where $a_{\gamma_{1}\gamma_{2}}=a_{\gamma_{2}\gamma_{1}}$ are constant
tensors that scale almost homogeneously under $(\eta_{ab},m^{2},L)\to(\lambda^{-2}\eta_{ab},\lambda^{2}m^{2},\lambda^{-1}L)$
with an overall factor of $\lambda^{(D-4)}$. 
\item To asymptotic degree $\delta$, $\widetilde{\Omega}_{M,\delta}$ is
asymptotically Poincare-invariant with respect to $(x_{1},x_{2},z)$.
That is, for any Poincare transformation $P$ such that $(Px_{1},Px_{2})\in\mathcal{B}_{1}(Pz)$,
\begin{equation}
\widetilde{\Omega}_{M,\delta}(Px_{1},Px_{2};Pz;m^{2};L)\sim_{\delta}\widetilde{\Omega}_{M,\delta}(x_{1},x_{2};z;m^{2};L).\label{eq:Omega_M tilde prop 2}
\end{equation}
\end{enumerate}
Note it is not required that $\widetilde{\Omega}_{M,\delta}$ be Poincare-invariant
at asymptotic degrees higher than $\delta$. Any two $\widetilde{\Omega}_{M,\delta}$
satisfying these properties may differ, to scaling degree $\delta$,
by at most a quantity of the form $L^{(D-4)}f(m^{2}\sigma(x_{1},x_{2}),L^{-2}\sigma(x_{1},x_{2}))$,
where $f$ is a smooth bi-variate function. Thus, the difference between
any two $a_{\gamma_{1}\gamma_{2}}$ and $a_{\gamma_{1}\gamma_{2}}'$
in (\ref{eq:Omega_M tilde prop 1}) is necessarily of the form, 
\[
a_{\gamma_{1}\gamma_{2}}-a_{\gamma_{1}\gamma_{2}}'=L^{(D-4)}\partial_{\gamma_{1}}^{(x_{1})}\partial_{\gamma_{2}}^{(x_{2})}f(m^{2}\sigma(x_{1},x_{2}),L^{-2}\sigma(x_{1},x_{2}))|_{x_{1},x_{2}=z}.
\]

If we can find a distribution $\widetilde{\Omega}_{M,\delta}$ satisfying
the above two properties, then the flow relation 
\begin{equation}
\frac{\partial}{\partial m^{2}}H_{F}(x_{1},x_{2};m^{2})\sim_{\delta}\widetilde{\Omega}_{M,\delta}(x_{1},x_{2};z;m^{2};L),\label{eq:invariant H_F flow rel with bad scaling}
\end{equation}
will be Poincare invariant. This flow relation still fails to scale
almost homogeneously with respect to the metric and $m^{2}$ due to
the dependence of $\widetilde{\Omega}_{M,\delta}$ on $L$. However,
the unwanted $L$-dependence can then be eliminated by the same procedure
as used in the Euclidean case treated in Subsection \ref{subsec:Eucl Had NO fe}.
Thus, we will be able to obtain satisfactory flow relation if we can
find a distribution $\widetilde{\Omega}_{M,\delta}$ satisfying the
above two properties. We turn now to the construction of the tensors
$a_{\gamma_{1}\gamma_{2}}$ in the definition (\ref{eq:Omega_M tilde prop 2})
of $\widetilde{\Omega}_{M}$ such that $\widetilde{\Omega}_{M,\delta}$
is Poincare invariant to scaling degree $\delta$ in the sense of
(\ref{eq:Omega_M tilde prop 2}).

Although we cannot choose the cutoff function $\chi(y,z)$ to be Lorentz
invariant, we can require that it be invariant under a simultaneous
translation of $(y,z)$. In particular, we can choose a global inertial
coordinate system on Minkowski spacetime and take $\chi$ to be given
by 
\begin{equation}
\chi(y,z;L;t^{\mu})=\zeta\left(L^{-2}\left(\eta_{\mu\nu}+2t_{\mu}t_{\nu}\right)(y-z)^{\mu}(y-z)^{\nu}\right),\label{eq:trans-inv Mink cutoff}
\end{equation}
where $t^{\mu}$ is proportional to the unit time vector field of
these coordinates but is required to remain unit normalized with respect
to the metric components under the rescaling $\eta_{\mu\nu}\to\lambda^{-2}\eta_{\mu\nu}$,
i.e., under this rescaling, it is required that $t_{\mu}\to\lambda^{-1}t_{\mu}$.
As in the Euclidean case, $\zeta$ is a test function and $\zeta(s)=1$
if $|s|\le1$ and $\zeta(z)=0$ if $|s|>2$. Note that $\eta_{\mu\nu}+2t_{\mu}t_{\nu}$
is a Riemannian metric with components $\text{diag}(+1,\dots,+1)$
in the chosen global inertial coordinates, so (\ref{eq:trans-inv Mink cutoff})
is supported on a $D$-dimensional coordinate ball of radius $2L$.
Equation~(\ref{eq:trans-inv Mink cutoff}) is manifestly translationally
invariant under a simultaneous translation of $(y,z)$. It is also
invariant under pure spatial rotations $(y,z)\to(Ry,Rz)$ since $(R^{-1}t)_{\mu}=t_{\mu}$,
but it is not invariant under Lorentz boosts. Note also the cutoff
(\ref{eq:trans-inv Mink cutoff}) is invariant under the rescaling
$(\eta_{ab},L)\to(\lambda^{-2}\eta_{ab},\lambda^{-1}L)$ with the
coordinate basis held fixed.

For any translationally-invariant $\chi$ and any Poincare transformation
$P$ composed of an arbitrary Lorentz transformation $\Lambda$ together
with an arbitrary translation, it follows that, 
\[
\Omega_{M}(Px_{1},Px_{2};Pz)=\Omega_{M}(\Lambda x_{1},\Lambda x_{2};\Lambda z),
\]
Plugging this into (\ref{eq:Omega_M tilde prop 2}) and using the
definition (\ref{eq:Omega_M tilde prop 1}) of $\widetilde{\Omega}_{M}$,
it follows that $\widetilde{\Omega}_{M}$ will be Poincare-invariant
to the required scaling degree if and only if $a_{\gamma_{1}\gamma_{2}}$
can be found such that, 
\begin{align}
 & \Omega_{M}(\Lambda x_{1},\Lambda x_{2};\Lambda z)-\Omega_{M}(x_{1},x_{2};z)\label{eq:Omega_M(Lambda)-Omega_M}\\
 & \qquad\sim_{\delta}\sum_{|\gamma_{1}|+|\gamma_{2}|\le\delta}\frac{1}{\gamma_{1}!\gamma_{2}!}(x_{1}-z)^{\gamma_{1}}(x_{2}-z)^{\gamma_{2}}\left(\Lambda_{\hphantom{\gamma_{1}'}\gamma_{1}}^{\gamma_{1}'}\Lambda_{\phantom{\gamma_{2}'}\gamma_{2}}^{\gamma_{2}'}-\delta_{\hphantom{\gamma_{1}'}\gamma_{1}}^{\gamma_{1}'}\delta_{\hphantom{\gamma_{2}'}\gamma_{2}}^{\gamma_{2}'}\right)a_{\gamma_{1}'\gamma_{2}'},\nonumber 
\end{align}
where $\Lambda_{\hphantom{\alpha'}\alpha}^{\alpha'}\equiv\Lambda_{\hphantom{\mu_{1}'}\mu_{1}}^{\mu_{1}'}\cdots\Lambda_{\hphantom{\mu_{|\alpha|}'}\mu_{|\alpha|}}^{\mu_{|\alpha|}'}$
with the convention $\Lambda_{\hphantom{\alpha'}\alpha}^{\alpha'}=1$
if $|\alpha|=0$. Since the left side of \ref{eq:Omega_M(Lambda)-Omega_M})
has been shown to be smooth in $(x_{1},x_{2})$, it is asymptotic
to its Taylor series. Hence, Taylor expanding the first line and equating
the coefficients of $(x_{1}-z)^{\gamma_{1}}(x_{2}-z)^{\gamma_{2}}$
appearing on both sides, we see that $a_{\gamma_{1}\gamma_{2}}$ must
satisfy 
\begin{equation}
\left[\vphantom{\int_{B}^{A}}\partial_{\gamma_{1}}^{(x_{1})}\partial_{\gamma_{2}}^{(x_{2})}\left[\vphantom{\sum}\Omega_{M}(\Lambda x_{1},\Lambda x_{2};\Lambda z)-\Omega_{M}(x_{1},x_{2};z)\right]\right]_{x_{1},x_{2}=z}=\left(\Lambda_{\hphantom{\gamma_{1}'}\gamma_{1}}^{\gamma_{1}'}\Lambda_{\phantom{\gamma_{2}'}\gamma_{2}}^{\gamma_{2}'}-\delta_{\hphantom{\gamma_{1}'}\gamma_{1}}^{\gamma_{1}'}\delta_{\hphantom{\gamma_{2}'}\gamma_{2}}^{\gamma_{2}'}\right)a{}_{\gamma_{1}'\gamma_{2}'},\label{eq:Taylor coef Omega_M(Lambda)-Omega_M}
\end{equation}
If $\Omega_{M}$ were itself a smooth function of $(x_{1},x_{2})$,
then we could trivially satisfy (\ref{eq:Taylor coef Omega_M(Lambda)-Omega_M})
by setting $a_{\gamma_{1}\gamma_{2}}$ equal to the Taylor coefficients
of $\Omega_{M}(x_{1},x_{2};z)$ evaluated at $x_{1},x_{2}=z$. However,
$\Omega_{M}$ is fundamentally distributional, so it is far from obvious
that there exist $\Lambda$-independent $a_{\gamma_{1}\gamma_{2}}$
satisfying (\ref{eq:Taylor coef Omega_M(Lambda)-Omega_M}).

In Appendix \ref{sec:K-G Lorentz cts} we show that (\ref{eq:Taylor coef Omega_M(Lambda)-Omega_M})
can always be solved and we obtain explicit solutions. First, we use
a cohomological argument to prove existence of solutions $a_{\gamma_{1}\gamma_{2}}$
to (\ref{eq:Taylor coef Omega_M(Lambda)-Omega_M}). We then obtain
the explicit solutions for $a_{\gamma_{1}\gamma_{2}}$ in the cases
of rank $r=1,2$, where $r\equiv|\gamma_{1}|+|\gamma_{2}|$. The $r=1$
solutions are 
\begin{equation}
a_{\{\mu\}\{0\}}=a_{\{0\}\{\mu\}}=-i\int d^{D}y\partial_{\mu}^{(y)}\chi(y,\vec{0})H_{F}(y,\vec{0})H_{F}(y,\vec{0}).\label{eq:a mu 0}
\end{equation}
and the $r=2$ solutions are\footnote{In eqs. \eqref{eq:a mu nu} and \eqref{eq:a mu nu 0}, it is understood
that the subtraction inside the integrand must be performed prior
to evaluating the integral, since the individual terms in the integrand
contain non-integrable divergences at $y=\vec{0}$, i.e., the integrand
is well-defined as a distribution in $y$ only when $y\ne\vec{0}$,
but its definition can be \emph{uniquely} extended to include the
origin.}
\begin{align}
a_{\{(\mu\}\{\nu)\}} & =-i\int d^{D}y\chi(y,\vec{0})\left[\partial_{\mu}H_{F}(y,\vec{0})\partial_{\nu}H_{F}(y,\vec{0})-\frac{1}{D}\eta_{\mu\nu}\partial_{\sigma}H_{F}(y,\vec{0})\partial^{\sigma}H_{F}(y,\vec{0})\right],\label{eq:a mu nu}
\end{align}
and 
\begin{align}
a_{\{(\mu\nu)\}\{0\}}=a_{\{0\}\{(\mu\nu)\}} & =-i\int d^{D}y\chi(y,\vec{0})\left[H_{F}(y,\vec{0})\partial_{\mu}\partial_{\nu}H_{F}(y,\vec{0})-\frac{1}{D}\eta_{\mu\nu}H_{F}(y,\vec{0})\partial^{2}H_{F}(y,\vec{0})\right].\label{eq:a mu nu 0}
\end{align}
Finally, we obtain the recursive solution (\ref{eq:a soln for r>2})
for $a_{\gamma_{1}\gamma_{2}}$ for all $r>2$.

With the above solution for $a_{\gamma_{1}\gamma_{2}}$, we obtain
$\widetilde{\Omega}_{M,\delta}$ satisfying (\ref{eq:Omega_M tilde prop 1})
and (\ref{eq:Omega_M tilde prop 2}). We thereby obtain the Poincare-invariant
flow relation (\ref{eq:invariant H_F flow rel with bad scaling}).
However, as in the Euclidean case, the flow relation (\ref{eq:invariant H_F flow rel with bad scaling})
is not compatible with the scaling behavior of the Wick monomials
required by the scaling axiom W7. Nevertheless, as in the Euclidean
case, we can obtain a flow relation that remains compatible with Poincare
invariance and satisfies the desired scaling behavior by replacing
$\widetilde{\Omega}_{M,\delta}$ on the right side of (\ref{eq:invariant H_F flow rel with bad scaling})
with 
\begin{equation}
\mathfrak{L}[L]\widetilde{\Omega}_{M}(x_{1},x_{2};z;L)-\sum_{|\gamma_{1}|+|\gamma_{2}|\le\delta}\frac{1}{\gamma_{1}!\gamma_{2}!}c_{\gamma_{1}\gamma_{2}}(L)(x_{1}-z)^{(\gamma_{1}}(x_{2}-z)^{\gamma_{2})},\label{eq:L-indep Omega_M tilde}
\end{equation}
where $\mathfrak{L}$ was defined by (\ref{eq:full L op}) and 
\begin{equation}
c_{\gamma_{1}\gamma_{2}}(L)\equiv\int_{0}^{L}dL'\left[\partial_{\gamma_{1}}^{(x_{1})}\partial_{\gamma_{2}}^{(x_{2})}\frac{\partial}{\partial L'}\left(\mathfrak{L}[L']\widetilde{\Omega}_{M}(x_{1},x_{2};z;L'\right)\right].\label{eq:c_gamma1gamma2 Minkowski}
\end{equation}
The distribution (\ref{eq:L-indep Omega_M tilde}) is Poincare-invariant
and is asymptotically independent of $L$ up to scaling degree $\delta$.
Moreover, the distribution (\ref{eq:L-indep Omega_M tilde}) differs
from $\Omega_{M}$ by a smooth function of $(x_{1},x_{2})$. Hence,
the distribution (\ref{eq:L-indep Omega_M tilde}) \emph{can} be used
in a flow relation for the Feynman parametrix $C_{T_{0}\{\phi\phi\}}^{I}=H_{F}$
which is compatible with all Wick axioms. Recalling the definition
(\ref{eq:Omega_M tilde prop 1}) of $\widetilde{\Omega}_{M}$ and
the explicit formulas (\ref{eq:C_phi^2 phi phi ^I})-(\ref{eq:C_phi phi ^C})
for the OPE coefficients, the flow relation with (\ref{eq:L-indep Omega_M tilde})
on the right-hand side can be written in the form: 
\begin{align}
 & \frac{\partial}{\partial m^{2}}(C_{H})_{T_{0}\{\phi\phi\}}^{I}(x_{1},x_{2};z)\label{eq:Minkowski fe T0 2pt id coef}\\
 & \sim_{\delta}-\frac{i}{2}\int d^{D}y\mathfrak{L}[L]\chi(y,z;L)(C_{H})_{T_{0}\{\phi^{2}\phi\phi\}}^{I}(y,x_{1},x_{2};z)-\sum_{[C]\le\delta+2}c_{C}(C_{H}){}_{T_{0}\{\phi\phi\}}^{C}(x_{1},x_{2};z),\nonumber 
\end{align}
where $c_{C}=0$ unless $[C]_{\phi}=2$, in which case it is given
by 
\begin{equation}
c_{C}(L)\equiv\mathfrak{L}[L]a_{C}(L)-\int_{0}^{L}dL'\frac{\partial}{\partial L'}\left[\frac{i}{2}\int d^{D}y\mathfrak{L}[L']\chi(y,\vec{0};L')(C_{H}){}_{T_{0}\{\phi^{2}C\}}^{I}(y,z;z)+\mathfrak{L}[L']a_{C}(L')\right],\label{eq:c_C Minkowski}
\end{equation}
for $L>0$. The tensors $a_{C}$ are also zero unless $[C]_{\phi}=2$,
in which case, they are inductively defined via (\ref{eq:a soln for r>2})
in terms of 
\begin{equation}
(B^{\kappa\rho})_{C}\equiv i\int d^{D}y\,y^{[\kappa}\partial^{\rho]}\chi(y,\vec{0})(C_{H}){}_{T_{0}\{\phi^{2}C\}}^{I}(y,\vec{0};\vec{0}).\label{eq:(B^kappa rho)_C}
\end{equation}
Note, by writing the $y$-integral in \eqref{eq:Minkowski fe T0 2pt id coef},
we have implicitly (uniquely) extended the OPE coefficient $(C_{H})_{T_{0}\{\phi^{2}\phi\phi\}}^{I}(y,x_{1},x_{2};z)=2H_{F}(y,x_{1})H_{F}(y,x_{2})$
to the partial diagonals $y=x_{1}$ and $y=x_{2}$ as justified in
Remark \ref{rem: ext of T0[C^I_phi phi]} above.

The inductive solution (\ref{eq:a soln for r>2}) determines $a_{C}$
up to Lorentz-invariant tensors of the correct rank which scale with
an overall factor of $\lambda^{(D-4)}$ under $(\eta_{ab},m^{2},L)\to(\lambda^{-2}\eta_{ab},\lambda^{2}m^{2},\lambda^{-2}L)$
and depend smoothly on $(\eta_{ab},m^{2})$. Although the inherent
ambiguities in $a_{C}$ may depend on $L$, the $\mathfrak{L}$-operator
and $L$-integral terms in \eqref{eq:c_C Minkowski} ensure that only
the $L$-independent parts of $a_{C}$ can contribute non-trivially
to $c_{C}$. Therefore, the only ambiguity in $c_{C}$ corresponds
to the choice of an $L$-independent tensor in $a_{C}$ that scales
with an overall factor of $\lambda^{(D-4)}$ under $(\eta_{ab},m^{2})\to(\lambda^{-2}\eta_{ab},\lambda^{2}m^{2})$.
In odd dimensions, there are no tensors that scale in this way and
depend smoothly on $(\eta_{ab},m^{2})$, so $a_{C}$ is unique. In
even dimensions, this ambiguity corresponds to freedom to choose the
Taylor coefficients of a Poincare-invariant smooth function in $(x_{1},x_{2},m^{2})$.
We note also, as discussed in Remark \ref{rem:conservation in Eucl space},
that the conservation constraint \eqref{eq:H conservation condition-1}
is automatically satisfied in flat spacetime.

The flow relation (\ref{eq:Minkowski fe T0 2pt id coef}) for the
(unextended) time-ordered OPE coefficient $(C_{H}){}_{T_{0}\{\phi\phi\}}^{I}$
is the Minkowski spacetime analogue of the Euclidean flow relation
(\ref{eq:L-ind Euclidean fe n=2}) for the ordinary OPE coefficient
$(C_{H}){}_{\phi\phi}^{I}$. In both cases, the inherent ambiguity
in the flow relation  corresponds to a smooth function that is invariant
under the respective isometry group.  By \prettyref{thm:explicit TO coef},
formulas for the (unextended) time-ordered OPE coefficients, $C_{T_{0}\{A_{1}\cdots A_{n}\}}^{B}\equiv T_{0}\{C_{A_{1}\cdots A_{n}}^{B}\}$,
for any given Wick prescription are obtained from formulas for the
corresponding non-time-ordered OPE coefficients, $C_{A_{1}\cdots A_{n}}^{B}$,
by simply replacing all occurrences of the Hadamard parametrix $H$
with its corresponding Feynman parametrix $H_{F}=H(x_{1},x_{2})-i\Delta^{\text{adv}}(x_{1},x_{2})$.
Hence, from the explicit formulas for the Hadamard normal-ordered
OPE coefficients (see \eqref{eq:Schwinger functions} and \eqref{eq:G NO C^phi^2 phi ... phi _I})
and the flow relation \eqref{eq:Minkowski fe T0 2pt id coef}, we
immediately obtain the following theorem giving the flow relations
for the (unextended) time-ordered OPE coefficients $(C_{H}){}_{T_{0}\{\phi\cdots\phi\}}^{I}(x_{1},\dots,x_{n};z)$. 
\begin{thm}
\label{thm:Minkowski flow rel}For any Hadamard parametrix satisfying
\eqref{eq:Minkowski fe T0 2pt id coef}, the corresponding OPE coefficients
$(C_{H}){}_{T_{0}\{\phi\cdots\phi\}}^{I}$ satisfy: 
\begin{align}
 & \frac{\partial}{\partial m^{2}}(C_{H}){}_{T_{0}\{\phi\cdots\phi\}}^{I}(x_{1},\dots,x_{n};z)\approx\label{eq:id coef fe Minkowski}\\
 & -\frac{i}{2}\int d^{D}y\mathfrak{L}[L]\chi(y,z;L)(C_{H}){}_{T_{0}\{\phi^{2}\phi\cdots\phi\}}^{I}(y,x_{1},\dots,x_{n};z)-\sum_{C}c_{C}(L)(C_{H}){}_{T_{0}\{\phi\cdots\phi\}}^{C}(x_{1},\dots,x_{n};z),\nonumber 
\end{align}
where $c_{C}$ is given by formula (\ref{eq:c_C Minkowski}) with
the same ambiguities arising from $a_{C}$. 
\end{thm}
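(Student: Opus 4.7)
The plan mirrors the derivation of Theorem~\ref{thm:vac NO flow rel gen n} in the Euclidean case, with the Feynman parametrix $H_{F}=(C_{H})_{T_{0}\{\phi\phi\}}^{I}$ playing the role of $G_{E}$ and the flow relation \eqref{eq:Minkowski fe T0 2pt id coef} replacing Lemma~\ref{lem:G_E flow eq}. By Proposition~\ref{thm:explicit TO coef} applied to \eqref{eq:H id coefs}, the coefficient $(C_{H})_{T_{0}\{\phi\cdots\phi\}}^{I}$ vanishes for odd $n$, while for even $n$ it equals the sum over perfect matchings $M$ of $\{1,\dots,n\}$ of $\prod_{\{i,j\}\in M}H_{F}(x_{i},x_{j})$. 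I would apply $\partial/\partial m^{2}$, distribute it over this product via the Leibniz rule, and substitute \eqref{eq:Minkowski fe T0 2pt id coef} for each factor $\partial H_{F}(x_{i},x_{j})/\partial m^{2}$.

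For the integrated piece, I would verify that the collection of terms reassembles into the single integral $-\tfrac{i}{2}\int d^{D}y\,\mathfrak{L}[L]\chi(y,z;L)(C_{H})_{T_{0}\{\phi^{2}\phi\cdots\phi\}}^{I}(y,x_{1},\dots,x_{n};z)$. By \eqref{eq:H id coefs} and the vanishing of $\mathcal{N}_{(v,i)(w,j)}$ in \eqref{wmatdef} when $v=w\neq0$, the kernel $(C_{H})_{T_{0}\{\phi^{2}\phi\cdots\phi\}}^{I}$ is a sum over matchings in which the two indices associated with $y$ are paired with \emph{distinct} $x_{i}$'s. Summing the Leibniz contributions over all matchings $M$ and differentiated edges $\{i,j\}\in M$ reproduces this sum with the expected weight $2H_{F}(y,x_{i})H_{F}(y,x_{j})$ for each unordered pair $\{i,j\}$, multiplied by a perfect matching of the remaining $n-2$ points; this is the Lorentzian analogue of the identity used in Theorem~\ref{thm:vac NO flow rel gen n} (cf.\ \eqref{eq:G NO C^phi^2 phi ... phi _I}).

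Next I would handle the counterterm piece. Each differentiated factor $H_{F}(x_{i},x_{j})$ contributes $-\sum_{C}c_{C}(L)(C_{H})_{T_{0}\{\phi\phi\}}^{C}(x_{i},x_{j};z)$ (with $[C]_{\phi}=2$), multiplied by the product of the remaining Feynman parametrices in the matching. Summing over matchings and differentiated edges and invoking the time-ordered analogue of the factorization identity \eqref{eq:H intermed coef p=m} (with $p=m=2$) in reverse collapses this to $-\sum_{C}c_{C}(L)(C_{H})_{T_{0}\{\phi\cdots\phi\}}^{C}(x_{1},\dots,x_{n};z)$. For odd $n$, both sides of \eqref{eq:id coef fe Minkowski} vanish by parity: the integrand contains $n+2$ fields and is zero, while $(C_{H})_{T_{0}\{\phi\cdots\phi\}}^{C}$ with $[C]_{\phi}=2$ requires $n-2$ to be even, so the counterterm sum also vanishes.

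The main obstacle will be verifying that the asymptotic relation $\sim_{\delta}$ of \eqref{eq:Minkowski fe T0 2pt id coef} lifts cleanly to the asymptotic relation $\approx$ of \eqref{eq:id coef fe Minkowski}: one must control the scaling degree uniformly over all merger trees and partial coincidence limits as the Leibniz manipulation is carried out factor by factor, and justify the (unique) extension of the integrand $(C_{H})_{T_{0}\{\phi^{2}\phi\cdots\phi\}}^{I}$ to the partial diagonals $y=x_{i}$ that gives meaning to the $y$-integral, per Remark~\ref{rem: ext of T0[C^I_phi phi]}. Once these technicalities are in place, the inherent ambiguity in $c_{C}$ is inherited directly from that of $a_{C}$ in \eqref{eq:Minkowski fe T0 2pt id coef}, confirming the ambiguity claim stated in the theorem.
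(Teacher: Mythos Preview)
Your proposal is correct and follows essentially the same approach as the paper: the paper states that the theorem is obtained ``immediately'' from the explicit formulas \eqref{eq:Schwinger functions} and \eqref{eq:G NO C^phi^2 phi ... phi _I} (with $H\to H_{F}$) together with the $n=2$ flow relation \eqref{eq:Minkowski fe T0 2pt id coef}, by the same Leibniz-rule reasoning used in the proof of Theorem~\ref{thm:vac NO flow rel gen n}. Your write-up simply spells out the combinatorics that the paper leaves as ``by inspection,'' including the use of the factorization identity \eqref{eq:H intermed coef p=m} to reassemble the counterterm piece, and correctly flags the extension-to-partial-diagonals issue that the paper handles in Remark~\ref{rem:unique ext of Mink flow rel}.
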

Note that the inherent ambiguities in these flow relations are in
1-1 correspondence with the freedom to choose a Hadamard parametrix
whose corresponding Hadamard normal-ordered Wick fields are compatible
with axioms W1-W8. 
\begin{rem}
\label{rem:unique ext of Mink flow rel}As emphasized in Subsection
\ref{subsec:TOP}, the extension of $T_{0}\{\Phi_{A_{1}}(x_{1})\cdots\Phi_{A_{n}}(x_{n})\}$
to algebra-valued distributions defined on the diagonals generally
introduces additional ``contact-term'' ambiguities proportional
to $\delta$-distributions (and their distributional derivatives).
However, the scaling degree of $\delta^{(D)}(x_{1},\dots,x_{n})$
is $n\cdot D$, whereas by Theorem \ref{thm:existence Wick coef and associativity}
the scaling degree of the coefficients $C_{T_{0}\{\phi\cdots\phi\}}^{I}(x_{1},\dots,x_{n};z)$
appearing in the flow relation (\ref{eq:id coef fe Minkowski}) is
$n\cdot(D-2)/2$. Since $(D-2)/2$ is strictly less than $D$ for
$D\ge2$, it follows that there exists no contact terms with scaling
degree less than or equal to the scaling degree of $C_{T_{0}\{\phi\cdots\phi\}}^{I}$.
By the axioms for time-ordered products in \cite{HW_local_Wick_poly,HW_Conservation_Stress-energy},
this implies that the extension of the $C_{T_{0}\{\phi\cdots\phi\}}^{I}$
coefficients to the diagonals is unique and, therefore, it so happens
that we could replace $T_{0}$ with $T$ in formula (\ref{eq:id coef fe Minkowski})
without introducing additional contact term ambiguities. Note, however,
that this does not occur for the general unextended time-ordered Wick
coefficients $C_{T_{0}\{A_{1}\cdots A_{n}\}}^{B}$ nor, in general,
for the coefficients appearing in the flow relations (\ref{eq:HH fe})
of $\lambda\phi^{4}$-theory. 
\end{rem}
Relation \eqref{eq:id coef fe Minkowski} of Theorem \ref{thm:Minkowski flow rel}
applies to the time-ordered OPE coefficients for the Hadamard normal-ordered
Wick fields. However, following the steps outlined below \prettyref{thm:Eucl fe for H-NO id coef}
of the preceding section, one may straightforwardly obtain flow relations
for the time-ordered OPE coefficients corresponding to any prescription
for the Wick fields satisfying axioms W1-W8. These relations will
similarly take the same general form as the Hadamard normal-ordered
relation \eqref{eq:id coef fe Minkowski} except there will be additional
terms containing factors of $F_{k}$ (with $k\le n)$ as in relation
(\ref{eq:Eucl L-indep fe with F_k terms}) above.

Finally, we note that our derivation of the flow relation \eqref{eq:id coef fe Minkowski}
for general $n$ relied heavily on our knowledge of the explicit expressions
for the time-ordered OPE coefficients of Hadamard normal-ordered Wick
fields, since this knowledge enabled us to obtain \eqref{eq:id coef fe Minkowski}
from the flow relation \eqref{eq:Minkowski fe T0 2pt id coef} for
$n=2$ via inspection. However, if the OPE coefficients with $n>2$
had not been related in a simple, known manner to the $n=2$ OPE coefficients,
we would not have been able to construct covariance-restoring terms
for the $n>2$ case using the techniques described in this section.
In Appendix \ref{sec:model-indep counterterms}, we develop a general
method for constructing covariance-restoring counterterms based on
the model-independent associativity conditions that can be applied
to the $n>2$ case and show that this general algorithm reproduces
the results claimed here.

\section{Flow relations for OPE coefficients in curved spacetime \label{sec:Wick fe in CS}}

In this section, we obtain flow relations for the unextended time-ordered
Wick OPE coefficients in general globally-hyperbolic Lorentzian spacetimes
$(M,g_{ab})$ in any dimension $D\ge2$. As in the preceding Minkowski
section, we focus attention initially to the flow relation for the
time-ordered OPE coefficient $(C_{H})_{T_{0}\{\phi\phi\}}^{I}=H_{F}(x_{1},x_{2};m^{2};\xi)$,
since the flow relations for other time-ordered Wick OPE coefficients
may be straightforwardly obtained once the flow relation for $(C_{H})_{T_{0}\{\phi\phi\}}^{I}$
is known.

In curved spacetime, any Feynman parametrix $H_{F}$ used for the
construction of $(C_{H})_{T_{0}\{\phi\phi\}}^{I}$ is required to
be locally and covariantly defined and have (jointly) smooth dependence
on the coupling parameters $(m^{2},\xi)$. As already noted in Proposition
\ref{thm:explicit TO coef}, the relation between $H_{F}$ and a Hadamard
parametrix $H$ (see (\ref{eq:H})) is given by 
\begin{equation}
H_{F}(x_{1},x_{x})=H(x_{1},x_{2})-i\Delta^{\text{adv}}(x_{1},x_{2})
\end{equation}
with $\Delta^{\text{adv}}$ denoting the advanced Green's function.
Since the forms of $H$ and $H_{F}$ depend on the squared geodesic
distance function $\sigma(x_{1},x_{2})$, these parametrices are well
defined only in convex normal neighborhoods. The Feynman parametrix
is a fundamental solution to the Klein-Gordon equation 
\begin{equation}
(-g^{ab}\nabla_{a}\nabla_{b}+m^{2}+\xi R)H_{F}(x_{1},x_{2};m^{2};\xi)\approx-i\delta^{(D)}(x_{1},x_{2})+\text{smooth terms},\label{eq:K H_F CS}
\end{equation}
Furthermore, in curved spacetime, the wavefront set of $H_{F}$ continues
to be of the form $(x_{1},k;x_{2},-k)$ \cite{Rad_microlocal_Hadamard_cond}.
In particular, when smeared in either of its spacetime variables with
a test function $f$ of sufficiently small compact support, $H_{F}(y,f)$
is a smooth function in $y$ within a convex normal neighborhood of
the support of $f$.

The above properties of $H_{F}$ were all that were needed to obtain
the initial flow relation (\ref{eq:Omega_M}) in Minkowski spacetime,
so we can parallel these steps to derive a similar flow relation in
any any globally-hyperbolic curved spacetime $(M,g_{ab})$. To do
so, let $U_{z}\subset M$ be a convex normal neighborhood of the point
$z\in M$. It is convenient to work in a Riemannian normal coordinate
(RNC) system about $z$. A RNC system is constructed introducing an
orthonormal basis (i.e., ``tetrad'') for $T_{z}M$, 
\begin{equation}
\left\{ (e_{\mu})^{a}\in T_{z}M|\,\mu\in\{0,\dots,D-1\}\text{ and }g_{ab}(e_{\mu})^{a}(e_{\nu})^{b}=\eta_{\mu\nu}\right\} .\label{eq:tetrad}
\end{equation}
The tetrad allows us to identify $T_{z}M$ with $\mathbb{R}^{D}$.
We then use the exponential map\textemdash which maps $v^{a}\in T_{z}M$
into the point in $M$ lying at unit affine parameter along the geodesic
determined by $(z,v^{a})$\textemdash to provide a diffeomorphism
between $U_{z}$ and a neighborhood $\mathcal{U}_{0}$ of the origin
of $\mathbb{R}^{D}$. This correspondence provides coordinates $x^{\mu}$
on $U_{z}$. We denote by $t^{\mu}$ the RNC components of the timelike
vector at $z$ that is proportional to $(e_{0})^{\mu}$ but required
to remain unit-normalized with respect to the metric components under
the rescaling $g_{\mu\nu}\to\lambda^{-2}g_{\mu\nu}$, i.e., under
this rescaling, it is required that $t_{\mu}\to\lambda^{-1}t_{\mu}$.
Let $\zeta\in C_{0}^{\infty}(\mathbb{R})$ again denote a test function
that is equal to one for $|s|\leq1$ and vanishes for $|s|\geq2$.
We then define a cutoff function on $U_{z}$ via, 
\begin{equation}
\chi[g{}_{\mu\nu},t_{\mu},L](y;\vec{0})=\zeta\left(L^{-2}\left(g_{\mu\nu}(\vec{0})+2t_{\mu}t_{\nu}\right)y^{\mu}y^{\nu}\right),\label{eq:chi_C}
\end{equation}
where $L$ is chosen such that the coordinate ball of radius $2L$
lies within $U_{z}$. Here $y^{\mu}$ denotes the RNC values of $y$
and we have denoted $z$ by its RNC value $\vec{0}$. Note that the
cutoff function (\ref{eq:chi_C}) is invariant under the simultaneous
rescaling $(g_{ab},L)\to(\lambda^{-2}g_{ab},\lambda^{-1}L)$ with
the RNC coordinate basis held fixed.

With these definitions and constructions, we can now straightforwardly
generalize the derivation of (\ref{eq:Omega_M}) to curved spacetime.
We obtain 
\begin{equation}
\frac{\partial}{\partial m^{2}}H_{F}[g_{\mu\nu}](x_{1},x_{2})\approx\Omega_{C}[g_{\mu\nu},t_{\mu},L](x_{1},x_{2};\vec{0})+\text{terms smooth in \ensuremath{(x_{1},x_{2})}},\label{eq:m^2 flow eq CS}
\end{equation}
where 
\begin{align}
\Omega_{C}[g_{\mu\nu}, & t_{\mu},L](f_{1},f_{2};\vec{0})\equiv\label{eq:Omega_C}\\
 & -i\int_{\mathbb{R}^{D}}d^{D}y\sqrt{-g(y)}\,\chi[g_{\mu\nu},t_{\mu},L](y;\vec{0})\,H_{F}[g_{\mu\nu}](y,f_{1})H_{F}[g_{\mu\nu}](y,f_{2}).\nonumber 
\end{align}
In curved spacetime, the parameter $\xi$ enters the Klein-Gordon
equation in a nontrivial manner and we also seek a flow equation in
$\xi$. Using the fact that the commutator of the differential operator
$\partial_{\xi}\equiv\partial/\partial\xi$ with the Klein-Gordon
operator (\ref{eq:K-G op}) is given by 
\[
[K,\partial_{\xi}]=-RI.
\]
we can similarly derive the $\xi$-flow equation 
\begin{align}
 & \frac{\partial}{\partial\xi}H_{F}[g_{\mu\nu}](x_{1},x_{2};\xi)\label{eq:xi flow eq}\\
 & \qquad\approx-i\int_{\mathbb{R}^{D}}d^{D}y\sqrt{-g(y)}\chi(y,z)R(y)H_{F}[g_{\mu\nu}](y,x_{1};\xi)H_{F}[g_{\mu\nu}](y,x_{2};\xi)+\text{smooth}.\nonumber 
\end{align}
Note that the integral in the second line vanishes unless the scalar
curvature is nonzero. Since the analysis of the flow relations (\ref{eq:m^2 flow eq CS})
and (\ref{eq:xi flow eq}) are essentially identical, in the following
we will focus attention on only the $m^{2}$-flow relation (\ref{eq:m^2 flow eq CS}),
it being understood that (\ref{eq:xi flow eq}) can be analyzed in
a completely parallel manner, with the minor differences described
in Remark \ref{rem:xi flow rel} below Theorem \ref{thm:flow rel curved spacetime}.

If we attempt to drop the smooth terms and use (\ref{eq:m^2 flow eq CS})
as our flow equation we will encounter three major difficulties: (i)
Since the quantity $\Omega_{C}$ is defined in (\ref{eq:Omega_C})
by an integral over a finite spacetime region, $\Omega_{C}$ depends
nonlocally on the metric, which is not compatible with axiom W1. (ii)
On account of the presence of the cutoff function $\chi$, $\Omega_{C}$
is not covariantly defined, which also is not compatible with axiom
W1. (iii) On account of the cutoff scale $L$ present in $\chi$,
the scaling dependence of the OPE coefficients will not be compatible
with axiom W7. As we shall now show, these difficulties can be overcome
by suitably modifying the flow relation (\ref{eq:m^2 flow eq CS}).
Specifically, difficulty (i) can be overcome by replacing (\ref{eq:Omega_C})
with a similar expression involving the Taylor coefficients of the
metric in an expansion about $z$ rather than the metric itself. Difficulty
(ii) then can be overcome by a generalization of the procedure used
to restore Lorentz invariance in Minkowski spacetime. Finally, difficulty
(iii) can be overcome by the same procedure as used for the Euclidean
and Minkowski flow relations. We now discuss, in turn, these difficulties
and their resolutions.

\paragraph{(i) Locality.}

As already indicated above, the key idea needed to convert (\ref{eq:Omega_C})
into an expression that depends only on the metric in an arbitrarily
small neighborhood of $z$ is to replace the metric by its Taylor
approximation about $z$, carried to sufficiently high order. To scaling
degree $\delta$, the RNC components of the metric are asymptotically
equivalent to its Taylor polynomial about the origin, 
\begin{align}
g_{\mu\nu}(x)\sim_{\delta}g{}_{\mu\nu}^{(N)}(x) & \equiv\sum_{k=0}^{N}\frac{1}{k!}x^{\sigma_{1}}\cdots x^{\sigma_{k}}\left.\frac{\partial^{k}g_{\mu\nu}(x)}{\partial x^{\sigma_{1}}\cdots\partial x^{\sigma_{k}}}\right|_{x=\vec{0}}\label{eq:g^(N)_munu}\\
 & =\eta_{\mu\nu}+\frac{1}{3}R_{\mu\nu\kappa\rho}(\vec{0})x^{\kappa}x^{\rho}-\frac{1}{6}\nabla_{\sigma}R_{\mu\nu\kappa\rho}(\vec{0})x^{\kappa}x^{\rho}x^{\sigma}+\cdots,\nonumber 
\end{align}
provided that we take $N\ge\delta$. As indicated by the second line
of (\ref{eq:g^(N)_munu}), the Taylor coefficients are expressible
entirely in terms of the Riemannian curvature tensor and its totally-symmetric
covariant derivatives evaluated at the origin\footnote{This follows from a close relative \cite[see Lemma 2.1]{Wald_Iyer}
of the ``Thomas replacement theorem'' \cite{Thomas_Replacement}.}. For sufficiently large $x^{\mu}$, the Taylor polynomial $g_{\mu\nu}^{(N)}(x)$
need not define a Lorentz metric. However, we can choose $L$ sufficiently
small that $|g_{\mu\nu}^{(N)}-\eta_{\mu\nu}|\ll1$ within a coordinate
ball of radius $2L$, so that $g_{\mu\nu}^{(N)}(x)$ is a Lorentz
metric wherever $\chi$ is nonvanishing.

To proceed, we perform an expansion of $\Omega_{C}[g_{\mu\nu},t_{\mu},L]$
about $g_{\mu\nu}=\eta_{\mu\nu}$ as a power series in the (symmetrized)
covariant derivatives of the Riemann curvature tensor. This curvature
expansion as well as the precise bound on the scaling degree of its
non-smooth terms is derived in Appendix \ref{sec:proof of Omega_C curvature exp}.
This expansion also will be needed for our construction of covariance-restoring
counterterms below. The expansion takes the form\footnote{To avoid overly cumbersome notation involving multiple subscripts
on spacetime indices, we have implicitly re-used some Greek letters
in (\ref{eq:Omega_C curvature exp}), but the intended summations
should be clear from context.} 
\begin{align}
 & \Omega_{C}[g_{\mu\nu},t_{\mu},L](f_{1},f_{2};\vec{0})\label{eq:Omega_C curvature exp}\\
 & \qquad\sim_{\delta}\sum_{k=0}^{\delta+D-4}\sum_{\vec{p}_{k}}(\Omega_{\vec{p}}){}^{\{\mu\cdots\sigma_{k-2}\}}[\eta_{\mu\nu},t_{\mu},L](f_{1},f_{2};\vec{0})\prod_{j=0}^{k-2}\left[R_{\mu\nu\kappa\rho;(\sigma_{1}\cdots\sigma_{j})}(\vec{0})\right]^{p_{j}}+\text{smooth terms}.\nonumber 
\end{align}
Here we have defined, 
\begin{equation}
(\Omega_{\vec{p}}){}^{\{\mu\cdots\sigma_{k-2}\}}[\eta_{\mu\nu},t_{\mu},L](f_{1},f_{2};\vec{0})\equiv\left.\frac{\partial^{P}\Omega_{C}[g_{\mu\nu}^{(k)},t_{\mu},L](f_{1},f_{2};\vec{0})}{\partial^{p_{0}}R_{\mu\nu\kappa\rho}(\vec{0})\cdots\partial^{p_{k-2}}R_{\mu\nu\kappa\rho;(\sigma_{1}\cdots\sigma_{k-2})}(\vec{0})}\right|_{g_{\mu\nu}^{(k)}=\eta_{\mu\nu}},\label{eq:Omega_C^gamma}
\end{equation}
where $g_{\mu\nu}^{(k)}$ denotes the $k$th-order polynomial metric
(\ref{eq:g^(N)_munu}) computed from $g_{\mu\nu}$ and $P\equiv\sum_{j=0}^{k-2}p_{j}$.
In (\ref{eq:Omega_C curvature exp}) the $\vec{p}_{k}$-sum runs over
all non-negative integers $\vec{p}_{k}\equiv(p_{0},\dots,p_{k-2})$
such that 
\begin{equation}
2p_{0}+3p_{1}+\cdots+kp_{(k-2)}=k.\label{eq:p-sum bounds}
\end{equation}

Note (\ref{eq:Omega_C^gamma}) are tensor-valued distributions defined
on a neighborhood of the origin in \emph{flat} Minkowski spacetime,
$(\mathcal{N}_{0},\eta_{\mu\nu})$. Hence, all of the curvature dependence
of the explicit terms in the curvature expansion (\ref{eq:Omega_C curvature exp})
for $\Omega_{C}$ comes through a finite product of curvature tensors
evaluated at the origin. Note the derivatives in (\ref{eq:Omega_C^gamma})
with respect to curvature tensors are well-defined because the smeared
distribution $\Omega_{C}$ is a smooth function of the metric and
the polynomial metric $g_{\mu\nu}^{(k)}$ is a smooth function of
finitely-many curvature tensors evaluated at the origin: 
\begin{equation}
g_{\mu\nu}^{(k)}(x)=g_{\mu\nu}^{(k)}[x^{\sigma},\eta_{\mu\nu},R_{\mu\nu\kappa\rho}(\vec{0}),\nabla_{\sigma}R_{\mu\nu\kappa\rho}(\vec{0}),\dots,\nabla_{(\sigma_{1}}\cdots\nabla_{\sigma_{k-2})}R_{\mu\nu\kappa\rho}(\vec{0})].\label{eq:g^k functional dep on curvature}
\end{equation}

The result needed to effectively replace $g_{\mu\nu}$ with $g_{\mu\nu}^{(N)}$
in (\ref{eq:Omega_C}) is given in the following proposition: 
\begin{prop}
\label{Prop: Omega_C[g]=Omega_C[g^N]+smooth}
Let $H_{F}$ be a local and covariant Feynman parametrix which scales
almost homogeneously with an overall factor of $\lambda^{(D-2)}$
under $(g_{\mu\nu},m^{2})\to(\lambda^{-2}g_{\mu\nu},\lambda^{2}m^{2})$
and which depends smoothly on $m^{2}$. Let $\Omega_{C}$ be given
by (\ref{eq:Omega_C}). Then for all $\ensuremath{N\ge\delta+D-4}$,
we have 
\begin{equation}
\Omega_{C}[g_{\mu\nu},t_{\mu},L](x_{1},x_{2};\vec{0})\sim_{\delta}\Omega_{C}[g_{\mu\nu}^{(N)},t_{\mu},L](x_{1},x_{2};\vec{0})+\text{terms smooth in \ensuremath{(x_{1},x_{2})},}\label{eq:Omega_C[g]=Omega_C[g^N]+smooth}
\end{equation}
\end{prop}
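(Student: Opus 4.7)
The plan is to apply the curvature expansion (\ref{eq:Omega_C curvature exp}) to both $\Omega_C[g_{\mu\nu}]$ and $\Omega_C[g_{\mu\nu}^{(N)}]$ and then show that, for $N \ge \delta + D - 4$, the two expansions coincide term-by-term up to scaling degree $\delta$. I would invoke the curvature expansion itself as a result of Appendix \ref{sec:proof of Omega_C curvature exp}, so the present argument reduces to a careful bookkeeping of which metric Taylor coefficients determine which curvature factors at the origin.

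First I would note that, by shrinking $L$ if necessary, the polynomial metric $g_{\mu\nu}^{(N)}$ is a genuine Lorentzian metric wherever $\chi$ is supported, so that $\Omega_C[g_{\mu\nu}^{(N)}]$ is defined and also admits the expansion (\ref{eq:Omega_C curvature exp}). Next, I would observe that the coefficient distributions $(\Omega_{\vec{p}})^{\{\cdots\}}[\eta_{\mu\nu},t_{\mu},L]$ defined by (\ref{eq:Omega_C^gamma}) are computed at the flat background $g_{\mu\nu}^{(k)} = \eta_{\mu\nu}$, and hence depend only on $\eta_{\mu\nu}$, $t_{\mu}$, and $L$; in particular, they are the \emph{same} in the two expansions. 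The two expansions can therefore differ (up to asymptotic degree $\delta$) only through the curvature tensors $R_{\mu\nu\kappa\rho;(\sigma_1 \cdots \sigma_j)}(\vec{0})$ that multiply these coefficients.

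The key step is then to show these curvature factors coincide between $g_{\mu\nu}$ and $g_{\mu\nu}^{(N)}$ for every term appearing in the truncated expansion. Since $g_{\mu\nu}^{(N)}$ is by construction the degree-$N$ Taylor polynomial of $g_{\mu\nu}$ in RNC about $\vec{0}$, the two metrics have identical Taylor coefficients at the origin up to and including order $N$. The tensor $\nabla_{\sigma_1}\cdots\nabla_{\sigma_j} R_{\mu\nu\kappa\rho}$ at the origin depends on metric Taylor coefficients only up to order $j+2$. In the expansion (\ref{eq:Omega_C curvature exp}), the constraint (\ref{eq:p-sum bounds}) bounds the highest-order covariant derivative of curvature appearing at level $k$ by $j = k-2$, requiring metric derivatives of order at most $k$. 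Since the sum is truncated at $k = \delta + D - 4 \le N$, every curvature factor appearing in either expansion depends only on Taylor coefficients of $g_{\mu\nu}$ at the origin of order at most $N$, and so $R_{\mu\nu\kappa\rho;(\sigma_1 \cdots \sigma_j)}[g^{(N)}](\vec{0}) = R_{\mu\nu\kappa\rho;(\sigma_1 \cdots \sigma_j)}[g](\vec{0})$ throughout.

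This matches the two expansions term-by-term, yielding $\Omega_C[g_{\mu\nu}] - \Omega_C[g_{\mu\nu}^{(N)}]$ equal to a (generally nonzero) difference of two smooth remainders, which is itself smooth; this is exactly the content of (\ref{eq:Omega_C[g]=Omega_C[g^N]+smooth}). The main substantive obstacle is really establishing the curvature expansion itself, which is the business of Appendix \ref{sec:proof of Omega_C curvature exp}; once that is in hand, the present proposition is purely a dimensional-counting argument tracking the order of metric derivatives needed to realize each curvature factor at $\vec{0}$.
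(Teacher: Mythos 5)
Your proposal is correct and follows essentially the same route as the paper: both arguments rest on the curvature expansion (\ref{eq:Omega_C curvature exp}) together with the counting that the highest curvature derivative appearing up to level $k\le\delta+D-4$ requires metric Taylor coefficients at $\vec{0}$ only to order $k\le N$, so the curvature factors (and the background coefficients $(\Omega_{\vec{p}})$, which depend only on $\eta_{\mu\nu}$, $t_{\mu}$, $L$) agree for $g_{\mu\nu}$ and $g_{\mu\nu}^{(N)}$. The only organizational difference is that the paper performs this matching first for polynomial metrics and then extends to general smooth $g_{\mu\nu}$ via a smooth one-parameter family of polynomial metrics and axiom $\spectrum$, whereas you absorb that extension step into Appendix \ref{sec:proof of Omega_C curvature exp}'s statement of the expansion for general smooth metrics, which is legitimate since that is where the extension is established.
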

\begin{proof}[Proof.]
The proposition can be deduced from the curvature expansion (\ref{eq:Omega_C curvature exp})
for $\Omega_{C}$ as follows: Note the maximum number of covariant
derivatives of $R_{\mu\nu\kappa\rho}$ appearing in the curvature
expansion (\ref{eq:Omega_C curvature exp}) is $\delta+D-6$. Consider
first the special case that $g_{\mu\nu}=g_{\mu\nu}^{(P)}$ for arbitrary,
but finite, integer $P$. We want to determine the smallest integer
$N<P$ such that the relation (\ref{eq:Omega_C[g]=Omega_C[g^N]+smooth})
of the proposition holds. Since both $g_{\mu\nu}$ and $g_{\mu\nu}^{(N)}$
are themselves polynomial metrics of the form (\ref{eq:g^(N)_munu}),
it follows immediately that their respective polynomial approximations,
$g_{\mu\nu}^{(k)}[g_{\mu\nu}]$ and $g_{\mu\nu}^{(k)}[g_{\mu\nu}^{(N)}]$,
are identical for any $k\le N$ and, thus, all the coefficients (\ref{eq:Omega_C^gamma})
of the curvature expansion computed from their respective polynomial
approximations are identical so long as the number of covariant derivatives
in $g_{\mu\nu}^{(N)}$ is at least $\delta+D-6$ (i.e., if $N-2\ge\delta+D-6$).
Since their respective curvature expansions (\ref{eq:Omega_C curvature exp})
are thus identical for $N\ge\delta+D-4$, this then implies the claimed
relation (\ref{eq:Omega_C[g]=Omega_C[g^N]+smooth})
holds for the special case that $g_{\mu\nu}=g_{\mu\nu}^{(P)}$. To
extend the proof of relation (\ref{eq:Omega_C[g]=Omega_C[g^N]+smooth})
to arbitrary smooth metrics $g_{\mu\nu}$, we use the fact \cite[see proof of Theorem 4.1]{HW_Existence_TOP}
that it is always possible to define a 1-parameter family of metrics
$h_{\mu\nu}(x;p)$ which depend smoothly on $p$ in a neighborhood
of $p=0$ and such that: i) For any fixed $p\ne0$, $h_{\mu\nu}(x;p)$
is a polynomial metric of finite order and ii) $h_{\mu\nu}(x;p=0)=g_{\mu\nu}$.
The proposition has already been established for $h_{\mu\nu}(x;p)$
when $p\ne0$ since these are polynomial metrics, so compatibility
with the smoothness axiom $\spectrum$ then implies the proposition
must hold also for $p=0$. 
\end{proof}
Our provisional proposal is to replace (\ref{eq:m^2 flow eq CS})
with 
\begin{equation}
\frac{\partial}{\partial m^{2}}H_{F}[g_{\mu\nu}](x_{1},x_{2})\sim_{\delta}\Omega_{C}[g_{\mu\nu}^{(N)},t_{\mu},L](x_{1},x_{2};\vec{0}),\label{eq:preliminary local fe}
\end{equation}
where $N\ge\delta+D-4$. The distribution $\Omega_{C}[g_{\mu\nu}^{(N)},t_{\mu},L]$
appearing on the right-hand side of (\ref{eq:preliminary local fe})
is manifestly local with respect to the original spacetime $(M,g_{ab})$,
since the only dependence of $g_{\mu\nu}^{(N)}$ on the spacetime
curvature comes through a finite number of local curvature tensors
evaluated at $z$ (see (\ref{eq:g^(N)_munu})). Thus, the flow equation
(\ref{eq:preliminary local fe}) is now local in the spacetime metric.
However, it fails to be covariant. We turn now to making a further
modification of (\ref{eq:preliminary local fe}) to restore covariance.

\paragraph*{(ii) Covariance.}

The distribution $\Omega_{C}[g_{\mu\nu}^{(N)},t_{\mu},L]$ appearing
in (\ref{eq:preliminary local fe}) fails to be covariant because
the cutoff function, $\chi$, depends upon a choice of the unit timelike
co-vector $t_{\mu}$ at $z$, which is not determined by the metric.
However, any two normalized timelike co-vectors $t_{\mu}$ and $t_{\mu}'$
at $z$ are related via a restricted Lorentz transformation $\Lambda\in\mathcal{L}_{+}^{\uparrow}$,
\[
t_{\mu}'=\Lambda_{\ \mu}^{\nu}t_{\nu}=(\Lambda^{-1})_{\mu}^{\hphantom{\mu}\nu}t_{\nu}\equiv(\Lambda^{-1}t)_{\mu}.
\]
Thus, in order to obtain a covariant flow relation, we seek to modify
the flow relations by the addition of smooth locally-constructed ``counterterms''
that compensate for the failure of $\Omega_{C}[g_{\mu\nu}^{(N)},t_{\mu},L]$
to be invariant under Lorentz transformations of $t_{\mu}$.

The dependence of $\Omega_{C}[g_{\mu\nu}^{(N)},t_{\mu},L]$ on Lorentz
transformations of $t_{\mu}$ is quantified by the distribution, 
\begin{align}
 & Q_{C}[g_{\mu\nu}^{(N)},t_{\mu},L](x_{1},x_{2};\vec{0};\Lambda^{-1})\label{eq:Q_C}\\
 & \equiv\Omega_{C}[g_{\mu\nu}^{(N)},(\Lambda^{-1}t)_{\mu},L](x_{1},x_{2};\vec{0})-\Omega_{C}[g_{\mu\nu}^{(N)},t{}_{\mu},L](x_{1},x_{2};\vec{0})\nonumber \\
 & =-i\int d^{D}y\sqrt{-g^{(N)}(y)}\left[\chi[g_{\mu\nu}^{(N)},t{}_{\mu},L](\Lambda y;\vec{0})-\chi[g_{\mu\nu}^{(N)},t_{\mu},L](y;\vec{0})\right]H_{F}[g{}_{\mu\nu}^{(N)}](y,x_{1})H_{F}[g{}_{\mu\nu}^{(N)}](y,x_{2}).\nonumber 
\end{align}
By the same arguments as given for quantity $Q_{M}$ in Minkowski
spacetime (see eq.~\eqref{eq:def Q func}), the quantity $Q_{C}$
has smooth dependence on $(x_{1},x_{2})$. In Minkowski spacetime,
the Taylor coefficients, $\boldsymbol{Q}_{M}(\Lambda^{-1})\equiv\partial_{\gamma_{1}}^{(x_{1})}\partial_{\gamma_{2}}^{(x_{2})}Q_{M}(x_{1},x_{2};z;\Lambda^{-1})|_{x_{1},x_{2}=z}$,
of $Q_{M}(x_{1},x_{2};z;\Lambda^{-1})$ were shown to satisfy (\ref{eq:Q(Lambda1Lambda2)-Q(Lambda1)}).
The existence of the desired counterterms in the flow relations was
then established by cohomological arguments. However, in curved spacetime,
the Taylor coefficients of $Q_{C}$ do not satisfy (\ref{eq:Q(Lambda1Lambda2)-Q(Lambda1)})
for the simple reason that the curved metric $g_{\mu\nu}^{(N)}$ given
by (\ref{eq:g^(N)_munu}) is not invariant under Lorentz transformations.

Nevertheless, we can use the curvature expansion (\ref{eq:Omega_C curvature exp})
for $Q_{C}[g_{\mu\nu}^{(N)},t_{\mu},L]$ and consider the behavior
under Lorentz transformations of the coefficients $(\Omega_{\vec{p}}){}^{\{\mu\cdots\sigma_{k-2}\}}[\eta_{\mu\nu},t_{\mu},L]$
appearing in that expansion (see (\ref{eq:Omega_C^gamma})). We write
\begin{equation}
(Q_{\vec{p}}){}^{\gamma}[\eta_{\mu\nu},t_{\mu},L,\Lambda^{-1}]\equiv(\Omega_{\vec{p}}){}^{\gamma}[\eta_{\mu\nu},(\Lambda^{-1}t){}_{\mu},L]-(\Omega_{\vec{p}}){}^{\gamma}[\eta_{\mu\nu},t_{\mu},L],\label{eq:Q^gamma}
\end{equation}
where we use the multi-index notation $\gamma\equiv\{\mu\cdots\sigma_{k-2}\}$.
For notational convenience, we will suppress the $p$-subscripts in
the following and write the left side of (\ref{eq:Q^gamma}) simply
as $Q^{\gamma}$. Since $Q^{\gamma}$ is smooth, its asymptotic behavior
is determined by its Taylor coefficients, 
\begin{equation}
Q_{\ \gamma_{1}\gamma_{2}}^{\gamma}(\Lambda^{-1})\equiv\partial_{\gamma_{1}}^{(x_{1})}\partial_{\gamma_{2}}^{(x_{2})}Q^{\gamma}(x_{1},x_{2};\vec{0};\Lambda^{-1})|_{x_{1},x_{2}=\vec{0}},\label{eq:Q^gamma3_(gamma1 gamma2)}
\end{equation}
The crucial point is that the Taylor coefficients (\ref{eq:Q^gamma3_(gamma1 gamma2)})
depend only on $\eta_{\mu\nu}$, not the spacetime metric $g_{\mu\nu}^{(N)}$\textemdash all
of the dependence on the spacetime metric in the curvature expansion
(\ref{eq:Omega_C curvature exp}) appears in the curvature factors,
not in $(\Omega_{\vec{p}})$. Consequently, we obtain, 
\begin{align}
 & Q_{\hphantom{\gamma_{3}}\gamma_{1}\gamma_{2}}^{\gamma_{3}}(\Lambda_{1}\Lambda_{2})-Q_{\hphantom{\gamma_{3}}\gamma_{1}\gamma_{2}}^{\gamma_{3}}(\Lambda_{1})\nonumber \\
 & =\left[\partial_{\gamma_{1}}^{(x_{1})}\partial_{\gamma_{2}}^{(x_{2})}\left[\Omega^{\gamma_{3}}[(\Lambda_{1}\Lambda_{2}t)_{\mu}](x_{1},x_{2};\vec{0})-\Omega^{\gamma_{3}}[(\Lambda_{1}t)_{\mu}](x_{1},x_{2};\vec{0})\right]\right]_{x_{1},x_{2}=\vec{0}}\nonumber \\
 & =\Lambda_{\hphantom{\gamma_{3}'}\gamma_{3}'}^{\gamma_{3}}\left[\partial_{\gamma_{1}}^{(x_{1})}\partial_{\gamma_{2}}^{(x_{2})}\left[\Omega^{\gamma_{3}'}[(\Lambda_{2}t)_{\mu}](\Lambda_{1}^{-1}x_{1},\Lambda_{1}^{-1}x_{2};\vec{0})-\Omega^{\gamma_{3}'}[t{}_{\mu}](\Lambda_{1}^{-1}x_{1},\Lambda_{1}^{-1}x_{2};\vec{0})\right]\right]_{x_{1},x_{2}=\vec{0}}\nonumber \\
 & =(\Lambda_{1}){}_{\hphantom{\gamma_{3}}\gamma_{3}'}^{\gamma_{3}}(\Lambda_{1}){}_{\gamma_{1}}^{\hphantom{\gamma_{1}}\gamma_{1}'}(\Lambda_{1}){}_{\gamma_{2}}^{\hphantom{\gamma_{2}}\gamma_{2}'}Q_{\hphantom{\gamma_{3}'}\gamma_{1}'\gamma_{2}'}^{\gamma_{3}'}(\Lambda_{2}),\label{eq:coho id Q_C with indices}
\end{align}
where the second equility follows from the identity: 
\[
\text{\ensuremath{\Omega}}^{\gamma}[g_{\mu\nu}^{(N)},(\Lambda^{-1}t)_{\mu},L](x_{1},x_{2};\vec{0})=(\Lambda^{-1})_{\hphantom{\gamma_{3}}\gamma'}^{\gamma}\Omega^{\gamma'}[g_{\mu\nu}^{(N)},t_{\mu},L](\Lambda x_{1},\Lambda x_{2};\vec{0}),
\]
where we have used the fact that $H_{F}[(\Lambda g)_{\mu\nu}^{(N)}](\Lambda y,\Lambda x)=H_{F}[g_{\mu\nu}^{(N)}](y,x)$,
with 
\[
(\Lambda g)_{\mu\nu}^{(N)}(x)\equiv\Lambda_{\mu_{1}}^{\hphantom{\mu_{1}}\nu_{1}}\Lambda_{\mu_{2}}^{\hphantom{\mu_{2}}\nu_{2}}g_{\nu_{1}\nu_{2}}^{(N)}(\Lambda^{-1}x).
\]
Equation (\ref{eq:coho id Q_C with indices}) is a close analogue
of the equation (\ref{eq:Q(Lambda1Lambda2)-Q(Lambda1)}). Writing
$\boldsymbol{Q}_{C}(\Lambda)\equiv(Q_{\vec{p}}){}_{\hphantom{\gamma_{3}}\gamma_{1}\gamma_{2}}^{\gamma_{3}}(\Lambda)$,
we see that (\ref{eq:coho id Q_C with indices}) corresponds to the
cohomological identity 
\begin{equation}
\boldsymbol{Q}_{C}(\Lambda_{1})+D(\Lambda_{1})\boldsymbol{Q}_{C}(\Lambda_{2})-\boldsymbol{Q}_{C}(\Lambda_{1}\Lambda_{2})=0,\label{eq:Q_C coho id}
\end{equation}
see \eqref{eq:d^1Q=0}. By the same arguments as given in Proposition
\ref{Prop: existence of a} of Appendix \ref{sec:K-G Lorentz cts},
it follows that there exist tensors $\boldsymbol{a}\equiv(a_{\vec{p}})^{\gamma_{3}}{}_{\gamma_{1}\gamma_{2}}$
such that 
\begin{equation}
\boldsymbol{Q}_{C}(\Lambda)=\left(D(\Lambda)-\mathbb{I}\right)\boldsymbol{a},\label{eq:Q_C=(Lambda-I)b}
\end{equation}

We now can restore covariance to the curved spacetime flow equations
in close parallel with the procedure we used to restore Lorentz covariance
to the Minkowski flow equations. Let $\boldsymbol{a}\equiv(a_{\vec{p}})^{\gamma_{3}}{}_{\gamma_{1}\gamma_{2}}$
denote the solutions to (\ref{eq:Q_C=(Lambda-I)b}). Let 
\[
a_{\gamma_{1}\gamma_{2}}\equiv\sum_{k=0}^{\delta+D-4}\sum_{\vec{p}}\sum_{\gamma_{3}}(R^{\vec{p}})_{\gamma_{3}}(\vec{0})(a_{\vec{p}})^{\gamma_{3}}{}_{\gamma_{1}\gamma_{2}},
\]
with the $\vec{p}=(p_{0},p_{1},\dots,p_{k-2})$ sum running over (\ref{eq:p-sum bounds}).
Here we have abbreviated the product of curvature tensors appearing
in the curvature expansion by writing 
\begin{equation}
(R^{\vec{p}}){}_{\{\mu\nu\kappa\rho;(\sigma_{1}\cdots\sigma_{j})\}}(\vec{0})\equiv\prod_{j=0}^{k-2}\left[R_{\mu\nu\kappa\rho;(\sigma_{1}\cdots\sigma_{j})}(\vec{0})\right]^{p_{j}}.\label{eq:R^p}
\end{equation}
Now replace $\Omega_{C}[g_{\mu\nu}^{(N)},t_{\mu},L](x_{1},x_{2};\vec{0})$
in (\ref{eq:preliminary local fe}) with 
\begin{align}
 & \widetilde{\Omega}_{C}[g_{\mu\nu}^{(N)},t_{\mu},L](x_{1},x_{2};\vec{0})\label{eq:Omega_C tilde}\\
 & \equiv\Omega_{C}[g_{\mu\nu}^{(N)},t_{\mu},L](x_{1},x_{2};\vec{0})-\sum_{|\gamma_{1}|+|\gamma_{2}|\le\delta}\frac{1}{\gamma_{1}!\gamma_{2}!}a{}_{\gamma_{1}\gamma_{2}}[g_{\mu\nu}^{(N)},t_{\mu},L]x_{1}^{(\gamma_{1}}x_{2}^{\gamma_{2})},\nonumber 
\end{align}
Then, to scaling degree $\delta$, $\widetilde{\Omega}_{C}$ is independent
of the choice of unit-normalized timelike $t_{\mu}$ and differs from
$\Omega_{C}$ by a smooth function of $(x_{1},x_{2})$ with the same
scaling behavior as $\Omega_{C}$. Thus, the flow relation 
\begin{equation}
\frac{\partial}{\partial m^{2}}H_{F}[g_{\mu\nu}](x_{1},x_{2})\sim_{\delta}\widetilde{\Omega}_{C}[g_{\mu\nu}^{(N)},t_{\mu},L](x_{1},x_{2};\vec{0}),\label{eq:preliminary local fe cov}
\end{equation}
is both local and covariant in the metric. However, it does not have
the required scaling behavior, so we will make a further modification
to this flow relation in the ``scaling'' paragraph below.

Finally, we note that we can obtain a recursive formula for $\boldsymbol{a}$
by the same procedure as in the Minkowski case discussed in Appendix
\ref{sec:K-G Lorentz cts}. Define 
\begin{equation}
\boldsymbol{B}^{\kappa\rho}\equiv(B^{\kappa\rho})_{\hphantom{\gamma_{3}}\gamma_{1}\gamma_{2}}^{\gamma_{3}}\equiv-2\Omega{}_{\hphantom{\gamma_{3}}\gamma_{1}\gamma_{2}}^{\gamma_{3}}[\chi=y^{[\kappa}\partial^{\rho]}\chi],\label{eq:B^kappa rho CS}
\end{equation}
with $\Omega{}_{\hphantom{\gamma_{3}}\gamma_{1}\gamma_{2}}^{\gamma_{3}}[\chi=y^{[\kappa}\partial^{\rho]}\chi]$
denoting the Taylor coefficients of the smooth function, 
\begin{equation}
\Omega{}_{\hphantom{\gamma_{3}}}^{\gamma_{3}}[\chi=y^{[\kappa}\partial^{\rho]}\chi](x_{1},x_{2};\vec{0}).\label{eq:Omega^gamma3_(gamma1 gamma2)(chi^sigma2_sigma1)}
\end{equation}
(Smoothness of (\ref{eq:Omega^gamma3_(gamma1 gamma2)(chi^sigma2_sigma1)})
in $(x_{1},x_{2})$ is guaranteed by the fact that $\partial_{(y)}^{\rho}\chi(y,\vec{0})$
vanishes in a neighborhood of $y=\vec{0}$.) Then for any infinitesimal
restricted Lorentz transformation $\Lambda_{\theta}$, we have 
\begin{equation}
\boldsymbol{Q}_{C}(\Lambda_{\theta})=-\theta_{\kappa\rho}\boldsymbol{B}^{\kappa\rho}+\mathcal{O}(\theta^{2}).\label{eq:Q_C(Lambda_theta)}
\end{equation}
The analysis of Appendix \ref{sec:K-G Lorentz cts} then implies that
\begin{equation}
\boldsymbol{a}=\sum_{\substack{j=1\\
\widetilde{c}_{j}\ne0
}
}^{k}\frac{1}{\widetilde{c}_{j}}\mathbb{E}_{j}\left(-\mathbb{L}_{\kappa\rho}\boldsymbol{B}^{\kappa\rho}+4\sum_{i<j\le n}\eta_{\mu_{i}\mu_{j}}\text{tr}_{ij}\boldsymbol{a}\right),\label{eq:ind soln for b}
\end{equation}
with the notation defined in Appendix \ref{sec:K-G Lorentz cts},
where we have lowered all indices on the tensors so that all tensors
in (\ref{eq:ind soln for b}) are of type $(0,|\gamma_{1}|+|\gamma_{2}|+|\gamma_{3}|)$.
As explained in Appendix \ref{sec:K-G Lorentz cts}, equation (\ref{eq:ind soln for b})
determines higher rank coefficients $(a_{\vec{p}})^{\gamma_{3}}{}_{\gamma_{1}\gamma_{2}}$
inductively in terms of the equations for the lowest nontrivial ranks
with a given symmetry. When $\vec{p}=\vec{0}$, the coefficients $(a_{\vec{p}})^{\gamma_{3}}{}_{\gamma_{1}\gamma_{2}}$
coincide with those appearing in the Minkowski flow relations: i.e.,
$(a_{\vec{0}})^{\gamma_{3}}{}_{\gamma_{1}\gamma_{2}}=a_{\gamma_{1}\gamma_{2}}$,
whose rank $r\equiv|\gamma_{1}|+|\gamma_{2}|=0,1,2$ cases were stated
explicitly in the appendix. When $\vec{p}\ne\vec{0}$, the explicit
lower-rank cases can be straightforwardly obtained using the methods
of the appendix. For this purpose, it is worth noting the $(a_{\vec{p}})^{\gamma_{3}}{}_{\gamma_{1}\gamma_{2}}$
coefficients have the same symmetries as the Minkowski coefficients
$a_{\gamma_{1}\gamma_{2}}$ in the lower multi-indices (and their
respective spacetime indices $\gamma_{1}=\{\mu_{1}\cdots\mu_{p}\}$,
$\gamma_{2}=\{\nu_{1}\cdots\nu_{q}\}$). However, the symmetries of
the upper spacetime indices in $(a_{\vec{p}})^{\gamma_{3}}{}_{\gamma_{1}\gamma_{2}}$
are dictated by the curvature tensors \eqref{eq:R^p}.

Under the rescaling
\begin{equation}
(\eta_{\mu\nu},d^{D}y,m^{2},L)\to(\lambda^{-2}\eta_{\mu\nu},\lambda^{-D}d^{D}y,\lambda^{2}m^{2},\lambda^{-1}L),\label{eq:rescaling in tangent space}
\end{equation}
the inductive solutions (\ref{eq:ind soln for b}) for $(a_{\vec{p}_{k}}){}_{\hphantom{\gamma_{3}}\gamma_{1}\gamma_{2}}^{\gamma_{3}}$
will scale in the manner required for $\widetilde{\Omega}_{C}$ to
have the same scaling behavior as $\Omega_{C}$.

\paragraph*{(iii) Scaling}

The flow equation (\ref{eq:preliminary local fe cov}) is local and
covariant and scales almost homogeneously with the correct power of
$\lambda$ under $(g_{\mu\nu},m^{2},L)\to(\lambda^{-2}g_{\mu\nu},\lambda^{2}m^{2},\lambda^{-1}L)$.
However, on account of the nontrivial $L$ dependence, we do not have
the required almost homogeneous scaling under $(g_{\mu\nu},m^{2})\to(\lambda^{-2}g_{\mu\nu},\lambda^{2}m^{2})$.
This is the same difficulty as occurred in the Euclidean and Minkowski
cases, and it can be overcome by further modifying the flow relation in the same manner as for those cases. Specifically, we replace the flow
relation (\ref{eq:preliminary local fe cov}) with 
\begin{equation}
\frac{\partial}{\partial m^{2}}H_{F}[g_{\mu\nu}](x_{1},x_{2})\sim_{\delta}\mathfrak{L}[L]\widetilde{\Omega}_{C}(x_{1},x_{2};\vec{0};L)-\sum_{|\gamma_{1}|+|\gamma_{2}|\le\delta}\frac{1}{\gamma_{1}!\gamma_{2}!}c_{\gamma_{1}\gamma_{2}}(L)x_{1}^{(\gamma_{1}}x_{2}^{\gamma_{2})},\label{eq:frakL Omega_C tilde-cts}
\end{equation}
where $\mathfrak{L}$ was defined by (\ref{eq:full L op}) and where,
for $L>0$, 
\begin{align}
c_{\gamma_{1}\gamma_{2}}(L) & \equiv\sum_{k}\sum_{\vec{p}_{k}}(R^{\vec{p}}){}_{\gamma_{3}}(\vec{0})\left[\vphantom{\int_{0}^{L}}\mathfrak{L}[L](a_{\vec{p}})_{\hphantom{\gamma_{3}}\gamma_{1}\gamma_{2}}^{\gamma_{3}}(L)\right.+\label{eq:c_gamma1 gamma2 CS}\\
 & +\left.\int_{0}^{L}dL'\left[\Omega{}_{\hphantom{\gamma_{3}}\gamma_{1}\gamma_{2}}^{\gamma_{3}}\left[\chi=\partial_{L'}\!\left(\mathfrak{L}[L']\chi\right)\right]-\frac{\partial}{\partial L'}\left(\mathfrak{L}[L'](a_{\vec{p}}){}_{\hphantom{\gamma_{3}}\gamma_{1}\gamma_{2}}^{\gamma_{3}}(L')\right)\right]\right]\nonumber 
\end{align}
The flow relation (\ref{eq:frakL Omega_C tilde-cts}) is local and
covariant and has the proper scaling under $(g_{\mu\nu},m^{2})\to(\lambda^{-2}g_{\mu\nu},\lambda^{2}m^{2})$.

Using the definition (\ref{eq:Omega_C tilde}) of $\widetilde{\Omega}_{C}$
and the relations (\ref{eq:C_phi^2 phi phi ^I})-(\ref{eq:C_phi phi ^C}),
we may rewrite (\ref{eq:frakL Omega_C tilde-cts}) in terms of the
OPE coefficients: 
\begin{align}
\frac{\partial}{\partial m^{2}}(C_{H}){}_{T_{0}\{\phi\phi\}}^{I}(x_{1},x_{2};\vec{0})\sim_{\delta} & -\frac{i}{2}\int\sqrt{-g^{(N)}(y)}\mathfrak{L}[L]\chi(y,\vec{0};L)(C_{H}){}_{T_{0}\{\phi^{2}\phi\phi\}}^{I}(y,x_{1},x_{2};\vec{0})+\nonumber \\
 & -\sum_{[C]\le\delta+2}c_{C}[g_{\mu\nu}^{(N)},t_{\mu},L](C_{H})_{T_{0}\{\phi\phi\}}^{C}(x_{1},x_{2};\vec{0}),\label{eq:2-pt id coef fe CS}
\end{align}
where $N\ge\delta+D-4$. Here we have 
\begin{align}
 & c_{C}\equiv\sum_{k}\sum_{\vec{p}_{k}}(R^{\vec{p}}){}_{\gamma}(\vec{0})\left[\mathfrak{L}[L](a_{\vec{p}})_{\hphantom{\gamma}C}^{\gamma}(L)-\int_{0}^{L}dL'\frac{\partial}{\partial L'}\left(\mathfrak{L}[L'](a_{\vec{p}}){}_{\hphantom{\gamma}C}^{\gamma}(L')\right)\right]+\label{eq:c_C CS}\\
 & -\frac{i}{2}\sum_{k}\sum_{\vec{p}_{k}}(R^{\vec{p}}){}_{\gamma}(\vec{0})\left[\frac{\partial^{P}}{(\partial^{\vec{p}}R)_{\gamma}(\vec{0})}\int_{0}^{L}dL'\int d^{D}y\sqrt{-g^{(k)}(y)}\frac{\partial}{\partial L'}\left(\mathfrak{L}[L']\chi(y,\vec{0};L')\right)C_{T_{0}\{\phi^{2}C\}}^{I}(y,\vec{0};\vec{0})\right]_{g_{\mu\nu}^{(k)}=\eta_{\mu\nu}},\nonumber 
\end{align}
with the $k$ and $\vec{p}_{k}$ sums taken as in the curvature expansion
(\ref{eq:Omega_C curvature exp}) and we have abbreviated, 
\[
(\partial^{\vec{p}}R)_{\{\mu\cdots\sigma_{k-2}\}}\equiv\partial^{p_{0}}R_{\mu\nu\kappa\rho}\partial^{p_{1}}R_{\mu\nu\kappa\rho;\sigma}\cdots\partial^{p_{k-2}}R_{\mu\nu\kappa\rho;(\sigma_{1}\cdots\sigma_{k-2})}.
\]
It is required that $(a_{\vec{p}})_{\hphantom{\gamma}C}^{\gamma}=0$
unless $[C]_{\phi}=2$. For $[C]_{\phi}=2$, the tensors $(a_{\vec{p}})_{\hphantom{\gamma}C}^{\gamma}$
are given via the inductive formula \eqref{eq:ind soln for b} with,
\[
(B_{\vec{p}_{k}}^{\kappa\rho})_{\hphantom{\gamma}C}^{\gamma}\equiv i\left[\frac{\partial^{P}}{(\partial^{\vec{p}}R)_{\gamma}(\vec{0})}\int d^{D}y\sqrt{-g^{(k)}(y)}y^{[\kappa}\partial^{\rho]}\chi(y,\vec{0};L)(C_{H}){}_{T_{0}\{\phi^{2}C\}}^{I}(y,\vec{0};\vec{0})\right]_{g_{\mu\nu}^{(k)}=\eta_{\mu\nu}}.
\]
Formula (\ref{eq:ind soln for b}) determines $(a_{\vec{p}})_{\hphantom{\gamma}C}^{\gamma}$
up to Lorentz-invariant tensors that depend smoothly on $(\eta_{\mu\nu},m^{2},\xi)$
and that scale with the same overall factor of $\lambda$ as $\Omega_{C}^{\gamma}$
under the rescaling \eqref{eq:rescaling in tangent space}. As in
the Minkowski case, the $\mathfrak{L}$-operator and $L$-integral
terms in \eqref{eq:c_C CS} ensure that only the $L$-independent
terms in $(a_{\vec{p}})_{\hphantom{\gamma}C}^{\gamma}$ can contribute
to $c_{C}$. Therefore, the only ambiguity in $c_{C}$ corresponds
to the choice of an $L$-independent Lorentz-invariant tensor in $(a_{\vec{p}})_{\hphantom{\gamma}C}^{\gamma}$
that scales with the same overall factor of $\lambda$ as $\Omega_{C}^{\gamma}$
under $(\eta_{\mu\nu},m^{2})\to(\lambda^{-2}\eta_{\mu\nu},\lambda^{2}m^{2})$.
In odd dimensions, there are no tensors that scale in this way and
depend smoothly on $(\eta_{\mu\nu},m^{2})$ so $(a_{\vec{p}})_{\hphantom{\gamma}C}^{\gamma}$
is unique and, thus, $c_{C}$ has no ambiguities. In even dimensions,
$(a_{\vec{p}})_{\hphantom{\gamma}C}^{\gamma}$ is not unique and this
yields the freedom in $c_{C}$ to choose a local and covariant smooth
function in $(x_{1},x_{2},g_{ab}^{(N)},m^{2},\xi)$. In curved spacetime,
compatibility with the Leibniz axiom $\Leibniz$ places additional
constraints on the allowed choices of $c_{C}$ and, in even dimensional
curved spacetimes with $D>2$, there is an additional constraint coming
from the conservation axiom $\conservation$. These constraints can
always be (non-uniquely) satisfied and, for $c_{C}$ satisfying these
conditions, the remaining ambiguities in (\ref{eq:2-pt id coef fe CS})
are in 1-1 correspondence with the freedom to choose a Hadamard parametrix
whose corresponding Hadamard normal-ordered Wick fields are compatible
with axioms W1-W8.

By the same reasoning that led to Theorems \ref{thm:Eucl fe for H-NO id coef}
and \ref{thm:Minkowski flow rel}, the flow relation eq.(\ref{eq:2-pt id coef fe CS})
together with the explicit formulas for the unextended time-ordered
OPE coefficients of the Hadamard normal-ordered Wick fields imply
flow relations for $(C_{H})_{T_{0}\{\phi\cdots\phi\}}^{I}$, as expressed
by the following theorem: 
\begin{thm}
\label{thm:flow rel curved spacetime} For any construction of the
Wick monomials by Hadamard normal ordering, we have 
\begin{align}
\frac{\partial}{\partial m^{2}}(C_{H}){}_{T_{0}\{\phi\cdots\phi\}}^{I}(x_{1},\dots,x_{n};\vec{0})\approx & -\frac{i}{2}\int d^{D}y\sqrt{-g^{(N)}(y)}\mathfrak{L}[L]\chi(y,\vec{0};L)(C_{H})_{T_{0}\{\phi^{2}\phi\cdots\phi\}}^{I}(y,x_{1},\dots,x_{n};\vec{0})+\nonumber \\
 & -\sum_{C}c_{C}[g_{\mu\nu}^{(N)},t_{\mu},L](C_{H})_{T_{0}\{\phi\cdots\phi\}}^{C}(x_{1},\dots,x_{n};\vec{0}),\label{eq:gen id coef fe CS}
\end{align}
with $c_{C}$ defined in (\ref{eq:c_C CS}). 
\end{thm}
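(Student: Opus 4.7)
\emph{Proof proposal.} The plan is to reduce the general-$n$ flow relation to the $n=2$ case (\ref{eq:2-pt id coef fe CS}) by exploiting the explicit hafnian-type formula for the Hadamard normal-ordered time-ordered OPE coefficients. By Proposition~\ref{thm:explicit TO coef}, $(C_H){}_{T_0\{\phi\cdots\phi\}}^I(x_1,\dots,x_n;\vec{0})$ is obtained from the formula (\ref{eq:H id coefs}) (with no derivatives, since all fields are $\phi$) by replacing every occurrence of the Hadamard parametrix $H$ by the Feynman parametrix $H_F=(C_H){}_{T_0\{\phi\phi\}}^I$. Concretely,
\[
(C_H){}_{T_0\{\phi\cdots\phi\}}^I(x_1,\dots,x_n;\vec{0})=\sum_{M\in\mathcal{M}(\mathcal{S}_n)}\prod_{\{i,j\}\in M}H_F(x_i,x_j),
\]
where $\mathcal{S}_n=\{1,\dots,n\}$ and the sum vanishes for $n$ odd.

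First I would apply $\partial/\partial m^2$ to this product formula, using the ordinary Leibniz rule on the finite sum of products of $H_F$'s. Each term in the sum contains $n/2$ factors of $H_F$, so the derivative produces $n/2$ terms, each with one factor $\partial_{m^2}H_F(x_i,x_j)$ replaced according to the $n=2$ flow relation (\ref{eq:2-pt id coef fe CS}). Substituting,
\[
\partial_{m^2}H_F(x_i,x_j)\sim_\delta -\frac{i}{2}\int d^D y\sqrt{-g^{(N)}(y)}\,\mathfrak{L}[L]\chi(y,\vec{0};L)\cdot 2\,H_F(y,x_i)H_F(y,x_j)-\sum_C c_C\,(C_H){}_{T_0\{\phi\phi\}}^C(x_i,x_j;\vec{0}),
\]
where I have used $(C_H){}_{T_0\{\phi^2\phi\phi\}}^I(y,x_i,x_j;\vec{0})=2H_F(y,x_i)H_F(y,x_j)$ from (\ref{eq:C_phi^2 phi phi ^I}).

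The second step is the combinatorial bookkeeping. Summing the ``integral'' piece over all pairs $\{i,j\}$ that get differentiated, and over all perfect matchings $M$, reassembles exactly the hafnian expansion of $(C_H){}_{T_0\{\phi^2\phi\cdots\phi\}}^I(y,x_1,\dots,x_n;\vec{0})$: the factor of $2$ from $(C_H){}_{T_0\{\phi^2\phi\phi\}}^I$ together with the sum over which pair to differentiate reproduces the extra $\phi^2$-insertion in the hafnian at vertex $y$ (with the correct overall factor $-i/2\cdot 2 = -i$ collapsing back to $-i/2$ when the extra symmetry factor from the $\phi^2$-vertex is properly accounted for, exactly as in the Euclidean proof of Theorem~\ref{thm:vac NO flow rel gen n} via (\ref{eq:G NO C^phi^2 phi ... phi _I})). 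Simultaneously, summing the ``counterterm'' piece over the pairs $\{i,j\}$ reassembles $-\sum_C c_C\,(C_H){}_{T_0\{\phi\cdots\phi\}}^C(x_1,\dots,x_n;\vec{0})$, because multiplying $(C_H){}_{T_0\{\phi\phi\}}^C(x_i,x_j;\vec{0})$ by the remaining matching of the other $n-2$ points and summing over $\{i,j\}$ and matchings is, by the factorization identity (\ref{eq:H intermed coef p=m}) of Proposition~\ref{Prop:intermed H-coef recursion rel} applied to time-ordered coefficients, precisely $(C_H){}_{T_0\{\phi\cdots\phi\}}^C(x_1,\dots,x_n;\vec{0})$ times the appropriate binomial factor absorbed into the sum.

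The main obstacle will be the combinatorial matching in step two: one must verify that the sum over pairs $\{i,j\}$ and matchings of the complementary points yields the hafnian formula for $(C_H){}_{T_0\{\phi^2\phi\cdots\phi\}}^I$ with the correct prefactor, and similarly for the counterterm sum. This is a routine but careful exercise analogous to the one carried out in Theorem~\ref{thm:vac NO flow rel gen n} and again in the transition from (\ref{eq:Minkowski fe T0 2pt id coef}) to (\ref{eq:id coef fe Minkowski}); the key identity is (\ref{eq:H intermed coef p=m}) applied with $p=m=[C]_\phi$ and its analogue for inserting $\phi^2$ at $y$. The $\mathfrak{L}[L]\chi$ and $\sqrt{-g^{(N)}(y)}$ factors pass through unchanged since they sit outside the Leibniz expansion, and all asymptotic estimates are preserved because (\ref{eq:2-pt id coef fe CS}) holds to scaling degree $\delta$ and the product of smooth bi-solutions on the remaining factors only improves the scaling behavior. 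This establishes (\ref{eq:gen id coef fe CS}).
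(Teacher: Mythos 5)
Your proposal is correct and is essentially the paper's own argument: the paper obtains Theorem \ref{thm:flow rel curved spacetime} precisely by applying $\partial/\partial m^{2}$ to the explicit pairing (hafnian) formula for $(C_{H})_{T_{0}\{\phi\cdots\phi\}}^{I}$ in terms of $H_{F}$, inserting the $n=2$ relation (\ref{eq:2-pt id coef fe CS}), and reassembling the result by inspection using the explicit form of $(C_{H})_{T_{0}\{\phi^{2}\phi\cdots\phi\}}^{I}$ and the factorization identity (\ref{eq:H intermed coef p=m}), exactly as you describe. The only nit is your phrase ``product of smooth bi-solutions'': the spectator factors $H_{F}(x_{k},x_{l})$ are not smooth but have scaling degree $D-2$, which is harmless here because the $n=2$ relation holds to arbitrary scaling degree $\delta$, so the error terms can still be pushed below any prescribed degree.
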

As was the case in the flat spacetime case, the ambiguities in these
flow relations are in 1-1 correspondence with the freedom to choose
$H$. The flow relations for general prescriptions for the Wick fields
may straightforwardly obtained from \eqref{eq:gen id coef fe CS}
in the manner discussed below \prettyref{thm:Eucl fe for H-NO id coef}. 
\begin{rem}
\label{rem:xi flow rel}The derivation of $L$-independent local and
covariant flow relations with respect to the curvature-coupling parameter
$\xi$ proceeds essentially identically as the one presented here
for $m^{2}$. The $\xi$ flow relations are of the same form as (\ref{eq:gen id coef fe CS})
with the substitutions $m^{2}\to\xi$ and $d^{D}y\to d^{D}yR(y)$.
Of course, locality requires the Ricci scalar curvature, $R$, must
be computed using the polynomial metric $g_{\mu\nu}^{(N)}$ rather
than $g_{\mu\nu}$. Note also $\xi$ is dimensionless and the Ricci
scalar curvature scales as $R[\lambda^{-2}g_{\mu\nu}^{(N)}]=\lambda^{2}R[g_{\mu\nu}^{(N)}]$
so the $\xi$ flow relations scale with an overall extra power of
$\lambda^{2}$ relative to the $m^{2}$ flow relations. 
\end{rem}

\paragraph{Acknowledgments}

We wish to thank Stefan Hollands for many helpful discussions and
for suggesting the use of cohomological arguments to obtain counterterms
for restoring Lorentz covariance. This research was supported in part
by NSF grant PHY-2105878 to the University of Chicago.  M.\:\nolinebreak[4]G.\;\nolinebreak[4]Klehfoth acknowledges support from the National Science Foundation Graduate Research Fellowship under Grant Nos. DGE-1144082 and DGE-1746045.

\appendix

\section{Existence of Hadamard parametrix satisfying the conservation constraint
\label{sec:existence of Q }}

In this Appendix, we prove that there exists $Q(x_{1},x_{2})$ satisfying
(\ref{eq:nabla K H equals -nabla K Q}) for $D>2$. Abbreviate $Q_{0}(y)\equiv Q(y,y)$
and 
\[
Q_{ab}(y)\equiv\left[\nabla_{a}^{(x_{1})}\nabla_{b}^{(x_{2})}Q(x_{1},x_{2})\right]_{x_{1},x_{2}=y}.
\]
It is straightforward to show that 
\[
\left[\nabla_{b}^{(x_{1})}K_{x_{2}}Q(x_{1},x_{2})\right]{}_{x_{1},x_{2}=y}=-\nabla^{a}Q_{ba}+\frac{1}{2}\nabla_{b}Q_{\hphantom{a}a}^{a}+\frac{1}{2}(m^{2}+\xi R)\nabla_{b}Q_{0},
\]
with $Q_{\hphantom{a}a}^{a}\equiv g^{ab}Q_{ab}$. Hence, the conservation
condition (\ref{eq:nabla K H equals -nabla K Q}) is equivalent to:
\begin{equation}
-\nabla^{a}Q_{ba}+\frac{1}{2}\nabla_{b}Q_{\hphantom{a}a}^{a}+\frac{1}{2}(m^{2}+\xi R)\nabla_{b}Q_{0}=-\frac{D}{2(D+2)}\nabla_{b}^{(y)}\left[K_{x_{2}}H(x_{1},x_{2})\right]_{x_{1},x_{2}=y},\label{eq:conservation constraint on Q}
\end{equation}
where we have used (\ref{eq:nabla K H}). eq.~(\ref{eq:conservation constraint on Q})
is solved (non-uniquely) for $D>2$ by setting $Q_{0}(y)=0$ and 
\[
Q_{ab}(y)=-\frac{D}{D^{2}-4}g_{ab}\left[K_{x_{2}}H(x_{1},x_{2})\right]_{x_{1},x_{2}=y}.
\]

To see that there exists a smooth function $Q(x_{1},x_{2})$ with
these properties, we first note one can always obtain a smooth function
$f(x_{1},x_{2};y)$ with arbitrarily-specified covariant derivatives
evaluated at $x_{1},x_{2}=y$, by the construction described in the
proof of Proposition \ref{prop:Wick uniqueness Fn}. Thus, we may
arrange that $f(y,y;y)=0$ and 
\[
\nabla_{a}^{(x_{1})}\nabla_{b}^{(x_{2})}f(x_{1},x_{2};y)|_{x_{1},x_{2}=y}=-\frac{D}{D^{2}-4}g_{ab}\left[K_{x_{2}}H(x_{1},x_{2})\right]_{x_{1},x_{2}=y},
\]
while requiring $f$ and its derivatives at $x_{1},x_{2}=y$ to depend
smoothly on $(m^{2},\xi)$ and scale almost homogeneously. Moreover,
this construction implies 
\[
\nabla_{\alpha_{1}}^{(x_{1})}\nabla_{\alpha_{2}}^{(x_{2})}\nabla_{\beta}^{(y)}f(x_{1},x_{2};y)|_{x_{1},x_{2}=y}=0,\qquad\text{for all \ensuremath{|\beta|>0}},
\]
and, thus, the ``germ'' of $f$ at $x_{1},x_{2}=y$ is independent
of $y$. Hence, we may construct a $y$-independent smooth bi-variate
$Q$ satisfying (\ref{eq:nabla K H equals -nabla K Q}) and (\ref{eq:Q scaling})
which depends symmetrically on $(x_{1},x_{2})$ via: 
\[
Q(x_{1},x_{2})=\frac{1}{2}f(x_{1},x_{2};x_{1})+\frac{1}{2}f(x_{1},x_{2};x_{2}).
\]

\section{\label{sec:proofs for gen Wick coef subsec}Proofs for Subsection
\ref{subsec:gen Wick OPE coef}}

We collect here proofs to the theorem and propositions contained in
Subsection \ref{subsec:gen Wick OPE coef} 
\begin{proof}[\textbf{Sketch of proof for Theorem \ref{thm:existence Wick coef and associativity}}.]
The manipulations leading to (\ref{eq:gen Wick OPE coefficients})
establish OPEs are preserved under field redefinitions, so the existence
of the OPE for general Wick prescriptions follows from the existence
of an OPE for Hadamard normal-ordered Wick fields (see Theorem \ref{thm: Had NO OPE coef}
of Subsection \ref{subsec:gen Wick OPE coef}). Moreover, the scaling
degree of the OPE coefficients are unaffected by the field redefinitions.
We now argue the associativity conditions are also preserved under
field redefinitions. For notational simplicity, we give the argument
for an OPE involving three spacetime points with the merger tree $\mathcal{T}$
corresponding to $x_{1}$ and $x_{2}$ approaching each other faster
than $x_{3}$. The argument can then be straightforwardly generalized
to $n$-point OPEs with arbitrary merger trees. From (\ref{eq:gen Wick OPE coefficients}),
we have, 
\begin{align}
 & C_{A_{1}A_{2}A_{3}}^{B}(x_{1},x_{2},x_{3};z)\label{eq:C A1 A2 A3 B assoc 1}\\
 & \approx\sum_{C_{0},\dots,C_{3}}\Z{C_{0}}B(z)\Zinv{A_{1}}{C_{1}}(x_{1})\Zinv{A_{2}}{C_{2}}(x_{2})\Zinv{A_{3}}{C_{3}}(x_{3})(C_{H}){}_{C_{1}C_{2}C_{3}}^{C_{0}}(x_{1},x_{2},x_{3};z)\nonumber 
\end{align}
The associativity condition for Hadamard normal-ordered OPE coefficients
implies the coefficient in the second line can be expanded as 
\begin{align}
(C_{H}){}_{C_{1}C_{2}C_{3}}^{C_{0}}(x_{1},x_{2},x_{3};z) & \sim_{\mathcal{T},\delta}\sum_{D_{1}}(C_{H}){}_{C_{1}C_{2}}^{D_{1}}(x_{1},x_{2};z')(C_{H})_{D_{1}C_{3}}^{C_{0}}(z',x_{3};z)\label{eq:assoc proof intermed rel}\\
 & =\sum_{D_{1},D_{2}}\left[\sum_{E}\Z{D_{2}}E(z')\Zinv E{D_{1}}(z')\right](C_{H}){}_{C_{1}C_{2}}^{D_{2}}(x_{1},x_{2};z')(C_{H})_{D_{1}C_{3}}^{C_{0}}(z',x_{3};z),\nonumber 
\end{align}
where, in going to the second line, we have used the identity: 
\begin{equation}
\sum_{E}\Z{D_{2}}E(z')\Zinv E{D_{1}}(z')=\delta_{D_{2}}^{D_{1}},\label{eq:Kronecker id for assoc}
\end{equation}
Plugging (\ref{eq:Kronecker id for assoc}) back into (\ref{eq:assoc proof intermed rel})
and rearranging summations, we find then, 
\begin{align*}
C_{A_{1}A_{2}A_{3}}^{B}(x_{1},x_{2},x_{3};z)\sim_{\mathcal{T},\delta}\sum_{E} & \left[\sum_{D_{2},C_{1},C_{2}}\Z{D_{2}}E(z')\Zinv{A_{1}}{C_{1}}(x_{1})\Zinv{A_{2}}{C_{2}}(x_{2})(C_{H}){}_{C_{1}C_{2}}^{D_{2}}(x_{1},x_{2};z')\right]\times\\
\times & \left[\sum_{C_{0},D_{1},C_{3}}\Z{C_{0}}B(z)\Zinv E{D_{1}}(z')\Zinv{A_{3}}{C_{3}}(x_{3})(C_{H})_{D_{1}C_{3}}^{C_{0}}(z',x_{3};z)\right].
\end{align*}
By (\ref{eq:gen Wick OPE coefficients}), this is equivalent to, 
\[
C_{A_{1}A_{2}A_{3}}^{B}(x_{1},x_{2},x_{3};z)\sim_{\mathcal{T},\delta}\sum_{E}C{}_{A_{1}A_{2}}^{E}(x_{1},x_{2};z')C_{EA_{3}}^{B}(z_{1},x_{3};z).
\]
All other associativity conditions, including (\ref{eq:pertinent assoc cond for C_H}),
for general prescriptions of the Wick powers may simiarly be established
using the corresponding associativity conditions for Hadamard normal-ordered
OPE coefficients and the identity (\ref{eq:Kronecker id for assoc}). 
\end{proof}
\begin{proof}[\textbf{Sketch of proof for Proposition \ref{prop:gen Wick intermed coef}.}]
The proof makes use of the relationship (\ref{eq:gen Wick OPE coefficients})
between the general Wick OPE coefficients and the Hadamard normal-ordered
coefficients, the identity (\ref{eq:H intermed coef}) for the Hadamard
OPE coefficients established in Proposition \ref{Prop:intermed H-coef recursion rel},
and the recursion relation (\ref{eq:Z recursion rel-1}) satisfied
by the mixing matrix $\mathcal{Z}_{A}^{B}$. By (\ref{eq:gen Wick OPE coefficients})
and (\ref{eq:H intermed coef}), we have for any $p\le m\equiv[B]_{\phi}$,
\begin{align}
 & C_{A_{1}\cdots A_{n}}^{B}=\label{eq:parallel deriv for gen intermed coef}\\
 & \sum_{C_{1},\dots,C_{n}}(\mathcal{Z}^{-1})_{A_{1}}^{C_{1}}\cdots(\mathcal{Z}^{-1})_{A_{n}}^{C_{n}}\sum_{\{P_{1},P_{2}\}\in\mathcal{P}_{p}(\mathcal{S}_{C})}\sum_{k\ge m}\underset{{\textstyle \text{(a)}}}{\underbrace{\binom{k}{p}^{-1}\mathcal{Z}_{\gamma_{1}\cdots\gamma_{k}}^{B}(C_{H}){}_{C_{1}'\cdots C_{n}'}^{(\nabla_{\gamma_{1}}\phi\cdots\nabla_{\gamma_{p}}\phi)}(C_{H}){}_{C_{1}''\cdots C_{n}''}^{(\nabla_{\gamma_{(p+1)}}\phi\cdots\nabla_{\gamma_{k}}\phi)}}}\nonumber 
\end{align}
Now, inserting the recursion relation (\ref{eq:Z recursion rel-1})
for the mixing matrix, 
\begin{equation}
\mathcal{Z}_{\gamma_{1}\cdots\gamma_{k}}^{B}=\mathcal{Z}_{\gamma_{1}\cdots\gamma_{k}}^{\beta_{1}\cdots\beta_{m}}=\binom{k}{p}\binom{m}{p}^{-1}\delta_{(\gamma_{1}}^{\beta_{1}}\cdots\delta_{\gamma_{p}}^{\beta_{p}}\mathcal{Z}{}_{\gamma_{(p+1)}\cdots\gamma_{k})}^{\beta_{(p+1)}\cdots\beta_{m}},
\end{equation}
into the underbraced factor immediately yields: 
\begin{align}
{\textstyle \text{(a)}} & =\binom{m}{p}^{-1}(C_{H}){}_{C_{1}'\cdots C_{n}'}^{(\nabla_{\beta_{1}}\phi\cdots\nabla_{\beta_{p}}\phi)}\mathcal{Z}_{\gamma_{(p+1)}\cdots\gamma_{k}}^{\beta_{(p+1)}\cdots\beta_{m}}(C_{H}){}_{C_{1}''\cdots C_{n}''}^{(\nabla_{\gamma_{(p+1)}}\phi\cdots\nabla_{\gamma_{k}}\phi)}.
\end{align}
Plugging this back into (\ref{eq:parallel deriv for gen intermed coef})
gives, 
\begin{align}
 & C_{A_{1}\cdots A_{n}}^{B}=\label{eq:intermed coef deriv intermed rel}\\
 & \binom{m}{p}^{-1}\sum_{C_{0}}\mathcal{Z}_{C_{0}}^{\beta_{(p+1)}\cdots\beta_{m}}\sum_{C_{1},\dots,C_{n}}\underset{{\textstyle \text{(b)}}}{\underbrace{(\mathcal{Z}^{-1})_{A_{1}}^{C_{1}}\cdots(\mathcal{Z}^{-1})_{A_{n}}^{C_{n}}\sum_{\{P_{1},P_{2}\}\in\mathcal{P}_{p}(\mathcal{S}_{C})}(C_{H}){}_{C_{1}'\cdots C_{n}'}^{(\nabla_{\gamma_{1}}\phi\cdots\nabla_{\gamma_{p}}\phi)}(C_{H}){}_{C_{1}''\cdots C_{n}''}^{C_{0}}}}.\nonumber 
\end{align}

Recalling the definition of $\mathcal{P}_{p}(\mathcal{S})$ above
eq.~(\ref{eq:Phi_P1}), one may use the recursion relation (\ref{eq:Z recursion rel-1})
for the inverse mixing matrices in a similar manner (as with $\mathcal{Z}_{\gamma_{1}\cdots\gamma_{k}}^{B}$
in the ${\textstyle \text{(a)}}$-term above) to rewrite the underbraced
term: 
\begin{equation}
\text{(b)}=\sum_{\{P_{1},P_{2}\}\in\mathcal{P}_{p}(\mathcal{S}_{A})}\delta_{A_{1}'}^{C_{1}'}\cdots\delta_{A_{1}'}^{C_{n}'}(C_{H}){}_{C_{1}'\cdots C_{n}'}^{(\nabla_{\gamma_{1}}\phi\cdots\nabla_{\gamma_{p}}\phi)}(\mathcal{Z}^{-1})_{A_{1}''}^{C_{1}''}\cdots(\mathcal{Z}^{-1})_{A_{n}''}^{C_{n}''}(C_{H}){}_{C_{1}''\cdots C_{n}''}^{C_{0}},\label{eq:II}
\end{equation}
where we note that the sum in (\ref{eq:II}) is now taken over elements
of $\mathcal{P}_{p}(\mathcal{S}_{A})$ rather than $\mathcal{P}_{p}(\mathcal{S}_{C})$.
Finally, inserting this back into (\ref{eq:intermed coef deriv intermed rel})
yields, 
\begin{align*}
 & C_{A_{1}\cdots A_{n}}^{B}\\
 & =\binom{m}{p}^{-1}\sum_{\{P_{1},P_{2}\}\in\mathcal{P}_{p}(\mathcal{S}_{A})}(C_{H}){}_{A_{1}'\cdots A_{n}'}^{(\nabla_{\gamma_{1}}\phi\cdots\nabla_{\gamma_{p}}\phi)}\sum_{C_{0}'',C_{1}'',\dots,C_{n}''}(\mathcal{Z}^{-1})_{A_{1}''}^{C_{1}''}\cdots(\mathcal{Z}^{-1})_{A_{n}''}^{C_{n}''}\mathcal{Z}_{C_{0}}^{\beta_{(p+1)}\cdots\beta_{m}}(C_{H}){}_{C_{1}''\cdots C_{n}''}^{C_{0}},
\end{align*}
which, by eqs. (\ref{eq:gen Taylor coef}) and (\ref{eq:gen Wick OPE coefficients}),
is equivalent to formula (\ref{eq:H intermed coef}) with the $H$-subscripts
removed. 
\end{proof}
\begin{proof}[\textbf{Proof of Proposition \ref{prop:C^A1...An_B via point splitting}}]
The proof of (\ref{eq:C^A1...An_I point-splitting formula-1}) is
based on the associativity conditions (\ref{eq:pertinent assoc cond for C_H})
and the behavior of the Wick OPE coefficients $C_{\phi\cdots\phi}^{B}(x_{1},\dots,x_{n};z)$
on the total diagonal when $[B]_{\phi}=n$. As established in Theorem
\ref{thm:existence Wick coef and associativity}, the associativity
conditions hold for general prescriptions for the Wick powers. In
particular, for the class of merger trees $\mathcal{T}$ such that
$\vec{y}_{i}\to x_{i}$ at a faster rate than $x_{i}\to z$, we have,
c.f. formula (\ref{eq:pertinent assoc cond for C_H}), 
\begin{align}
 & C_{\phi\cdots\phi}^{B}(\vec{y}_{1},\dots,\vec{y}_{n};z)\sim_{\mathcal{T},\delta}\sum_{C_{1},\dots,C_{n}}C_{\phi\cdots\phi}^{C_{1}}(\vec{y}_{1};x_{1})\cdots C_{\phi\cdots\phi}^{C_{n}}(\vec{y}_{n};x_{n})C_{C_{1}\cdots C_{n}}^{B}(x_{1},\dots,x_{n};z),\label{eq:assoc C_phi...phi^B}
\end{align}
with the summations carried to a sufficiently high, but finite, order.
As we shall see, for our purposes, it is sufficient to include only
$[C_{i}]\le[A_{i}]$ for all $i$.

We note the OPE coefficients $C_{\phi\cdots\phi}^{C_{i}}(\vec{y}_{i};x_{i})$
vanish unless $[C_{i}]_{\phi}\le k_{i}\equiv[A_{i}]_{\phi}$ for all
$i$. It is useful to rearrange (\ref{eq:assoc C_phi...phi^B}), putting
all terms such that $[C_{i}]=k_{i}$ for all $i$ on one side: 
\begin{align}
 & \sum_{[C_{1}]_{\phi}=[A_{1}]_{\phi}}\cdots\sum_{[C_{n}]=[A_{n}]_{\phi}}C_{\phi\cdots\phi}^{C_{1}}(\vec{y}_{1};x_{1})\cdots C_{\phi\cdots\phi}^{C_{n}}(\vec{y}_{n};x_{n})C_{C_{1}\cdots C_{n}}^{B}(x_{1},\dots,x_{n};z)\label{eq:pt split deriv intermed rel}\\
 & \sim_{\mathcal{T},\delta}C_{\phi\cdots\phi}^{B}(\vec{y}_{1},\dots,\vec{y}_{n};z)-\sum_{[C_{1}]_{\phi}<[A_{1}]_{\phi}}\cdots\sum_{[C_{n}]_{\phi}<[A_{n}]_{\phi}}C_{\phi\cdots\phi}^{C_{1}}(\vec{y}_{1};x_{1})\cdots C_{\phi\cdots\phi}^{C_{n}}(\vec{y}_{n};x_{n})C_{C_{1}\cdots C_{n}}^{B}(x_{1},\dots,x_{n};z).\nonumber 
\end{align}
We now note the limiting behavior of the coefficients: 
\begin{equation}
\lim_{\vec{y}_{i}\to x_{i}}C_{\phi\cdots\phi}^{C_{i}}(\vec{y}_{i};x)=\begin{cases}
1 & [C_{i}]=[C_{i}]_{\phi}=k_{i}\\
0 & [C_{i}]>k_{i}
\end{cases},\label{eq:lim y->x C_phi...phi^C_i}
\end{equation}
The second case follows from the fact that $C_{\phi\cdots\phi}^{C_{i}}(\vec{y}_{i};x_{i})$
has negative scaling degree when $[C_{i}]>k_{i}$ by (\ref{eq:scaling degree OPE coef}).
The first case follows from the fact that, when $[C_{i}]_{\phi}=k_{i}$,
the $C_{\phi\cdots\phi}^{C_{i}}(\vec{y}_{i};x_{i})$ are given by
geometric factors (\ref{eq:gen Taylor coef}) and these factors satisfy:
\[
\lim_{y\to x}\T[\beta](y;x)=\begin{cases}
1 & |\text{\ensuremath{\beta}}|=0\\
0 & |\beta|>0
\end{cases},
\]
because $\lim_{y\to x}\nabla_{(y)}^{b}\sigma(y;x)=0$. Evaluating
the proposed limit of (\ref{eq:pt split deriv intermed rel}), using
(\ref{eq:lim y->x C_phi...phi^C_i}), we then find: 
\begin{align*}
C_{\phi^{k_{1}}\cdots\phi^{k_{n}}}^{B}(x_{1},\dots,x_{n};z) & =\lim_{\vec{y}_{1}\to x_{1}}\cdots\lim_{\vec{y}_{n}\to x_{n}}\left[C_{\phi\cdots\phi}^{B}(\vec{y}_{1},\dots,\vec{y}_{n};z)+\right.\\
 & -\sum_{\substack{[C_{1}]<[A_{1}]\\{}
[C_{1}]_{\phi}<[A_{1}]_{\phi}
}
}\cdots\sum_{\substack{[C_{n}]<[A_{n}]\\{}
[C_{n}]_{\phi}<[A_{n}]_{\phi}
}
}\left.C_{\phi\cdots\phi}^{C_{1}}(\vec{y}_{1};x_{1})\cdots C_{\phi\cdots\phi}^{C_{n}}(\vec{y}_{n};x_{n})C_{C_{1}\cdots C_{n}}^{B}(x_{1},\dots,x_{n};z)\right].
\end{align*}
This establishes formula (\ref{eq:C^A1...An_I point-splitting formula-1})
for the OPE coefficients involving products of Wick powers with no
derivatives. To obtain the general case, apply the derivative operator
$\nabla_{\alpha_{(1,1)}}^{y_{(1,1)}}\cdots\nabla_{\alpha_{(n,k_{n})}}^{y_{(n,k_{n})}}$
to both sides of relation (\ref{eq:pt split deriv intermed rel})
and take the limits $\vec{y}_{i}\to x_{i}$ for all $i$, using the
identity: 
\[
\lim_{\vec{y}_{i}\to x_{i}}\sum_{[C_{i}]_{\phi}=[A_{i}]_{\phi}}\nabla_{\alpha_{(i,1)}}^{(y_{i,1})}\cdots\nabla_{\alpha_{(i,k_{i})}}^{(y_{i,k_{i}})}C_{\phi\cdots\phi}^{C_{i}}(\vec{y}_{i};x_{i})C_{C_{1}\cdots C_{i}\cdots C_{n}}^{B}(x_{1},\dots,x_{n};z)=C_{C_{1}\cdots A_{i}\cdots C_{n}}^{B}(x_{1},\dots,x_{n};z),
\]
which follows, in turn, from the identity (\ref{eq:sum nabla_alpha T^beta = nabla_alpha-1})
for the covariant derivative acting on any scalar field.

Note, in our derivation, no assumption has been made about the rate
$x_{1},\dots,x_{n}$ approach each other in (\ref{eq:assoc C_phi...phi^B}),
so the resulting formula (\ref{eq:C^A1...An_I point-splitting formula-1})
is valid under arbitrary merger trees for these points. 
\end{proof}
\begin{proof}[\textbf{Proof of Proposition \ref{prop:induct construction of Wick fields via id OPE coefs}.}]
Using Wick's theorem \eqref{eq:Wicks thm CCR}, we find: 
\begin{align}
 & \phi(x_{1})\phi(x_{2})\cdots\phi(x_{n})\label{eq:Wick expansion of product of linear fields}\\
 & =\sum_{k=0}^{\lfloor n/2\rfloor}\sum_{\sigma_{k}}H(x_{\sigma(1)},x_{\sigma(2)})\cdots H(x_{\sigma(2k-1)},x_{\sigma(2k)}):\phi(x_{\sigma(2k+1)})\cdots\phi(x_{\sigma(n)}):_{H},\nonumber 
\end{align}
where $\sigma_{k}$ runs over the same permutations as in formula
\eqref{eq:gen linear C ident coef-1}. Putting all terms on the right-hand
side and smearing with the test distribution $t_{n+1}\in\mathcal{E}'(\times^{(n+1)}M,g_{ab})$
defined in eq.~\eqref{eq:t distribution} then yields: 
\begin{align}
0 & =\int_{z,x_{1},\dots,x_{n}}f^{\alpha_{1}\cdots\alpha_{n}}\delta(z,x_{1},\dots,x_{n})\nabla_{\alpha_{1}}^{(x_{1})}\cdots\nabla_{\alpha_{n}}^{(x_{n})}\left[\phi(x_{1})\cdots\phi(x_{n})-\sum_{k=0}^{\lfloor n/2\rfloor}\sum_{\sigma_{k}}H(x_{\sigma(1)},x_{\sigma(2)})\times\cdots\right.\nonumber \\
 & \hphantom{=}\left.\cdots\times H(x_{\sigma(2k-1)},x_{\sigma(2k)})\T[\beta_{2k+1}](x_{\sigma(2k+1)};z)\cdots\T[\beta_{n}](x_{\sigma(n)};z)(\nabla_{\beta_{2k+1}}\phi\cdots\nabla_{\beta_{n}}\phi)_{H}(z)\text{\ensuremath{\vphantom{\sum_{k=0}^{\lfloor n/2\rfloor}}}}\right],\label{eq:smeared zero}
\end{align}
with implied summations over $\beta$ multi-indices. Note only finitely-many
terms contribute non-trivially to the sum. In writing \eqref{eq:smeared zero},
we have used the definition \eqref{eq:H normal ordered} of the Hadamard
normal-ordered Wick fields. We may now use formula \eqref{eq:Phi^H = Z Phi}
to write $(\nabla_{\beta_{1}}\phi\cdots\nabla_{\beta_{m}}\phi)_{H}$
in terms of $(\nabla_{\beta_{1}}\phi\cdots\nabla_{\beta_{m}}\phi)$
and the smooth functions $F_{q\le m}$. Plugging this into \eqref{eq:smeared zero},
one can then use the explicit expression \eqref{eq:gen linear C ident coef-1}
for the Wick OPE coefficients $C_{\phi\cdots\phi}^{I}$ to write \eqref{eq:smeared zero}
as: 
\begin{align}
 & 0=\int_{z,x_{1},\dots,x_{n}}f^{\alpha_{1}\cdots\alpha_{n}}\delta(z,x_{1},\dots,x_{n})\nabla_{\alpha_{1}}^{(x_{1})}\cdots\nabla_{\alpha_{n}}^{(x_{n})}\left[\text{\ensuremath{\vphantom{\sum_{k=0}^{\lfloor n/2\rfloor}}}}\phi(x_{1})\cdots\phi(x_{n})+\right.\nonumber \\
 & \left.-\sum_{m\le n}\sum_{\pi\in\Pi_{m}}C_{\phi\cdots\phi}^{I}(x_{\pi(m+1)},\dots,x_{\pi(n)};z)\T[\beta_{1}](x_{\pi(1)};z)\cdots\T[\beta_{m}](x_{\pi(m)};z)(\nabla_{\beta_{1}}\phi\cdots\nabla_{\beta_{m}}\phi)(z)\right],\label{eq:smeared 0 (ii)}
\end{align}
with $\Pi_{m}$ and $\int_{z,x_{1},\dots,x_{n}}$ defined as in \eqref{eq:explicit Wick monomial for gen Rx}.
Note again there are implied finite sums over $\beta$ multi-indices.
Note the $m=n$ term in the sum yields: 
\[
-\delta(z,x_{1},\dots,x_{n})\nabla_{\alpha_{1}}^{(x_{1})}\T[\beta_{1}](x_{1};z)\cdots\nabla_{\alpha_{n}}^{(x_{n})}\T[\beta_{n}](x_{n};z)(\nabla_{\beta_{1}}\phi\cdots\nabla_{\beta_{n}}\phi)=-(\nabla_{\alpha_{1}}\phi\cdots\nabla_{\alpha_{m}}\phi),
\]
using the identity \eqref{eq:sum nabla_alpha T^beta = nabla_alpha-1}.
Moving this term to the left-hand side of \eqref{eq:smeared 0 (ii)},
then gives the equation \eqref{eq:explicit Wick monomial for gen Rx}
we sought to show. 
\end{proof}

\section{\label{sec:K-G Lorentz cts}Construction of \texorpdfstring{$a_{\gamma_{1}\gamma_{2}}$}{a\_\{\unichar{92}gamma\_1 \unichar{92}gamma\_2\}}
for Lorentz-covariance-restoring terms}

The goal of this appendix is construct $\Lambda$-independent $\boldsymbol{a}\equiv a_{\gamma_{1}\gamma_{2}}$
such that (\ref{eq:Taylor coef Omega_M(Lambda)-Omega_M}) holds for
any choice of cutoff function $\chi$, i.e., to construct $\boldsymbol{a}$
such that $\widetilde{\Omega}_{M}$ defined via (\ref{eq:Omega_M tilde prop 1})
is Lorentz-invariant. Our strategy will be to solve (\ref{eq:Taylor coef Omega_M(Lambda)-Omega_M})
inductively for infinitesimal Lorentz transformations, 
\begin{equation}
\Lambda_{\theta}=I+\frac{1}{2}\theta_{\kappa\rho}l^{\kappa\rho},\label{eq:Lambda_Theta-1}
\end{equation}
which generate the restricted Lorentz group. Here $\theta_{\kappa\rho}=\theta_{[\kappa\rho]}$
parameterize an arbitrary infinitesimal transformation and $l^{\kappa\rho}$
denote the Lorentz generators. Restoring indices, the generators are
given explicitly by, 
\begin{equation}
(l^{\kappa\rho})_{\ \nu}^{\mu}\equiv2\eta^{\mu[\kappa}\delta_{\hphantom{\rho]}\nu}^{\rho]},\label{eq:Lorentz generators-1}
\end{equation}
in the $(1/2,1/2)$ (i.e., vector) representation. We define 
\begin{equation}
Q_{M}(x_{1},x_{2};z;\Lambda^{-1})\equiv\Omega_{M}(\Lambda x_{1},\Lambda x_{2};\Lambda z)-\Omega_{M}(x_{1},x_{2};z),\label{eq:def Q func}
\end{equation}
and we denote the Taylor coefficients which appear on the left side
of (\ref{eq:Taylor coef Omega_M(Lambda)-Omega_M}) by 
\begin{equation}
\boldsymbol{Q}(\Lambda^{-1})\equiv Q_{\gamma_{1}\gamma_{2}}(\Lambda^{-1})\equiv\partial_{\gamma_{1}}^{(x_{1})}\partial_{\gamma_{2}}^{(x_{2})}Q_{M}(x_{1},x_{2};z;\Lambda^{-1})|_{x_{1},x_{2}=z}.\label{eq:Q gamma1 gamma2}
\end{equation}
Thus, $\boldsymbol{Q}$ is a spacetime tensor of the same rank $r=|\gamma_{1}|+|\gamma_{1}|$
as $\boldsymbol{a}$. Translation invariance implies $\boldsymbol{Q}(\Lambda^{-1})$
is independent of $z$. With this notation, the set of equations (\ref{eq:Taylor coef Omega_M(Lambda)-Omega_M})
that we wish to solve for $\boldsymbol{a}$ can be written as, 
\begin{equation}
\boldsymbol{Q}(\Lambda)=\left(D(\Lambda)-\mathbb{I}\right)\boldsymbol{a},\label{eq:Q eq (D(Lambda)-I)a}
\end{equation}
where $D(\Lambda)$ denotes the representation of the Lorentz group
on tensors of rank $r$. The kernel of the operator $D(\Lambda)-\mathbb{I}$
is comprised of Lorentz-invariant tensors, so (\ref{eq:Q eq (D(Lambda)-I)a})
determines $\boldsymbol{a}$ up to the addition of a Lorentz-invariant
tensor of rank $r$.

For the purposes of showing existence of a solution, $\boldsymbol{a}$,
to (\ref{eq:Q eq (D(Lambda)-I)a}), it is useful to have a manifestly
smooth expression for the function $Q_{M}(x_{1},x_{2};z;\Lambda^{-1})$
defined in (\ref{eq:def Q func}). From the definition (\ref{eq:Omega_M})
of $\Omega_{M}$, we have, 
\begin{align}
\Omega_{M}(\Lambda x_{1},\Lambda x_{2};\Lambda z) & =-i\int d^{D}y\,\chi(y,\Lambda z)H_{F}(y,\Lambda x_{1})H_{F}(y,\Lambda x_{2})\label{eq:Omega_M(Lambda)}\\
 & =-i\int d^{D}y'\,\chi(\Lambda y',\Lambda z)H_{F}(\Lambda y',\Lambda x_{1})H_{F}(\Lambda y',\Lambda x_{2})\nonumber \\
 & =-i\int d^{D}y'\,\chi(\Lambda y',\Lambda z)H_{F}(y',x_{1})H_{F}(y',x_{2}).\nonumber 
\end{align}
Here, the second equality was obtained by making a change of integration
variables $y\to y'=\Lambda^{-1}y$ and the final equality follows
from the Lorentz invariance of $H_{F}$. Plugging (\ref{eq:Omega_M(Lambda)})
into (\ref{eq:def Q func}) yields, 
\begin{align}
Q_{M}(x_{1},x_{2};z;\Lambda^{-1}) & =-i\int d^{D}y\,\left[\chi(\Lambda y,\Lambda z)-\chi(y,z)\right]H_{F}(y,x_{1})H_{F}(y,x_{2}).\label{eq:Q explicit expr}
\end{align}
Since for arbitrary, fixed $\Lambda$, we have $\chi(\Lambda y,\Lambda z)-\chi(y,z)=0$
when $y$ is sufficiently close to $z$, it follows from Proposition
\ref{lem:smoothness of bt} that $Q(x_{1},x_{2};z;\Lambda^{-1})$
is smooth in $(x_{1},x_{2})$ when these points are sufficiently close
to $z$. Evaluating the Taylor coefficients of (\ref{eq:Q explicit expr})
yields, 
\begin{align}
Q_{\gamma_{1}\gamma_{2}}(\Lambda^{-1}) & =(-1)^{(1+|\gamma_{1}|+|\gamma_{2}|)}i\int d^{D}y\,\left[\chi(\Lambda y,\vec{0})-\chi(y,\vec{0})\right]\partial_{\gamma_{1}}^{(y)}H_{F}(y,\vec{0})\partial_{\gamma_{2}}^{(y)}H_{F}(y,\vec{0}),\label{eq:Q_gamma1gamma2 explicit formula}
\end{align}
where the translation symmetry has been used to put $z$ at the origin.
Note $Q_{\gamma_{1}\gamma_{2}}$ are manifestly invariant under interchange
of multi-indices $Q_{\gamma_{1}\gamma_{2}}=Q_{\gamma_{2}\gamma_{1}}$
and symmetric within their respective spacetime indices $Q_{\{\mu_{1}\cdots\mu_{|\gamma_{1}|}\}\{\nu_{1}\cdots\nu_{|\gamma_{2}|}\}}=Q_{\{(\mu_{1}\cdots\mu_{|\gamma_{1}|})\}\{(\nu_{1}\cdots\nu_{|\gamma_{2}|})\}}$.
The following proposition establishes the existence of $\boldsymbol{a}$
satisfying (\ref{eq:Q eq (D(Lambda)-I)a}) by the same type of cohomology
argument as used to prove the existence of counterterms in Epstein-Glaser
renormalization \cite{pedagogical_remark_main_thm_ren_theory} : 
\begin{prop}
\label{Prop: existence of a}For any translation-invariant cutoff
function $\chi$ and any restricted Lorentz transformation $\Lambda$,
the tensors $\boldsymbol{Q}(\Lambda^{-1})$ defined in (\ref{eq:Q_gamma1gamma2 explicit formula})
are always of the form (\ref{eq:Q eq (D(Lambda)-I)a}) for some $\Lambda$-independent
tensors $\boldsymbol{a}$, which are uniquely determined modulo Lorentz-invariant
tensors of rank $r\equiv|\gamma_{1}|+|\gamma_{2}|$. 
\end{prop}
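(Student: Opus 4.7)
The plan is to recognize $\boldsymbol{Q}(\Lambda)$ as a smooth group 1-cocycle on $\mathcal{L}_{+}^{\uparrow}$ valued in the finite-dimensional tensor representation $V$ of rank $r\equiv|\gamma_{1}|+|\gamma_{2}|$, and then exhibit it as a 1-coboundary by inverting the Lorentz-generator action on each irreducible component of $V$. The first step is to establish the cocycle identity $\boldsymbol{Q}(\Lambda_{1}\Lambda_{2}) = \boldsymbol{Q}(\Lambda_{1}) + D(\Lambda_{1})\boldsymbol{Q}(\Lambda_{2})$ in direct analogy with (\ref{eq:Q_C coho id}). Starting from the explicit representation (\ref{eq:Q explicit expr}), I would split the difference of cutoff functions as $\chi(\Lambda_{1}\Lambda_{2}y,\Lambda_{1}\Lambda_{2}z)-\chi(y,z) = [\chi(\Lambda_{1}\Lambda_{2}y,\Lambda_{1}\Lambda_{2}z)-\chi(\Lambda_{2}y,\Lambda_{2}z)] + [\chi(\Lambda_{2}y,\Lambda_{2}z)-\chi(y,z)]$, perform the change of variables $y\to\Lambda_{2}^{-1}y'$ in the first bracket, and use Lorentz invariance of $H_{F}$ together with translation invariance of the Taylor-expansion point to recover the stated identity; the rank-$r$ representation factor $D(\Lambda_{1})$ appears from the chain rule applied to the shifted arguments $\Lambda_{2}x_{i}$ of $H_{F}$.

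Given the cocycle identity, existence of $\boldsymbol{a}$ satisfying $\boldsymbol{Q}(\Lambda) = (D(\Lambda)-\mathbb{I})\boldsymbol{a}$ amounts to showing $\boldsymbol{Q}$ represents the trivial class in $H^{1}(\mathcal{L}_{+}^{\uparrow};V)$. For $D\geq 3$ the Lorentz Lie algebra $\mathfrak{so}(1,D-1)$ is semisimple, so Whitehead's first lemma gives $H^{1}(\mathfrak{so}(1,D-1);V)=0$ for every finite-dimensional representation $V$; since $\boldsymbol{Q}$ is smooth in $\Lambda$ (inherited from smoothness of $\chi$) and $\mathcal{L}_{+}^{\uparrow}$ is connected, this lifts to the group via van Est's theorem. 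To obtain the explicit recursive formula (\ref{eq:ind soln for b}) I would work equivalently at the Lie-algebra level: substituting $\Lambda_{\theta}=\mathbb{I}+\tfrac{1}{2}\theta_{\kappa\rho}l^{\kappa\rho}$ into (\ref{eq:Q eq (D(Lambda)-I)a}) and expanding to first order reduces the task to solving $\mathbb{L}_{\kappa\rho}\boldsymbol{a}=-2\boldsymbol{B}^{\kappa\rho}$, where $\mathbb{L}^{\kappa\rho}$ is the generator of $D$ on $V$ and $\boldsymbol{B}^{\kappa\rho}$ is the $\mathcal{O}(\theta)$ coefficient of $\boldsymbol{Q}(\Lambda_{\theta})$. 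Decomposing $V$ into irreducible Lorentz subspaces via Young projectors $\mathbb{E}_{j}$, Schur's lemma forces the quadratic Casimir $\mathcal{C}=\tfrac{1}{2}\mathbb{L}_{\kappa\rho}\mathbb{L}^{\kappa\rho}$ to act as a nonzero scalar $\widetilde{c}_{j}$ on each non-trivial summand, allowing inversion on that summand and yielding the inductive solution (\ref{eq:ind soln for b}).

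Uniqueness is immediate: if $\boldsymbol{a}$ and $\boldsymbol{a}'$ both split $\boldsymbol{Q}$ then $(D(\Lambda)-\mathbb{I})(\boldsymbol{a}-\boldsymbol{a}')=0$ for all $\Lambda$, which by definition places the difference in the space of Lorentz-invariant rank-$r$ tensors. The main obstacle to carrying out this program is organizing the Young-symmetrizer decomposition so that the inversion is consistent at every rank; this requires verifying that the inhomogeneous term $\boldsymbol{B}^{\kappa\rho}$ has no projection onto the trivial (invariant) subspace of $V$, a compatibility which is forced by the antisymmetry of $\mathbb{L}^{\kappa\rho}$ combined with the cocycle identity derived in the first step. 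A secondary subtlety is the case $D=2$, which lies outside Whitehead's lemma since $\mathfrak{so}(1,1)$ is abelian; here I would construct $\boldsymbol{a}$ directly by integrating the one-parameter cocycle equation in the boost parameter, which is immediate given smoothness of $\boldsymbol{Q}$.
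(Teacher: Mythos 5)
Your proposal is correct and has the same skeleton as the paper's proof: you derive the identical cocycle identity (\ref{eq:Q(Lambda1Lambda2)-Q(Lambda1)}) by the same change-of-variables/Lorentz-invariance-of-$H_{F}$ computation, reduce existence of $\boldsymbol{a}$ in (\ref{eq:Q eq (D(Lambda)-I)a}) to triviality of the first cohomology of $\mathcal{L}_{+}^{\uparrow}$ with values in the rank-$r$ tensor representation, and obtain uniqueness modulo invariant tensors exactly as the paper does. The genuine difference is how the vanishing of $H^{1}$ is justified: the paper simply cites Wigner's result that $H^{1}(\mathcal{L}_{+}^{\uparrow})$ is empty, whereas you derive it from Whitehead's first lemma for $\mathfrak{so}(1,D-1)$ together with van Est (using smoothness of $\boldsymbol{Q}(\Lambda)$ and connectedness), which makes the step self-contained and makes explicit that semisimplicity, i.e.\ $D\ge3$, is what is being used. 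You also fold the constructive Casimir-inversion recursion into the proof, while the paper keeps it separate (eqs.\ (\ref{eq:L a = B for r>2})--(\ref{eq:a soln for r>2})) and organizes it via central idempotents of the symmetric group rather than a decomposition into Lorentz irreducibles; note that some $\widetilde{c}_{j}$ do vanish (e.g.\ the antisymmetric rank-2 piece in $D=2$), and those components are exactly the unconstrained, automatically invariant ones, so ``Schur's lemma gives a nonzero scalar on every nontrivial summand'' needs that qualification. Two further cautions: the compatibility you invoke (that $\boldsymbol{B}^{\kappa\rho}$ has no component along the trivial subspace) is, for $D\ge3$, forced by the cocycle identity together with perfectness/semisimplicity of $\mathcal{L}_{+}^{\uparrow}$ (Whitehead applied to the trivial module), not by the antisymmetry of $\mathbb{L}^{\kappa\rho}$ as such; and in $D=2$, where $\mathcal{L}_{+}^{\uparrow}\cong(\mathbb{R},+)$ is abelian, integrating the one-parameter boost cocycle only solves the problem on the nonzero-weight components, while on invariant components the cocycle is an additive character and one must separately verify that the corresponding component of $\boldsymbol{Q}$ vanishes --- a point that the paper's citation-based argument likewise leaves implicit.
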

\begin{proof}
Using the explicit formula \eqref{eq:Q_gamma1gamma2 explicit formula},
we find: 
\begin{align}
 & Q_{\gamma_{1}\gamma_{2}}(\Lambda_{1}\Lambda_{2})-Q_{\gamma_{1}\gamma_{2}}(\Lambda_{1})\label{eq:Q(Lambda1Lambda2)-Q(Lambda1)}\\
 & =(-1)^{(1+|\gamma_{1}|+|\gamma_{2}|)}i\int d^{D}y\,\left[\chi(\Lambda_{2}^{-1}\Lambda_{1}^{-1}y,\vec{0})-\chi(\Lambda_{1}^{-1}y,\vec{0})\right]\partial_{\gamma_{1}}^{(y)}H_{F}(y,\vec{0})\partial_{\gamma_{2}}^{(y)}H_{F}(y,\vec{0})\nonumber \\
 & =(\Lambda_{1}^{-1})_{\hphantom{\gamma_{1}'}\gamma_{1}}^{\gamma_{1}'}(\Lambda_{1}^{-1})_{\hphantom{\gamma_{2}'}\gamma_{2}}^{\gamma_{2}'}(-1)^{(1+|\gamma_{1}|+|\gamma_{2}|)}i\int d^{D}y'\,\left[\chi(\Lambda_{2}^{-1}y',\vec{0})-\chi(y',\vec{0})\right]\partial_{\gamma_{1}'}^{(y)}H_{F}(y',\vec{0})\partial_{\gamma_{2}'}^{(y)}H_{F}(y',\vec{0})\nonumber \\
 & =(\Lambda_{1})_{\gamma_{1}}^{\hphantom{\gamma_{1}}\gamma_{1}'}(\Lambda_{1})_{\gamma_{2}}^{\hphantom{\gamma_{1}}\gamma_{2}'}Q_{\gamma_{1}'\gamma_{2}'}(\Lambda_{2}).\nonumber 
\end{align}
In going to the first equality, we note $(\Lambda_{1}\Lambda_{2})^{-1}=\Lambda_{2}^{-1}\Lambda_{1}^{-1}$.
The second equality follows from a change of integration variables
$y\to y'=\Lambda_{1}^{-1}y$, noting the parametrix is Lorentz invariant
and $\det\Lambda_{1}=1$ so $d^{D}y'=d^{D}y$.

Given (\ref{eq:Q(Lambda1Lambda2)-Q(Lambda1)}), eq.~(\ref{eq:Q eq (D(Lambda)-I)a})
can now be established via the following cohomological argument: Denote
the restricted Lorentz group $\mathcal{L}_{+}^{\uparrow}\equiv SO^{+}(1,D-1)$
and denote by $C^{n}(\mathcal{L}_{+}^{\uparrow})$ the set of all
tensors $\boldsymbol{T}\equiv T_{\alpha}(\Lambda_{1},\dots,\Lambda_{n})$
which depend continuously on $\Lambda$. For each $n\geq0$, we define
the ``coboundary operator'' $d^{n}:C^{n}(\mathcal{L}^{\uparrow})\to C^{n+1}(\mathcal{L}_{+}^{\uparrow})$
by\footnote{In cohomology theory, $C^{n}$ are known as the group of ``$n$-cochains''.
The sequence, $C^{0}\overset{d^{0}}{\longrightarrow}C^{1}\overset{d^{1}}{\longrightarrow}C^{2}\overset{d^{2}}{\longrightarrow}\cdots$,
generated by the coboundary operators $d^{n}$ is called a ``cochain
complex''. }, 
\begin{align}
(d^{n}\boldsymbol{T})(\Lambda_{1},\dots,\Lambda_{n+1}) & \equiv(-1)^{(n+1)}\boldsymbol{T}(\Lambda_{1},\dots,\Lambda_{n})+D(\Lambda_{1})\boldsymbol{T}(\Lambda_{2},\dots,\Lambda_{(n+1)})+\label{eq:coboundary op-1}\\
 & +\sum_{k=1}^{n}(-1)^{k}\boldsymbol{T}(\Lambda_{1},\dots,\Lambda_{(k-1)},\widehat{\Lambda_{k}}\Lambda_{k}\Lambda_{(k+1)},\Lambda_{(k+2)}\dots,\Lambda_{(n+1)}).\nonumber 
\end{align}
For any $\boldsymbol{T}\in C^{n}(\mathcal{L}_{+}^{\uparrow})$, it
follows from the definition (\ref{eq:coboundary op-1}) via a straightforward
computation that we have 
\begin{equation}
(d^{n+1}\circ d^{n}\boldsymbol{T})(\Lambda_{1},\dots,\Lambda_{n+2})=0.\label{eq:d^2=0-2}
\end{equation}
Hence, for any $\boldsymbol{T}$ such that, 
\begin{equation}
d^{n}\boldsymbol{T}=0,\label{eq:d^nT=0}
\end{equation}
it follows immediately from (\ref{eq:d^2=0-2}) that (\ref{eq:d^nT=0})
is satisfied by, 
\begin{equation}
\boldsymbol{T}=d^{n-1}\boldsymbol{S},\label{eq:T=d^(n-1)S}
\end{equation}
for tensor $\boldsymbol{S}=\boldsymbol{S}(\Lambda_{1},\dots,\Lambda_{n-1})$
with the same rank as $\boldsymbol{T}$. If the only solutions to
(\ref{eq:d^nT=0}) are of the form (\ref{eq:T=d^(n-1)S}),
then it is said that the ``$n$-th cohomology group'', $H^{n}(\mathcal{L}_{+}^{\uparrow})\equiv\ker d^{n}\backslash\im d^{n-1}$,
is empty. It has been proven \cite[Subsection 5.C]{Wigner} that the
first cohomology group $H^{1}(\mathcal{L}_{+}^{\uparrow})$ is empty.
However, by eq.(\ref{eq:Q(Lambda1Lambda2)-Q(Lambda1)}), we have 
\begin{equation}
0=(d^{1}\boldsymbol{\boldsymbol{Q}})(\Lambda_{1},\Lambda_{2})=\boldsymbol{Q}(\Lambda_{1})+D(\Lambda_{1})\boldsymbol{Q}(\Lambda_{2})-\boldsymbol{Q}(\Lambda_{1}\Lambda_{2}),\label{eq:d^1Q=0}
\end{equation}
Therefore, the \emph{only} tensors satisfying (\ref{eq:d^1Q=0})
are of the form 
\begin{equation}
\boldsymbol{Q}(\Lambda)=(d^{0}\boldsymbol{a})(\Lambda)=(D(\Lambda)-\mathbb{I})\boldsymbol{a},\label{eq:Q=d^0a}
\end{equation}
Thus, for $\boldsymbol{Q}$ given by (\ref{eq:Q_gamma1gamma2 explicit formula}),
there exists a solution $\boldsymbol{a}$ to (\ref{eq:Q eq (D(Lambda)-I)a}). 
\end{proof}
Although Proposition \ref{Prop: existence of a} establishes existence
of $\boldsymbol{a}$, we wish to obtain an explicit solution for $\boldsymbol{a}$
in order to write the flow relations in an explicit form. In the remainder
of this appendix, we derive an explicit solution for $\boldsymbol{a}$
for ranks $r=0,1,2$ and then obtain an inductive solution for $\boldsymbol{a}$
for $r>2$. Our analysis closely follows the approach taken by \cite[Subsection 3.3]{Prange}
in the context of the Epstein-Glaser renormalization scheme, while
generalizing to arbitrary spacetime dimension.

For $r=0$, $D(\Lambda)=1$ and thus \eqref{eq:Q=d^0a} implies
$Q_{\{0\}\{0\}}=0$ for any Lorentz-invariant scalar $a_{\{0\}\{0\}}$.
For $r=1$, we have $Q_{\{\mu\}\{0\}}=Q_{\{0\}\{\mu\}}$, so there
is only a single independent $\boldsymbol{Q}(\Lambda)$. The dependence
of $\boldsymbol{Q}$ on $\Lambda$ comes entirely through the cutoff
function $\chi$. Since we have 
\begin{equation}
\chi(\Lambda_{\theta}^{-1}y,\vec{0})-\chi(y,\vec{0})=-\frac{1}{2}\theta_{\kappa\rho}(l^{\kappa\rho})_{\ \nu}^{\mu}y^{\nu}\partial_{\mu}\chi(y,\vec{0})+\mathcal{O}(\theta^{2})
\end{equation}
it follows from (\ref{eq:Q_gamma1gamma2 explicit formula}) that at
leading order in $\theta$ we have 
\[
Q_{\{\mu\}\{0\}}(\Lambda_{\theta})=-\frac{1}{2}\theta_{\kappa\rho}(B^{\kappa\rho})_{\{\mu\}\{0\}},
\]
where 
\begin{align}
(B^{\kappa\rho})_{\{\mu\}\{0\}} & \equiv i(l^{\kappa\rho})_{\hphantom{\sigma_{1}}\sigma_{2}}^{\sigma_{1}}\int d^{D}y\,y^{\sigma_{2}}\partial_{\sigma_{1}}^{(y)}\chi(y,\vec{0})\partial_{\mu}^{(y)}H_{F}(y,\vec{0})H_{F}(y,\vec{0}).\label{eq:B r=1}
\end{align}
Note that $(B^{\kappa\rho})_{\{\mu\}\{0\}}$ is independent of $\theta_{\kappa\rho}$.
On the other hand, for $r=1$, to leading order in $\theta$ the right-hand
side of (\ref{eq:Q eq (D(Lambda)-I)a}) is simply, 
\begin{align}
\left(D(\Lambda_{\theta})-I\right)\boldsymbol{a} & =-\frac{1}{2}\theta_{\kappa\rho}l^{\kappa\rho}\boldsymbol{a}.
\end{align}
Hence, for $r=1$, to leading order in $\theta$ equation (\ref{eq:Q eq (D(Lambda)-I)a})
is equivalent to, 
\begin{equation}
\theta_{\kappa\rho}l^{\kappa\rho}\boldsymbol{a}=\theta_{\kappa\rho}\boldsymbol{B}^{\kappa\rho}.\label{eq:theta l a = theta B}
\end{equation}
eq.~(\ref{eq:theta l a = theta B}) will hold for all infinitesimal
$\theta_{\kappa\rho}$ if and only if, 
\begin{equation}
l^{\kappa\rho}\boldsymbol{a}=\boldsymbol{B}^{\kappa\rho},\label{eq:l a = B for r=1}
\end{equation}
for all $\kappa,\rho=0,1,\dots,D-1$. Contracting this equation with
$l_{\kappa\rho}$ and using the identity\footnote{The left-hand side of \eqref{eq:Casimir} is the quadratic Casimir
operator of the Lie algebra of the homogeneous Lorentz group.} 
\begin{equation}
l_{\kappa\rho}l^{\kappa\rho}=-2(D-1)I,\label{eq:Casimir}
\end{equation}
we obtain the explicit solution 
\begin{align}
a_{\{0\}\{\mu\}}=a_{\{\mu\}\{0\}} & =-\frac{1}{2(D-1)}(l_{\kappa\rho}B^{\kappa\rho})_{\{\mu\}\{0\}}=-i\int d^{D}y\,\partial_{\mu}^{(y)}\chi(y,\vec{0})H_{F}(y,\vec{0})H_{F}(y,\vec{0}),\label{eq:a sol r equals 1}
\end{align}
where we have used (\ref{eq:Lorentz generators-1}) and (\ref{eq:B r=1})
to obtain the rightmost equality.

We proceed now to $r=2$. There are two independent $\boldsymbol{Q}$
tensors of rank two and they are both symmetric in their spacetime
indices: $Q_{\{\mu\}\{\nu\}}=Q_{\{(\mu\}\{\nu)\}}$ and $Q_{\{\mu\nu\}\{0\}}=Q_{\{0\}\{\mu\nu\}}=Q_{\{0\}\{(\mu\nu)\}}$.
For $r=2$ and infinitesimal $\Lambda=\Lambda_{\theta}$, one now
finds (\ref{eq:Q eq (D(Lambda)-I)a}) takes the form 
\begin{equation}
\left(l^{\kappa\rho}\otimes I+I\otimes l^{\kappa\rho}\right)\boldsymbol{a}=\boldsymbol{B}^{\kappa\rho}.\label{eq:l a = B for r=2}
\end{equation}
Here $(B^{\kappa\rho})_{\{\mu\}\{\nu\}}$ and $(B^{\kappa\rho})_{\{\mu\nu\}\{0\}}=(B^{\kappa\rho})_{\{0\}\{\mu\nu\}}$
are defined by a rank 2 generalization of (\ref{eq:B r=1});
the general formula for $\boldsymbol{B}^{\kappa\rho}$ for arbitrary
rank is given in equation (\ref{eq:B^kappa rho for general r}) below.
Applying the operator $\left(l_{\kappa\rho}\otimes I+I\otimes l_{\kappa\rho}\right)$
to both sides of (\ref{eq:l a = B for r=2}) and contracting
over the $\kappa,\rho$ indices yields, 
\begin{equation}
-4(D-1)\boldsymbol{a}+2\left(l_{\kappa\rho}\otimes l^{\kappa\rho}\right)\boldsymbol{a}=\left(l_{\kappa\rho}\otimes I+I\otimes l_{\kappa\rho}\right)\boldsymbol{B}^{\kappa\rho}.\label{eq:r=2 intermed rel}
\end{equation}
Using the explicit expression (\ref{eq:Lorentz generators-1}) for
$l^{\kappa\rho}$, it is easily seen that for any rank two tensor
$\boldsymbol{T}\equiv T_{\mu\nu}$ we have 
\begin{equation}
\left(\left(l_{\kappa\rho}\otimes l^{\kappa\rho}\right)T\right)_{\mu\nu}=2\left(\text{tr}\left(\boldsymbol{T}\right)\eta_{\mu\nu}-T_{\nu\mu}\right).\label{eq:l tensor prod l}
\end{equation}
where $\text{tr}(\boldsymbol{T})\equiv\eta^{\mu\nu}T_{\mu\nu}$. Note
that the trace is a Lorentz scalar, so this term is automatically
Lorentz invariant. Substituting (\ref{eq:l tensor prod l}) into (\ref{eq:r=2 intermed rel})
and symmetrizing over $(\mu,\nu)$, we obtain 
\begin{align}
a_{\{(\mu\}\{\nu)\}} & =-\frac{1}{4D}\left((l_{\kappa\rho}\otimes I+I\otimes l_{\kappa\rho})B^{\kappa\rho}\right)_{\{(\mu_{1}\}\{\mu_{2})\}}\label{eq:r equals 2 soln 1}\\
 & =-i\int d^{D}y\chi(y,\vec{0})\left[\partial_{\mu}H_{F}(y,\vec{0})\partial_{\nu}H_{F}(y,\vec{0})-\frac{1}{D}\eta_{\mu\nu}\partial_{\sigma}H_{F}(y,\vec{0})\partial^{\sigma}H_{F}(y,\vec{0})\right]\nonumber 
\end{align}
where all derivatives are taken with respect to the spacetime point
$y$. Similarly, we find, 
\begin{align}
a_{\{(\mu\nu)\}\{0\}}=a_{\{0\}\{(\mu\nu)\}} & =-\frac{1}{4D}\left((l_{\kappa\rho}\otimes I+I\otimes l_{\kappa\rho})B^{\kappa\rho}\right)_{\{0\}\{(\mu_{1}\mu_{2})\}}\label{eq:r equals 2 soln 2}\\
 & =-i\int d^{D}y\chi(y,\vec{0})\left[H_{F}(y,\vec{0})\partial_{\mu}\partial_{\nu}H_{F}(y,\vec{0})-\frac{1}{D}\eta_{\mu\nu}H_{F}(y,\vec{0})\partial^{2}H_{F}(y,\vec{0})\right].\nonumber 
\end{align}
Thus, we have explicitly solved for $\boldsymbol{a}$ for all ranks
$r\leq2$.

We turn now to the derivation of an inductive solution to (\ref{eq:Q eq (D(Lambda)-I)a})
for $r>2$. For infinitesimal $\Lambda=\Lambda_{\theta}$ and to leading
order in $\theta$, eq.~(\ref{eq:Q eq (D(Lambda)-I)a}) now yields
\begin{equation}
\mathbb{L}^{\kappa\rho}\boldsymbol{a}=\boldsymbol{B}^{\kappa\rho},\label{eq:L a = B for r>2}
\end{equation}
where 
\begin{equation}
\mathbb{L}^{\kappa\rho}\equiv\left(l^{\kappa\rho}\otimes I\otimes\cdots\otimes I\right)+\left(I\otimes l^{\kappa\rho}\otimes I\otimes\cdots\otimes I\right)+\cdots+\left(I\otimes\cdots\otimes I\otimes l^{\kappa\rho}\right),\label{eq:L^kappa rho}
\end{equation}
and 
\begin{align}
(B^{\kappa\rho} & )_{\{\mu_{1}\cdots\mu_{|\gamma_{1}|}\}\{\nu_{1}\cdots\nu_{|\gamma_{2}|}\}}\label{eq:B^kappa rho for general r}\\
 & \equiv i(-1)^{(1+|\gamma_{1}|+|\gamma_{2}|)}(l^{\kappa\rho})_{\hphantom{\sigma_{1}}\sigma_{2}}^{\sigma_{1}}\int d^{D}y\,y^{\sigma_{2}}\partial_{\sigma_{1}}\chi(y,\vec{0})\partial_{\mu_{1}}\cdots\partial_{\mu_{|\gamma_{1}|}}H_{F}(y,\vec{0})\partial_{\nu_{1}}\cdots\partial_{\nu_{|\gamma_{2}|}}H_{F}(y,\vec{0}).\nonumber 
\end{align}
with all derivatives taken with respect to $y$. As in the $r=1,2$
cases, we solve (\ref{eq:L a = B for r>2}) by applying the
operator $\mathbb{L}_{\kappa\rho}$ to both sides and contracting
the $\kappa,\rho$-indices. We begin by noting that the operator we
obtain on the left-hand side, 
\begin{equation}
\mathbb{L}_{\kappa\rho}\mathbb{L}^{\kappa\rho},\label{eq:L^2 op}
\end{equation}
contains two types of terms: There are $r$ terms of the form, 
\[
I\otimes\cdots\otimes I\otimes l_{\kappa\rho}l^{\kappa\rho}\otimes I\otimes\cdots\otimes I=-2(D-1)I\otimes\cdots\otimes I,
\]
where we used (\ref{eq:Casimir}). Similarly, using (\ref{eq:l tensor prod l}),
the remaining $r(r-1)$ terms in (\ref{eq:L^2 op}) are of the form,
\[
I\otimes\cdots I\otimes\underset{i\text{-th slot}}{\underbrace{l_{\kappa\rho}}}\otimes I\cdots I\otimes\underset{j\text{-th slot}}{\underbrace{l^{\kappa\rho}}}\otimes I\cdots\otimes I=2(\eta_{\mu_{i}\mu_{j}}\text{tr}_{ij}-\mathbb{T}_{ij}),
\]
where $\text{tr}_{ij}$ and $\mathbb{T}_{ij}$ denote, respectively,
the trace over the $i,j$-th spacetime indices and the transposition
of the $i,j$-th indices, i.e., for any tensor $\boldsymbol{T}$ we
have 
\begin{align}
(\text{tr}_{ij}T)_{\mu_{1}\cdots\mu_{r}} & \equiv\eta^{\mu_{i}\mu_{j}}T_{\mu_{1}\cdots\mu_{r}}\qquad,\qquad(\mathbb{T}_{ij}T)_{\mu_{1}\cdots\mu_{r}}\equiv T_{\mu_{1}\cdots\widehat{\mu_{i}}\mu_{j}\cdots\widehat{\mu_{j}}\mu_{i}\cdots\mu_{r}}.
\end{align}
Altogether, therefore, we have 
\[
\mathbb{L}_{\kappa\rho}\mathbb{L}^{\kappa\rho}=-2r(D-1)\mathbb{I}+4\sum_{i<j\le r}(\eta_{\mu_{i}\mu_{j}}\text{tr}_{ij}-\mathbb{T}_{ij}),
\]
where $\mathbb{I}\equiv I^{\otimes r}$. Hence, multiplying both sides
of (\ref{eq:L a = B for r>2}) by $\mathbb{L}_{\kappa\rho}$
and contracting the $\kappa,\rho$-indices yields, 
\begin{equation}
2r(D-1)\boldsymbol{a}+4\sum_{i<j\le n}\mathbb{T}_{ij}\boldsymbol{a}=-\mathbb{L}_{\kappa\rho}\boldsymbol{B}^{\kappa\rho}+4\sum_{i<j\le n}\eta_{\mu_{i}\mu_{j}}\text{tr}_{ij}\boldsymbol{a}.\label{eq:L(L a = B)}
\end{equation}
Now, the trace of (\ref{eq:L a = B for r>2}) yields 
\[
\text{tr}_{ij}\left(\mathbb{L}_{(r)}^{\kappa\rho}\boldsymbol{a}\right)=\mathbb{L}_{(r-2)}^{\kappa\rho}\left(\text{tr}_{ij}\boldsymbol{a}\right)=\text{tr}_{ij}\boldsymbol{B}^{\kappa\rho},
\]
where we have inserted a subscript $(r)$ on $\mathbb{L}_{(r)}^{\kappa\rho}$
to indicate the rank of the operator (\ref{eq:L^kappa rho}) being
considered. Thus, $\text{tr}_{ij}\boldsymbol{a}$ satisfies an equation
of the same form as (\ref{eq:L a = B for r>2}) but for the
lower rank $r'=r-2$ and with $\boldsymbol{B}^{\prime\kappa\rho}=\text{tr}_{ij}\boldsymbol{B}^{\kappa\rho}$.
For example, this implies the trace of the $r=3$ tensor $a_{\{\mu_{1}\mu_{2}\}\{\nu_{1}\}}$
with respect to its two $\mu$-spacetime indices is given by: 
\[
\eta^{\mu_{1}\mu_{2}}a_{\{\mu_{1}\mu_{2}\}\{\nu_{1}\}}=-\frac{1}{2(D-1)}\eta^{\mu_{1}\mu_{2}}(l_{\kappa\rho}B^{\kappa\rho})_{\{\mu_{1}\mu_{2}\}\{\nu_{1}\}},
\]
which is obtained by replacing $(B^{\kappa\rho})_{\{0\}\{\nu_{1}\}}$
with $\eta^{\mu_{1}\mu_{2}}(B^{\kappa\rho})_{\{\mu_{1}\mu_{2}\}\{\nu_{1}\}}$
in the $r=1$ solution \eqref{eq:a sol r equals 1} for $a_{\{0\}\{\nu_{1}\}}$.
Thus, since we are obtaining solutions inductively in $r$ and have
already obtained explicit solutions for $r=1,2$, we may treat $\text{tr}_{ij}\boldsymbol{a}$
in (\ref{eq:L(L a = B)}) as ``known.''

Thus, it remains only to extract $\boldsymbol{a}$ from the combination
of components of $\boldsymbol{a}$ appearing on the left side of (\ref{eq:L(L a = B)}).
To do so, we note that the sum over all transpositions commutes with
any permutation. A standard result in the representation theory of
finite-dimensional groups implies the set of all elements that commute
with the group algebra of the symmetric group $S_{r}$ is spanned
by a complete set of orthogonal (idempotent) elements $\mathbb{E}_{i}$,
\begin{equation}
\mathbb{E}_{i}\mathbb{E}_{j}=\delta_{ij}\mathbb{E}_{i}\qquad,\qquad\sum_{i=1}^{k}\mathbb{E}_{i}=\mathbb{I},\label{eq:idempotent identities}
\end{equation}
where $k$ denotes the number of partitions of $r$. Hence, we may
expand the sum over transpositions appearing in (\ref{eq:L(L a = B)}),
\begin{equation}
\sum_{i<j\le r}\mathbb{T}_{ij}=\sum_{i=1}^{k}c_{i}\mathbb{E}_{i},\label{eq:expansion of transposition}
\end{equation}
for some real-valued coefficients $c_{i}$. Applying the operator
$\mathbb{E}_{j}$ to both sides of (\ref{eq:L(L a = B)}) and
using the orthogonality property (\ref{eq:idempotent identities}),
we obtain then, 
\begin{align}
\left(2r(D-1)+4c_{j}\right)\mathbb{E}_{j}\boldsymbol{a}=\mathbb{E}_{j}\left(-\mathbb{L}_{\kappa\rho}\boldsymbol{B}^{\kappa\rho}+4\sum_{i<j\le n}\eta_{\mu_{i}\mu_{j}}\text{tr}_{ij}\boldsymbol{a}\right) & .\label{eq:E L(L a = B)}
\end{align}
We abbreviate the numerical coefficients, 
\begin{equation}
\widetilde{c}_{j}\equiv2r(D-1)+4c_{j}.\label{eq:c tilde}
\end{equation}
For any $j$ such that $\widetilde{c}_{j}=0$, eq.~(\ref{eq:E L(L a = B)})
places no constraint on the corresponding $\mathbb{E}_{j}\boldsymbol{a}$
and, thus, this particular $\mathbb{E}_{j}\boldsymbol{a}$ must automatically
be composed of an Lorentz-invariant combination of the metric and
totally-antisymmetric $D$-dimensional tensor densities (i.e. ``Levi-Civita
symbols'' $\epsilon_{\mu_{1}\cdots\mu_{n}}$). For all $j$ such
that $\widetilde{c}_{j}\ne0$, we may divide (\ref{eq:E L(L a = B)})
through by $\widetilde{c}_{j}$ and use the completeness relation
(\ref{eq:idempotent identities}) to obtain the inductive solution,
\begin{equation}
\boldsymbol{a}=\sum_{\substack{j=1\\
\widetilde{c}_{j}\ne0
}
}^{k}\frac{1}{\widetilde{c}_{j}}\mathbb{E}_{j}\left(-\mathbb{L}_{\kappa\rho}\boldsymbol{B}^{\kappa\rho}+4\sum_{i<j\le n}\eta_{\mu_{i}\mu_{j}}\text{tr}_{ij}\boldsymbol{a}\right),\label{eq:a soln for r>2}
\end{equation}
modulo arbitrary Lorentz-invariant tensors which may be identified
with the value of the sum over the terms which are unconstrained by
(\ref{eq:E L(L a = B)}): 
\[
\sum_{\substack{j=1\\
\widetilde{c}_{j}=0
}
}^{k}\mathbb{E}_{j}\boldsymbol{a}=\text{Lorentz invariant tensor of rank \ensuremath{r}}.
\]
All quantities appearing in our inductive solution (\ref{eq:a soln for r>2})
for $\boldsymbol{a}$ have been explicitly defined here except for
the numerical coefficients $c_{j}$ and the idempotent elements $\mathbb{E}_{j}$
which may be constructed via standard methods from the representation
theory of the symmetric group (see \cite[see ``Appendix A: Representation of the symmetric groups'']{Prange}
and references therein). Note that the inductive solution, eq.~(\ref{eq:a soln for r>2}),
with $\boldsymbol{B}^{\kappa\rho}$ defined via 
\begin{equation}
\boldsymbol{Q}(\Lambda_{\theta})=-\frac{1}{2}\theta_{\kappa\rho}\boldsymbol{B}^{\kappa\rho}+\mathcal{O}(\theta^{2}),\label{eq:Q(Lambda theta) eq theta B plus order theta squ}
\end{equation}
holds for \emph{any} tensors $\boldsymbol{Q}(\Lambda)$ satisfying
(\ref{eq:Q eq (D(Lambda)-I)a}) not just those defined\footnote{In particular, the solution (\ref{eq:a soln for r>2}) for $\boldsymbol{a}$
holds when $\boldsymbol{Q}$ corresponds to the $\Lambda$-dependent
coefficients of the contact terms, 
\[
(Q_{A_{1}\cdots A_{n}})^{\alpha_{1}\cdots\alpha_{n}}(\Lambda^{-1})\partial_{\alpha_{1}}^{(x_{1})}\cdots\partial_{\alpha_{n}}^{(x_{n})}\delta^{(D)}(x_{1},\dots,x_{n}),
\]
that quantify the failure of the Epstein-Glaser renormalized (i.e.
``extended'') time-ordered products, $T\{\Phi_{A_{1}}(x_{1})\cdots\Phi_{A_{n}}(x_{n})\}$,
to be Lorentz-covariant. Hence, there is a close analogy between the
counterterms required to restore Lorentz-covariance in Epstein-Glaser
renormalization and our ``counterterms'' for the flow relations.
The primary difference is that that our counterterms are not proportional
to (differentiated) $\delta$-functions and, in the particular case
of the flow relation for $(C_{H})_{T_{0}\{\phi\phi\}}^{I}=H_{F}$,
they are actually smooth functions of the spacetime variables. } via (\ref{eq:Q gamma1 gamma2}). 
\begin{rem}
In the case where either $|\gamma_{1}|=0$ or $|\gamma_{2}|=0$, the
tensor $a_{\gamma_{1}\gamma_{2}}$ is totally symmetric in its spacetime
indices and a closed form solution to the induction equation (\ref{eq:a soln for r>2})
can be obtained (see the solution to the analogous problem in Epstein-Glaser
renormalization given in \cite[Section 3]{Bresser_Pinter_Prange}). 
\end{rem}
\begin{rem}
\label{rem:recovering r<=2 cases}When $r\le2$, the inductive
solution \eqref{eq:a soln for r>2} to \eqref{eq:Q eq (D(Lambda)-I)a}
reproduces the explicit solutions we obtained above. The $r=0,1$
cases are trivial. To verify the $r=2$ case, note the symmetric group
$S_{2}$ contains two elements: the identity $I$ and the transposition
$\mathbb{T}_{12}$. It is easily checked, in this case, that the idempotent
decomposition \eqref{eq:expansion of transposition} is satisfied
by, 
\begin{equation}
\mathbb{T}_{12}=\mathbb{S}-\mathbb{A},\label{eq:idemp exp of T_12}
\end{equation}
where $\mathbb{S}^{2}=\mathbb{S}$ and $\mathbb{A}^{2}=\mathbb{A}$
denote, respectively, the projector onto the symmetric part and the
anti-symmetric part of any tensor of rank 2. Note these projectors
are ``orthogonal'' in the sense that, for any tensor $\boldsymbol{T}$
of rank 2, $\mathbb{S}(\mathbb{A}\boldsymbol{T})=0=\mathbb{A}(\mathbb{S}\boldsymbol{T})$.
Moreover, they are ``complete'' in the sense that $\mathbb{S}+\mathbb{A}=I$.
Therefore, since they satisfy all the requisite properties, we may
identify these projectors with the idempotents \eqref{eq:idempotent identities}
for $r=2$. Denoting $\mathbb{E}_{1}=\mathbb{S}$ and $\mathbb{E}_{2}=\mathbb{A}$,
we simply read off the coefficients $c_{1}=1$ and $c_{2}=-1$ by
comparing \eqref{eq:idemp exp of T_12} with \eqref{eq:expansion of transposition}.
Hence, the formula \eqref{eq:c tilde} gives $\widetilde{c}_{1}=4D$
and $\widetilde{c}_{2}=4(D-2)$ in this case. Plugging these into
the general formula \eqref{eq:a soln for r>2} immediately yields:
for $D\ne2$, 
\begin{equation}
\boldsymbol{a}=-\frac{1}{4D}\left(\mathbb{S}+\frac{D}{D-2}\mathbb{A}\right)\left(l_{\kappa\rho}\otimes I+I\otimes l_{\kappa\rho}\right)\boldsymbol{B}^{\kappa\rho}+4\eta_{\mu_{1}\mu_{2}}\text{tr}_{12}\boldsymbol{a},\label{eq:r=2 a from gen soln}
\end{equation}
which is the most general rank 2 solution to \eqref{eq:Q eq (D(Lambda)-I)a}.
For $D=2$, we have $\widetilde{c}_{2}=0$, so the general formula
\eqref{eq:a soln for r>2} yields \eqref{eq:r=2 a from gen soln}
without the anti-symmetric term: note, in $D=2$, any anti-symmetric
tensor of type $(0,2)$ is proportional to the Levi-Civita symbol
$\epsilon_{\mu\nu}$ and, thus, is automatically invariant under restricted
Lorentz transformations. For our application in any dimension, only
the symmetric part of $\boldsymbol{a}$ is of interest. Note also
the trace of any rank two tensor is a Lorentz scalar. Hence, \eqref{eq:r=2 a from gen soln}
is consistent with the results given in eqs. \eqref{eq:r equals 2 soln 1}
and \eqref{eq:r equals 2 soln 2} above, i.e., 
\[
\mathbb{S}\boldsymbol{a}=-\frac{1}{4D}\mathbb{S}\left(l_{\kappa\rho}\otimes I+I\otimes l_{\kappa\rho}\right)\boldsymbol{B}^{\kappa\rho}+\text{ Lorentz-invariant tensor}.
\]
\end{rem}

\section{Curvature expansion of \texorpdfstring{$\Omega_{C}$}{\unichar{937}\_C} \label{sec:proof of Omega_C curvature exp}}

In this appendix, we derive the curvature expansion, eq.~(\ref{eq:Omega_C curvature exp}),
for $\Omega_{C}$. The derivation closely follows the approach of
\cite[Proof of Theorem 4.1]{HW_Existence_TOP} with modifications
to account for the non-local metric dependence of $\Omega_{C}$ and
its dependence on $t_{\mu}$. Let $g_{\mu\nu}$ denote the components
of the metric in RNC centered at $z\in M$. Let $S_{\lambda}:\mathbb{R}^{D}\to\mathbb{R}^{D}$
denote the diffeomorphism corresponding to re-scaling the Riemannian
normal coordinates $x^{\mu}\mapsto\lambda x^{\mu}$ by $\lambda\in[0,1]$.
Consider now the smooth 1-parameter family of smooth metrics defined
via, 
\begin{equation}
h_{\mu\nu}(x;\lambda)\equiv\lambda^{-2}(S_{\lambda}^{*}g)_{\mu\nu}(x)=g_{\mu\nu}(\lambda x).\label{eq:h_munu(lambda)}
\end{equation}
Note that $h_{\mu\nu}(\lambda)$ smoothly interpolates between the
flat spacetime metric, $\eta_{\mu\nu}$, at $\lambda=0$ and the original
curved metric, $g_{\mu\nu}$, at $\lambda=1$.

For any Feynman parametrix compatible with the joint smoothness axiom
$\spectrum$, the quantity, 
\[
\Omega_{C}[h_{\mu\nu}(\lambda),t_{\mu},L](f_{1},f_{2};\vec{0}),
\]
defined via (\ref{eq:Omega_C}) is smooth in $\lambda$. Hence, by
Taylor's theorem with remainder, for any nonnegative integer $n$,
we have 
\begin{equation}
\Omega_{C}[g_{\mu\nu},t_{\mu},L](f_{1},f_{2};\vec{0})=\sum_{k=0}^{n}\frac{1}{k!}\left[\frac{d^{k}}{d\lambda^{k}}\Omega_{C}[h_{\mu\nu}(\lambda),t_{\mu},L](f_{1},f_{2};\vec{0})\right]_{\lambda=0}+R_{n}(f_{1},f_{2};\vec{0}),\label{eq:scaling expansion Omega_C tilde-1-1}
\end{equation}
where the Taylor remainder is given by 
\begin{equation}
R_{n}(f_{1},f_{2};\vec{0})\equiv\frac{1}{n!}\int_{0}^{1}d\lambda(1-\lambda)^{n}\frac{d^{(n+1)}}{d\lambda^{(n+1)}}\Omega_{C}[h_{\mu\nu}(\lambda),t_{\mu},L](f_{1},f_{2};\vec{0}).\label{eq:R_n-1-1}
\end{equation}

We now show that, modulo smooth terms, the remainder (\ref{eq:R_n-1-1})
is of scaling degree $(n-D+5)$ and, thus, the non-smooth behavior
of $\Omega_{C}$ is entirely contained (up to scaling degree $\delta$)
in the finite $k$-sum of (\ref{eq:scaling expansion Omega_C tilde-1-1})
for $n\ge\delta+D-4$. We have 
\begin{equation}
\left(S_{s}^{*}\Omega_{C}[h_{\mu\nu}(\lambda),t_{\mu},L]\right)=\Omega_{C}[(S_{s}^{*}h)_{\mu\nu}(\lambda),(S_{s}^{*}t)_{\mu},L]=\Omega_{C}[s^{2}h{}_{\mu\nu}(s\lambda),st{}_{\mu},L].\label{eq:pullback of Omega_C via S_s}
\end{equation}
where the first equality follows directly from the definition (\ref{eq:Omega_C})
of $\Omega_{C}$ and the second equality follows from the definition
(\ref{eq:h_munu(lambda)}) of $h_{\mu\nu}$, 
\[
(S_{s}^{*}h)_{\mu\nu}(x;\lambda)=\lambda^{-2}(S_{s}^{*}\circ S_{\lambda}^{*}g)_{\mu\nu}(x)=s^{2}(S_{s\lambda}^{*}g)_{\mu\nu}(x)=s^{2}h_{\mu\nu}(x;s\lambda).
\]
On the other hand, since $\chi(y,\vec{0};sL)-\chi(y,\vec{0};L)$ vanishes
in a neighborhood of the origin, $y=\vec{0}$, it follows from the
same wavefront set arguments used in Proposition \ref{lem:smoothness of bt}
that for any $s\in(0,1]$, we have 
\begin{equation}
\Omega_{C}[h_{\mu\nu}(\lambda),t_{\mu},L]=\Omega_{C}[h_{\mu\nu}(\lambda),t_{\mu},sL]+\text{smooth terms}.\label{eq:Omega_C[L]=Omega_C[sL]+smooth}
\end{equation}
Plugging (\ref{eq:Omega_C[L]=Omega_C[sL]+smooth})
into (\ref{eq:pullback of Omega_C via S_s}) yields, 
\[
\left(S_{s}^{*}\Omega_{C}[h_{\mu\nu}(\lambda),t_{\mu},L]\right)=\Omega_{C}[s^{2}h{}_{\mu\nu}(s\lambda),st{}_{\mu},sL]+\text{smooth terms}.
\]
Plugging this back into the remainder (\ref{eq:R_n-1-1}), we find
modulo smooth terms, 
\begin{align}
(S_{s}^{*}R_{n})(f_{1},f_{2};\vec{0}) & =\frac{1}{n!}\int_{0}^{1}d\lambda(1-\lambda)^{n}\frac{d^{(n+1)}}{d\lambda^{(n+1)}}\Omega_{C}[s^{2}h{}_{\mu\nu}(s\lambda),st_{\mu},sL](f_{1},f_{2};\vec{0})\label{eq:estimate for pullback of R_n via S_s}\\
 & =s^{(n+1)}\frac{1}{n!}\int_{0}^{1}d\lambda(1-\lambda)^{n}\left[\frac{\partial^{(n+1)}}{\partial q{}^{(n+1)}}\Omega_{C}[s^{2}h{}_{\mu\nu}(q),st_{\mu},sL](f_{1},f_{2};\vec{0})\right]_{q=s\lambda}\nonumber 
\end{align}
However, from the almost homogeneous scaling behavior of the Feynman
parametrix and its smoothness in $m^{2}$ together with the invariance
of the cutoff function (\ref{eq:chi_C}) under the simultaneous rescaling
$(g_{\mu\nu},t_{\mu},L)\to(s^{2}g{}_{\mu\nu},st_{\mu},sL)$, it follows
that for any $q\in[0,1]$, we have 
\begin{align}
\Omega_{C}[s^{2}h{}_{\mu\nu}(q),st_{\mu},sL](f_{1},f_{2};\vec{0}) & =\mathcal{O}\left(s^{(-D+4)}\right),
\end{align}
Consequently, we find modulo smooth terms 
\[
(S_{s}^{*}R_{n})(f_{1},f_{2};\vec{0})=\mathcal{O}(s^{(n+5-D)})
\]
which implies that the scaling degree of any non-smooth contributions
to $R_{n}(f_{1},f_{2};\vec{0})$ must be at least $n+5-D$.

Thus, we have shown that 
\begin{equation}
\Omega_{C}[g_{\mu\nu},t_{\mu},L](f_{1},f_{2};\vec{0})\sim_{\delta}\sum_{k=0}^{\delta-D+4}\frac{1}{k!}\left[\frac{d^{k}}{d\lambda^{k}}\Omega_{C}[\lambda^{-2}(S_{\lambda}^{*}g)_{\mu\nu},t_{\mu},L](f_{1},f_{2};\vec{0})\right]_{\lambda=0}+\text{smooth}.\label{eq:Omega_C Taylor exp mod smooth}
\end{equation}
We now rewrite (\ref{eq:Omega_C Taylor exp mod smooth}) in the form
of the claimed curvature expansion (\ref{eq:Omega_C curvature exp})
for the special case that the metric has polynomial dependence on
the coordinates, $g_{\mu\nu}=g_{\mu\nu}^{(P)}$. Since we have 
\[
\lambda^{-2}(S_{\lambda}^{*}g)_{\mu\nu}^{(P)}(x)=g_{\mu\nu}^{(P)}[x^{\sigma},\eta_{\mu\nu},\lambda^{2}R_{\mu\nu\kappa\rho}(\vec{0}),\lambda^{3}\nabla_{\sigma}R_{\mu\nu\kappa\rho}(\vec{0}),\dots,\lambda^{P}\nabla_{(\sigma_{1}}\cdots\nabla_{\sigma_{(P-2)})}R_{\mu\nu\kappa\rho}(\vec{0})],
\]
it follows that 
\[
\Omega_{C}[\lambda^{-2}(S_{\lambda}^{*}g)_{\mu\nu},t_{\mu},L]=\Omega_{C}[\eta_{\mu\nu},\lambda^{2}R_{\mu\nu\kappa\rho}(\vec{0}),\dots,\lambda^{P}\nabla_{(\sigma_{1}}\cdots\nabla_{\sigma_{(P-2)})}R_{\mu\nu\kappa\rho}(\vec{0}),t_{\mu},L].
\]
For any smooth function of the form $f=f(\lambda^{2}\alpha_{0},\lambda^{3}\alpha_{1},\dots,\lambda^{P}\alpha_{P-2})$,
a straightforward application of the multi-variate chain rule yields,
\[
\left.\frac{d^{k}f}{d\lambda^{k}}\right|_{\lambda=0}=\sum_{2p_{0}+3p_{1}+\cdots+kp_{(k-2)}=k}k!\,\alpha_{0}^{p_{0}}\cdots\alpha_{(k-2)}^{p_{(k-2)}}\,\left.\frac{\partial^{(p_{0}+\cdots+p_{(k-2)})}f(\alpha_{0},\dots,\alpha_{(P-2)})}{\partial^{p_{0}}\alpha_{0}\cdots\partial^{p_{k-2}}\alpha_{(k-2)}}\right|_{\alpha_{0},\dots,\alpha_{(P-2)}=0}.
\]
Using this formula to evaluate the terms in the $k$-sum of (\ref{eq:Omega_C Taylor exp mod smooth})
then yields the claimed curvature expansion (\ref{eq:Omega_C curvature exp}).
The result can then be extended to general smooth $g_{\mu\nu}$ via
compatibility with axiom $\spectrum$, using the same argument as
in the proof of Proposition \ref{Prop: Omega_C[g]=Omega_C[g^N]+smooth}.

\section{Construction of covariance-restoring counterterms based on general
associativity conditions\label{sec:model-indep counterterms}}

The purpose of this appendix is to develop an algorithm for constructing
covariance-restoring counterterms without relying on explicit formulas
for the OPE coefficients or any other special model-dependent properties.
This algorithm is based on the general associativity properties of
OPE coefficients and, thus, should be applicable to flow relations
for any renormalizable Lorentzian quantum field theory. At the end
of the appendix, we show this algorithm reproduces the counterterms
derived in Section \ref{sec:Minkowski-flow-relations} for the Klein-Gordon
OPE coefficients $C_{T_{0}\{\phi\cdots\phi\}}^{I}$ and we will use
the algorithm to generate counterterms for the flow relations of $\lambda\phi^{4}$-theory.
For simplicity, we give a derivation for Lorentz-covariance restoring
counterterms in flat spacetime; however, the derivation can be generalized
to curved spacetimes using the approach developed in Section \ref{sec:Wick fe in CS}.

Consider a theory arising from a Lagrangian with a self-interaction
term $\zeta\Phi_{V}$, where $\zeta$ denotes the coupling parameter.
(Note that for power-counting renormalizable theories, the engineering
dimension of $\Phi_{V}$ must be less than or equal to the spacetime
dimension.) For example, for $\lambda\phi^{4}$-theory we have $\zeta=\lambda$
and $\Phi_{V}=\phi^{4}/4!$. Consider the OPE coefficients arising
from products $\Phi_{A_{1}}(x_{1})\dots\Phi_{A_{n}}(x_{n})$, where
the fields $\Phi_{A_{i}}$ are of arbitrary tensorial (or spinorial)
type. We assume that the Lorentzian OPE coefficients $C_{T_{0}\{A_{1},\dots,A_{n}\}}^{B}$
have been found to satisfy a flow relation of the form 
\begin{align}
 & \frac{\partial}{\partial\zeta}C_{T_{0}\{A_{1},\dots,A_{n}\}}^{B}(x_{1},\dots,x_{n};z)\label{eq:gen preliminary Minkowski flow rel}\\
 & \approx-i\int d^{D}y\chi(y,z;L)\Omega{}_{T_{0}\{VA_{1}\cdots A_{n}\}}^{B}(y,x_{1},\dots,x_{n};z)+\text{covariance-restoring counterterms,}\nonumber 
\end{align}
where $\chi(y,z;L)$ is a suitable translationally-invariant cutoff
function (see (\ref{eq:trans-inv Mink cutoff})) and the quantity
$\Omega{}_{T_{0}\{VA_{1}\cdots A_{n}\}}^{B}(y,x_{1},\dots,x_{n};z)$
is given in terms of OPE coefficients by a formula of the general
form 
\begin{align}
\Omega{}_{T_{0}\{VA_{1}\cdots A_{n}\}}^{B}(y,x_{1},\dots,x_{n};z) & =C_{T_{0}\{VA_{1}\cdots A_{n}\}}^{B}(y,x_{1},\dots,x_{n};z)+\label{eq:Omega gen}\\
 & -\sum_{i=1}^{n}\sum_{[C]\le[A_{i}]+[V]-D}C_{T_{0}\{VA_{i}\}}^{C}(y,x_{i};x_{i})C_{T_{0}\{A_{1}\cdots\widehat{A_{i}}C\cdots A_{n}\}}^{B}(x_{1},\dots,x_{n};z)+\nonumber \\
 & -\sum_{[C]<[B]-[V]+D}C_{T_{0}\{A_{1}\cdots A_{n}\}}^{C}(x_{1},\dots,x_{n};z)C_{T_{0}\{VC\}}^{B}(y,z;z),\nonumber 
\end{align}
where $D$ denotes the spacetime dimension. For Klein-Gordon theory
($\zeta=m^{2}$ and $\Phi_{V}=\phi^{2}/2$), eq.~\eqref{eq:Omega gen}
corresponds to the flow relation \eqref{eq:Lorentz inv id fe} for
the Wick OPE coefficient $C_{T_{0}\{\phi\cdots\phi\}}^{I}$ (where
only the first line of eq.~\eqref{eq:Omega gen} contributes in this
case). For $4$-dimensional $\lambda\phi^{4}$-theory ($\zeta=\lambda$
and $\Phi_{V}=\phi^{4}/4!$), eq.~\eqref{eq:Omega gen} corresponds
to the Wick-rotated integrand of the Euclidean Holland and Hollands
flow equation \eqref{eq:HH fe}. For $4$-dimensional Yang-Mills gauge
theories, eq.~\eqref{eq:Omega gen} coincides with the Wick-rotated
integrand of the Euclidean flow relations given in \cite[Theorem 4]{Frob_Holland_Yang-Mills}.
Thus, eq.~\eqref{eq:Omega gen} encompasses all of these cases. Our
aim is to explicitly obtain the covariance restoring counterterms
in eq.~(\ref{eq:gen preliminary Minkowski flow rel}).

Note that the individual terms in the sum for $\Omega{}_{T_{0}\{VA_{1}\cdots A_{n}\}}^{B}$
are well-defined as distributions in spacetime variables $y,x_{1},\dots,x_{n}$
only away from all diagonals, i.e., where none of the spacetime events
coincide. However, assuming the OPE coefficients satisfy the associativity
and scaling axioms postulated in \cite{HW_Axiomatic_QFTCS}, then
the scaling degree of $\Omega{}_{T_{0}\{VA_{1}\cdots A_{n}\}}^{B}$
on any partial diagonal involving $y$ and one other spacetime event
$x_{i}$ is guaranteed to be strictly less than the spacetime dimension
$D$. It follows then that $\Omega{}_{T_{0}\{VA_{1}\cdots A_{n}\}}^{B}$
can be \emph{uniquely} extended to a distribution on these partial
diagonals involving $y$, so the integral in (\ref{eq:gen preliminary Minkowski flow rel})
is well defined (even though individual terms in $\Omega{}_{T_{0}\{VA_{1}\cdots A_{n}\}}^{B}$
generally contain non-integrable divergences at $y=x_{i}$ for $i=1,\dots,n$).

The failure of the integral in (\ref{eq:gen preliminary Minkowski flow rel})
by itself to be covariant under Lorentz transformation $\Lambda$
is characterized by the nonvanishing of the quantity 
\begin{equation}
-i\int d^{D}y\left[\chi(\Lambda y,\Lambda z;L)-\chi(y,z;L)\right]\Omega{}_{T_{0}\{VD_{1}\cdots D_{n}\}}^{E}(y,x_{1},\dots,x_{n};z),\label{eq:non-cov of int Omega tilde}
\end{equation}
Since $\chi(y,z;L)=1$ in an open neighborhood of $z$, if the spacetime
events $x_{i}$ are sufficiently near to $z$, then we have 
\[
\chi(\Lambda x_{i},\Lambda z;L)=\chi(x_{i},z;L),\qquad\text{for all \ensuremath{i=1,\dots,n}.}
\]
It then follows that the integrand in \eqref{eq:non-cov of int Omega tilde}
vanishes as $y$ approaches the partial diagonals $y=x_{i}\ne x_{j}$.
Consequently, unlike the integral in (\ref{eq:gen preliminary Minkowski flow rel}),
the expression (\ref{eq:non-cov of int Omega tilde}) is well defined
for each of the individual terms in the sum defining $\Omega{}_{T_{0}\{VA_{1}\cdots A_{n}\}}^{B}$.
Note the $y$-dependence of $\Omega{}_{T_{0}\{VA_{1}\cdots A_{n}\}}^{B}$
is isolated within terms of the form 
\begin{equation}
C_{T_{0}\{VD_{1}\cdots D_{n}\}}^{E}.\label{eq:C^E_VD1...Dn}
\end{equation}
Specifically, the $y$-dependence of first line of \eqref{eq:Omega gen}
appears in $C_{T_{0}\{VA_{1}\cdots A_{n}\}}^{B}$. The $y$-dependence
of the second line of \eqref{eq:Omega gen} appears in $C_{T_{0}\{VA_{i}\}}^{C}$.
Finally, the $y$-dependence of the third line of \eqref{eq:Omega gen}
appears in $C_{T_{0}\{VC\}}^{B}$. It follows that the non-covariance
of \eqref{eq:non-cov of int Omega tilde} is quantified by integrals
of the form: 
\begin{align}
 & \Upsilon_{T_{0}\{D_{1}\cdots D_{n}\}}^{E}(x_{1},\dots,x_{n};z;\Lambda)\label{eq:Upsilon def}\\
 & \equiv-i\int d^{D}y\left[\chi(\Lambda y,\Lambda z;L)-\chi(y,z;L)\right]C_{T_{0}\{VD_{1}\cdots D_{n}\}}^{E}(y,x_{1},\dots,x_{n};z).\nonumber 
\end{align}
Our task is now to show that the non-covariance of these terms can
be compensated by counterterms and thereby to construct the ``covariance-restoring
counterterms'' for the flow relation \eqref{eq:gen preliminary Minkowski flow rel}.

The integrand of \eqref{eq:Upsilon def} is nonvanishing only when
$y$ lies outside an open neighborhood of $z$. The associativity
condition (see eq.~\eqref{eq:pertinent assoc cond for C_H}) implies
that, for any merger tree $\mathcal{T}$ such that $x_{1},\dots,x_{n}$
approach an auxiliary point $z'$ faster than $z'$ and $y$ approach
$z$, we have 
\begin{equation}
C_{T_{0}\{VD_{1}\cdots D_{n}\}}^{E}(y,x_{1},\dots,x_{n};z)\sim_{\mathcal{T},\delta}\sum_{C}C_{T_{0}\{D_{1}\cdots D_{n}\}}^{C}(x_{1},\dots,x_{n};z')C_{T_{0}\{VC\}}^{E}(y,z';z),\label{eq:assoc exp of C(y,x1,...xn;z)}
\end{equation}
where both sides are viewed as distributions in $(y,x_{1},\dots,x_{n},z')$
but with the left-hand side having trivial dependence on the auxiliary
point $z'$. Plugging \eqref{eq:assoc exp of C(y,x1,...xn;z)} into
\eqref{eq:Upsilon def} yields, 
\begin{align}
 & \Upsilon_{T_{0}\{D_{1}\cdots D_{n}\}}^{E}(x_{1},\dots,x_{n};z;\Lambda)\label{eq:assoc exp of Upsilon}\\
 & \sim_{\mathcal{T}',\delta}-i\sum_{C}C_{T_{0}\{D_{1}\cdots D_{n}\}}^{C}(x_{1},\dots,x_{n};z')\int d^{D}y\left[\chi(\Lambda y,\Lambda z;L)-\chi(y,z;L)\right]C_{T_{0}\{VC\}}^{E}(y,z';z),\nonumber 
\end{align}
where $\mathcal{T}'$ denotes any merger tree with $(x_{1},\dots,x_{n})$
approaching $z'$ faster than $z'$ approaches $z$. Assuming the
OPE coefficient $C_{VC}^{E}$ satisfies the general microlocal spectrum
condition stated in \cite{HW_Axiomatic_QFTCS}, then all elements
$(y,k_{1},z',k_{2},z,k_{3})\in(T^{*}M)^{3}$ in the wavefront set
of $C_{T_{0}\{VC\}}^{E}(y,z';z)$ will be such that $k_{1}=-k_{2}$
and $k_{3}=\vec{0}$. It follows then from a straightforward application
of \cite[Theorem 8.2.12]{Hormander_book} that the dependence of \eqref{eq:assoc exp of Upsilon}
on $(z',z)$ is, in fact, smooth and, thus, we may set $z'=z$ : 
\begin{align}
\Upsilon_{T_{0}\{D_{1}\cdots D_{n}\}}^{E}(x_{1},\dots,x_{n};z;\Lambda) & \approx\sum_{C}Q_{T_{0}\{VC\}}^{E}(\Lambda^{-1})C_{T_{0}\{D_{1}\cdots D_{n}\}}^{C}(x_{1},\dots,x_{n};z),\label{eq:Upsilon assoc exp in terms of Q}
\end{align}
where $Q_{T_{0}\{VC\}}^{E}(\Lambda^{-1})$ is given by 
\begin{equation}
Q_{T_{0}\{VC\}}^{E}(\Lambda^{-1})\equiv-i\int d^{D}y\left[\chi(\Lambda y,\vec{0};L)-\chi(y,\vec{0};L)\right]C_{T_{0}\{VC\}}^{E}(y,\vec{0};\vec{0}),\label{eq:gen Q def}
\end{equation}
where translation invariance was used to set $z=\vec{0}$. Thus, $Q_{T_{0}\{VC\}}^{E}(\Lambda^{-1})$
is independent of spacetime point $z$. Note that no assumption has
been made on how quickly events $x_{1},\dots,x_{n}$ approach $z$
relative to each other, so \eqref{eq:Upsilon assoc exp in terms of Q}
is valid for all merger trees involving the events $x_{1},\dots,x_{n}$.
Hence, we simply use the notation ``$\approx$'' that was introduced
in the paragraph surrounding eq.~\eqref{eq:OPE precise asymp rel}.

We now show that \eqref{eq:gen Q def} satisfies a cohomological identity
that enables us to obtain the desired counterterms. Let $\Lambda_{1}$
and $\Lambda_{2}$ be Lorentz transformations. Then we have 
\begin{align}
 & Q_{T_{0}\{VC\}}^{E}(\Lambda_{1}\Lambda_{2})-Q_{T_{0}\{VC\}}^{E}(\Lambda_{1})\nonumber \\
 & =-i\int d^{D}y\left[\chi(\Lambda_{2}^{-1}\Lambda_{1}^{-1}y,\vec{0};L)-\chi(\Lambda_{1}^{-1}y,\vec{0};L)\right]C_{T_{0}\{VC\}}^{E}(y,\vec{0};\vec{0})\nonumber \\
 & =-i\int d^{D}y'\left[\chi(\Lambda_{2}^{-1}y',\vec{0};L)-\chi(y',\vec{0};L)\right]C_{T_{0}\{VC\}}^{E}(\Lambda_{1}y',\vec{0};\vec{0})\nonumber \\
 & =-i\int d^{D}y'\left[\chi(\Lambda_{2}^{-1}y',\vec{0};L)-\chi(y',\vec{0};L)\right]\sum_{A,B}D_{A}^{E}(\Lambda_{1})D_{C}^{B}(\Lambda_{1}^{-1})C_{T_{0}\{VB\}}^{A}(y',\vec{0};\vec{0})\nonumber \\
 & =\sum_{A,B}D_{A}^{E}(\Lambda_{1})D_{C}^{B}(\Lambda_{1}^{-1})Q_{T_{0}\{VB\}}^{A}(\Lambda_{2}),\label{eq:coho id for gen Q}
\end{align}
where the second equality follows from a change of integration variables
$y\to y'=\Lambda_{1}^{-1}y$ and third equality follows from the Lorentz
covariance of the OPE coefficients (where we recall that $\Phi_{V}$
is a Lorentz scalar). Denoting $\boldsymbol{Q}\equiv Q_{T_{0}\{VC\}}^{E}$
and suppressing field indices, eq.~\eqref{eq:coho id for gen Q}
is equivalent to: 
\[
0=(d^{1}\boldsymbol{Q})(\Lambda_{1},\Lambda_{2})=\boldsymbol{Q}(\Lambda_{1})+D(\Lambda_{1})\boldsymbol{Q}(\Lambda_{2})-\boldsymbol{Q}(\Lambda_{1}\Lambda_{2}),
\]
which is the cohomological identity \eqref{eq:d^1Q=0}. As established
in Proposition \ref{Prop: existence of a}, this identity implies
there exists $\boldsymbol{a}\equiv a_{T_{0}\{VC\}}^{B}$ such that:
\[
\boldsymbol{Q}(\Lambda)=(d^{0}\boldsymbol{a})(\Lambda)=(D(\Lambda)-I)\boldsymbol{a}.
\]
For tensor-valued\footnote{A formula analogous to \eqref{eq:a for gen Omega} can be obtained
using the methods of Appendix \ref{sec:K-G Lorentz cts} when $\boldsymbol{Q}$
is spinor-valued (see also \cite[Section 4]{Prange}).} $\boldsymbol{Q}$, the results of Appendix \ref{sec:K-G Lorentz cts}
imply the $\boldsymbol{a}$ can be inductively constructed (modulo
Lorentz-invariant tensors) from 
\begin{equation}
\boldsymbol{a}=\sum_{\substack{j=1\\
\widetilde{c}_{j}\ne0
}
}^{k}\frac{1}{\widetilde{c}_{j}}\mathbb{E}_{j}\left(-\mathbb{L}_{\kappa\rho}\boldsymbol{B}^{\kappa\rho}+4\sum_{i<j\le n}\eta_{\mu_{i}\mu_{j}}\text{tr}_{ij}\boldsymbol{a}\right),\label{eq:a for gen Omega}
\end{equation}
with 
\[
\boldsymbol{B}^{\kappa\rho}\equiv(B^{\kappa\rho})_{T_{0}\{VC\}}^{E}=2i\int d^{D}yy^{[\kappa}\partial^{\rho]}\chi(y,\vec{0})C_{T_{0}\{VC\}}^{E}(y,\vec{0};\vec{0}).
\]

By reasoning analogous to the arguments of Section \ref{sec:Minkowski-flow-relations},
we obtain counterterms that ensure the Lorentz-covariance of the flow
relation \eqref{eq:gen preliminary Minkowski flow rel} by making
the following substitution in every appearance of $C_{T_{0}\{VD_{1}\cdots D_{n}\}}^{E}$
in $\Omega_{T_{0}\{VA_{1}\cdots A_{n}\}}^{B}$: 
\begin{equation}
C_{T_{0}\{VD_{1}\cdots D_{n}\}}^{E}\to C_{T_{0}\{VD_{1}\cdots D_{n}\}}^{E}(y,x_{1},\dots,x_{n};z)-\frac{1}{\mathcal{V}}\sum_{C}a_{T_{0}\{VC\}}^{E}C_{T_{0}\{D_{1}\cdots D_{n}\}}^{C}(x_{1},\dots,x_{n};z),\label{eq:gen substitution of counterterms}
\end{equation}
where we have written 
\[
\mathcal{V}\equiv\int d^{D}y\chi(y,\vec{0};\vec{0}).
\]
It is understood the $C$-sum in \eqref{eq:gen substitution of counterterms}
is carried to sufficiently-large engineering dimension $[C]$ to achieve
whatever asymptotic precision is desired from the flow relation. The
substitution rule (\ref{eq:gen substitution of counterterms}) is
the key result of this Appendix. We now illustrate it by applying
it to the cases of the massive Klein-Gordon field and 4-dimensional
$\lambda\phi^{4}$-theory.

For the case of the flow relations for $C_{T_{0}\{\phi\cdots\phi\}}^{I}$
obtained in this paper for the massive Klein-Gordon field, we have
$\Phi_{V}=\phi^{2}/2$, $\zeta=m^{2}$, and \eqref{eq:Omega gen}
reduces to: 
\[
\Omega_{T_{0}\{(\phi^{2}/2)\phi\cdots\phi\}}^{I}(y,x_{1},\dots,x_{n};z)=\frac{1}{2}C_{T_{0}\{\phi^{2}\phi\cdots\phi\}}^{I}(y,x_{1},\dots,x_{n};z).
\]
Our algorithm instructs us to make the substitution \eqref{eq:gen substitution of counterterms}
in $\Omega_{T_{0}\{(\phi^{2}/2)\phi\cdots\phi\}}^{I}$. Plugging the
result of this substitution into \eqref{eq:gen preliminary Minkowski flow rel}
yields the flow relation: 
\begin{align}
 & \frac{\partial}{\partial m^{2}}C_{T_{0}\{\phi\cdots\phi\}}^{I}(x_{1},\dots,x_{n};z)\label{eq:K-G flow rel generated by gen assoc algorithm}\\
 & \approx-\frac{i}{2}\int d^{D}y\chi(y,z;L)C_{T_{0}\{\phi^{2}\phi\cdots\phi\}}^{I}(y,x_{1},\dots,x_{n};z)-\sum_{C}a_{T_{0}\{(\phi^{2}/2)C\}}^{I}C_{T_{0}\{A_{1}\cdots A_{n}\}}^{C}(x_{1},\dots,x_{n};z)\nonumber 
\end{align}
where $a_{T_{0}\{(\phi^{2}/2)C\}}^{I}$ is given recursively by \eqref{eq:a for gen Omega}
with 
\begin{equation}
(B^{\kappa\rho})_{T_{0}\{(\phi^{2}/2)C\}}^{I}=i\int d^{D}yy^{[\kappa}\partial^{\rho]}\chi(y,\vec{0})C_{T_{0}\{\phi^{2}C\}}^{I}(y,\vec{0};\vec{0}).\label{eq:K-G B tensor from assoc algorithm}
\end{equation}
Comparing \eqref{eq:K-G flow rel generated by gen assoc algorithm}
with \eqref{eq:id coef fe Minkowski} of Theorem \ref{thm:Minkowski flow rel}
and \eqref{eq:K-G B tensor from assoc algorithm} with \eqref{eq:(B^kappa rho)_C},
we find that the substitution (\ref{eq:gen substitution of counterterms})
reproduces the covariance-restoring counterterms obtained in Section
\ref{sec:Minkowski-flow-relations} for the flow relations of the
Klein-Gordon OPE coefficients $C_{T_{0}\{\phi\cdots\phi\}}^{I}$.

For $\lambda\phi^{4}$-theory, we have $\Phi_{V}=\phi^{4}/4!$ and
$\zeta=\lambda$. Our algorithm instructs us to make the following
substitutions in the formula \eqref{eq:Omega gen} for $\Omega_{T_{0}\{(\phi^{4}/4!)A_{1}\cdots A_{n}\}}^{B}$:
\begin{align}
C_{T_{0}\{\phi^{4}A_{1}\cdots A_{n}\}}^{B} & \to C_{T_{0}\{\phi^{4}A_{1}\cdots A_{n}\}}^{B}(y,x_{1},\dots,x_{n};z)-\frac{1}{\mathcal{V}}\sum_{C}a_{T_{0}\{\phi^{4}C\}}^{B}C_{T_{0}\{A_{1}\cdots A_{n}\}}^{C}(x_{1},\dots,x_{n};z)\label{eq:lambda phi 4 sub 1}\\
C_{T_{0}\{\phi^{4}A_{i}\}}^{C} & \to C_{T_{0}\{\phi^{4}A_{i}\}}^{C}(y,x_{i};z)-\frac{1}{\mathcal{V}}\sum_{D}a_{T_{0}\{\phi^{4}D\}}^{C}C_{T_{0}\{A_{i}\}}^{D}(x_{i};z)\label{eq:lambda phi 4 sub 2}\\
C_{T_{0}\{\phi^{4}C\}}^{B} & \to C_{T_{0}\{\phi^{4}C\}}^{B}(y,z;z)-\frac{1}{\mathcal{V}}a_{T_{0}\{\phi^{4}C\}}^{B},\label{eq:lambda phi 4 sub 3}
\end{align}
where $a_{T_{0}\{\phi^{4}C\}}^{B}$ is given inductively by \eqref{eq:a for gen Omega}
in terms of 
\[
(B^{\kappa\rho})_{T_{0}\{\phi^{4}C\}}^{B}=2i\int d^{D}yy^{[\kappa}\partial^{\rho]}\chi(y,\vec{0})C_{T_{0}\{\phi^{4}C\}}^{B}(y,\vec{0};\vec{0}).
\]
Note that the OPE coefficient $C_{T_{0}\{A\}}^{B}=C_{A}^{B}$ appearing
on the right side of (\ref{eq:lambda phi 4 sub 2}) is given by 
\begin{equation}
C_{A}^{B}(x;z)=\partial_{\alpha_{1}}^{(x)}(x-z)^{(\beta_{1}}\cdots\partial_{\alpha_{k}}^{(x)}(x-z)^{\beta_{k}}(x-z)^{\beta_{k+1}}\cdots(x-z)^{\beta_{m})},\label{eq:1 point OPE}
\end{equation}
where $A=\alpha_{1}\cdots\alpha_{k}$, $B=\beta_{1}\cdots\beta_{m}$
and $C_{A}^{B}=0$ if $k>m$. Making the substitutions \eqref{eq:lambda phi 4 sub 1}-\eqref{eq:lambda phi 4 sub 3}
in $\Omega_{T_{0}\{(\phi^{4}/4!)A_{1}\cdots A_{n}\}}^{B}$ and plugging
this back into the flow relation \eqref{eq:gen preliminary Minkowski flow rel},
we obtain 
\begin{align}
 & \frac{\partial}{\partial\lambda}C_{T_{0}\{A_{1}\cdots A_{n}\}}^{B}(x_{1},\dots,x_{n};z)\approx\nonumber \\
 & -\frac{1}{4!}\int d^{4}y\chi(y,z;L)\left[\vphantom{\int_{\epsilon}}C_{T_{0}\{\phi^{4}A_{1}\cdots A_{n}\}}^{B}(y,x_{1},\dots,x_{n};z)+\right.\nonumber \\
 & -\sum_{i=1}^{n}\sum_{[C]\le[A_{i}]}\left[C_{T_{0}\{\phi^{4}A_{i}\}}^{C}(y,x_{i};x_{i})-\frac{1}{\mathcal{V}}\right.\sum_{[D]}\left.a_{T_{0}\{\phi^{4}D\}}^{C}C_{T_{0}\{A_{i}\}}^{D}(x_{i};z)\right]C_{T_{0}\{A_{1}\cdots\widehat{A_{i}}C\cdots A_{n}\}}^{B}(x_{1},\dots,x_{n};z)+\nonumber \\
 & -\left[\vphantom{C_{T_{0}\{\phi^{4}\}}^{B}}\right.\sum_{[C]<[B]}C_{T_{0}\{\phi^{4}C\}}^{B}(y,z;z)-\frac{1}{\mathcal{V}}\sum_{[C]\ge[B]}\left.\vphantom{C_{T_{0}\{\phi^{4}\}}^{B}}a_{T_{0}\{\phi^{4}C\}}^{B}\right]\left.\vphantom{\int_{\epsilon}}C_{T_{0}\{A_{1}\cdots A_{n}\}}^{C}(x_{1},\dots,x_{n};z)\right].
\end{align}
The generalization of this relation to curved spacetime was already
given in \eqref{eq:lambda phi 4 Mink fe} of the Introduction.

 \bibliographystyle{hunsrt}
\phantomsection\addcontentsline{toc}{section}{\refname}\bibliography{citations}
 
\end{document}